\newif\iffocs
\newcommand{\lName}{1}
\newcommand{\donothing}[1]{#1}
\newcommand{\JACM}{\if\lName1\donothing{Journal of the {ACM}}\else{JACM}\fi}
\newcommand{\SICOMP}{\if\lName1\donothing{{SIAM} Journal on Computing}\else{SICOMP}\fi}
\newcommand{\ToC}{\if\lName1\donothing{Theory of Computing}\else{ToC}\fi}
\newcommand{\ToCGS}{\if\lName1\donothing{Theory of Computing Graduate Surveys}\else{ToC}\fi}
\newcommand{\TOCT}{\if\lName1\donothing{{ACM} Transactions on Computation Theory}\else{TOCT}\fi}
\newcommand{\CCjournal}{\if\lName1\donothing{Computational Complexity}\else{CC}\fi}
\newcommand{\CJTCS}{\if\lName1\donothing{Chicago Journal of Theoretical Computer Science}\else{CJTCS}\fi}
\newcommand{\TCS}{\if\lName1\donothing{Theoretical Computer Science}\else{TCS}\fi}
\newcommand{\IPL}{\if\lName1\donothing{Information Processing Letters}\else{IPL}\fi}
\newcommand{\JCSS}{\if\lName1\donothing{Journal of Computer and System Sciences}\else{JCSS}\fi}
\newcommand{\RSA}{\if\lName1\donothing{Random Structures and Algorithms}\else{RSA}\fi}
\newcommand{\JCTA}{\if\lName1\donothing{Journal of Combinatorial Theory, Series A}\else{JCTA}\fi}
\newcommand{\JCTB}{\if\lName1\donothing{Journal of Combinatorial Theory, Series B}\else{JCTB}\fi}
\newcommand{\PJM}{\if\lName1\donothing{Pacific Journal of Mathematics}\else{PJM}\fi}
\newcommand{\QIC}{\if\lName1\donothing{Quantum Information and Computation}\else{QIC}\fi}
\newcommand{\IJQI}{\if\lName1\donothing{International Journal of Quantum Information}\else{IJQI}\fi}
\newtheorem{theorem}{Theorem}
\numberwithin{theorem}{section}
\newtheorem{lemma}[theorem]{Lemma}
\newtheorem{proposition}[theorem]{Proposition}
\newtheorem{corollary}[theorem]{Corollary}
\newtheorem{definition}[theorem]{Definition}
\newtheorem{question}[theorem]{Question}
\theoremstyle{definition}
\newcommand{\be}{\begin{equation}}
\newcommand{\ee}{\end{equation}}
\newcommand{\eq}[1]{\hyperref[eq:#1]{(\ref*{eq:#1})}}
\renewcommand{\sec}[1]{\hyperref[sec:#1]{Section~\ref*{sec:#1}}}
\newcommand{\thm}[1]{\hyperref[thm:#1]{Theorem~\ref*{thm:#1}}}
\newcommand{\lem}[1]{\hyperref[lem:#1]{Lemma~\ref*{lem:#1}}}
\newcommand{\defn}[1]{\hyperref[def:#1]{Definition~\ref*{def:#1}}}
\newcommand{\prop}[1]{\hyperref[prop:#1]{Proposition~\ref*{prop:#1}}}
\newcommand{\cor}[1]{\hyperref[cor:#1]{Corollary~\ref*{cor:#1}}}
\newcommand{\fig}[1]{\hyperref[fig:#1]{Figure~\ref*{fig:#1}}}
\newcommand{\tab}[1]{\hyperref[tab:#1]{Table~\ref*{tab:#1}}}
\newcommand{\alg}[1]{\hyperref[alg:#1]{Algorithm~\ref*{alg:#1}}}
\newcommand{\app}[1]{\hyperref[app:#1]{Appendix~\ref*{app:#1}}}
\newcommand{\conj}[1]{\hyperref[conj:#1]{Conjecture~\ref*{conj:#1}}}
\newcommand{\chap}[1]{\hyperref[chap:#1]{Chapter~\ref*{chap:#1}}}
\newcommand{\clm}[1]{\hyperref[clm:#1]{Claim~\ref*{clm:#1}}}
\newcommand{\fct}[1]{\hyperref[fct:#1]{Fact~\ref*{fct:#1}}}
\newcommand{\itm}[1]{\hyperref[itm:#1]{(\ref*{itm:#1})}}
\newcommand{\qstn}[1]{\hyperref[qstn:#1]{Question~\ref*{qstn:#1}}}
\newcommand*\rel@kern[1]{\kern#1\dimexpr\macc@kerna}
\newcommand*\widebar[1]{%
  \begingroup
  \def\mathaccent##1##2{%
    \rel@kern{0.8}%
    \overline{\rel@kern{-0.8}\macc@nucleus\rel@kern{0.2}}%
    \rel@kern{-0.2}%
  }%
  \macc@depth\@ne
  \let\math@bgroup\@empty \let\math@egroup\macc@set@skewchar
  \mathsurround\z@ \frozen@everymath{\mathgroup\macc@group\relax}%
  \macc@set@skewchar\relax
  \let\mathaccentV\macc@nested@a
  \macc@nested@a\relax111{#1}%
  \endgroup
}
\renewcommand{\bar}{\widebar}
\newcommand{\eprint}[1]{{\small \upshape \tt \href{http://arxiv.org/abs/#1}{#1}}}
\newcommand{\B}{\{0,1\}}
\newcommand{\tOmega}{\tilde{\Omega}}
\newcommand{\tO}{\tilde{O}}
\DeclareMathOperator{\R}{R}
\DeclareMathOperator{\Q}{Q}
\DeclareMathOperator{\adeg}{adeg}
\DeclareMathOperator{\Dom}{Dom}
\DeclareMathOperator{\bias}{bias}
\DeclareMathOperator{\polylog}{polylog}
\DeclareMathOperator{\poly}{poly}
\DeclareMathOperator*{\E}{\mathbb{E}}
\DeclareMathOperator{\bR}{\mathbb{R}}
\DeclareMathOperator{\cost}{cost}
\DeclareMathOperator{\outpt}{output}
\DeclareMathOperator{\pred}{pred}
\DeclareMathOperator{\tran}{tran}
\DeclareMathOperator{\Bernoulli}{Bernoulli}
\newcommand{\triv}{\textsc{Trivial}}
\DeclareMathOperator{\score}{score}
\DeclareMathOperator{\hs}{hs}
\DeclareMathOperator{\ls}{ls}
\DeclareMathOperator{\Brier}{Brier}
\DeclareMathOperator{\JS}{JS}
\DeclareMathOperator{\h}{h}
\DeclareMathOperator{\Ess}{S}
\DeclareMathOperator{\Conv}{Conv}
\DeclareMathOperator{\RCC}{RCC}
\DeclareMathOperator{\QCC}{QCC}
\DeclareMathOperator{\QDT}{QDT}
\DeclareMathOperator{\RDT}{RDT}
\DeclareMathOperator{\Rcirc}{Rcirc}
\DeclareMathOperator{\RNC}{RNC1}
\DeclareMathOperator{\RTC}{RTC0}
\DeclareMathOperator{\PQ}{PQ}
\DeclareMathOperator{\rank}{rank}
\newcommand{\tTheta}{\widetilde{\Theta}}
\newcommand{\bN}{\mathbb{N}}
\newcommand{\bE}{\mathbb{E}}
\newcommand{\Var}{\mathrm{Var}}
\begin{document}

\title{A New Minimax Theorem for Randomized Algorithms
\iffocs
\\ \large (Extended Abstract$^\dagger$)
\fi
}

\iffocs
    \author{
    \IEEEauthorblockN{Shalev Ben{-}David} 
    \IEEEauthorblockN{Eric Blais}
    \IEEEauthorblockA{David R. Cheriton School of Computer Science\\
    University of Waterloo\\
    Waterloo, Canada\\
    {\tt (shalev.b\,|\,eric.blais)@uwaterloo.ca}}
    }
\else
    \author{
    Shalev Ben{-}David\\
    \small University of Waterloo\\
    \small \texttt{shalev.b@uwaterloo.ca}
    \and
    Eric Blais\\
    \small University of Waterloo\\
    \small \texttt{eric.blais@uwaterloo.ca}
    }
\fi

\date{}
\maketitle

\begin{abstract}
The celebrated minimax principle of Yao (1977)
says that for any Boolean-valued
function $f$ with finite domain,
there is a distribution $\mu$ over the domain of $f$
such that computing $f$ to error $\epsilon$
against inputs from $\mu$ is just as hard
as computing $f$ to error $\epsilon$ on worst-case inputs.
Notably, however, the distribution $\mu$
depends on the target error level $\epsilon$:
the hard distribution which is tight
for bounded error might be trivial to solve
to small bias, and the hard distribution
which is tight for
a small bias level might be far from tight
for bounded error levels.

In this work, we introduce a new type of minimax
theorem which can provide a hard distribution
$\mu$ that works for all bias levels at once.
We show that this works for randomized query complexity,
randomized communication complexity,
some randomized circuit models, quantum query
and communication complexities, approximate
polynomial degree, and approximate logrank.
We also prove an improved version of Impagliazzo's
hardcore lemma.

Our proofs rely on two innovations over
the classical approach of using Von Neumann's
minimax theorem or linear programming duality.
First, we use Sion's minimax theorem to
prove a minimax theorem for ratios
of bilinear functions representing the cost
and score of algorithms.

Second, we introduce
a new way to analyze low-bias randomized algorithms
by viewing them as ``forecasting algorithms''
evaluated by a certain proper scoring rule.
The expected score of the forecasting
version of a randomized algorithm
appears to be a more fine-grained
way of analyzing the bias of the algorithm.
We show that such expected scores have many
elegant mathematical properties:
for example, they can be amplified linearly
instead of quadratically.
We anticipate forecasting algorithms will
find use in future work in which a fine-grained
analysis of small-bias algorithms is required.
\end{abstract}

\iffocs
  \begin{IEEEkeywords}
  Minimax; Randomized computation; Quantum computation; Query complexity; Communication complexity; Polynomial degree complexity; Circuit complexity
  \end{IEEEkeywords}
  
  {\let\thefootnote\relax\footnotetext{$^\dagger$\,The full version of the article is available on arXiv as report number \eprint{2002.10802}.}}
\else
    \clearpage
    {\small\tableofcontents}
    \clearpage
\fi

\section{Introduction}
\label{sec:introduction}

Yao's minimax principle~\cite{Yao77} is a central tool
in the analysis of randomized algorithms in many
different models of computation. 
In its most commonly-used form,
it states that for every Boolean-valued function
$f$ with a finite domain,
if $\mathcal{R}^{(c)}$ denotes the set of randomized
algorithms with worst-case cost at most $c$ and
$\Delta$ denotes the set of distributions
over the domain of $f$, then
{
\iffocs
\small
\else
\fi
\[
\min_{R \in \mathcal{R}^{(c)}} \max_{\mu \in \Delta} \Pr[ R(x) \neq f(x) ]
= \max_{\mu \in \Delta} \min_{R \in \mathcal{R}^{(c)}} \Pr[ R(x) \neq f(x) ]
\]
}
with both probabilities being over the choice of $x$ drawn from $\mu$ and the internal randomness of $R$. This identity implies that there exists a distribution $\mu$ for which any algorithm that computes $f$ with bounded error over inputs drawn from $\mu$ must have cost at least $\R(f)$,
the cost of computing $f$ to worst-case bounded error. 
But it does not say anything else about $\mu$ itself. Notably, 
\begin{enumerate}[I.]
\item The minimax principle does not guarantee that the resulting distribution $\mu$ must be balanced on the sets $f^{-1}(0)$ and $f^{-1}(1)$. 
\item More generally, it does not rule out the possibility that $f$ is very easy to compute by randomized algorithms that are only required to output the correct value with probability at least $\frac{1+\gamma}2$ for some small bias measure $\gamma > 0$ over inputs drawn from the distribution $\mu$.
\end{enumerate}
A separate application of the minimax principle can be used to show that there is a distribution $\mu'$ for which all randomized algorithms computing $f$ with bias $\gamma$ over $\mu'$ have cost at least $\R_{\frac{1-\gamma}2}(f)$
(the cost of computing $f$ to worst-case error $(1-\gamma)/2$), but then there is no guarantee that randomized algorithms with bounded error over $\mu'$ must have cost anywhere close to $\R(f)$.

Intuitively, it seems reasonable to expect that for every
function $f$, there is a distribution $\mu$ for $f$
that addresses issues I and II: a distribution
that is balanced on $f^{-1}(0)$ and $f^{-1}(1)$,
and which is at least slightly hard even to solve
to a small bias level $\gamma$.

\begin{question}[Informal]\label{qstn:informal}
Is there a distribution $\mu$ which certifies
the hardness of $f$ for all
bias levels $\gamma>0$ at the same time?
\end{question}

More formally, observe that the cost of computing $f$
to worst-case bias $\gamma$ cannot be smaller than
$\gamma^2\R(f)$. This is because randomized algorithms
can be \emph{amplified}: by repeating an algorithm
$O(1/\gamma^2)$ times and outputting the majority
vote of the runs, we can increase its bias from
$\gamma^2$ to $\Omega(1)$. Therefore, a natural
refinement of \qstn{informal} is as follows.

\begin{question}[Refinement of \qstn{informal}]
\label{qstn:refined}
Is there a distribution $\mu$ such that for all
bias levels $\gamma>0$, any algorithm computing
$f$ to bias $\gamma$ against $\mu$ must have
cost at least $\Omega(\gamma^2\R(f))$?
\end{question}

\qstn{refined} is the primary focus of this work.
We answer it affirmatively in a variety of computational
models (we can handle most models in which amplification
and Yao's minimax principle both apply).
We note that the distribution satisfying
the conditions of \qstn{refined} is hard
for bounded error in Yao's sense, since each algorithm
solving $f$ to bounded error against $\mu$ must
have cost at least $\Omega(\R(f))$. In addition to this,
such $\mu$ must also be perfectly balanced between
$0$- and $1$-inputs of $f$ (by considering the limit
as $\gamma\to 0$), and must remain somewhat hard
to solve even to small bias levels.

The study of \qstn{refined} has led us to consider
randomized forecasting algorithms which output
\emph{probabilistic confidence predictions}
about the value of $f(x)$, instead of a Boolean guess
for $f(x)$. When evaluated using a certain proper
scoring rule, the best possible score of a forecasting
algorithm is intimately related to the best possible
bias of a randomized algorithm; in fact, the score
appears to be a more fine-grained way of measuring
the bias. Scores of forecasting algorithms
appear to be the ``right'' way of measuring
the success of randomized algorithms, as such scores
satisfy elegant mathematical properties.
The following question, which we answer affirmatively,
turns out to be a strengthening of \qstn{refined}.

\begin{question}\label{qstn:score}
Is there a distribution $\mu$ such that for all
$\eta>0$, any forecasting algorithm which
achieves expected score at least $\eta$ against
$\mu$ must have cost at least $\Omega(\eta\R(f))$?
\end{question}

\subsection{Motivation from joint computation}

The answers to \qstn{refined} and \qstn{score} have
a direct impact on the study of
composition theorems and joint computation problems in
randomized computational models: a natural approach for such problems
involves first applying a minimax theorem and then establishing the
required inequalities in the deterministic distributional setting.
However, as observed by Shaltiel~\cite{Sha03}
this approach runs into trouble if the hard distribution
is easy to solve to small bias.
Specifically, Shaltiel considered distributions $\mu$
which are hard to solve most of the time, but which
give a completely trivial input with small probability $\gamma$.
Then computing $n$ independent copies from $\mu$
is a little easier than $n$ times the cost of computing $f$,
because on average, $\gamma n$ of the copies are trivial;
the cost of computing $n$ independent inputs
from $\mu$ is at most $(1-\gamma)n$ times the cost of solving $f$.

Things get even worse when the inputs have a promised
correlation, as can happen when proving composition theorems.
For a concrete example, consider the partial function
$\triv_n$, which is defined on domain $\{0^n,1^n\}$
and maps $0^n\to 0$ and $1^n\to 1$. Suppose we want
to prove a composition lower bound with $\triv_n$
on the outside: that is, we want to show that
for every function $f$, computing $\triv_n$
composed with $n$ copies of $f$ requires $\Omega(\R(f))$
cost. In other words, we want to lower bound the cost
of an algorithm which outputs $0$ when given $n$
$0$-inputs to $f$, outputs $1$ when given $n$
$1$-inputs to $f$, and outputs arbitrarily when given
some other type of input.

Now, if we try to lower bound this using the hard distribution
from Yao's minimax principle, then the distribution
might give a trivial input
with small probability $\gamma$,
as Shaltiel observed; but then so long as $n=\Omega(1/\gamma)$,
one of the inputs to $f$ will be trivial with high probability,
and we can solve this ``all-$0$s vs all-$1$s'' problem
simply by searching for the trivial copy -- potentially
much faster than the worst-case cost of computing a single
copy of $f$!

The hard distributions we give in this work solve this issue
by being hard for all bias levels. In our companion manuscript
\cite{BB20b}, we use one of the query versions of our
minimax theorem 
\iffocs
\else
  (\thm{query_shaltiel_free})
\fi
to prove a new composition theorem
for randomized query complexity.

\subsection{Main tools}

\iffocs
  \subsubsection*{Minimax theorem for cost/score ratios}
\else
  \paragraph{Minimax theorem for cost/score ratios.}
\fi
The first main result is a new minimax theorem for the ratio of the  cost and score of randomized algorithms. A special case of the theorem with a simple formulation is as follows.

\begin{theorem}
\label{thm:specialcase_minimax}
\iffocs
  [Special case of the main minimax theorem]
\else
  [Special case of \thm{main_minimax}]
\fi
Let $\mathcal{R}$ be a set of randomized algorithms that can be expressed as a convex subset of a real topological vector space. Let $S$
be a nonempty finite set, and let $\Delta$ be the
set of all probability distributions over $S$,
viewed as a subset of $\bR^{|S|}$. Let
$\cost\colon\mathcal{R}\times\Delta\to(0,\infty)$
and $\score\colon\mathcal{R}\times\Delta\to[-\infty,\infty)$
be continuous bilinear functions.
Then using the convention $r/0=\infty$ for all $r\in(0,\infty)$
and the notation $r^+=\max\{r,0\}$
for all $r\in[-\infty,\infty]$, we have
\[\adjustlimits\inf_{R\in\mathcal{R}}\max_{x\in S}
\frac{\cost(R,x)}{\score(R,x)^{+}}
=\adjustlimits\max_{\mu\in\Delta}\inf_{R\in\mathcal{R}}
\frac{\cost(R,\mu)}{\score(R,\mu)^{+}}.\]
Further, all of the above maximums are attained.
\end{theorem}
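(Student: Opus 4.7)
The plan is to apply Sion's minimax theorem to the function $F(R, \mu) := \cost(R, \mu)/\score(R, \mu)^{+}$. Although $F$ is not itself bilinear, its sublevel and superlevel sets are all closed half-spaces cut out by the continuous linear expression $\cost - c \cdot \score$, so $F$ is quasi-convex in $R$ and quasi-concave in $\mu$---exactly the structure Sion's theorem requires.

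First I would show that the LHS equals $\inf_{R} \sup_{\mu \in \Delta} F(R, \mu)$, reducing the identity to a minimax equality for $F$ on $\mathcal{R} \times \Delta$. By the mediant inequality $\frac{\sum_i a_i}{\sum_i b_i} \le \max_i \frac{a_i}{b_i}$ (valid when all $b_i > 0$), we get $F(R, \mu) \le \max_{x \in S} F(R, x)$ whenever all $\score(R, x) > 0$; and in the remaining case where some $\score(R, x^{*}) \le 0$, the choice $\mu = \delta_{x^{*}}$ yields $F(R, \mu) = +\infty = \max_{x} F(R, x)$. Either way, the supremum over $\mu$ is attained by a delta distribution, so $\max_{x} F(R, x) = \sup_{\mu} F(R, \mu)$.

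Next I would verify Sion's hypotheses. For any $c > 0$, the sublevel set $\{R : F(R, \mu) \le c\}$ equals the closed half-space $\{R : \cost(R, \mu) - c \cdot \score(R, \mu) \le 0\}$: the case $\score(R, \mu) \le 0$ is automatically excluded because $\cost > 0$ would then contradict $\cost \le c \cdot \score \le 0$. Symmetrically, $\{\mu : F(R, \mu) \ge c\} = \{\mu : \cost(R, \mu) - c \cdot \score(R, \mu) \ge 0\}$: the case $\score \le 0$ is automatically included because $\cost > 0 \ge c \cdot \score$. Both sets are closed and convex by continuity and bilinearity of $\cost$ and $\score$, so $F(\cdot, \mu)$ is quasi-convex and lower semi-continuous on $\mathcal{R}$ and $F(R, \cdot)$ is quasi-concave and upper semi-continuous on $\Delta$. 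Since $\Delta$ is compact and $\mathcal{R}$ is convex, Sion's theorem yields $\inf_{R} \sup_{\mu} F(R, \mu) = \sup_{\mu} \inf_{R} F(R, \mu)$; attainment on the compact side follows because $\mu \mapsto \inf_R F(R, \mu)$ is an infimum of a family of usc functions, hence itself usc on $\Delta$.

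The main obstacle I anticipate is that $F$ takes values in $[0, +\infty]$ rather than in $\bR$, while Sion's theorem is classically stated for real-valued functions. I would handle this with a truncation argument: define $F_M := \min(F, M)$, which is real-valued in $[0, M]$ and whose level sets agree with those of $F$ in the relevant range, so it inherits the quasi-convexity, quasi-concavity, and semi-continuity used above. Applying Sion's to $F_M$ gives $\inf_{R} \sup_{\mu} F_M = \sup_{\mu} \inf_{R} F_M$; sending $M \to \infty$ and using monotone convergence on both sides then recovers the identity for $F$, while compactness of $\Delta$ together with an accumulation-point argument on the maximizers $\mu_M^{*}$ produces attainment in the limit.
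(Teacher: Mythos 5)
Your proposal is correct and takes essentially the same route as the paper: apply Sion's theorem to the ratio $F=\cost/\score^+$ after verifying that it is quasi-saddle and appropriately semicontinuous, reduce the supremum over $\Delta$ to the maximum over $S$ via the mediant inequality, and handle the extended-real values of $F$ by a truncation argument. Two small remarks. First, your half-space characterization of the sublevel and superlevel sets is a clean, unified way to get quasi-convexity in $R$, quasi-concavity in $\mu$, and both semicontinuities in one stroke; the paper derives these facts separately (Lemmas~\ref{lem:ratio}, \ref{lem:ratio_properties}, \ref{lem:positive_preserves}, \ref{lem:positive_ratio}) because it needs them again for the more general \thm{main_minimax}, where $\cost$ can vanish and the ``well-behaved'' conditions come into play; your direct proof of the special case legitimately sidesteps all of that since $\cost>0$ throughout. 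Second, the accumulation-point argument on the maximizers $\mu_M^*$ is unnecessary: once you know $\mu\mapsto\inf_R F(R,\mu)$ is upper semicontinuous (as a pointwise infimum of the upper semicontinuous functions $F(R,\cdot)$), attainment of its maximum on the compact set $\Delta$ is immediate and no passage to the limit is needed. Your truncation $F_M=\min(F,M)$ is a minor variant of the paper's two-sided clamp in \thm{sion_extended}, and your observation that $\inf_R\sup_\mu F_M=\min\bigl(\inf_R\sup_\mu F,M\bigr)$ and $\sup_\mu\inf_R F_M=\min\bigl(\sup_\mu\inf_R F,M\bigr)$ does make the limit $M\to\infty$ go through cleanly.
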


The general version of the minimax theorem in 
\iffocs
  the full version of the paper
\else
  \thm{main_minimax} 
\fi
shows that the same identity holds even when the $\cost$ and $\score$ functions are semicontinuous and saddle (but not necessarily linear) under some mild additional restrictions. Furthermore, a variant of the theorem also holds when we consider convex and compact subsets of distributions over the finite set $S$ instead of the set $\Delta$ of all distributions over that set.

Minimax theorems for ratios of semicontinuous and saddle functions as in 
\iffocs
  the general version of \thm{specialcase_minimax}
\else
  \thm{main_minimax} 
\fi
do not seem to have appeared in the literature previously in the precise form we need, but as we show in 
\iffocs
  the full version of the paper,
\else
  \sec{minimax}, 
\fi
they can be obtained by extending Sion's minimax theorem~\cite{Sio58} with standard arguments. 
We believe that the main contribution of
\iffocs
  the general version of \thm{specialcase_minimax}
\else
  \thm{main_minimax}
\fi
is in its interpretation for randomized algorithms. Various extensions and variations of Yao's minimax theorem have been considered in the computer science literature previously~\cite{Yao77,Imp95,Ver98,Bra15,BGK+18,BB19}, but all of them appear to consider the cost of an algorithm (with the minimax theorem applied to algorithms with a fixed worst-case score), the score of an algorithm (with the cost being fixed), or a linear combination of the two. None of those variants suffice to answer the questions raised at the beginning of the introduction or to establish the results in the following subsections; what was needed in those cases was a minimax theorem for the \emph{ratio} of the cost/score of randomized algorithms, and we suspect that this ratio minimax theorem will find further applications in computer science in the future as well.

\iffocs
  \subsubsection*{Forecasting algorithms and linear amplification}
\else
  \paragraph{Forecasting algorithms and linear amplification.}
\fi
To convert the statements obtained from 
\iffocs
  the general version of \thm{specialcase_minimax}
\else
  \thm{main_minimax} 
\fi
regarding the cost/score ratios of randomized algorithms under some distribution $\mu$ into more familiar lower bounds on the cost of randomized algorithms that achieve some bias on $\mu$, we need a \emph{linear amplification} theorem. Ideally, we would like to argue that if there exists a randomized algorithm $R$ with bias $\gamma$ on $\mu$, then by combining $O(1/\gamma)$ instances of $R$ we can obtain a randomized algorithm $R'$ with
$\cost(R',\mu) = O\left(\frac1\gamma \cdot \cost(R,\mu)\right)
= O\left(\frac{\cost(R,\mu)}{\bias_f(R,\mu)}\right)$ and constant bias. Unfortunately, such a linear amplification property does not hold for most models of randomized algorithms, where amplification from bias $\gamma$ to bounded error requires combining $O(1/\gamma^2)$ instances of the original algorithm. To obtain a linear amplification result, we must turn our attention away from bias and error and consider other score functions instead.\footnote{The astute reader may have noticed that we obtain linear amplification if we simply set the score to be the squared bias of the randomized algorithm. That is true, but this approach does not work in conjunction with the ratio minimax theorem since this score function no longer satisfies the appropriate saddle property requirements of that theorem; this is why we instead consider forecasting algorithms as described below.}

To describe our score function, we first generalize
our computational model from randomized algorithms that
output $0$ or $1$ to \emph{forecasting algorithms}, which are randomized algorithms that output a \emph{confidence value} in $[0,1]$ for the value $f(x)$ of the function $f$ on their
given input $x$. 
A ``low'' confidence prediction is a value close to $\frac12$ whereas a ``high'' confidence prediction would be a value very close to $0$ or to $1$.
There are many natural ways to assign a score to a confidence
value for $f(x)$.
The study of such scoring rules and their properties has a rich history in the statistics and decision theory communities (see for instance~\cite{BSS05,GR07} and references therein); we discuss some fundamental scoring rules and give relations between them in 
\iffocs
  the full version of the paper.
\else
  \sec{forecasting}. 
\fi
Of particular importance to our main purpose is the scoring rule $\hs \colon [0,1] \to [-\infty,1]$ defined by
\[
\hs_f(p) = 
\begin{cases}
1 - \sqrt{\frac{1-p}{p}} & \mbox{when } f(x) = 1 \\
1 - \sqrt{\frac{p}{1-p}} & \mbox{when } f(x) = 0.
\end{cases}
\]
Define the score of a forecasting algorithm $R$ on an input $x$ in the domain of $f$ to be $\score_{\hs,f}(R,x) = \E[ \hs_f(R(x)) ]$, the expectation of the hs score of the output of $R$ over the internal randomness of $R$. Then linear amplification does hold for this score function.

\begin{lemma}
\label{lem:linear-amplification}
For any Boolean-valued function $f$, any forecasting algorithm $R$, and any $k \ge 1$, there is a forecasting algorithm $R'$ that combines the outputs of $k$ instances of $R$ and satisfies
\[
\score_{\hs,f}(R',x) \ge 1 - (1 - \score_{\hs,f}(R,x))^k
\]
for every $x$ in the domain of $f$. In particular, when $k = \max_x \frac{2}{\score_{\hs,f}(R,x)}$ then 
for each $x \in \Dom(f)$,
$\score_{\hs,f}(R',x) \ge 1 - e^{-2} > 0.85$.
\end{lemma}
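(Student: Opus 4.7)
The plan is to construct $R'$ by running $k$ mutually independent copies of $R$ on input $x$ to obtain forecasts $p_1,\dots,p_k \in [0,1]$, and then combining them via the Bayesian log-odds rule: output
\[
p' \;=\; \frac{\prod_{i=1}^k p_i}{\prod_{i=1}^k p_i + \prod_{i=1}^k (1-p_i)}.
\]
Equivalently, writing $o(p) = p/(1-p)$ for the odds, we set $o(p') = \prod_{i=1}^k o(p_i)$. This is the natural rule one gets from treating each $p_i$ as a posterior probability that $f(x)=1$ under a uniform prior and aggregating by multiplying likelihood ratios.

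The key observation that drives the proof is that $1-\hs_f(p)$ is, up to a sign in the exponent, a square root of the odds:
\[
1-\hs_f(p) \;=\; \begin{cases} \sqrt{(1-p)/p} \;=\; o(p)^{-1/2} & \text{if } f(x) = 1, \\[2pt] \sqrt{p/(1-p)} \;=\; o(p)^{+1/2} & \text{if } f(x) = 0.\end{cases}
\]
In either case, $1-\hs_f(p)$ is a fixed power of $o(p)$, so the multiplicative combination rule $o(p') = \prod_i o(p_i)$ yields the pointwise identity
\[
1 - \hs_f(p') \;=\; \prod_{i=1}^k \bigl(1 - \hs_f(p_i)\bigr)
\]
for every realization of $p_1,\dots,p_k$ and every value of $f(x)$. (The degenerate cases where some $p_i \in \{0,1\}$ are handled by the usual conventions, since then both sides are $+\infty$ simultaneously.)

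To conclude the first claim, I take expectations over the internal randomness of $R'$. Since the $k$ runs of $R$ are independent for a fixed $x$, the random variables $1-\hs_f(p_i)$ are mutually independent and identically distributed, so
\[
\E\bigl[1 - \hs_f(p')\bigr] \;=\; \prod_{i=1}^k \E\bigl[1-\hs_f(p_i)\bigr] \;=\; \bigl(1 - \score_{\hs,f}(R,x)\bigr)^k,
\]
which rearranges to $\score_{\hs,f}(R',x) = 1 - (1-\score_{\hs,f}(R,x))^k$, in fact with equality. For the ``in particular'' part, I apply the elementary inequality $1-s \le e^{-s}$: with $k = \max_x 2/\score_{\hs,f}(R,x)$, we have $k\cdot \score_{\hs,f}(R,x) \ge 2$ for every $x$, hence $(1-\score_{\hs,f}(R,x))^k \le e^{-k\,\score_{\hs,f}(R,x)} \le e^{-2}$.

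The only genuinely tricky step is identifying the combination rule in the first place; once one recognizes that the hs score was designed so that $1-\hs_f(p)$ factors as a signed square root of the odds, the rest is bookkeeping and an application of independence. This is precisely where linearity (rather than quadratic amplification) comes from: majority vote of Boolean outputs would lose a square root, whereas combining full confidence forecasts by multiplying odds preserves all of the statistical information and makes the score multiplicatively composable.
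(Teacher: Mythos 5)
Your proof takes essentially the same approach as the paper's: the combination rule $o(p')=\prod_i o(p_i)$ is literally the paper's $\phi^{(k)}=\bigl(1+\prod_i(1-\phi(x_i))/\phi(x_i)\bigr)^{-1}$, and the key observation that $1-\hs_f(p)$ equals $o(p)^{\pm 1/2}$ so that the $\hs$ score becomes multiplicative under independence is exactly the computation in the paper's proof of the amplification lemma.

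One small inaccuracy: your parenthetical dismissal of the boundary cases ("then both sides are $+\infty$ simultaneously") is not correct when \emph{both} $p_i=0$ and $p_j=1$ occur among the $k$ runs. In that event $p'$ is $0/0$ and hence undefined, and the product $\prod_i(1-\hs_f(p_i))$ contains both a $0$ and an $\infty$ factor, so the claimed pointwise identity does not hold. The paper explicitly defines $p'\coloneqq 1/2$ in this mixed case and then notes that if both $p=0$ and $p=1$ occur with positive probability then $\score_{\hs,f}(R,x)=-\infty$, so the stated inequality (though not the equality you assert) is vacuously true. This is a minor bookkeeping omission rather than a wrong approach, but as written your pointwise identity and the consequent equality are not established in that case; you need to fall back to the inequality, which holds trivially because the right-hand side degenerates to $-\infty$.
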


To the best of our knowledge, \lem{linear-amplification}
has not previously appeared in the literature.
This lemma is sensitive to the precise definition
of $\hs_f$; other scoring rules do not appear to satisfy
this amplification property, which is crucial for
the proof of our main results. Additionally,
the scoring rule $\hs_f$ is special because
there is a close connection between $\hs$ score of forecasting
algorithms and the bias of randomized algorithms.

\begin{lemma}
\label{lem:bias-hs-conversion}
For any Boolean-valued function $f$, any distribution $\mu$ on $\Dom(f)$, and any parameter $\gamma > 0$,
\begin{itemize}
\item If there exists a randomized algorithm $R$ with $\bias_f(R,\mu) =1- 2\Pr[ R(x) \ne f(x) ] \ge \gamma$,
then there is a forecasting algorithm $R'$ with $\score_{\hs,f}(R',\mu) \ge 1 - \sqrt{1-\gamma^2} \ge \gamma^2/2$, and
\item If there exists a forecasting algorithm $R$ with $\score_{\hs,f}(R,\mu) \ge \gamma$ then there is a randomized algorithm $R'$ with $\bias_f(R',\mu) \ge \gamma$.
\end{itemize}
Moreover, in both cases $R'$ can be explicitly constructed
from $R$ by modifying its output.
\end{lemma}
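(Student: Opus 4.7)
The plan is to handle both directions by explicit pointwise constructions, exploiting the specific algebraic form of $\hs_f$; each direction yields the bound on a per-input basis, and the conclusions on $\mu$ then follow by averaging. I will treat the easier direction first to build intuition for the choice of conversion in the harder one.

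For the forecasting-to-bias direction, I would define the randomized algorithm $R'$ by sampling $p \sim R(x)$ and then outputting $1$ with probability $p$. Since $R'$ outputs $1$ with total probability $\E_{p\sim R(x)}[p]$, its pointwise bias equals $\E_{p\sim R(x)}[\beta_f(p)]$, where $\beta_f(p) = 2p-1$ if $f(x)=1$ and $\beta_f(p) = 1-2p$ if $f(x)=0$. The key ingredient is the pointwise comparison $\hs_f(p) \le \beta_f(p)$. For $f(x)=1$ this rearranges to $2(1-p) \le \sqrt{(1-p)/p}$, i.e.\ $4p(1-p) \le 1$, which holds trivially; the $f(x)=0$ case is symmetric. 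Taking expectations over $R$'s randomness and then over $\mu$ gives $\bias_f(R',\mu) \ge \score_{\hs,f}(R,\mu) \ge \gamma$, as required.

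For the bias-to-forecasting direction, I would define the forecasting algorithm $R'$ to run $R$ once and output $p_1 = (1+\gamma)/2$ if $R$ returns $1$ and $p_0 = (1-\gamma)/2$ if $R$ returns $0$. Setting $\alpha = \sqrt{(1-\gamma)/(1+\gamma)}$ and $\beta = 1/\alpha$, so that $\hs_1(p_1) = \hs_0(p_0) = 1-\alpha$ and $\hs_1(p_0) = \hs_0(p_1) = 1-\beta$, a direct calculation shows that for both $f(x)=1$ and $f(x)=0$ the expected score on input $x$ equals
\[
\E[\hs_f(R'(x))] \;=\; 1 - \tfrac{\alpha+\beta}{2} + b_x \cdot \tfrac{\beta-\alpha}{2} \;=\; 1 - \frac{1 - \gamma\, b_x}{\sqrt{1-\gamma^{2}}},
\]
where $b_x := \bias_f(R,x)$ and where I have used the identities $(\alpha+\beta)/2 = 1/\sqrt{1-\gamma^2}$ and $(\beta-\alpha)/2 = \gamma/\sqrt{1-\gamma^2}$. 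Averaging over $x\sim\mu$ and applying the hypothesis $\E_\mu[b_x] \ge \gamma$ yields $\score_{\hs,f}(R',\mu) \ge 1 - \sqrt{1-\gamma^2}$. The auxiliary bound $1-\sqrt{1-\gamma^2} \ge \gamma^2/2$ follows from squaring $\sqrt{1-\gamma^2} \le 1-\gamma^2/2$.

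The only nontrivial point is the fortunate collapse of the $f(x)=0$ and $f(x)=1$ cases in the first direction to the \emph{same} affine function of $b_x$; this is exactly the feature of $\hs_f$ that makes the symmetric choice $p_0 = 1-p_1 = (1-\gamma)/2$ optimal and produces a clean bound in terms of $\bias_f(R,\mu)$ alone (rather than some worst-case per-input bias). Everything else is routine algebra, and in both directions $R'$ is manifestly obtained from $R$ by a simple post-processing of its output, as claimed in the moreover clause.
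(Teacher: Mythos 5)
Your proposal is correct and uses the same constructions as the paper's Lemma 3.17 (\lem{conversion}): output $(1\pm\gamma)/2$ for the bias-to-forecasting direction and $\Bernoulli(q)$ for the forecasting-to-bias direction, with the second direction resting on the pointwise inequality $\hs_f(p)\le\bias_f(p)$. One small but genuine refinement: the paper's \lem{conversion} is stated for worst-case bias and score and its proof bounds the score pointwise using $p\ge(1+\gamma)/2$, so it does not \emph{directly} yield the distributional statement of Lemma 1.3 (where one only knows $\E_\mu[b_x]\ge\gamma$); your observation that the per-input expected score is the \emph{same} affine function of the per-input bias $b_x$ for every $x$ (and that this affine function has nonnegative slope) is exactly the point that lets the average-case hypothesis pass through to the average-case conclusion. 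The paper implicitly relies on this linearity when it applies \lem{conversion} in the distributional setting (e.g.\ in \cor{query-complexity}); you have made that step explicit, which is a nice clarification rather than a different route.
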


\lem{linear-amplification} and \lem{bias-hs-conversion} can be used to reprove the fact that $O(1/\gamma^2)$ instances of a bias-$\gamma$ randomized algorithms can be combined to obtain a bounded-error algorithm; combining those lemmas (or, more precisely, specific instantiations of these lemmas that account for the explicit constructions of the relevant algorithms and their costs) with the minimax theorem also leads to new results as described in the next section.

\subsection{Main results}
\label{sec:applications}

\iffocs
  \subsubsection*{Hard distributions for bounded error and small bias}
\else
  \paragraph{Hard distributions for bounded error and small bias.}
\fi
The minimax theorem for cost/score ratios and linear amplification of forecasting algorithms can be combined to show that for many measures of randomized complexity, for every Boolean-valued function 
$f$ with finite domain there exists a single distribution $\mu$ on which it is hard to compute $f$ with bounded error \emph{and} with (any) small bias. For example, letting $\RDT(f)$ denote the minimum (worst-case) query complexity of a randomized algorithm computing $f$ (or equivalently the minimum worst-case depth of a decision tree computing $f$) with error at most $\frac13$ on every input in $\Dom(f)$ and $\RDT_{\dot{\gamma}}^\mu$ denote the minimum query complexity of a randomized algorithm that has error probability at most $\dot{\gamma} := \frac{1-\gamma}{2}$ when inputs are drawn from $\mu$, we obtain the following result.

\begin{theorem}
\label{thm:query-avg-worst}
For any non-constant partial function $f\colon \B^n \to\B$, 
there exists a distribution $\mu$ on $\Dom(f)$ such that for every $\gamma \in [0,1]$,
\[
\RDT_{\dot{\gamma}}^\mu(f) = \Omega\big( \gamma^2 \RDT(f)\big).
\]
\end{theorem}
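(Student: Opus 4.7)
The natural strategy is to apply Theorem~\ref{thm:specialcase_minimax} with the cost/score pair appropriate for forecasting decision trees, and then translate the consequence to bias via Lemma~\ref{lem:bias-hs-conversion}. Take $\mathcal{R}$ to be the convex set of randomized forecasting decision trees on $n$ bits, $S = \Dom(f)$, and set $\cost(T,x)$ to be the expected number of queries $T$ makes on input $x$ and $\score(T,x) := \score_{\hs,f}(T,x)$. Both are continuous and bilinear, so Theorem~\ref{thm:specialcase_minimax} produces a distribution $\mu^*$ on $\Dom(f)$ realizing
\[
V \;:=\; \adjustlimits\inf_T\max_x \frac{\cost(T,x)}{\score(T,x)^+}
\;=\; \adjustlimits\inf_T \frac{\cost(T,\mu^*)}{\score(T,\mu^*)^+}.
\]

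The key quantitative step is to show $V = \Theta(\RDT(f))$. The upper bound $V \le O(\RDT(f))$ is routine: apply the second bullet of Lemma~\ref{lem:bias-hs-conversion} to any bounded-error decision tree $R_0$ of worst-case cost $\RDT(f)$ to obtain a forecasting $R_1$ with the same cost and with $\score_{\hs,f}(R_1,x) = \Omega(1)$ for every $x$, giving pointwise ratio $O(\RDT(f))$. For the matching lower bound $V \ge \Omega(\RDT(f))$, one shows that any forecasting $T$ with $\max_x \cost(T,x)/\score(T,x)^+ \le W$ can be turned into a bounded-error randomized algorithm of cost $O(W)$---combining copies of $T$ via Lemma~\ref{lem:linear-amplification} until the aggregated forecast is confident and then converting back via the second bullet of Lemma~\ref{lem:bias-hs-conversion}, so that $\RDT(f) \le O(W)$.

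With $V = \Theta(\RDT(f))$ in hand, the conclusion is immediate. Let $A$ be any randomized algorithm with error probability at most $\dot\gamma$ under $\mu^*$, so $\bias_f(A,\mu^*) \ge \gamma$. The first bullet of Lemma~\ref{lem:bias-hs-conversion} supplies a forecasting $A'$ with $\cost(A',x) = \cost(A,x)$ for every $x$ and $\score_{\hs,f}(A',\mu^*) \ge \gamma^2/2$. The minimax identity at $\mu^*$ then gives
\[
\max_x \cost(A,x) \;\ge\; \cost(A,\mu^*) \;=\; \cost(A',\mu^*) \;\ge\; V \cdot \tfrac{\gamma^2}{2} \;=\; \Omega(\gamma^2 \RDT(f)),
\]
and minimizing over valid $A$ yields $\RDT_{\dot\gamma}^{\mu^*}(f) \ge \Omega(\gamma^2 \RDT(f))$.

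\textbf{Main obstacle.} The crux is the lower bound $V \ge \Omega(\RDT(f))$. A naive application of Lemma~\ref{lem:linear-amplification} with the uniform factor $k = 2/\min_x \score(T,x)$ only yields a bounded-error algorithm of cost $k \cdot \max_x \cost(T,x)$, which can blow up to $\Theta(W^2)$ whenever the $\cost$-maximizing and $\score$-minimizing inputs differ---weakening the conclusion to $V \ge \Omega(\sqrt{\RDT(f)})$. The right fix is an adaptive (sequential) amplification that runs $T$ repeatedly and stops as soon as the aggregated forecast crosses a confidence threshold; a Wald-type analysis, using that $\hs_f(p)$ and the log-likelihood ratio $\log(p/(1-p))$ agree up to a constant near $p = 1/2$, shows the expected number of iterations on input $x$ is $O(1/\score(T,x))$, so the expected per-input cost is $O(\cost(T,x)/\score(T,x)) \le O(W)$. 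Making this adaptive procedure rigorous within the hs-score framework is the real technical work; once in place, everything else is a routine chaining of the minimax theorem with Lemmas~\ref{lem:linear-amplification} and~\ref{lem:bias-hs-conversion}.
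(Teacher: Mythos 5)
Your high-level structure is the paper's: apply the ratio minimax to forecasting decision trees, argue that the value $V$ of that game is $\Theta(\RDT(f))$, and close via the bias-to-score conversion; your final derivation from $V$ to the theorem is correct. One small setup glitch: \thm{specialcase_minimax} requires $\cost$ to take values in $(0,\infty)$, but a forecasting tree that makes no queries has cost $0$, so the paper instead invokes \thm{main_minimax} with its ``well-behaved'' hypotheses (which are then verified for forecasting trees). This is easily patched.

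The substantive gap is at exactly the step you flag as the crux. You correctly diagnose that uniform amplification with $k = 2/\min_x \score(T,x)$ blows up when the cost-maximizing and score-minimizing inputs differ, and that an adaptive scheme is needed to prove $V \ge \Omega(\RDT(f))$. But you then describe a fix without carrying it out, and the fix you sketch is not the paper's and has its own unaddressed difficulty. Your SPRT-style rule stops once the aggregated forecast crosses a confidence threshold; with positive probability the aggregate drifts to a confident \emph{wrong} forecast before the correct-side threshold is reached, so bounding the expected number of iterations does not by itself give a bounded-error algorithm --- one also needs a false-alarm bound, and relating the stopped $\hs$-score to a Wald-type stopping theorem is not routine. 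The paper sidesteps this entirely by making the stopping rule depend on an observable that is oblivious to forecast direction: truncate $R$ at $2Y$ queries, run the truncated $R'$ repeatedly until $10Y$ total queries are made so as to estimate $\cost(R',x)$ (Wald's equation bounds $\E[L]$, Chebyshev lower-bounds $L$ with good probability), then run $R'$ an additional $L$ fresh times and combine the forecasts via \lem{linear-amplification}. Decoupling the cost-estimation phase from the amplification phase is precisely what makes the expected-score analysis close. As written, your proposal leaves a real hole at the step you yourself identify as the technical heart of the theorem, and the route you point toward would need its own nontrivial analysis rather than being a ``routine chaining.''
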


We establish analogous theorems for multiple other computational models 
\iffocs
as well; see Table~\ref{table:results}.
\begin{table}[t]
\caption{Summary of results on hard distributions for bounded error and small bias.}
\label{table:results}
\begin{tabular}{ll}
Randomized comm.~complexity & $\RCC_{\dot{\gamma}}^\mu(f) = \Omega\big( \gamma^2 \RCC(f)\big)$ \\
Quantum query complexity & $\QDT_{\dot{\gamma}}^\mu(f) = \gamma \cdot \tOmega\big(  \QDT(f)\big)$ \\
Quantum comm.~complexity & $\QCC_{\dot{\gamma}}^\mu(f) = \gamma \cdot \tOmega\big(  \QCC(f)\big)$ \\
Polynomial degree & $\deg_{\dot{\gamma}}^\mu(f)=\gamma\cdot\tOmega(\adeg(f))$ \\
Log-rank complexity & $\log\rank_{\dot{\gamma}}^\mu(f)=\gamma\cdot\tOmega(\log\rank_{\frac13}(f))$ \\
Circuit complexity & $\Rcirc_{\dot{\gamma}}^\mu(f) = \gamma^2 \cdot \tOmega\big(\Rcirc(f)\big)$ \\
Log-depth circuit complexity & $\RNC_{\dot{\gamma}}^\mu(f) = \gamma^2 \cdot \tOmega\big(  \RNC(f)\big)$ \\
Threshold circuit complexity & $\RTC_{\dot{\gamma}}^\mu(f) = \gamma^2 \cdot \tOmega\big(  \RTC(f)\big)$
\end{tabular}
\end{table}
\else
as well:
\begin{center}
\begin{tabular}{lll}
Randomized communication complexity & $\RCC_{\dot{\gamma}}^\mu(f) = \Omega\big( \gamma^2 \RCC(f)\big)$ & \cor{communication} \\

Quantum query complexity & $\QDT_{\dot{\gamma}}^\mu(f) = \gamma \cdot \tOmega\big(  \QDT(f)\big)$ & \thm{minimax_quantum} \\

Quantum communication complexity & $\QCC_{\dot{\gamma}}^\mu(f) = \gamma \cdot \tOmega\big(  \QCC(f)\big)$ & \thm{quantum-communication} \\

Polynomial degree & $\deg_{\dot{\gamma}}^\mu(f)=\gamma\cdot\tOmega(\adeg(f))$ & \thm{minimax_polynomials} \\

Log-rank complexity & $\log\rank_{\dot{\gamma}}^\mu(f)=\gamma\cdot\tOmega(\log\rank_{1/3}(f))$ & \thm{minimax_logrank} \\

Circuit complexity & $\Rcirc_{\dot{\gamma}}^\mu(f) = \gamma^2 \cdot \tOmega\big(\Rcirc(f)\big)$ & \thm{main-circuit} \\

Log-depth circuit complexity & $\RNC_{\dot{\gamma}}^\mu(f) = \gamma^2 \cdot \tOmega\big(  \RNC(f)\big)$ & \thm{nc1-circuit} \\

Threshold circuit complexity & $\RTC_{\dot{\gamma}}^\mu(f) = \gamma^2 \cdot \tOmega\big(  \RTC(f)\big)$ & \thm{tc0-circuit} \\

\end{tabular}
\end{center}
\fi
(Note that as in \thm{query-avg-worst}, the novel aspect of all these results is that they guarantee that for each of the stated inequalities, there exists a \emph{single} distribution $\mu$ that satisfies the inequality for \emph{every} value of $\gamma$ simultaneously.)

\iffocs
  \subsubsection*{Hard distributions for forecasting algorithms}
\else
  \paragraph{Hard distributions for forecasting algorithms.} 
\fi
The theorems listed above settle
\qstn{refined} in the affirmative for the specified
models. For the models with quadratic dependence
on $\gamma$
(i.e.\ randomized query complexity, randomized communication
complexity, and the various randomized circuit models),
we also get hard distributions which lower bound
the expected score of a forecasting algorithm,
settling \qstn{score} affirmatively.

\iffocs
  \subsubsection*{Distinguishing power of randomized algorithms and protocols}
\else
  \paragraph{Distinguishing power of randomized algorithms and protocols.}
\fi
In the communication complexity setting, we can also analyze how well a randomized communication protocol computes a function $f \colon \mathcal{X} \times \mathcal{Y} \to \B$ via its communication transcripts. Let $\tran(R,\mu_0)$ denote the distribution on communication transcripts of the randomized protocol $R$ on inputs drawn from $\mu$. Then one way to measure how well $R$ is able to distinguish $0$- and $1$-inputs of $f$ is to measure the Hellinger distance between the distributions $\tran(R,\mu_0)$ and $\tran(R,\mu_1)$ of transcripts of $R$ on some distributions $\mu_0$ over $f^{-1}(0)$ and $\mu_1$ over $f^{-1}(1)$. 
We can use the minimax and linear amplification theorems to give a strong upper bound on this Hellinger distance as a measure of the cost of the protocol.

\begin{theorem}
\label{thm:cc-h}
For any non-constant partial function $f\colon \mathcal{X} \times \mathcal{Y} \to \B$ over finite sets $\mathcal{X}$ and $\mathcal{Y}$,
there is a pair of distributions $\mu_0$ on $f^{-1}(0)$ and $\mu_1$ on $f^{-1}(1)$
such that for any randomized communication protocol $R$, the squared Hellinger distance between 
the distribution of its transcripts on $\mu_0$ and $\mu_1$ is bounded above by
\begin{align*}
\h^2\big(\tran(R,\mu_0),&\tran(R,\mu_1)\big) \\
 &= O\left( \frac{\min\{\cost(R,\mu_0),\cost(R,\mu_1)\}}{\RCC(f)} \right).
\end{align*}
Here $\cost(R,\mu)$ denotes the expected amount
of communication the protocol $R$ transmits
when given inputs from $\mu$.
\end{theorem}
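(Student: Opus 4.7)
My plan is to apply the forecasting version of the ratio minimax theorem (Theorem~\ref{thm:specialcase_minimax}) to randomized communication complexity, and to identify the expected $\hs$ score of the Bayesian posterior forecaster with the squared Hellinger distance between the two transcript distributions. Specifically, I will view a ``forecasting communication protocol'' as a pair $(R,\sigma)$ of a randomized protocol $R$ and a decoding rule $\sigma$ mapping transcripts to $[0,1]$; both $\cost(R,\mu) = \E_{x\sim\mu}[|\tran(R,x)|]$ and $\score_{\hs,f}(\sigma\circ R,\mu)$ are bilinear in $((R,\sigma),\mu)$. Theorem~\ref{thm:specialcase_minimax}, combined with the linear amplification of Lemma~\ref{lem:linear-amplification} (used to convert the resulting cost/score ratio to $\RCC(f)$), will yield a distribution $\mu$ on $\Dom(f)$ for which every forecasting protocol satisfies $\cost(R,\mu) \ge \Omega(\RCC(f)) \cdot \score_{\hs,f}(\sigma\circ R,\mu)$. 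A standard balancedness argument (via the trivial constant forecaster in the $\gamma\to 0$ limit) forces $\mu(f^{-1}(0)) = \mu(f^{-1}(1)) = \tfrac12$; I set $\mu_0 = \mu|_{f^{-1}(0)}$ and $\mu_1 = \mu|_{f^{-1}(1)}$.

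Next, for any randomized protocol $R$, I define the Bayesian posterior decoder $\sigma^*_R(t) = Q(t)/(P(t)+Q(t))$, where $P(t) = \Pr_{x\sim\mu_0}[\tran(R,x)=t]$ and $Q(t) = \Pr_{x\sim\mu_1}[\tran(R,x)=t]$. Using $\sqrt{(1-\sigma^*_R(t))/\sigma^*_R(t)} = \sqrt{P(t)/Q(t)}$ on $\mu_1$-inputs and $\sqrt{\sigma^*_R(t)/(1-\sigma^*_R(t))} = \sqrt{Q(t)/P(t)}$ on $\mu_0$-inputs, a one-line computation gives
\[
\score_{\hs,f}(\sigma^*_R\circ R,\mu) \;=\; 1 - \sum_t\sqrt{P(t)Q(t)} \;=\; \h^2(\tran(R,\mu_0),\tran(R,\mu_1)).
\]
Since the cost of $\sigma^*_R\circ R$ equals that of $R$ and $\cost(R,\mu) = \tfrac12(\cost(R,\mu_0)+\cost(R,\mu_1))$, the first step immediately yields an ``averaged'' bound $\h^2(\tran(R,\mu_0),\tran(R,\mu_1)) \le O\bigl((\cost(R,\mu_0)+\cost(R,\mu_1))/\RCC(f)\bigr)$.

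The main obstacle will be strengthening this from the average of $\cost(R,\mu_0), \cost(R,\mu_1)$ to their minimum. My plan here is a truncation argument: without loss of generality assume $\cost(R,\mu_1) \le \cost(R,\mu_0)$, and consider the truncated protocol $R^{(\tau)}$ obtained by aborting $R$ once its communication exceeds $\tau = \Theta(\cost(R,\mu_1))$, replacing the transcript with a fixed symbol $\bot$. Both $\cost(R^{(\tau)},\mu_0)$ and $\cost(R^{(\tau)},\mu_1)$ are then $O(\cost(R,\mu_1))$, so the averaged bound applied to $R^{(\tau)}$ already matches the target Hellinger bound for the truncated transcripts. The delicate part is to relate $\h^2(\tran(R,\mu_0),\tran(R,\mu_1))$ to $\h^2(\tran(R^{(\tau)},\mu_0),\tran(R^{(\tau)},\mu_1))$, for which I plan to decompose the Bhattacharyya coefficient $\sum_t\sqrt{P(t)Q(t)}$ according to whether $|t|\le\tau$, bound the tail contribution $\sum_{|t|>\tau}\sqrt{P(t)Q(t)}$ by Cauchy--Schwarz followed by Markov's inequality applied separately to $\cost(R,\mu_0)$ and $\cost(R,\mu_1)$, and choose $\tau$ so that the resulting truncation slack is absorbed into the $O(\cdot)$ on the right-hand side.
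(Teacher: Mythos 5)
Your first three steps match the paper's strategy almost exactly: the paper proves a query-complexity version of the claim (Theorem \ref{thm:query_shaltiel_free}) by applying Theorem \ref{thm:main_minimax} to forecasting decision trees, using the odometer/amplification argument of Theorem \ref{thm:randomized_query_LHS} to tie the cost/score ratio to $\R(f)$, invoking Lemma \ref{lem:scores_distances} to equate the optimal $\hs$-score of a posterior decoder with $\h^2(\tran(R,\mu_0),\tran(R,\mu_1))$, and then reducing communication to query by having Alice and Bob treat their public-coin protocol as a decision tree over the indicator bits of all Boolean functions of $x$ and all Boolean functions of $y$. Whether you reduce to query or run the minimax directly over public-coin protocols is cosmetic; your derivation of the balanced $\mu$ and the ``averaged'' bound $\h^2\le O\bigl((\cost(R,\mu_0)+\cost(R,\mu_1))/\RCC(f)\bigr)$ is correct (this is the content of Theorem \ref{thm:cost_hellinger}).

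The gap is in the final step, upgrading the averaged bound to the minimum. Truncating at a \emph{fixed} transcript length $\tau=\Theta(\cost(R,\mu_1))$ does not give you enough control. Write $p_\tau=\Pr_{\mu_0}[|t|>\tau]$ and $q_\tau=\Pr_{\mu_1}[|t|>\tau]$; after your Cauchy--Schwarz step, the truncation slack is bounded only by $\sqrt{p_\tau q_\tau}$, and Markov gives $q_\tau\le\cost(R,\mu_1)/\tau$ but gives nothing useful on $p_\tau$ when $\cost(R,\mu_0)\gg\tau$ (the estimate $p_\tau\le\cost(R,\mu_0)/\tau$ exceeds $1$). So the best you can guarantee is a slack of $\sqrt{\cost(R,\mu_1)/\tau}$. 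For this to be $O(\cost(R,\mu_1)/\RCC(f))$ you need $\tau\gtrsim\RCC(f)^2/\cost(R,\mu_1)$; but the main term coming from the averaged bound on the truncated protocol is $O(\tau/\RCC(f))\gtrsim\RCC(f)/\cost(R,\mu_1)$, which is $\ge 1$ and useless. Balancing the two terms gives at best $\h^2=O\bigl((\cost(R,\mu_1)/\RCC(f))^{1/3}\bigr)$, a genuinely weaker bound than what is claimed. Concretely, if $\cost(R,\mu_1)=1$ and $\cost(R,\mu_0)=\RCC(f)/2$, your scheme cannot certify $\h^2=O(1/\RCC(f))$.

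The paper avoids this by truncating not at a fixed length but at vertices $v$ where the posterior likelihood ratio is large, i.e.\ where $\mu_{1-b}[v]/(2\mu[v])\ge a$ for a fixed constant $a\in(1/2,1)$ (with $b$ chosen so that $\cost(D,\mu_b)$ is the smaller cost). This truncation has two crucial properties that the length cutoff lacks. First, every non-truncated vertex $u$ (in particular the parent of each truncated vertex) satisfies $\mu_{1-b}[u]<\tfrac{a}{1-a}\mu_b[u]$, so the $\mu_{1-b}$-expected depth reached by $D'$ is within a constant factor of the $\mu_b$-expected depth; hence $\cost(D',\mu)=O(\cost(D,\mu_b))$, even when $\cost(D,\mu_{1-b})$ is enormous. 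Second, at a truncated vertex the posterior is already heavily in favour of $1-b$, so outputting the constant confidence $a$ (or $1-a$) recovers an $\hs$-score of $1-2\sqrt{a(1-a)}$ there, which is a constant fraction of the at-most-$1$ score the untruncated tree could achieve; hence $\score(D',\mu)\ge(1-2\sqrt{a(1-a)})\score(D,\mu)$. Both properties exploit the likelihood-ratio structure of the cut, and neither follows from a cutoff at a fixed transcript length. You would need to replace your length-based truncation by something of this type to close the gap.
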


\iffocs
  The full version of the paper includes
\else
  \thm{query_shaltiel_free} establishes 
\fi
an analogous result for query complexity.
In our companion paper~\cite{BB20b}, that theorem is one of the ingredients that enables us to establish a new composition theory for query complexity. 

\iffocs
  \subsubsection*{Hardcore lemma}
\else
  \paragraph{Hardcore lemma.}
\fi
Impagliazzo's Hardcore Lemma~\cite{Imp95} states that for every $\epsilon, \delta > 0$, if every circuit $C$ of size at most $s$ computes $f$ with error at least $\delta$ on the uniform distribution, then there is a $\delta$-regular distribution $\mu = \mu(\delta,\epsilon)$ for which every circuit that computes $f$ with bias at least $\epsilon$ on the distribution $\mu$ must have size $\Omega(\epsilon^2 s)$. Informally, the lemma shows that if a function $f$ is mildly hard on average, it is because it is ``very'' hard to compute on a fairly large subset of its inputs. But, interestingly, this version of the hardcore lemma leaves open the possibility that the hard core might be different for various levels $\epsilon$ of hardness. Using our main theorems, we can show that this is not the case.

\begin{theorem}
\label{thm:univ-hardcore}
There exists a universal constant $c > 0$ such that for any $\delta > 0$ and function $f : \{0,1\}^n \to \{0,1\}$, if every circuit $C$ of size at most $s$ satisfies
$\Pr[ C(x) = f(x) ] \le 1 - \delta$ when the probability is taken over the uniform distribution of $x$ in $\{0,1\}^n$,
then there is a distribution $\mu$ with min-entropy $\delta$ such that for every $\epsilon > 0$, any circuit $C'$ of size at most $c \cdot \epsilon^2/\log(1/\delta) \cdot s$ has success probability bounded by
\[
\Pr[ C'(x) = f(x) ] \le \frac{1+\epsilon}2.
\]
\end{theorem}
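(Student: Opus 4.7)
The plan is to apply the generalized cost/score ratio minimax theorem (\thm{specialcase_minimax}) to the set of Boolean forecasting circuits, combined with the pointwise linear amplification (\lem{linear-amplification}) and the score/bias conversion (\lem{bias-hs-conversion}). Let $\mathcal{R}$ be the convex set of finite mixtures of forecasting circuits (circuits outputting values in $[0,1]$, suitably discretized), define $\cost(R) = \E_{C \sim R}[\text{size}(C)]$ and $\score(R,\mu) = \E_{x \sim \mu}\E_{C \sim R}[\hs_f(C(x))]$, and let $\Delta'_\delta$ be the compact convex set of distributions on $\B^n$ satisfying $\mu(x) \le 2/(\delta 2^n)$ for all $x$. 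This is a constant-factor relaxation of ``density-$\delta$,'' harmless up to the constant $c$ in the theorem's conclusion. The generalized ratio minimax over a compact convex subset of $\Delta$ then yields
\[
\lambda^* \;:=\; \max_{\mu \in \Delta'_\delta}\inf_{R \in \mathcal{R}}\frac{\cost(R)}{\score(R,\mu)^+}
\;=\; \inf_{R \in \mathcal{R}}\max_{\mu \in \Delta'_\delta}\frac{\cost(R)}{\score(R,\mu)^+}.
\]

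The core technical step is to lower-bound $\lambda^*$ by showing that for every forecasting circuit $R$ of size $s'$ there exists $\mu \in \Delta'_\delta$ with $\score(R,\mu) \le T$, where $T = c \cdot s' \log(1/\delta)/s$ for an appropriately small absolute constant $c$. Let $\text{Bad} := \{x : \score_{\hs,f}(R,x) < T\}$. If $|\text{Bad}| \ge \delta 2^n / 2$, the uniform distribution on any $(\delta 2^n/2)$-sized subset of $\text{Bad}$ lies in $\Delta'_\delta$ and witnesses $\score(R,\mu) \le T$. Otherwise, $|\text{Bad}| < \delta 2^n/2$, and at least a $(1-\delta/2)$-fraction of inputs have $\score_{\hs,f}(R,x) \ge T$. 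Applying \lem{linear-amplification} with $k = \lceil 2\log(1/\delta)/T\rceil$ independent evaluations yields a forecasting circuit $R'$ of size $k s' = O(s)$ (for appropriate $c$) whose pointwise $\hs$-score on every good $x$ exceeds $1 - (1-T)^k \ge 1 - \delta^2$. Thresholding $R'$ at $1/2$ produces a Boolean circuit $C$ of the same size (per \lem{bias-hs-conversion}) with $\Pr[C(x) \ne f(x)] \le \delta^2$ on each good $x$; bounding the bad-fraction error trivially by $1$, the total uniform error satisfies
\[
\Pr_U[C(x) \ne f(x)] \;\le\; (1-\delta/2)\, \delta^2 + (\delta/2) \cdot 1 \;<\; \delta,
\]
contradicting the hypothesis that every size-$s$ circuit has error at least $\delta$ on uniform. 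Hence $|\text{Bad}| \ge \delta 2^n/2$, which forces $\lambda^* \ge \Omega(s/\log(1/\delta))$.

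The minimax identity now supplies a distribution $\mu^* \in \Delta'_\delta$, whose density bound translates into the claimed min-entropy condition (up to the absorbable factor of $2$), such that every forecasting circuit $R$ of size $s'$ obeys $\score_{\hs,f}(R,\mu^*) \le O(s' \log(1/\delta)/s)$. Given any $\epsilon > 0$ and any classical circuit $C'$ of size $s''$ with $\Pr_{\mu^*}[C'(x) = f(x)] > (1+\epsilon)/2$, the bias of $C'$ on $\mu^*$ exceeds $\epsilon$, so \lem{bias-hs-conversion} yields a forecasting circuit of size $O(s'')$ with $\hs$-score on $\mu^*$ at least $\epsilon^2/2$. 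Substituting into the universal bound gives $\epsilon^2/2 \le O(s'' \log(1/\delta)/s)$, i.e., $s'' \ge \Omega(\epsilon^2 s/\log(1/\delta))$, which is precisely the contrapositive of the theorem's conclusion.

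The main obstacle is the dichotomy argument used to lower-bound $\lambda^*$: pointwise $\hs$-score amplification works cleanly only on inputs where $R$ already has a positive margin, so we must separate ``good'' inputs (where amplification pushes $\hs$-score up to $1-\delta^2$) from ``bad'' inputs (where it may fail). The density-halving trick — requiring $|\text{Bad}| \ge \delta 2^n/2$ rather than $\delta 2^n$ — is what absorbs the bad-fraction error within the $\delta$ slack of the hypothesis. It is crucial that the amplification in \lem{linear-amplification} is linear (rather than quadratic) in $1/T$: this is what turns the natural $\log(1/\delta)/T$-copy boosting into a size bound of $O(s)$ and yields the sharp $\log(1/\delta)$ denominator in the final size formula, matching the best known classical hardcore bounds while gaining universality over $\epsilon$. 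The remaining preconditions for the minimax — compactness of $\Delta'_\delta$, bilinearity, and upper semicontinuity of $\score$ (which can equal $-\infty$ when a forecaster is confidently wrong) — are routine but need to be verified in order to invoke the generalized ratio minimax over a compact convex subset.
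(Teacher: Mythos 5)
Your proof is correct and follows essentially the same route as the paper's own argument in Section~\ref{sec:hardcore}: apply the ratio-minimax theorem (in its variant for a compact convex subset of distributions, which is Lemma~\ref{lem:minimax-circuits-regular} in the paper) to density-$\delta$/min-entropy-$\delta$ distributions, and run a dichotomy — either the minimax value is large, giving the universal hardcore set, or some small forecaster achieves good score everywhere outside a small bad set, in which case linear amplification plus the $\hs$-to-bias conversion produces a size-$O(s)$ circuit contradicting the hypothesis. The two presentations differ only cosmetically: the paper phrases the dichotomy directly as Case~1 vs.\ Case~2 on the minimax value and amplifies to constant error then to error $<\delta$ in two steps, whereas you establish the lower bound on $\lambda^*$ by exhibiting a witness $\mu$ for every fixed $R$ and amplify directly to error $\delta^2$ in one shot; you also work in the slightly relaxed density set $\Delta'_\delta$ (bound $2/(\delta 2^n)$ rather than $1/(\delta 2^n)$), which is harmless. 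One detail you gloss over but should keep in mind is the gate cost of the combining circuit in the amplification step: taking $k$ copies costs $k \cdot s'$ gates, but implementing the rule $\big(1 + \prod_i (1-C_i(x))/C_i(x)\big)^{-1}$ requires the arithmetic circuits from Propositions~\ref{prop:elementary-functions} and~\ref{prop:iterated-addition}, which add an additive $O(k\log k + k\cdot\text{resolution} + \text{resolution}^{O(1)})$ overhead; the paper incurs this too and absorbs it in the same place, so this is a shared imprecision rather than a gap unique to your proof.
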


The proof of \thm{univ-hardcore} follows closely the original argument of Nisan in~\cite{Imp95} that established the hardcore lemma via a minimax theorem. Since that original work, many extensions and different proofs of the hardcore lemma have been established (e.g.,~\cite{Imp95,KS03,BHK09,TTV09}), but to the best of our knowledge \thm{univ-hardcore} represents the first version of the lemma which gives a single distribution $\mu$ which is hard for all values of $\epsilon > 0$ simultaneously.

\subsection{Recent independent work}

In independent work concurrent with this one,
Bassilakis, Drucker, G\"{o}\"{o}s, Hu, Ma, and Tan \cite{BDG+20}
showed the existence of a certain hard distribution
for randomized query complexity. They showed
every Boolean function $f$ has hard distributions
$\mu_0$ and $\mu_1$ (on $0$- and $1$-inputs respectively)
such that given query access to $k$ independent
samples from $\mu_b$, it is still necessary to use
$\Omega(\R(f))$ queries to the bits of the samples
in order to decide the value of $b\in\B$ to bounded error.

The guarantee on the hard distribution provided by \cite{BDG+20}
is formally stronger than the one we provide in
\iffocs
  the query complexity analogue of \thm{cc-h} in the full version of the paper
\else
  \thm{query_shaltiel_free} 
\fi
(though in our companion manuscript \cite{BB20b}, we prove a new composition theorem for randomized query complexity,
and use it to conclude that
the guarantee of \cite{BDG+20} turns out to be
equivalent to the guarantee of
\iffocs
  the query complexity analogue of \thm{cc-h}).
\else
  \thm{query_shaltiel_free} in our current work).
\fi
The tools used by \cite{BDG+20} are also completely different:
they use arguments specific to query complexity
that construct the hard distribution more explicitly,
but their arguments do not generalize to other models
such as communication complexity or circuit complexity.

\iffocs
  \subsection{Overview of the remaining sections of the full version}
\else
  \subsection{Organization and overview of the remaining sections}

  \begin{description}
\fi

\iffocs
  \textbf{Section II}
\else
  \item[\sec{minimax}] 
\fi
is devoted to proving the main minimax theorem for the cost/score ratio of randomized algorithms. The main result of that section is 
\iffocs
  the general version of \thm{specialcase_minimax};
\else
  \thm{main_minimax}; 
\fi
the rest of the section is devoted to introducing the mathematical notions and preliminaries required to obtain a proof of that theorem from Sion's minimax theorem.

\iffocs
  \textbf{Section III}
\else
  \item[\sec{forecasting}] 
\fi
introduces the basic definitions and some basic scoring rules for forecasting algorithms. The section establishes some of the core properties of scoring functions, including notably connections between the best score achievable by forecasting algorithms on distributions over inputs and various distance measures on those distributions. The final portions of this section then establish the main linear amplification theorem in general form in \lem{linear-amplification} and the general form of the conversion between randomized and forecasting 
\iffocs
  algorithms.
\else
  algorithms in \lem{conversion}.
\fi

\iffocs
  \textbf{Section IV}
\else
  \item[\sec{query}] 
\fi
focuses on the query and communication complexity settings. Conversions between randomized and forecasting algorithms in the query complexity setting are straightforward, but there is one significant challenge in applying the linear amplification theorem to obtain the results in \thm{query-avg-worst} and
\iffocs
  its query complexity analogue:
\else
  \thm{query_shaltiel_free}: 
\fi
the cost and score of a randomized algorithm $R$ on an input $x$ can both depend on $x$ itself. This is a problem because to obtain a constant score (and after the final conversion, a bounded-error randomized algorithm), we want to amplify $R$ with a number $k$ of copies that depends on the score of $R$ on $x$---but since we don't know $x$ we don't know what $\score(R,x)$ is either. We get around this problem with odometer arguments:
by empirically estimating the expected number of queries $R$
makes on $x$, we can obtain effective bounds on the number $k$ of copies of $R$ that we need to obtain successful amplification.

As we show in the section, the communication complexity results 
\iffocs
\else
  \cor{communication} and \thm{cc-h} 
\fi
follow immediately from their query complexity analogues.

\iffocs
  \textbf{Section V}
\else
  \item[\sec{quantum}] 
\fi
establishes the results in the quantum query and communication complexity settings. Unlike in the classical setting, amplification that is linear in the bias of an algorithm \emph{does} hold in the quantum query complexity setting. However, the proof of 
\iffocs
  the minimax theorem for quantum query complexity
\else
  \thm{minimax_quantum}
\fi
requires that the set of algorithms
must be representable as a convex subset of a real topological space, and that the cost of an algorithm
is a convex function on this set. It is not immediately clear how quantum query algorithms can satisfy this condition,
because in the usual definition, the cost of a mixture
of two quantum algorithms would be the \emph{maximum}
of the costs of the algorithms rather than the average.
To overcome this issue, we instead establish 
\iffocs
  the main theorem
\else
  \thm{minimax_quantum} 
\fi
via consideration of what we call \emph{probabilistic} quantum algorithms, which correspond to probability distributions over quantum algorithms and do easily satisfy the appropriate convexity requirements.
Probabilistic quantum algorithms are harder to amplify
than regular quantum algorithms (due to their
lack of coherence), but we show that a linear
amplification theorem still holds.

Another important difference between the quantum and the classical setting is that the communication complexity 
\iffocs
  result
\else
  result, \thm{quantum-communication}, 
\fi
is not implied by the analogous query complexity result. Nonetheless, the same argument used for quantum query algorithms also holds for quantum communication protocols as well. We complete the proof of
\iffocs
  the minimax theorem for quantum communication complexity
\else
  \thm{quantum-communication} 
\fi
by first providing an abstraction of the query complexity argument 
\iffocs
\else
  in \thm{quantum_abstraction} 
\fi
and then showing how communication protocols satisfy the conditions of this abstract theorem.

\iffocs
  \textbf{Section VI}
\else
  \item[\sec{polynomials}] 
\fi
considers the approximate polynomial degree and the logrank complexity of functions.
As with quantum query complexity, approximate polynomial
degree satisfies an amplification theorem that is linear
in the bias, meaning that we do not need to use
forecasting algorithms or scoring rules. However,
also as with quantum query complexity, polynomials
and their cost do not satisfy the right convexity
requirements, as the degree of a mixture of two
polynomials is not the average of their degrees.
We overcome this by considering probabilistic
polynomials. Proving an amplification theorem
for probabilistic polynomials turns out to
be somewhat tricky, and requires tools from
approximation theory such as Jackson's theorem.

Approximate logrank inherits all of the problems of
approximate polynomial degree, and adds a few more.
To handle approximate logrank, we switch over
to the nearly-equivalent model of the logarithm
of the approximate gamma $2$ norm, and then
use the previous trick of considering the
\emph{probabilistic} approximate gamma $2$ norm.
To prove an amplification theorem
for probabilistic gamma $2$ norm we apply
the same tools as for probabilistic polynomials.

\iffocs
  \textbf{Section VII}
\else
  \item[\sec{circuits}] 
\fi
establishes the circuit complexity results. There are two main hurdles in establishing 
\iffocs
  the minimax theorem for randomized circuit complexity.
\else
  \thm{main-circuit}. 
\fi
The first is that the notion of randomized circuits is not as trivially extendable to forecasting circuits as in other computational models.  We show that this conversion can be done efficiently when we discretize the set of confidence values that can be returned by forecasting circuits, and that this discretization does not affect the guaranteed relations between score and bias. The second is that the overhead required to combine the output of multiple instances of a randomized circuit during linear amplification is not trivial. This second hurdle can be overcome with the use of efficient circuit constructions for elementary arithmetic operations and the iterated addition problem.

The proof of the universal hardcore lemma
\iffocs
\else
  in \thm{univ-hardcore} 
\fi
is obtained via a slight generalization of the ratio minimax theorem. 
\iffocs
\else
  This variant of the minimax theorem is stated in \lem{minimax-circuits-regular} and the rest of the proof of \thm{univ-hardcore} is presented in \sec{hardcore}.
\fi

\iffocs
\else
\end{description}
\fi

\subsection{Further remarks and open problems}

We make a few remarks regarding other possible generalizations
of Yao's original minimax theorem. First, one may wonder
why we provide a hard distribution $\mu$ satisfying
$\R^\mu_{\dot{\gamma}}(f)=\Omega(\gamma^2\R(f))$
for all $\gamma$, rather than the stronger statement
$\R^\mu_{\dot{\gamma}}(f)=\Omega(\R_{\dot{\gamma}}(f))$
for all $\gamma$. In other words, we've stated our lower bounds
in terms of the bounded-error randomized cost $\R(f)$,
which required amplification; why not directly compare
the average-case complexity to bias $\gamma$,
denoted $\R^\mu_{\dot{\gamma}}(f)$, to the worst-case complexity
to bias $\gamma$, denoted $\R_{\dot{\gamma}}(f)$?

The reason is that this stronger version of the minimax
is actually false: that is, there need not be a distribution
$\mu$ for which
$\R^\mu_{\dot{\gamma}}(f)=\Omega(\R_{\dot{\gamma}}(f))$
for all $\gamma$ (even though for every given $\gamma$,
such a distribution $\mu$ that depends on $\gamma$ does exist,
by Yao's minimax theorem).
For a counterexample, consider the query complexity model.
Let $f$ be the Boolean function on $n+m+1$ bits,
where if the first bit is $0$ the function $f$ evaluates
to the parity of the next $m$ bits, whereas if the first bit is
$1$ the function $f$ evaluates to the majority of the last $n$ bits.
Say we take $n=m^2$. Then, since parity is hard to compute
even to small bias, we have $\R_{\dot{\gamma}}(f)\ge m$
for all $\gamma$. We also have $\R_{1/3}(f)=\Omega(m^2)$,
since majority on $m^2$ bits requires $\Omega(m^2)$ queries.
Now, consider any distribution $\mu$ over the domain of $f$.
If $\mu$ places nonzero probability mass on inputs with
first bit $1$, then $\mu$ can necessarily be solved to
some sufficiently small bias using at most $2$ queries
(one query to the first bit of the input, and one to a random
position in the input to majority). In this case, we would
have $\R_{\dot{\gamma}}^\mu(f)=O(1)$ and
$\R_{\dot{\gamma}}(f)=\Omega(\sqrt{n})$ for this sufficiently
small $\gamma$. Alternatively, if $\mu$ places zero probability
mass on inputs with first bit $1$, then solving $f$
against $\mu$ is solving parity on $m=O(\sqrt{n})$ bits;
hence $\R_{1/3}^\mu(f)=O(\sqrt{n})$, even though
$\R_{1/3}(f)=\Omega(n)$. Similar counterexamples can be
constructed in other computational models.

Another possible generalization of Yao's minimax
is to a distribution $\mu$ for which $\R^\mu(f)$
is large even when the both the error of the algorithm
and the expected cost are measured against $\mu$.
That is, in a normal application of Yao's minimax, we either
consider randomized algorithms which only ever make at most $T$
queries (against any input)
and measure their expected error against $\mu$,
or else we consider randomized algorithms which only ever
make error at most $\epsilon$ (against any input) and measure
their expected cost against $\mu$. One may wonder if it is possible
for one distribution to certify the hardness of $f$
in both ways at once, with both the cost and the error
measured in expectation against $\mu$.

The answer turns out to be yes, as first observed by
Vereshchagin for query complexity \cite{Ver98}.
Vereshchagin stated his theorem
for bounded error, but in the case of small bias $\gamma$,
his techniques appear to give a distribution $\mu$ (which
depends on $\gamma$) such that
$\R_{\dot{\gamma}}^\mu(f)=\Omega(\gamma\R_{\dot{\gamma}}(f))$
even where the left-hand side is defined as the \emph{expected}
query complexity against $\mu$ to bias at least $\gamma$
(also against $\mu$). This is in contrast to Yao-style
minimax theorems, which are stronger in that they
lack the $\gamma$ factor on the right hand side, but
weaker in that the left-hand side has either the cost
or the error being worst-case (rather than both being
average-case against $\mu$).

Our results in this work are ``Vereshchagin-like'' in that
they hold even when $\R_{\dot{\gamma}}^\mu(f)$ has both
the cost and the bias defined in expectation against $\mu$.
We prove such results for randomized query complexity
and randomized communication complexity, showing a single
$\mu$ satisfies $\R_{\dot{\gamma}}^\mu(f)=\Omega(\gamma^2\R(f))$
for all $\gamma>0$, even when both the error and the cost
in the definition of $\R_{\dot{\gamma}}^\mu(f)$
are average-case against $\mu$. (For models such as quantum query
complexity or circuit complexity, the expected cost of an
algorithm does not have an obvious interpretation, since
the algorithms generally have the same cost for all inputs;
therefore, for those models we do not give a theorem
in which the cost is measured in expectation against $\mu$.)

Note that our minimax theorem is not directly comparable
to Vereshagin, because we state our lower bounds in
an ``amplified'' form -- that is, the lower bounds
are with respect to $\R(f)$ rather than $\R_{\dot{\gamma}}(f)$.
As previously mentioned, this is necessary when proving
that a single distribution works for all $\gamma$,
and our theorems appear to be tight in that setting.
Moreover, Vereshchagin's theorem is tight in its setting:
the factor of $\gamma$ is necessary, because average-case
query complexity can be smaller than worst-case query complexity
(for example, consider the parity function on $n$ bits,
which has $\R_{\dot{\gamma}}(f)=n$ for all $\gamma$;
if we design a randomized algorithm which queries all the
bits with probability $\gamma$ and queries no bits with
probability $1-\gamma$, it will use only $\gamma n$ expected
queries, and it will solve $f$ to bias $\gamma$).%
\footnote{We thank an anonymous reviewer for this example.}

A remaining open problem is as follows:
can Vereshchagin's theorem be modified to show
\begin{equation}\label{eq:open}
\R_{\dot{\gamma}}^\mu(f)=\Omega(\bar{\R}_{\dot{\gamma}}(f)),
\end{equation}
where both cost and bias on the left are measured in
expectation against $\mu$, and where $\bar{\R}_{\dot{\gamma}}(f)$
denotes the worst-case (over the inputs of $f$) expected 
(over the internal randomness of the algorithm)
query complexity of $f$ to bias $\gamma$?
Note that in the bounded-error setting,
$\bar{\R}(f)=\Theta(\R(f))$, so
for bounded $\gamma$ this result follows from both Vereshchagin's
theorem and from our work here.
For small $\gamma$,
we leave this question as an intriguing open problem.

We also note that we cannot hope that a single
distribution $\mu$ satisfies \eq{open} for all $\gamma$,
because one can construct a counterexample via
a modification of our earlier function:
we let $f$ be defined on $1+m+n$ bits, where
if $x_1=0$ the function evaluates to the parity of the
next $m$ bits, and if $x_1=1$ the function evaluates
to the majority of the last $n$ bits, as before;
this time we will have $n=m^{4/3}$. We also add a promise:
we require that the input always has Hamming weight
either at most $n/2-\sqrt{n}$ or at least $n/2+\sqrt{n}$
on the last $n$ bits, turning the majority part of the function
into a $\sqrt{n}$-gap majority function.
Now, to compute $f$ to worst-case bias $\gamma$
requires at least $\gamma m$ expected queries
on inputs $x$ with $x_1=0$, and requires at least
$\gamma^2 n$ expected queries on inputs with $x_1=1$,
so at least $\Omega(\max\{\gamma m,\gamma^2 n\})$ expected
queries in the worst case. This is $\Omega(n^{1/4})$
when $\gamma=n^{-1/2}$ and $\Omega(n)$ when $\gamma$ is constant.
Now fix a distribution $\mu$, let
$p$ be the probability that $\mu$ assigns to inputs with $x_1=1$.
If $p\le 1/2$, then we can compute $f$ to constant bias
simply by querying the first bit, guessing randomly if $x_1=1$,
and querying $m$ bits to compute $f$ exactly when $x_1=0$;
this uses $O(n^{3/4})$ queries to achieve constant bias, instead
of the $\Omega(n)$ which were required in the worst case.
On the other hand, if $p\ge 1/2$,
then we can compute $f$ against $\mu$ by
querying the first bit and nothing else when $x_1=0$
(guessing the answer randomly),
and otherwise making one additional query
to estimate the gap majority function to bias $1/\sqrt{n}$.
This uses $2$ queries and achieves bias
$n^{-1/2}$ against $\mu$, instead of the $\Omega(n^{1/4})$
queries required in the worst case.

\begin{fulltext}

\section{Minimax theorem for the ratio of saddle functions}
\label{sec:minimax}

Minimax theorems take the form
\[\adjustlimits \inf_{x\in X}\sup_{y\in Y}\alpha(x,y)
=\adjustlimits\sup_{y\in Y}\inf_{x\in X}\alpha(x,y).\]
For any function $\alpha$,
the left-hand side above is always at least the right
hand side, but equality only holds under certain conditions;
when equality does hold, we call it a minimax theorem.

Broadly speaking, the following conditions are required
to ensure that a minimax theorem holds. First,
$X$ and $Y$ must be convex sets (and they must be subsets of some
real vector spaces). Second, $\alpha$ must be
\emph{saddle} -- or at least quasisaddle -- meaning
that it is convex as a function of $x$ and concave as a function
of $y$ (or at least quasiconvex and quasiconcave).
Third, $\alpha$ must satisfy some continuity conditions.
And finally, one of $X$ or $Y$ must be compact (importantly,
it's not necessary for both to be compact).

In this section, we show that under certain conditions,
minimax theorems also hold for \emph{ratios} of
positive saddle functions.
Such a ratio of saddle functions is not necessarily saddle,
but the important insight is that it is still quasisaddle.

\subsection{Background definitions}

In order to formally state the conditions in which
minimax theorems hold, we will need a few definitions.
We assume the reader is familiar with vector spaces
and topological spaces, including standard terminology
such as compact sets and neighborhoods.

\begin{definition}[Real topological vector space]
A \emph{real topological vector space} is a tuple
$(V,+,\cdot,\tau)$, where $V$ is a set, $+$ is a function
$V\times V\to V$, $\cdot$ is a function $V\times\bR\to V$,
and $\tau\subseteq 2^V$, such that
\begin{itemize}
    \item $(V,+,\cdot)$ is a vector space over $\bR$,
    \item $(V,\tau)$ is a topological space,
    \item $+$ is continuous under the topology $\tau$, and
    \item $\cdot$ is continuous under the topology $\tau$ and the
    standard topology of $\bR$.
\end{itemize}
\end{definition}

We note that any normed real vector space is a real topological
space, as the norm induces a topology. We will primarily
focus on the real topological vector spaces $\bR^n$ for $n\in\bN$,
which have a standard topology.

\begin{definition}[Extended reals]
The \emph{extended reals} is the set
$\bar{\bR}\coloneqq\bR\cup\{-\infty,\infty\}$.
We use the extended interval notation
$(r,\infty]\coloneqq (r,\infty)\cup\{\infty\}$ for $r\in\bR$,
and similarly for $[-\infty,r)$ and $[-\infty,\infty]$.
We associate with $\bar{\bR}$ the following topology. A set
$S\subseteq\bar{\bR}$ is a neighborhood of $x\in\bR$
if it contains an open interval $(x-\epsilon,x+\epsilon)$ for some
$\epsilon\in(0,\infty)$, it is a neighborhood of $\infty$ if it contains
the interval $(r,\infty]$ for some $r\in \bR$, and it is a neighborhood
of $-\infty$ if it contains the interval $[-\infty, r)$ for some
$r\in \bR$.

We define addition, subtraction, multiplication, and division
of extended reals in the intuitive way, with $\infty-\infty$,
$0\cdot\infty$, $\infty/\infty$, and $x/0$ for $x\in\bar{\bR}$
all undefined. Note also that the extended reals are ordered
(for each $x,y\in\bar{\bR}$, we have either $x=y$, $x<y$, or $x>y$).
\end{definition}

Note that while we define the extended reals and will often talk
about extended-real-valued functions, our vector spaces
will always be over the reals, not over the extended reals.
In particular, the extended reals are not a field.

\begin{definition}[Convexity of sets]
We say a subset $X$ of a real vector space $V$ is \emph{convex}
if
\[\forall x,y\in X,\;\forall\lambda\in(0,1)\quad
\lambda x+(1-\lambda)y\in X.\]
\end{definition}

\begin{definition}[Convex hull]
Let $V$ be a real vector space and let $X\subseteq V$.
The \emph{convex hull} of $X$, denoted $\Conv(X)$,
is the intersection of all convex subsets of $V$ that
contain $X$ as a subset.
\end{definition}

Note that it is easy to verify that an arbitrary intersection
of convex sets is convex, which means that the convex hull
of any set is always convex.

\begin{definition}
[(quasi)convexity and (quasi)concavity of functions]
Let $V$ be a real vector space, let $X\subseteq V$ be convex,
and let $\phi:X\to\bar{\bR}$. We say that $\phi$ is \emph{convex}
if for all $x,y\in X$ and $\lambda\in(0,1)$, we have
$\phi(\lambda x+(1-\lambda)y)\le \lambda\phi(x)+(1-\lambda)\phi(y)$.
We say $\phi$ is \emph{quasiconvex} if for all $x,y\in X$ and
$\lambda\in(0,1)$, we have
$\phi(\lambda x+(1-\lambda)y)\le\max\{\phi(x),\phi(y)\}$.
We say that $\phi$ is \emph{concave} if $-\phi$ is convex,
and we say $\phi$ is \emph{quasiconcave} if $-\phi$
is quasiconvex. If $\phi$ is both convex and concave,
we say it is \emph{linear}.
\end{definition}

Note that if $\infty$ and $-\infty$ are both in the range
of $\phi$, then $\lambda\phi(x)+(1-\lambda)\phi(y)$
may be $\infty-\infty$, which is undefined; in this case
we say $\phi$ is neither convex nor concave. A function with both
$\infty$ and $-\infty$ in its range may still be quasiconcave or
quasiconvex.

\begin{definition}[Saddle and quasisaddle]
Let $V_1$ and $V_2$ be real vector spaces, let $X\subseteq V_1$
and $Y\subseteq V_2$, and let $\alpha:X\times Y\to\bar{\bR}$.
We say that $\alpha$ is \emph{saddle} if for all $x\in X$
the function $\alpha(x,\cdot)$ is concave and for all $y\in Y$
the function $\alpha(\cdot, y)$ is convex. We say that $\alpha$
is \emph{quasisaddle} if for all $x\in X$ the function
$\alpha(x,\cdot)$ is quasiconcave and for all $y\in Y$ the function
$\alpha(\cdot,y)$ is quasiconvex.
\end{definition}

\begin{definition}[Semicontinuity]
Let $X$ be a topological space and let $\phi:X\to\bar{\bR}$.
We say that $\phi$ is \emph{upper semicontinuous} at $x\in X$
if for all $y\in(\phi(x),\infty]$
there exists some neighborhood $U$ of $x$ on which the value of
$\phi(x')$ for $x'\in U$ is less than $y$. We say that $\phi$
is \emph{lower semicontinuous} at $x$ if $-\phi$ is upper semicontinuous
at $x$.

Let $Y$ be another topological space
and let $\alpha:X\times Y\to\bar{\bR}$ be a function. We say
that $\alpha$ is \emph{semicontinuous} if for all $x\in X$
the function $\alpha(x,\cdot)$ is upper semicontinuous over all of $Y$,
and for all $y\in Y$ the function $\alpha(\cdot,y)$ is lower
semicontinuous over all of $X$.
\end{definition}

We note the following two useful lemmas about upper and lower
semicontinuous functions. These lemmas are standard, but for
completeness we reprove them in \app{minimax}.

\begin{restatable}
[An upper semicontinuous function on a compact set attains its max]
{lemma}{attain}
\label{lem:attain}
Let $X$ be a nonempty compact topological space, and let
$\phi:X\to\bar{\bR}$ be a function. Then if $\phi$ is
upper semicontinuous, it attains its maximum,
meaning there is some $x\in X$ such that for all $x'\in X$,
$\phi(x')\le\phi(x)$. Similarly, if $\phi$ is lower semicontinuous,
it attains its minimum.
\end{restatable}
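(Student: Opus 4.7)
The plan is to define the supremum $M = \sup_{x \in X} \phi(x) \in \bar{\bR}$ and then use compactness together with a finite intersection argument to produce a point that actually attains $M$. The key preliminary observation, which unlocks everything, is that upper semicontinuity of $\phi$ is equivalent to the statement that every sublevel set $\{x \in X : \phi(x) < c\}$ is open in $X$, and therefore every superlevel set $F_c := \{x \in X : \phi(x) \ge c\}$ is closed. One has to verify this at $c = \pm\infty$ as well, using the topology on $\bar{\bR}$ given in the paper; the case $c = -\infty$ gives the empty set, while the case $c = \infty$ follows by picking any finite $y > \phi(x_0)$ in the semicontinuity definition and noting the resulting neighborhood already lies inside $\{\phi < \infty\}$.

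Next I would consider the family $\mathcal{F} = \{F_c : c < M, c \in \bR\}$ (together with $F_r$ for arbitrarily large $r \in \bR$ if $M = \infty$). Each $F_c$ is nonempty by the definition of supremum, each is closed by the preceding paragraph, and the family is downward directed: the intersection of finitely many $F_{c_1}, \ldots, F_{c_k}$ equals $F_{\max_i c_i}$, which is still of the same form and hence nonempty. So $\mathcal{F}$ has the finite intersection property. Since $X$ is compact, $\bigcap \mathcal{F}$ is nonempty; any $x^* \in \bigcap \mathcal{F}$ satisfies $\phi(x^*) \ge c$ for every $c < M$, hence $\phi(x^*) \ge M$, and so $\phi(x^*) = M$ by definition of the supremum. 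The lower semicontinuous case follows immediately by applying the upper semicontinuous case to $-\phi$, noting that $-\phi$ is upper semicontinuous precisely when $\phi$ is lower semicontinuous, and that maximizing $-\phi$ is the same as minimizing $\phi$.

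The main subtlety, rather than any real obstacle, lies in bookkeeping around the extended reals: one must handle $M = \infty$ (where the family is indexed by arbitrarily large finite $r$, giving a point with $\phi(x^*) = \infty$), the case $M = -\infty$ (where $\phi$ is identically $-\infty$ on the nonempty set $X$, so any point attains it), and the possibility that $\phi$ takes the value $-\infty$ somewhere else. None of these pose real difficulty; they just require stating the indexing set for $\mathcal{F}$ carefully so that the finite intersection argument goes through uniformly in all three regimes.
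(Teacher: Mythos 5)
Your proof is correct, and it takes a genuinely different route from the paper's. The paper proceeds by contradiction: it fixes a sequence $x_1, x_2, \dots$ whose values approach the supremum $z$, observes that the sublevel sets $U_i = \{x : \phi(x) < \phi(x_i)\}$ are open by upper semicontinuity, and argues that if the supremum were not attained these $U_i$ would form an open cover of $X$ with no finite subcover (the finitely many $U_i$ are nested, so one of them would cover everything, contradicting $\phi(x_i) \not< \phi(x_i)$). You instead work with the dual formulation of compactness via the finite intersection property, applied to the closed superlevel sets $F_c = \{\phi \ge c\}$ for $c < M$. Your argument is arguably cleaner in two respects: it is direct rather than by contradiction, producing the maximizer explicitly as a point of $\bigcap_c F_c$; and it avoids manufacturing a countable sequence of approximate maximizers, which is an artifact of the paper's presentation rather than anything forced by the topological setting (and which, strictly speaking, invokes a form of countable choice). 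The paper's approach is slightly more concrete and may read more easily for someone who thinks of compactness primarily in cover terms, but the two are equivalent in content, and your treatment of the edge cases $M = \pm\infty$ via careful indexing of the family $\mathcal{F}$ is sound.
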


\begin{restatable}[A pointwise infimum of upper semicontinuous functions
is upper semicontinuous]{lemma}{semicontinuousinf}
\label{lem:semicontinuous_inf}
Let $X$ be a topological space, let $I$ be a set, and let
$\{\phi_i\}_{i\in I}$ be a collection of functions $\phi_i:X\to\bar{\bR}$.
Then if each $\phi_i$ is upper semicontinuous, the function
$\phi(x)=\inf_{i\in I}\phi_i(x)$ is also upper semicontinuous.
Similarly, if each $\phi_i$ is lower semicontinuous,
the pointwise supremum is lower semicontinuous.
\end{restatable}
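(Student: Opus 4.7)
The plan is to prove the upper semicontinuous case directly from the definition, and then obtain the lower semicontinuous case by negation. Fix $x \in X$ and pick any $y \in (\phi(x), \infty]$. Since $\phi(x) = \inf_{i \in I}\phi_i(x) < y$, the definition of infimum guarantees that there exists some particular $i_0 \in I$ with $\phi_{i_0}(x) < y$ (otherwise every $\phi_i(x)$ would be $\geq y$, forcing $\phi(x) \geq y$). Now apply upper semicontinuity of $\phi_{i_0}$ at $x$ with the same value $y$: this yields a neighborhood $U$ of $x$ such that $\phi_{i_0}(x') < y$ for every $x' \in U$.

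Since $\phi(x') = \inf_{i \in I}\phi_i(x') \leq \phi_{i_0}(x')$ pointwise, it follows immediately that $\phi(x') < y$ on all of $U$. This verifies the definition of upper semicontinuity of $\phi$ at $x$; since $x$ was arbitrary, $\phi$ is upper semicontinuous on all of $X$.

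For the lower semicontinuous statement, I would observe that $\psi$ is lower semicontinuous if and only if $-\psi$ is upper semicontinuous (directly from the definitions, since negation reverses the order on $\bar{\bR}$ and sends $(\psi(x), \infty]$ to $[-\infty, -\psi(x))$), and that $\sup_{i \in I}\phi_i(x) = -\inf_{i \in I}(-\phi_i(x))$. If each $\phi_i$ is lower semicontinuous then each $-\phi_i$ is upper semicontinuous, so by the first part the infimum $\inf_i(-\phi_i)$ is upper semicontinuous, and hence its negation $\sup_i\phi_i$ is lower semicontinuous.

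No step here is a real obstacle; the only subtlety worth double-checking is the extended-real edge cases. If $I$ is empty then $\phi \equiv \infty$ and the defining condition is vacuous; if $\phi(x) = \infty$ the interval $(\phi(x), \infty]$ is empty and upper semicontinuity at $x$ is automatic; and the choice of $i_0$ with $\phi_{i_0}(x) < y$ works uniformly whether $\phi(x)$ equals $-\infty$ or lies in $\bR$, because we only need strict inequality with $y$ rather than a separating intermediate value. Thus no additional case analysis is needed.
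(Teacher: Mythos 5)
Your proof is correct and follows essentially the same argument as the paper: pick a witness $i_0$ with $\phi_{i_0}(x) < y$, use its upper semicontinuity to get a neighborhood $U$, and conclude via $\phi \le \phi_{i_0}$, handling the lower semicontinuous case by negation. The only cosmetic difference is that you fold the $\phi(x) = \infty$ case into the observation that $(\infty, \infty]$ is empty, whereas the paper states it as a separate trivial case.
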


From these lemmas, it follows that if
$\alpha:X\times Y\to\bar{\bR}$ is semicontinuous, the expressions
\[\adjustlimits\inf_{x\in X}\sup_{y\in Y}\alpha(x,y)\]
\[\adjustlimits\sup_{y\in Y}\inf_{x\in X}\alpha(x,y)\]
have all the infimums attained if $X$ is nonempty and compact,
and all the supremums attained if $Y$ is nonempty and compact.
Hence on compact sets, inf-sup theorems become min-max theorems.

The following lemma will also come in useful. We also prove it
in \app{minimax}.

\begin{restatable}[Quasiconvex functions on convex hulls]
{lemma}{quasiconvexhull}\label{lem:quasiconvex_hull}
Let $V$ be a real vector space, let $X\subseteq V$,
and let $\phi\colon\Conv(X)\to\bar{\bR}$ be a function.
If $\phi$ is quasiconvex, then
\[\sup_{x\in \Conv(X)}\phi(x)=\sup_{x\in X}\phi(x).\]
Similarly, if $\phi$ is quasiconcave, then
\[\inf_{x\in \Conv(X)}\phi(x)=\inf_{x\in X}\phi(x).\]
\end{restatable}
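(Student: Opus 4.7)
The plan is to prove the quasiconvex statement directly; the quasiconcave case then follows by applying the result to $-\phi$, which is quasiconvex. One direction, $\sup_{x\in X}\phi(x)\le\sup_{x\in\Conv(X)}\phi(x)$, is immediate from $X\subseteq\Conv(X)$, so the content is in the reverse inequality.

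For the reverse inequality, I would first establish the elementary fact that $\Conv(X)$ coincides with the set of all finite convex combinations of elements of $X$. The set $C$ of such finite convex combinations is convex and contains $X$, so $\Conv(X)\subseteq C$ by definition; conversely, any convex set containing $X$ must, by an easy induction on the number of summands, contain every finite convex combination of elements of $X$, so $C\subseteq\Conv(X)$. This identification lets us fix an arbitrary $y\in\Conv(X)$ and write it as $y=\sum_{i=1}^{n}\lambda_i x_i$ with $x_i\in X$, $\lambda_i\ge 0$, and $\sum_i\lambda_i=1$.

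The core of the proof is the claim that $\phi(y)\le\max_{1\le i\le n}\phi(x_i)$ whenever $y$ admits such a representation, which I would prove by induction on $n$. The base case $n=1$ is trivial. For the inductive step, assume $n\ge 2$; without loss of generality $\lambda_n<1$ (otherwise $y=x_n$ and we are done), and write $y=(1-\lambda_n)z+\lambda_n x_n$ where $z=\sum_{i=1}^{n-1}\frac{\lambda_i}{1-\lambda_n}x_i$ is itself a convex combination of elements of $X$ and hence lies in $\Conv(X)$. Quasiconvexity of $\phi$ gives $\phi(y)\le\max\{\phi(z),\phi(x_n)\}$, and the inductive hypothesis gives $\phi(z)\le\max_{i<n}\phi(x_i)$, so $\phi(y)\le\max_i\phi(x_i)\le\sup_{x\in X}\phi(x)$. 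Taking the supremum over $y\in\Conv(X)$ yields the desired inequality.

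There is no real obstacle here: the argument is a straightforward induction combined with the standard description of the convex hull. The only minor point of care is that $\phi$ takes values in $\bar{\bR}$, so the $\max$ in the definition of quasiconvexity could equal $\pm\infty$; this causes no issue since $\max$ of extended reals is always well defined (unlike a convex combination, which might produce $\infty-\infty$), which is precisely why the quasi-version of the statement goes through cleanly even though the corresponding statement for convex $\phi$ would require extra care.
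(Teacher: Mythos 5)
Your proof is correct and follows essentially the same route as the paper's: characterize $\Conv(X)$ as the set of finite convex combinations, then induct on the number of summands using quasiconvexity to get $\phi(y)\le\max_i\phi(x_i)$. The closing remark about why the quasi-version survives the $\bar{\bR}$-valued setting (no $\infty-\infty$ issues, unlike genuine convex combinations) is a nice observation, though not needed for the argument itself.
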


\subsection{Minimax theorems}

We are now ready to state Sion's minimax theorem. Actually,
we will need a version of Sion's minimax for extended-real-valued
functions, while Sion \cite{Sio58} originally only dealt with
real-valued functions; luckily, proving this extension is not hard
given Sion's original theorem, and we do so in \app{minimax}.

\begin{restatable}[Sion's minimax for extended reals]{theorem}
{sionextended}\label{thm:sion_extended}
Let $V_1$ and $V_2$ be real topological vector spaces, and let
$X\subseteq V_1$ and $Y\subseteq V_2$ be convex. Let
$\alpha:X\times Y\to\bar{\bR}$ be semicontinuous and quasisaddle.
If either $X$ or $Y$ is compact, then
\[\adjustlimits\inf_{x\in X}\sup_{y\in Y}\alpha(x,y)
=\adjustlimits\sup_{y\in Y}\inf_{x\in X}\alpha(x,y).\]
\end{restatable}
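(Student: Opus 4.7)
The plan is to reduce the extended-real case to the classical Sion minimax theorem by a uniform clipping argument. For every pair of real numbers $a<b$, define the truncated function
\[\alpha_{a,b}(x,y)\;=\;\max\bigl\{a,\;\min\{b,\alpha(x,y)\}\bigr\},\]
which takes values in the compact real interval $[a,b]\subset\bR$. The first step is to verify that $\alpha_{a,b}$ inherits the hypotheses of Sion's theorem: the scalar map $\phi_{a,b}(t)=\max\{a,\min\{b,t\}\}$ is continuous on $\bar{\bR}$ and non-decreasing, so post-composition with $\phi_{a,b}$ preserves upper and lower semicontinuity (hence $\alpha_{a,b}$ is semicontinuous on $X\times Y$), and because $\phi_{a,b}$ is monotone it commutes with $\max$, which gives $\phi_{a,b}(\max\{r,s\})=\max\{\phi_{a,b}(r),\phi_{a,b}(s)\}$; this is exactly what is needed to transport quasiconvexity in $x$ and quasiconcavity in $y$ through the clip. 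Thus $\alpha_{a,b}\colon X\times Y\to[a,b]$ is a real-valued, semicontinuous, quasisaddle function on the same convex domains, with the same compactness assumption on $X$ or $Y$.

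Next, I would invoke the original (real-valued) Sion minimax theorem on $\alpha_{a,b}$ to obtain
\[\adjustlimits\inf_{x\in X}\sup_{y\in Y}\alpha_{a,b}(x,y)\;=\;\adjustlimits\sup_{y\in Y}\inf_{x\in X}\alpha_{a,b}(x,y).\]
Because $\phi_{a,b}$ is continuous and monotone, it commutes with $\sup$ and $\inf$ over arbitrary sets, so each iterated extremum of $\alpha_{a,b}$ can be rewritten as $\phi_{a,b}$ applied to the corresponding iterated extremum of $\alpha$. Writing $L=\inf_{x}\sup_{y}\alpha(x,y)$ and $R=\sup_{y}\inf_{x}\alpha(x,y)$ in $\bar{\bR}$, Sion's identity for $\alpha_{a,b}$ reads
\[\max\bigl\{a,\min\{b,L\}\bigr\}\;=\;\max\bigl\{a,\min\{b,R\}\bigr\}.\]
This equality holds for every real $a<b$.

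Finally, I take $a\to-\infty$ and $b\to\infty$. The family $\{\phi_{a,b}\}$ separates points of $\bar{\bR}$: for any $r,r'\in\bar{\bR}$ with $r\neq r'$, choosing $a,b$ strictly between them makes $\phi_{a,b}(r)\neq\phi_{a,b}(r')$. Therefore the displayed equality for all $a<b$ forces $L=R$, which is the desired minimax identity. The only genuinely delicate point is the verification that clipping preserves quasiconvexity in $x$ and quasiconcavity in $y$ even when $\alpha$ takes the values $\pm\infty$ (where the \emph{saddle} definition leaves convexity/concavity of $\alpha$ itself undefined); but since quasiconvexity is an order-theoretic condition and $\phi_{a,b}$ is monotone on all of $\bar{\bR}$, the check goes through cleanly, and no separate argument about continuity at $\pm\infty$ is required beyond that $\phi_{a,b}$ itself has a continuous extension to $\bar{\bR}$. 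This is the main obstacle, but it is resolved by the monotonicity of the clip.
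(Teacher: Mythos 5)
Your proof is correct. It uses the same central idea as the paper — reduce to the real-valued Sion theorem by clipping $\alpha$ to $[a,b]$, observe that the clip $\phi_{a,b}(t)=\max\{a,\min\{b,t\}\}$ preserves semicontinuity and the quasisaddle property because it is continuous and monotone, and then pull the minimax identity back to $\bar{\bR}$. The one place you diverge is in how the conclusion is extracted: the paper argues by contradiction (assume $\adjustlimits\sup_Y\inf_X\alpha < \adjustlimits\inf_X\sup_Y\alpha$, pick real $a'<b'$ strictly between them, clip at $a',b'$, and check directly that the two iterated extrema of the clipped function are exactly $a'$ and $b'$, contradicting the real-valued Sion identity), whereas you prove a uniform identity $\phi_{a,b}(L)=\phi_{a,b}(R)$ for all $a<b$ by invoking the fact that a continuous monotone map on $\bar{\bR}$ commutes with arbitrary suprema and infima, and then conclude $L=R$ by separation. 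Both routes are sound and of comparable length; your version is a touch cleaner because it avoids the contradiction scaffolding, at the cost of establishing the commuting lemma in full generality (which the paper sidesteps by only needing it at the special points $a'$, $b'$). The only caution I would add is to state the commuting lemma as an explicit intermediate claim and note where both continuity at $\pm\infty$ and monotonicity are used, since that is the one step whose correctness is not immediate by inspection.
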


Next, we use Sion's minimax theorem to show a minimax theorem
for the ratio of positive saddle functions. To do so,
we will need the following lemma.

\begin{lemma}\label{lem:ratio}
Let $a,b,c,d\in(0,\infty)$, and let $\lambda\in(0,1)$. Then
\[\min\left\{\frac{a}{b},\frac{c}{d}\right\}
\le \frac{\lambda a+(1-\lambda)c}{\lambda b+(1-\lambda)d}
\le \max\left\{\frac{a}{b},\frac{c}{d}\right\}.\]
This still holds if any of $a,b,c,d$ are $0$, or if
$a$ or $c$ are $\infty$, so long as we interpret
$x/0=\infty$ for $x\in[0,\infty]$.
\end{lemma}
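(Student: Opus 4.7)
The plan is to reduce everything to the classical mediant inequality by a straightforward cross-multiplication argument, and then handle the extended cases separately.

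First I would deal with the core case $a,b,c,d \in (0,\infty)$. Without loss of generality, assume $a/b \le c/d$; since all four quantities are positive this is equivalent to $ad \le bc$. The lower bound
\[\frac{a}{b} \le \frac{\lambda a + (1-\lambda)c}{\lambda b + (1-\lambda)d}\]
is, after clearing the positive denominators, equivalent to $a(\lambda b + (1-\lambda)d) \le b(\lambda a + (1-\lambda)c)$, which simplifies to $(1-\lambda) a d \le (1-\lambda) bc$, true since $\lambda < 1$. The upper bound is symmetric, reducing again to $ad \le bc$ after clearing denominators. This settles the inequality in the strictly positive case.

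For the extended cases I would handle them by direct inspection. If $b=0$ (and similarly $d=0$), then with the convention $x/0 = \infty$, $a/b$ equals $\infty$ (if $a>0$) or is handled separately (if $a=0$), and the relevant expressions become transparent. For instance, if $b=0$ and $a > 0$ then $a/b = \infty$, so the max bound is trivial; the min bound reads $c/d \le \frac{\lambda a + (1-\lambda)c}{(1-\lambda)d}$, which follows because $\lambda a \ge 0$. The analogous checks work when $d=0$, when one of $a,c$ is $0$ (where the mediant clearly sits between $0$ and the other ratio), and when $a$ or $c$ equals $\infty$ (where the corresponding ratio is $\infty$ and the mediant is $\infty$ as well, matching the max side and beating the min side). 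A brief case analysis table takes care of all the edge cases at once.

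The proof has no real obstacle; the only slightly delicate bit is being careful about the conventions when dividing by zero or when one of the numerators is infinite, so that all expressions remain defined in $\bar{\bR}$ and no undefined forms like $\infty - \infty$ or $0 \cdot \infty$ appear. Once the conventions are fixed as in the statement, each case is a one-line verification, and the main case is the mediant inequality above.
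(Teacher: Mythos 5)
Your proof is correct. Both you and the paper reduce the core case to the classical mediant inequality, but the mechanisms differ slightly. The paper writes the mediant as an explicit convex combination
\[
\frac{\lambda a+(1-\lambda)c}{\lambda b+(1-\lambda)d}
=\frac{a}{b}\cdot\frac{1}{1+z}+\frac{c}{d}\cdot\frac{z}{1+z},
\qquad z=\frac{(1-\lambda)d}{\lambda b},
\]
which yields the lower and upper bounds simultaneously from a single identity (the mediant is literally a weighted average of the two ratios). You instead do two separate one-sided cross-multiplications after a WLOG normalization $ad\le bc$. Both are short and elementary; the paper's identity is slightly slicker (one step, both bounds, and it makes the ``between'' intuition transparent), while your cross-multiplication is perhaps the more standard textbook route and requires no choice of auxiliary variable. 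For the edge cases your sketch is correct, though a bit compressed: the one case the paper singles out as needing an actual check is when $a,c\in[0,\infty)$ and exactly one of $b,d$ is $0$, and that is exactly the case you verify explicitly (lower bound $c/d\le\frac{\lambda a+(1-\lambda)c}{(1-\lambda)d}$, reducing to $0\le\lambda a$). The remaining extended cases ($a$ or $c$ infinite, or $b=d=0$) make both the middle and max expressions equal $\infty$, so nothing substantive is left to check. No gap.
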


\begin{proof}
When $a,c\in[0,\infty)$ and  $b,d\in(0,\infty)$, it's easy to check that
\[\frac{\lambda a+(1-\lambda)c}{\lambda b+(1-\lambda)d}
=\frac{a}{b}\cdot\frac{1}{1+z}+\frac{c}{d}\cdot\frac{z}{1+z},\]
where $z=(1-\lambda)d/\lambda b$. Since $z>0$, this
is a convex combination of $a/b$ and $c/d$, from which the desired
result follows. When $a=\infty$ or $c=\infty$, both the middle
expression and the max expression equal $\infty$, and the result
trivially holds. The same thing happens when $b=d=0$. Finally,
when $a,c\in[0,\infty)$ and exactly one of $b$ and $d$ is $0$,
the max expression is again infinity, and the inequality
on the left and side can be easily verified.
\end{proof}

The simple lemma above is enough to imply that a
convex function divided by a concave function is quasiconvex,
and that a concave function divided by a convex function
is quasiconcave.

\begin{lemma}\label{lem:ratio_properties}
Let $V$ be a real topological vector space, and let $X\subseteq V$
be convex. Let $\phi\colon X\to[0,\infty]$ and
$\psi\colon X\to[0,\infty)$ be functions, and define
$\rho\colon X\to[0,\infty]$ by $\rho(x)\coloneqq\phi(x)/\psi(x)$,
with $r/0$ interpreted as $\infty$ for $r\in[0,\infty]$.
Then
\begin{enumerate}
    \item If $\phi$ is convex and $\psi$ is concave, $\rho$ is
    quasiconvex.
    \item If $\phi$ is concave and $\psi$ is convex, $\rho$ is
    quasiconcave.
    \item If $\phi$ is upper semicontinuous and $\psi$ is
    lower semicontinuous, $\rho$ is upper semicontinuous.
    \item If $\phi$ is lower semicontinuous and $\psi$ is upper
    semicontinuous, and if $\phi$ is
    strictly positive on $X$, then $\rho$ is lower semicontinuous.
\end{enumerate}
\end{lemma}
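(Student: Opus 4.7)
The four parts split into two independent arguments: parts (1) and (2) are a one-line computation using \lem{ratio}, and parts (3) and (4) are parallel $\varepsilon$-neighborhood arguments, complicated only by the $r/0=\infty$ convention and by $\phi$ or $\psi$ possibly being infinite or zero. For (1), assume $\phi$ is convex and $\psi$ is concave. For any $x,y\in X$ and $\lambda\in(0,1)$, convexity and concavity (together with $\psi\ge 0$) give $\phi(\lambda x+(1-\lambda)y)\le\lambda\phi(x)+(1-\lambda)\phi(y)$ and $\psi(\lambda x+(1-\lambda)y)\ge\lambda\psi(x)+(1-\lambda)\psi(y)\ge 0$. Since $p/q$ is nondecreasing in $p$ and (using $r/0=\infty$) nonincreasing in $q$, taking the ratio gives a value at most the ratio of these bounds, which by \lem{ratio} is at most $\max\{\rho(x),\rho(y)\}$. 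Part (2) is the same computation with all inequalities reversed.

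For (3), fix $x\in X$. If $\rho(x)=\infty$ then $(\rho(x),\infty]$ is empty and upper semicontinuity at $x$ is vacuous; otherwise $\rho(x)<\infty$ forces $\psi(x)>0$ and $\phi(x)<\infty$. Given any real $y>\rho(x)$, pick $p\coloneqq\phi(x)+\varepsilon$ and $q\coloneqq\psi(x)-\varepsilon$ with $\varepsilon>0$ small enough that $q>0$ and $p/q<y$. Upper semicontinuity of $\phi$ gives a neighborhood $U_1$ of $x$ with $\phi<p$ on $U_1$, and lower semicontinuity of $\psi$ gives a neighborhood $U_2$ of $x$ with $\psi>q>0$ on $U_2$; on $U_1\cap U_2$ we have $\rho=\phi/\psi<p/q<y$.

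For (4), fix $x\in X$ and $y<\rho(x)$. Since $\rho\ge 0$ everywhere we may assume $y\ge 0$. Using $\phi(x)>0$, choose $p\in(0,\phi(x))$ and a finite $q>\psi(x)$ with $p/q>y$; such a choice exists by continuity of the quotient at $(\phi(x),\psi(x))$ in the extended sense, and can be made explicit by a short case split on whether $\rho(x)<\infty$ or $\rho(x)=\infty$. Lower semicontinuity of $\phi$ and upper semicontinuity of $\psi$ yield a neighborhood $U$ on which $\phi>p>0$ and $\psi<q$, so on $U$, $\rho=\phi/\psi$ exceeds $p/q>y$ when $\psi>0$ and equals $\infty>y$ when $\psi=0$. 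The main obstacle is localized to this last argument: the strict positivity of $\phi$ is precisely what lets us bound $\phi$ below by a positive $p$ in a neighborhood and thus keep $\rho$ large when $\psi$ degenerates; without it, a point with $\phi(x)=\psi(x)=0$ would have $\rho(x)=\infty$ by convention while $\rho$ could be arbitrarily small nearby, breaking lower semicontinuity. Everything else is routine extended-real bookkeeping absorbed uniformly by \lem{ratio}.
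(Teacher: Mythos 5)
Your proof is correct and follows essentially the same route as the paper's: parts (1) and (2) reduce to \lem{ratio} via the convexity/concavity inequalities and monotonicity of the extended-real quotient, and parts (3) and (4) are the standard $\varepsilon$-neighborhood arguments with the same case analysis (including invoking strict positivity of $\phi$ in (4) at exactly the same juncture). The only stylistic difference is that you name the extended-real monotonicity of $p/q$ explicitly and handle $\psi=0$ inside the neighborhood a bit more carefully, which are refinements rather than a different approach.
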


\begin{proof}
We start with (1). Fix $x,y\in X$ and $\lambda\in(0,1)$.
Then
\begin{align*}
\rho(\lambda x+(1-\lambda)y)
&=\frac{\phi(\lambda x+(1-\lambda)y)}{\psi(\lambda x+(1-\lambda)y)}\\
&\le \frac{\lambda\phi(x)+(1-\lambda)\phi(y)}
    {\lambda\psi(x)+(1-\lambda)\psi(y)} \\
&\le \max\left\{\frac{\phi(x)}{\psi(x)},
    \frac{\phi(y)}{\psi(x)}\right\}\\
&=\max\{\rho(x),\rho(y)\},
\end{align*}
so $\rho$ is quasiconvex, as desired.
Here we used the convexity of $\phi$ and concavity of $\psi$
in the first inequality, and \lem{ratio} in the second inequality.
(2) works similarly:
\begin{align*}
\rho(\lambda x+(1-\lambda)y)
&=\frac{\phi(\lambda x+(1-\lambda)y)}{\psi(\lambda x+(1-\lambda)y)}\\
&\ge \frac{\lambda\phi(x)+(1-\lambda)\phi(y)}
    {\lambda\psi(x)+(1-\lambda)\psi(y)} \\
&\ge \min\left\{\frac{\phi(x)}{\psi(x)},
    \frac{\phi(y)}{\psi(x)}\right\}\\
&=\min\{\rho(x),\rho(y)\}.
\end{align*}

Next, we prove (3). Fix $x\in X$; our goal is to show
$\rho$ is upper semicontinuous at $x$. If $\rho(x)=\infty$,
then any function $\rho$ is upper semicontinuous at $x$
by definition, so assume $\rho(x)<\infty$. In particular,
this means that $\phi(x)<\infty$ and that $\psi(x)>0$.
Now, fix $y>\rho(x)=\phi(x)/\psi(x)$. By the upper semicontinuity
of $\phi$, find a neighborhood $U_1$ of $x$ on which
$\phi(\cdot)$ is at most $\phi(x)+\epsilon$ (with $\epsilon>0$
to be chosen later). By the lower semicontinuity of $\psi$,
find a neighborhood $U_2$ of $x$ on which $\psi(\cdot)$
is at least $\psi(x)-\epsilon$. Setting $U\coloneqq U_1\cap U_2$,
we see that on $U$ we have
$\rho(\cdot)\le (\phi(x)+\epsilon)/(\psi(x)-\epsilon)$,
assuming we pick $\epsilon<\psi(x)$. We now simply
pick $\epsilon$ small enough that this expression is less than
$y$, giving us a neighborhood $U$ of $x$ on which
$\rho(\cdot)$ is less than $y$, as desired.

Finally, we prove (4). As before, we fix $x\in X$.
Our goal is to show $\rho(x)$ is lower semicontinuous in at $x$.
Let $y<\rho(x)$. We seek a neighborhood $U$ of $x$ on which
$\rho(\cdot)>y$. To start with, the upper semicontinuity of $\psi$
ensures there is a neighborhood $U_1$ of $x$ on which
$\psi(\cdot)<\psi(x)+\epsilon$, with $\epsilon>0$ arbitrarily small.
Now, if $\phi(x)=\infty$, then $\rho(x)=\infty$.
In this case, the lower semicontinuity
of $\phi$ ensures there is a neighborhood $U_2$ on which
$\phi(\cdot)$ is at least $z$, with $z\in\bR$ is arbitrarily large.
Then in $U_1\cap U_2$, the value of $\rho(\cdot)$ is also
arbitrarily large, and can be made to exceed $y\in\bR$ given
appropriate choices of $z$ and $\epsilon$.
Alternatively, if $\phi(x)<\infty$, then there is a neighborhood
$U_2$ on which $\phi(\cdot)>\phi(x)-\epsilon$. In this case,
on $U_1\cap U_2$ we have
$\rho(\cdot)>(\phi(x)-\epsilon)/(\psi(x)+\epsilon)$.
By picking $\epsilon$ sufficiently small, we can again
get a neighborhood $U_1\cap U_2$ of $x$ on which
$\rho(\cdot)>y$, meaning that $\rho$ is lower semicontinuous.
\end{proof}

We now state the minimax theorem for the ratio of two
positive saddle functions. In the statement below, it may help
to think of $\mathcal{R}$ as a set of randomized algorithms,
and to think
of $\Delta$ as the set of all probability distributions over a finite
input set. Further, think of $\cost(R,\mu)$ as measuring the cost
of the algorithm $R$ when run on $\mu$ (for some models, this will
depend only on $R$ and not on $\mu$), and think of $\score(R,\mu)$
as quantifying the success or bias that the algorithm $R$ achieves
against input distribution $\mu$.

\begin{theorem}
[Minimax theorem for the positive ratio of saddle functions]
\label{thm:minimax_ratio}
Let $V_1$ and $V_2$ be real topological vector spaces.
Let $\mathcal{R}\subseteq V_1$ be convex,
and let $\Delta\subseteq V_2$ be nonempty, convex, and compact.
Let the function $\cost\colon\mathcal{R}\times\Delta\to(0,\infty]$
be semicontinuous and saddle,
and let the function
$\score\colon\mathcal{R}\times\Delta\to[0,\infty)$
be such that its negation, $-\score$,
is semicontinuous and saddle.
Then using $x/0=\infty$ for $x\in(0,\infty]$, we have
\[\adjustlimits\inf_{R\in \mathcal{R}}\max_{\mu\in\Delta}
\frac{\cost(R,\mu)}{\score(R,\mu)}
=\adjustlimits\max_{\mu\in\Delta}\inf_{R\in \mathcal{R}}
\frac{\cost(R,\mu)}{\score(R,\mu)},\]
and the maximums are attained.
\end{theorem}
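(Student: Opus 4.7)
The plan is to reduce this to Sion's minimax theorem for extended reals (\thm{sion_extended}) applied to the ratio function $\rho(R,\mu) \coloneqq \cost(R,\mu)/\score(R,\mu)$, using \lem{ratio_properties} to transfer saddle and semicontinuity from $\cost$ and $\score$ separately to $\rho$ jointly. The ratio can take the value $\infty$ (when $\score(R,\mu) = 0$), which is why we needed to develop the machinery over the extended reals rather than just $\bR$.

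First I would verify that $\rho$ is quasisaddle. Fix $R \in \mathcal{R}$ and view $\rho(R,\cdot)$: since $\cost$ is saddle, $\cost(R,\cdot)$ is concave in $\mu$, and since $-\score$ is saddle, $\score(R,\cdot)$ is convex in $\mu$; by part (2) of \lem{ratio_properties}, $\rho(R,\cdot)$ is quasiconcave. Symmetrically, for fixed $\mu$, $\cost(\cdot,\mu)$ is convex and $\score(\cdot,\mu)$ is concave, so by part (1) of \lem{ratio_properties}, $\rho(\cdot,\mu)$ is quasiconvex. Next I would check that $\rho$ is semicontinuous in the sense required by \thm{sion_extended}. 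For fixed $R$, $\cost(R,\cdot)$ is upper semicontinuous and $\score(R,\cdot)$ is lower semicontinuous, so part (3) of \lem{ratio_properties} yields that $\rho(R,\cdot)$ is upper semicontinuous on $\Delta$. For fixed $\mu$, $\cost(\cdot,\mu)$ is lower semicontinuous and $\score(\cdot,\mu)$ is upper semicontinuous; here the hypothesis $\cost > 0$ of the theorem is exactly what we need to invoke part (4) of \lem{ratio_properties}, giving that $\rho(\cdot,\mu)$ is lower semicontinuous.

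With these properties in hand, $\rho$ satisfies the hypotheses of \thm{sion_extended} (with $\mathcal{R}$ convex and $\Delta$ convex and compact), yielding the inf-sup equality
\[\adjustlimits\inf_{R \in \mathcal{R}}\sup_{\mu \in \Delta} \rho(R,\mu) = \adjustlimits\sup_{\mu \in \Delta}\inf_{R \in \mathcal{R}} \rho(R,\mu).\]
To upgrade the two suprema to maxima, I would apply \lem{attain} twice using the compactness of $\Delta$: on the left side, for each $R$ the function $\rho(R,\cdot)$ is upper semicontinuous on compact $\Delta$, so the inner supremum is attained; on the right side, $\inf_{R \in \mathcal{R}} \rho(R,\cdot)$ is a pointwise infimum of upper semicontinuous functions of $\mu$ and hence is itself upper semicontinuous by \lem{semicontinuous_inf}, so its supremum over the compact set $\Delta$ is also attained.

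The main subtlety lies not in a single hard step but in bookkeeping: the asymmetric hypotheses (only $\cost$ is required to be strictly positive, not $\score$) and the sign convention that $\score$ itself is not semicontinuous but $-\score$ is, have to be matched precisely against the four cases of \lem{ratio_properties}. The strict positivity of $\cost$ is essential to conclude lower semicontinuity of $\rho(\cdot,\mu)$ via part (4); any weakening would break the argument, since without it a sequence with $\cost \to 0$ and $\score \to 0$ could produce pathological limit behavior of the ratio.
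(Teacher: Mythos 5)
Your proof is correct and takes essentially the same route as the paper: define the ratio $\alpha=\cost/\score$, apply the four parts of \lem{ratio_properties} to establish that $\alpha$ is quasisaddle and semicontinuous (using the strict positivity of $\cost$ for part (4)), invoke \thm{sion_extended}, and then use \lem{semicontinuous_inf} and \lem{attain} with the compactness of $\Delta$ to upgrade the suprema to maxima. The only difference is that you are slightly more careful than the paper's own write-up, which describes $\alpha(R,\cdot)$ as ``concave'' where \lem{ratio_properties}~(2) in fact only delivers quasiconcavity, as you correctly note.
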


\begin{proof}
Let $\alpha\colon\mathcal{R}\times\Delta\to(0,\infty]$
be defined by $\alpha(R,\mu)\coloneqq\cost(R,\mu)/\score(R,\mu)$,
with $x/0$ interpreted as $\infty$ for $x\in(0,\infty]$.
For any fixed $\mu\in\Delta$, the function
$\alpha(\cdot,\mu)$ is quasiconvex and lower semicontinuous
by \lem{ratio_properties}.
Similarly, for any fixed $R\in\mathcal{R}$,
the function $\alpha(R,\cdot)$ is concave and upper semicontinuous
by \lem{ratio_properties}.
Hence $\alpha$ is semicontinuous and quasisaddle,
and the desired minimax theorem follows from \thm{sion_extended}.
Furthermore, since $\Delta$ is nonempty and compact, the supremums
are attained as maximums by \lem{semicontinuous_inf} and
\lem{attain}.
\end{proof}

Finally, we will need two extensions of this theorem.
First, we will want to allow the denominator to be
a function of the form $\score(R,\mu)^{+}$, where
the $+$ superscript denotes the maximum of $\score(R,\mu)$
with $0$, and where we only know about saddle properties
of $\score(R,\mu)$, not of $\score(R,\mu)^{+}$.
To do this, we need to show such a maximum with $0$ preserves
the properties we care about. We have the following lemma,
which we prove in \app{minimax}.

\begin{restatable}{lemma}{maxpreserves}\label{lem:positive_preserves}
Let $V$ be a real topological vector space, and let
$X\subseteq V$ be convex. For a function
$\psi\colon X\to\bar{\bR}$, let $\psi^{+}$ denote
the function $\psi^{+}(x)=\max\{\psi(x),0\}$.
Then this operation on $\psi$ preserves convexity,
quasiconvexity, quasiconcavity, upper semicontinuity,
and lower semicontinuity, but not concavity.
\end{restatable}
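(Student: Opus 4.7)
The plan is to observe that $\psi^+ = \max\{\psi, 0\}$ is the pointwise maximum of $\psi$ and the constant function $0$, and then to use the fact that for each of the properties convexity, quasiconvexity, upper semicontinuity, and lower semicontinuity, the pointwise maximum of two functions with that property again has that property. Since the constant function $0$ is linear and continuous, it enjoys all five properties, so once these max-preservation facts are stated, four of the five positive claims follow immediately. The failure of concavity will be handled by a counterexample.

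For the max-preservation facts, I would note: if $\phi_1, \phi_2$ are convex, then for $\lambda \in (0,1)$ and any $x, y$ we have $\phi_i(\lambda x + (1-\lambda)y) \le \lambda\phi_i(x) + (1-\lambda)\phi_i(y) \le \lambda \max\{\phi_1(x),\phi_2(x)\} + (1-\lambda)\max\{\phi_1(y),\phi_2(y)\}$ for $i=1,2$, so the max inherits the inequality. The quasiconvex case is identical with max in place of linear combination. For upper semicontinuity of $\max\{\phi_1,\phi_2\}$ at $x$ with value $v$, for any $w > v$ I pick neighborhoods $U_i$ of $x$ on which $\phi_i(\cdot) < w$, and take their intersection; symmetrically for lower semicontinuity I use that if $w < \max\{\phi_1(x),\phi_2(x)\}$, then $w < \phi_i(x)$ for some $i$, and lower semicontinuity of that $\phi_i$ gives a neighborhood on which $\phi_i(\cdot) > w$, hence $\max > w$ as well.

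The only slightly non-routine item is quasiconcavity, since $0$ being a constant (hence quasiconcave) combined with a quasiconcave $\psi$ under pointwise max is not a standard preservation fact. I would prove it directly: assume $\psi$ is quasiconcave and fix $x, y \in X$ and $\lambda \in (0,1)$. Let $z = \lambda x + (1-\lambda)y$, and without loss of generality assume $\psi(x) \le \psi(y)$, so $\psi^+(x) \le \psi^+(y)$ and the target is to show $\psi^+(z) \ge \psi^+(x) = \max\{\psi(x), 0\}$. Quasiconcavity of $\psi$ gives $\psi(z) \ge \psi(x)$, hence $\psi^+(z) = \max\{\psi(z), 0\} \ge \max\{\psi(x), 0\} = \psi^+(x)$, as desired.

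For the failure of concavity I would exhibit the standard example $\psi \colon \bR \to \bR$ given by $\psi(t) = t$, which is linear (hence concave), but $\psi^+(t) = \max\{t, 0\}$ satisfies $\psi^+(0) = 0 < \tfrac{1}{2} = \tfrac{1}{2}\psi^+(-1) + \tfrac{1}{2}\psi^+(1)$, so $\psi^+$ is not concave. The main (mild) obstacle is simply to be careful in the quasiconcavity argument to handle the symmetric cases, but this is dispatched by the WLOG ordering above; none of the steps involve subtle topological issues beyond the definitions already recalled in the paper.
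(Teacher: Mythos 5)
Your proof is correct and takes essentially the same approach as the paper, which proves a slightly more general version (replacing the constant $0$ by an arbitrary constant $c$) via the same case analysis for convexity and the same direct monotonicity argument for quasiconcavity. One small point in your favor: the paper asserts that concavity is not preserved but does not actually exhibit a counterexample, so your $\psi(t)=t$ example fills in a detail the paper leaves implicit.
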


This lemma is useful, but doesn't quite give us everything we need,
because the operation $\psi^{+}$ does not preserve concavity.
We will need the following additional lemma, which says
that \lem{ratio_properties} also works when dividing
by $\psi^{+}$, despite its lack of concavity.

\begin{lemma}\label{lem:positive_ratio}
Let $V$ be a real topological vector space, and let $X\subseteq V$
be convex. Let $\phi\colon X\to[0,\infty]$ and
$\psi\colon X\to[-\infty,\infty)$ be functions, and define
$\rho\colon X\to[0,\infty]$ by $\rho(x)\coloneqq\phi(x)/\psi(x)^{+}$,
with $r/0$ interpreted as $\infty$ for $r\in[0,\infty]$.
Then if $\phi$ is convex and $\psi$ is concave,
$\rho$ is quasiconvex.
\end{lemma}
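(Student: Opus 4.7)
The plan is to establish quasiconvexity of $\rho$ directly from the definition, splitting into cases based on the sign of $\psi$ at the relevant points, so that in the positive case we can reduce essentially to the first part of \lem{ratio_properties}.

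Fix $x,y\in X$ and $\lambda\in(0,1)$, and set $z=\lambda x+(1-\lambda)y$. I want to show $\rho(z)\le\max\{\rho(x),\rho(y)\}$. The obstacle noted in the paper (via \lem{positive_preserves}) is that $\psi^{+}$ need not be concave, so \lem{ratio} cannot be applied to the denominator in full generality. The first step is to observe that this only matters when $\psi$ is negative somewhere among $x,y$. Concretely, if $\psi(x)\le 0$ or $\psi(y)\le 0$, then by definition $\psi^{+}(x)=0$ or $\psi^{+}(y)=0$, so by the convention $r/0=\infty$, one of $\rho(x),\rho(y)$ equals $\infty$; thus $\max\{\rho(x),\rho(y)\}=\infty$ and the inequality holds trivially.

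In the remaining case we have $\psi(x)>0$ and $\psi(y)>0$, so in particular both are finite. Then concavity of $\psi$ (which is defined because $\infty$ is not in the range of $\psi$) yields
\[
\psi(z)\;\ge\;\lambda\psi(x)+(1-\lambda)\psi(y)\;>\;0,
\]
so $\psi^{+}(z)=\psi(z)>0$ as well, and likewise $\psi^{+}(x)=\psi(x)$ and $\psi^{+}(y)=\psi(y)$. On this domain the ``$+$'' is invisible, and convexity of $\phi$ together with the displayed inequality gives
\[
\rho(z)=\frac{\phi(z)}{\psi(z)}\;\le\;
\frac{\lambda\phi(x)+(1-\lambda)\phi(y)}{\lambda\psi(x)+(1-\lambda)\psi(y)},
\]
where the right-hand side is well-defined (positive denominator; numerator in $[0,\infty]$). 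Applying \lem{ratio} to this convex combination of ratios bounds it above by $\max\{\phi(x)/\psi(x),\phi(y)/\psi(y)\}=\max\{\rho(x),\rho(y)\}$, completing the argument.

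The substantive point, and the only place where anything nontrivial happens, is the reduction to the case $\psi(x),\psi(y)>0$: that reduction is what allows us to avoid the failure of concavity of $\psi^{+}$ (since within the strictly positive region of $\psi$, the functions $\psi$ and $\psi^{+}$ agree, and concavity of $\psi$ on the segment from $x$ to $y$ ensures $z$ stays in that region). Everything else is a direct transcription of the proof of part~(1) of \lem{ratio_properties}, using \lem{ratio} on the resulting ratio of convex combinations.
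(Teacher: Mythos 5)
Your proof is correct and follows essentially the same route as the paper's: split into the trivial case where $\psi(x)\le 0$ or $\psi(y)\le 0$ (so the maximum is already $\infty$ under the $r/0=\infty$ convention) and the case $\psi(x),\psi(y)>0$, where concavity of $\psi$ keeps $\psi$ strictly positive on the segment so that $\psi^{+}$ and $\psi$ coincide and the argument from part~(1) of Lemma~\ref{lem:ratio_properties} (via Lemma~\ref{lem:ratio}) applies verbatim. The only difference is that you spell out the positive case explicitly, whereas the paper just refers back to Lemma~\ref{lem:ratio_properties}; your remark that concavity of $\psi$ is what ensures the segment stays inside the region where $\psi^{+}=\psi$ is exactly the right observation.
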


\begin{proof}
Fix $x,y\in X$ and $\lambda\in(0,1)$. If $\psi(x)> 0$
and $\psi(y)> 0$, we have
$\rho(\lambda x+(1-\lambda)y)\le\max\{\rho(x),\rho(y)\}$
using the same argument as in \lem{ratio_properties}.
On the other hand, if $\psi(x)\le 0$ or $\psi(y)\le 0$,
then we have $\max\{\rho(x),\rho(y)\}=\infty$, and the inequality
$\rho(\lambda x+(1-\lambda)y)\le\max\{\rho(x),\rho(y)\}$
trivially holds.
\end{proof}

The second extension we will need in our final minimax theorem
is to the case where the
numerator is allowed to be $0$. Unfortunately, as we can see
from the statement of \lem{ratio_properties},
the ratio does not preserve lower semicontinuity in this setting.
We will need to impose some additional conditions on the $\cost$
and $\score$ functions, particularly with regard to their
behavior around $0$.

\begin{definition}
We say that $\cost\colon\mathcal{R}\times\Delta\to[0,\infty]$
and $\score\colon\mathcal{R}\times\Delta\to[-\infty,\infty)$
are \emph{well-behaved} if the following
conditions hold:
\begin{enumerate}
    \item (Finite cost and score can be achieved.)
    For each $\mu\in\Delta$, there is some
    $R\in\mathcal{R}$ such that $\cost(R,\mu)>0$,
    $\cost(R,\mu)<\infty$, and $\score(R,\mu)>0$.
    \item (A zero-cost algorithm has zero cost regardless
    of the input.) For each $R\in\mathcal{R}$,
    either $\cost(R,\mu)=0$ for all $\mu\in\Delta$, or else
    $\cost(R,\mu)>0$ for all $\mu\in\Delta$.
    \item (Mixing a zero-cost algorithm with a nonzero-cost
    algorithm gives a nonzero-cost algorithm.) For each
    $\mu\in\Delta$, if $R,R'\in\mathcal{R}$ are such that
    $\cost(R,\mu)=0$ and $\cost(R',\mu)>0$, then $\cost(\lambda R+(1-\lambda)R',\mu)>0$
    for all $\lambda\in(0,1)$.
\end{enumerate}
\end{definition}

Finally, we are ready for our main workhorse minimax theorem.

\begin{theorem}\label{thm:main_minimax}
Let $V$ be a real topological vector space, and let
$\mathcal{R}\subseteq V$ be convex. Let $S$
be a nonempty finite set, and let $\Delta$ be the
set of all probability distributions over $S$,
viewed as a subset of $\bR^{|S|}$. Let
$\cost\colon\mathcal{R}\times\Delta\to[0,\infty]$
be semicontinuous and saddle,
and let $\score\colon\mathcal{R}\times\Delta\to[-\infty,\infty)$
be such that its negation, $-\score$,
is semicontinuous and saddle.
Suppose $\cost$ and $\score$ are well-behaved. Then
using the convention $r/0=\infty$ for all $r\in[0,\infty]$,
we have
\[\adjustlimits\inf_{R\in\mathcal{R}}\max_{\mu\in\Delta}
\frac{\cost(R,\mu)}{\score(R,\mu)^{+}}
=\adjustlimits\max_{\mu\in\Delta}\inf_{R\in\mathcal{R}}
\frac{\cost(R,\mu)}{\score(R,\mu)^{+}}.\]
Moreover, if $\cost(R,\cdot)$
and $\score(R,\cdot)$ are both linear in $\mu$
for each $R\in\mathcal{R}$, then
\[\adjustlimits\inf_{R\in\mathcal{R}}\max_{x\in S}
\frac{\cost(R,x)}{\score(R,x)^{+}}
=\adjustlimits\max_{\mu\in\Delta}\inf_{R\in\mathcal{R}}
\frac{\cost(R,\mu)}{\score(R,\mu)^{+}}.\]
Further, all of the above maximums are attained.
\end{theorem}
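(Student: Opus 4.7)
The strategy is to prove the identity by applying Sion's minimax (\thm{sion_extended}) to a regularized ratio and then passing to the limit. For $\epsilon > 0$, let $\alpha_\epsilon(R,\mu) \coloneqq (\cost(R,\mu) + \epsilon)/\score(R,\mu)^+$ and $\alpha(R,\mu) \coloneqq \cost(R,\mu)/\score(R,\mu)^+$, both using the convention $r/0 = \infty$. Adding $\epsilon$ forces the numerator to be strictly positive, which is exactly what prevents a direct application of Sion to $\alpha$ itself (see \lem{ratio_properties}(4), which requires strict positivity).

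I would first verify that $\alpha_\epsilon$ meets Sion's hypotheses. For quasisaddleness: $\alpha_\epsilon(\cdot,\mu)$ is quasiconvex by \lem{positive_ratio} (with $\cost(\cdot,\mu)+\epsilon$ convex and $\score(\cdot,\mu)$ concave), and $\alpha_\epsilon(R,\cdot)$ is quasiconcave by \lem{ratio_properties}(2) (numerator concave; denominator $\score(R,\cdot)^+$ convex, since $\score(R,\cdot)$ is convex and $(\cdot)^+$ preserves convexity by \lem{positive_preserves}). For semicontinuity: $\alpha_\epsilon(R,\cdot)$ is upper semicontinuous by \lem{ratio_properties}(3), and $\alpha_\epsilon(\cdot,\mu)$ is lower semicontinuous by \lem{ratio_properties}(4), which now applies because $\cost(\cdot,\mu)+\epsilon > 0$. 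Since $\Delta$ is a closed bounded subset of $\bR^{|S|}$ and hence compact, Sion yields $L_\epsilon = M_\epsilon$, where $L_\epsilon \coloneqq \inf_R \max_\mu \alpha_\epsilon(R,\mu)$ and $M_\epsilon \coloneqq \max_\mu \inf_R \alpha_\epsilon(R,\mu)$; the outer maximum is attained at some $\mu_\epsilon \in \Delta$ by combining \lem{semicontinuous_inf} with \lem{attain}.

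To transfer the identity to $\alpha$, let $L$ and $M$ be the analogous quantities for $\alpha$. The inequalities $L \le L_\epsilon = M_\epsilon$ and $L \ge M$ always hold, so it suffices to show $\limsup_{\epsilon \to 0^+} M_\epsilon \le M$; this gives $M \le L \le \limsup M_\epsilon \le M$, forcing $L = M$. By compactness, extract a subsequence $\mu_{\epsilon_n} \to \mu^*$. For each fixed $R$, I would show $\limsup_n \alpha_{\epsilon_n}(R,\mu_{\epsilon_n}) \le \alpha(R,\mu^*)$ by case analysis on $\score(R,\mu^*)^+$: when it is positive, combine $\limsup \cost(R,\mu_{\epsilon_n}) \le \cost(R,\mu^*)$ (upper semicontinuity) with $\liminf \score(R,\mu_{\epsilon_n})^+ \ge \score(R,\mu^*)^+$ (lower semicontinuity of $\score(R,\cdot)$, preserved under $(\cdot)^+$); when it is zero, $\alpha(R,\mu^*) = \infty$ by convention and the bound is trivial. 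Since $M_{\epsilon_n} \le \alpha_{\epsilon_n}(R,\mu_{\epsilon_n})$ for every $R$, taking $\liminf$ and then the infimum over $R$ gives $\liminf M_{\epsilon_n} \le \inf_R \alpha(R,\mu^*) \le M$, completing $L = M$. The maximum $\max_\mu \inf_R \alpha(R,\mu)$ is attained because $\alpha(R,\cdot)$ is upper semicontinuous by \lem{ratio_properties}(3) (strict positivity is not needed for this clause), so $\inf_R \alpha(R,\cdot)$ is upper semicontinuous by \lem{semicontinuous_inf} and attains its max on compact $\Delta$ by \lem{attain}.

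For the \emph{moreover} part, linearity of $\cost(R,\cdot)$ and $\score(R,\cdot)$ in $\mu$ implies that $\cost(R,\cdot)$ is convex and $\score(R,\cdot)$ is concave, so \lem{positive_ratio} additionally yields that $\alpha(R,\cdot)$ is quasiconvex. Then \lem{quasiconvex_hull} applied with $\Delta = \Conv(S)$ gives $\max_{\mu \in \Delta}\alpha(R,\mu) = \max_{x \in S}\alpha(R,x)$, so $\inf_R \max_{x \in S}\alpha(R,x) = \inf_R \max_{\mu \in \Delta}\alpha(R,\mu) = L = M$. I expect the main difficulty to lie in the limiting step: without the $\epsilon$-perturbation, $\alpha(\cdot,\mu)$ can fail to be lower semicontinuous precisely where $\cost(R,\mu)$ vanishes (a slight perturbation can turn $0/0 = \infty$ into a small finite value), so Sion does not apply directly. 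The regularization circumvents this while still permitting the perturbed extremizers $\mu_\epsilon$ to converge to a genuine extremizer of the unperturbed problem via the semicontinuity-based case analysis above.
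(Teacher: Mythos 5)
Your proof is correct, and it takes a genuinely different route from the paper's. The paper defines $\mathcal{R}'\subseteq\mathcal{R}$ to be the set of algorithms with strictly positive cost (the well-behaved condition that cost is either identically zero or never zero over $\Delta$ makes this well-defined), applies Sion directly to $\alpha$ on $\mathcal{R}'\times\Delta$, where $\alpha(\cdot,\mu)$ is lower semicontinuous because $\cost>0$, and then uses the remaining two well-behaved conditions to show that widening the infimum back to $\mathcal{R}$ on the max-inf side cannot help: if some zero-cost $\hat{R}$ achieved a strictly smaller ratio at the maximizing $\mu$, mixing $\hat{R}$ with a fixed $R'\in\mathcal{R}'$ having finite cost and positive score produces, as $\lambda\to 0$, elements of $\mathcal{R}'$ with arbitrarily small cost-to-score ratio, a contradiction. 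You instead regularize the numerator by $+\epsilon$ so Sion's hypotheses hold over all of $\mathcal{R}$ at once, then pass to the limit using compactness of $\Delta$ together with upper semicontinuity of $\cost(R,\cdot)$ and lower semicontinuity of $\score(R,\cdot)^+$. Notably, your route never invokes the well-behaved hypothesis, so it in fact yields a mild strengthening: that hypothesis is not needed for the identity or the attainment of the maxima. Each approach has its merits; the paper's argument isolates exactly which structural features of ``zero-cost algorithms'' matter, while yours is more robust and avoids the case analysis entirely.

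One small slip in the limiting step: you extract a subsequence with $\mu_{\epsilon_n}\to\mu^*$ and then write ``taking $\liminf$,'' which only gives $\liminf_n M_{\epsilon_n}\le M$ along that subsequence and does not directly bound $\limsup_{\epsilon\to 0}M_\epsilon$. The fix is standard: first pick $\epsilon_n\to 0$ realizing $\limsup_\epsilon M_\epsilon$, then extract a further subsequence along which $\mu_{\epsilon_n}$ converges; along this subsubsequence $M_{\epsilon_n}$ converges to $\limsup_\epsilon M_\epsilon$, and your per-$R$ bound $\limsup_n\alpha_{\epsilon_n}(R,\mu_{\epsilon_n})\le\alpha(R,\mu^*)$ then gives $\limsup_\epsilon M_\epsilon\le\inf_R\alpha(R,\mu^*)\le M$ as desired. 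With that ordering made explicit, the argument is complete.
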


\begin{proof}
First, note that if $S=\{x_1,x_2,\dots,x_{|S|}\}$,
then we can view $\Delta$ as the convex hull of
the set $\{e_1,e_2,\dots,e_{|S|}\}\subseteq\bR^{|S|}$,
where the $e_i$ are the unit vectors
$e_i=(0,0,\dots,0,1,0,0,\dots,0)$ with the $1$ at position $i$.
Hence $\Delta$ is convex. It is also closed and bounded,
making it compact. We identify $e_i$ with $x_i$,
so that $\Delta=\Conv(S)$.

Note that since each $R\in\mathcal{R}$ has either cost
$0$ for all $\mu$ or cost greater than $0$ for all $\mu$,
we can define the set $\mathcal{R}'\subseteq\mathcal{R}$
of $R$ with nonzero cost. Now, on $\mathcal{R}'$,
the function $\alpha(R,\mu)=\cost(R,\mu)/\score(R,\mu)^{+}$
is semicontinuous and quasisaddle by \lem{ratio_properties}
together with \lem{positive_ratio} and \lem{positive_preserves}.
Additionally, $\Delta$ is nonempty, convex, and compact. Thus
by \thm{sion_extended}, we know that
\[\adjustlimits\inf_{R\in\mathcal{R}'}\max_{\mu\in\Delta}
\frac{\cost(R,\mu)}{\score(R,\mu)^{+}}
=\adjustlimits\max_{\mu\in\Delta}\inf_{R\in\mathcal{R}'}
\frac{\cost(R,\mu)}{\score(R,\mu)^{+}},\]
with the maximums attained.

What we want to show is this statement with the infimums over
$\mathcal{R}$ instead of $\mathcal{R}'$. The inf-sup
is always at least the sup-inf for every function, so we need
only show that the sup-inf is at least the inf-sup.
Moreover, since expanding the domain can only decrease the infimum,
we know that
\[\adjustlimits\max_{\mu\in\Delta}\inf_{R\in\mathcal{R}'}
\frac{\cost(R,\mu)}{\score(R,\mu)^{+}}
=\adjustlimits\inf_{R\in\mathcal{R}'}\max_{\mu\in\Delta}
\frac{\cost(R,\mu)}{\score(R,\mu)^{+}}
\ge \adjustlimits\inf_{R\in\mathcal{R}}\max_{\mu\in\Delta}
\frac{\cost(R,\mu)}{\score(R,\mu)^{+}},\]
where the rightmost maximum is attained by virtue
of the fact that we know it is attained when $R\in\mathcal{R}'$,
and if $R\in\mathcal{R}\setminus\mathcal{R}'$,
then $\cost(R,\mu)/\score(R,\mu)^{+}$ is either $0$ or
$\infty$ for all $\mu$. Thus
we only need to show that the max-inf over $\mathcal{R}$
is at least the max-inf over $\mathcal{R}'$,
and that the former maximum is attained.

To see this, let $\mu\in\Delta$ be the maximizing $\mu$ for
the expression
\[\max_{\mu\in\Delta}\inf_{R\in\mathcal{R}'}
\frac{\cost(R,\mu)}{\score(R,\mu)^{+}}.\]
Suppose by contradiction that there was some
$\hat{R}\in\mathcal{R}\setminus\mathcal{R}'$ such that
\[\frac{\cost(\hat{R},\mu)}{\score(\hat{R},\mu)^{+}}
<\inf_{R\in\mathcal{R}'}\frac{\cost(R,\mu)}{\score(R,\mu)^{+}}.\]
Since $\hat{R}\in\mathcal{R}\setminus\mathcal{R}'$,
we must have $\cost(\hat{R},\mu)=0$. Since
$0/\score(\hat{R},\mu)^{+}$ is less than something, and since
we're interpreting $0/0=\infty$, we must have
$\score(\hat{R},\mu)>0$, so that $0/\score(\hat{R},\mu)^{+}=0$.
We wish to show that
$\inf_{R\in\mathcal{R}'}\cost(R,\mu)/\score(R,\mu)^{+}=0$.
To this end, pick $\epsilon>0$. We will find $R\in\mathcal{R}'$
such that $\cost(R,\mu)/\score(R,\mu)^{+}<\epsilon$.
The idea is to pick some $R'\in\mathcal{R}'$ such
that $\cost(R',\mu)<\infty$ and $\score(R',\mu)>0$,
as guaranteed by the well-behaved condition on $\cost$ and $\score$.
Then set $R\coloneqq \lambda R'+(1-\lambda)\hat{R}$,
with $\lambda>0$ extremely small. Now, the well-behaved property
of $\cost$ says that $\cost(R,\mu)>0$, so $R\in\mathcal{R}'$.
By convexity, we also have
$\cost(R,\mu)=\cost(\lambda R'+(1-\lambda)\hat{R},\mu)
\le \lambda\cost(R',\mu)+(1-\lambda)\cost(\hat{R},\mu)
=\lambda\cost(R',\mu)$, and by the concavity of $\score(\cdot,\mu)$,
we have
$\score(R,\mu)=\score(\lambda R'+(1-\lambda)\hat{R},\mu)
\ge \lambda\score(R',\mu)+(1-\lambda)\score(\hat{R},\mu)
\ge (1/2)\score(\hat{R},\mu)$, assuming $\lambda\le 1/2$.

This means that $\score(R,\mu)$ and $\score(\hat{R},\mu)$
are both positive, and
$\cost(R,\mu)/\score(R,\mu)
\le 2\lambda\cost(\hat{R},\mu)/\score(\hat{R},\mu)$.
Since $\cost(\hat{R},\mu)<\infty$,
setting $\lambda>0$ to be small causes
the ratio $\cost(R,\mu)/\score(R,\mu)^{+}$
to be arbitrarily close to $0$, as desired. It follows that
there exists $\mu\in\Delta$ such that
\[\inf_{R\in\mathcal{R}}
\frac{\cost(R,\mu)}{\score(R,\mu)}\ge
\adjustlimits\inf_{R\in\mathcal{R}}\max_{\mu'\in\Delta}
\frac{\cost(R,\mu')}{\score(R,\mu')},\]
and since the inf-max is always at least the max-inf,
there does not exist a $\mu$ for which
the left-hand infimum
is any larger; thus we get the desired result and the
maximum is attained.

Finally, suppose that $\cost(R,\cdot)$ and $\score(R,\cdot)$
are linear for each $R\in\mathcal{R}$. In that case,
$\cost(R,\cdot)$ is convex and $\score(R,\cdot)$ is concave,
which means that $\cost(R,\cdot)/\score(R,\cdot)^{+}$
is quasiconvex on $\Delta$ by \lem{positive_ratio}.
Then \lem{quasiconvex_hull}
implies that the maximum over $\mu\in\Conv(S)$
is attained at a point in $S$.
Moreover, if $R\in\mathcal{R}\setminus\mathcal{R}'$,
then the maximum over $\mu\in\Delta$ evaluates to either
$0$ or $\infty$. If it is $0$, then it is clearly also
attained in $S$. If it is $\infty$, it means some $\mu\in\Delta$
has $\score(R,\mu)\le 0$; the concavity of $\score(R,\cdot)$
then gives us some $x\in S$ such that
$\score(R,x)\le\score(R,\mu)$, meaning there is a point $x\in S$
on which $\score(R,x)^{+}=0$ and $\cost(R,x)/\score(R,x)^{+}=\infty$,
as desired.
\end{proof}

\thm{main_minimax} is the main tool we will use to prove
minimax theorems for algorithmic models. We will usually
apply it in a setting where $\mathcal{R}$ is a set of
algorithms, $S$ is a finite input set, $\Delta$ is a set
of distributions over the inputs, $\cost(R,\mu)$ is a cost
measure for the performance of an algorithm against
a distribution, and $\score(R,\mu)$ is some kind of
success measure. We will sometimes choose
$\score(R,\mu)=\bias_f(R,\mu)$, where $\bias_f(R,\mu)$
is the bias $R$ achieves against distribution $\mu$ in computing $f$.

We will generally combine \thm{main_minimax} with an
amplification theorem; such a theorem will turn the left
hand side $\inf_R\max_x \cost(R,x)/\score(R,x)$ into
something more familiar, such as $\inf_R\max_x\cost(R,x)$
where the infimum is restricted to algorithms $R$ which
achieve at least constant bias (i.e.\ bounded error)
on each input. With such an amplification theorem,
the minimax result will guarantee the existence of a
hard distribution $\mu$ against which
$\cost(R,\mu)/\score(R,\mu)$ is large for all $R$;
this means $\mu$ is hard to solve even to small bias.

While the above strategy works
for models that can be amplified linearly in the bias
(going from bias $\gamma$ to constant bias
using $O(1/\gamma)$ overhead),
such as quantum query complexity, for randomized
algorithms the situation is more complicated.
For randomized algorithms,
we may instinctively want to use something
like $\score(R,\mu)=\bias_f(R,\mu)^2$, but this does not work
as it does not satisfy the right saddle properties.
Instead, we introduce a new way of evaluating the success
of randomized algorithms, called \emph{scoring rules}.
Evaluation via scoring rules ends up being the ``correct''
way to measure the success of a randomized algorithm,
and has more elegant properties than simply the bias.
It is also highly intuitive: to evaluate the success of
an algorithm, we require it to give a confidence
prediction for whether the output is $0$ or $1$,
and then we score the prediction using a scoring
rule which incentivizes honesty (that is, a scoring
rule that causes a Bayesian agent who wishes to maximize
her expected score to output her true subjective probability).

\section{Forecasting algorithms}
\label{sec:forecasting}

In this section we introduce the notion of
\emph{forecasting algorithms}, which output not just
a $\B$ guess at the function value but also a confidence
parameter $q\in[0,1]$ for that prediction. These algorithms
will be scored using a \emph{scoring rule}, which rewards them
$1$ point for a correct prediction made with perfect confidence,
and $0$ points for a confidence of $1/2$.
As we will see, normal algorithms can be converted into
forecasting algorithms and vice versa, and
the expected score of the forecasting version can often
be related to the bias of the algorithm in its regular
(discrete outputs) form.

\subsection{Scoring rules}

\begin{definition}[Scoring rule]
A \emph{scoring rule} is a function
$s:[0,1]\to[-\infty,1]$ such that $s(1)=1$, $s(1/2)=0$,
and $s(\cdot)$ is increasing over $[0,1]$. We say a scoring rule
is \emph{proper} if for each $p\in(0,1)$,
the expression $ps(q)+(1-p)s(1-q)$ is uniquely maximized at $q=p$.
\end{definition}

Generally, if a forecasting algorithm outputs $q\in[0,1]$,
we will interpret it as assigning confidence $q$ to the output $1$
and confidence $1-q$ to the output $0$; we give it score
$s(q)$ if the right answer was $1$, and score $s(1-q)$ if the
right answer was $0$. A proper scoring rule is therefore
a scoring rule that incentivizes the algorithm to output $q=p$
in the case where the right answer is sampled from
$\Bernoulli(p)$. In other words, a proper scoring rule
is one that incentivizes a Bayesian agent to output her true
subjective probability for the outcome being $1$.

\begin{definition}\label{def:rules}
We define the following scoring rules.
\begin{enumerate}
    \item $\hs(q)\coloneqq 1-\sqrt{\frac{1-q}{q}}$
    \item $\Brier(q)\coloneqq 1-4(1-q)^2$
    \item $\bias(q)\coloneqq 1-2(1-q)$
    \item $\ls(q)\coloneqq 1-\log(1/q)$.
\end{enumerate}
\end{definition}

We note that $\Brier(\cdot)$ and $\ls(\cdot)$ are known
as the Brier scoring rule and logarithmic scoring rule, respectively,
and are well-known in the literature. The Brier scoring rule
is useful because it is a proper scoring rule which is bounded
(that is, $s(q)\in[-3,1]$ for all $q\in[0,1]$,
instead of $s(\cdot)$ diverging to $-\infty$ at $0$).
The logarithmic scoring rule has an information-theoretic
interpretation, with the algorithm essentially starting
at score $1$ and losing an amount of score depending on its
``surprisal'' at the correct outcome.

The scoring rule $\bias(\cdot)$ is not proper, but as we will see,
it is closely related to the bias an algorithm will make.
Finally, the scoring rule $\hs(\cdot)$ will be the most
useful of the bunch for our purposes. Despite not having
any intuitive interpretation and not being bounded, it is
an incredibly convenient scoring rule due to the fact
that it can be \emph{amplified}, as we will see.
$\hs(\cdot)$ has been previously studied
(for example in \cite{BSS05}, where it is called
the ``boosting loss'' due to its relationship with
boosting), but we believe its amplification property
has not been previously known (we prove this amplification
property later on in \lem{score_amplification};
this ends up being a key ingredient of our
minimax theorems).

\begin{restatable}{lemma}{rules}
$\hs$, $\Brier$, and $\ls$ are proper scoring rules. $\bias$ is a
scoring rule which is not proper.
\end{restatable}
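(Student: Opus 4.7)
The plan is to verify for each of $\hs$, $\Brier$, $\bias$, $\ls$ the three scoring-rule axioms ($s(1)=1$, $s(1/2)=0$, monotonicity), and then to analyze the expected-score function $F_p(q) \coloneqq p\,s(q) + (1-p)\,s(1-q)$ for $p \in (0,1)$. The axioms are immediate from substitution: $\hs(1)=\Brier(1)=\bias(1)=\ls(1)=1$, and $\hs(1/2)=\Brier(1/2)=\bias(1/2)=\ls(1/2)=0$ (taking $\log$ base $2$ for $\ls$), while monotonicity on $[0,1]$ follows because $(1-q)/q$ and $4(1-q)^2$ are decreasing, and $2q-1$ and $\log q$ are increasing.

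For the propriety of $\Brier$ and $\ls$, I will note that $F_p$ is strictly concave on $(0,1)$: for $\Brier$ the expression $1 - 4p(1-q)^2 - 4(1-p)q^2$ is a downward-opening parabola in $q$, and for $\ls$ the expression $1 + p\log q + (1-p)\log(1-q)$ is a positive combination of two strictly concave logarithms. In both cases the first-order condition $F_p'(q)=0$ simplifies algebraically to $p(1-q)=(1-p)q$, whose unique solution is $q=p$, and strict concavity upgrades this to the unique global maximum.

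The main technical step is propriety of $\hs$, for which second derivatives are awkward. Instead I will compute
\[
F_p'(q) \;=\; \frac{p}{2\,q^{3/2}\sqrt{1-q}} \;-\; \frac{1-p}{2\,(1-q)^{3/2}\sqrt{q}},
\]
and observe that dividing through by $\sqrt{q(1-q)}$ reduces $F_p'(q)=0$ to the same clean equation $p(1-q) = (1-p)q$, giving $q=p$ as the unique interior critical point in $(0,1)$. To promote this critical point to a unique global maximum, I will invoke the boundary behavior: since $\hs(q)\to -\infty$ as $q\to 0^+$ and $p, 1-p > 0$, the sum $F_p(q)$ tends to $-\infty$ at both endpoints of $[0,1]$, forcing the unique interior critical point $q=p$ to be the unique maximizer.

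Finally, non-propriety of $\bias$ is a direct calculation: expanding gives $F_p(q) = p(2q-1) + (1-p)(1-2q) = (2p-1)(2q-1)$, which is linear in $q$. For any $p \neq 1/2$ this is strictly monotone on $[0,1]$ and is maximized at an endpoint ($q=1$ if $p > 1/2$ and $q=0$ if $p < 1/2$) rather than at $q=p$, so the propriety condition fails.
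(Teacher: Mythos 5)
Your proof is correct and follows essentially the same route as the paper's: verify the axioms by substitution, compute $F_p'(q)$ for each proper candidate, reduce the first-order condition to $p(1-q)=(1-p)q$, and use either boundary divergence ($\hs$) or strict concavity ($\Brier$, $\ls$) to upgrade the critical point to the unique maximizer. The paper uses boundary behavior for $\ls$ where you use concavity, and a second-derivative check for $\Brier$ where you observe the downward parabola directly — these are equivalent. One small bonus: you explicitly prove that $\bias$ is \emph{not} proper via $F_p(q)=(2p-1)(2q-1)$, which the paper's appendix proof asserts but does not argue. One small wording slip: you say ``dividing through by $\sqrt{q(1-q)}$'' to pass from $F_p'(q)=0$ to $p(1-q)=(1-p)q$; you mean \emph{multiplying} — equivalently, factoring $F_p'(q) = \tfrac{1}{2\sqrt{q(1-q)}}\bigl(\tfrac{p}{q} - \tfrac{1-p}{1-q}\bigr)$ and noting the prefactor is nonzero on $(0,1)$.
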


This lemma can be proven using elementary calculus,
and we do so in \app{calculus}.

\subsection{Distance measures}

Fascinatingly, the above scoring rules
all correspond to well-known distance
measures between probability distributions.
To describe the correspondence, we first start
by defining the following distance measures.

\begin{definition}\label{def:dist_measures}
For probability distributions $\nu_0$ and $\nu_1$
over a finite domain $P$, define
\begin{align*}
\Delta(\nu_0,\nu_1)&\coloneqq\frac{1}{2}\sum_{x\in P}
|\nu_0[x]-\nu_1[x]|&\mbox{(Total variation)}\\
\h^2(\nu_0,\nu_1)&\coloneqq\frac{1}{2}\sum_{x\in P}
(\sqrt{\nu_0[x]}-\sqrt{\nu_1[x]})^2&\mbox{(Hellinger)}\\
\Ess^2(\nu_0,\nu_1)&\coloneqq\frac{1}{2}\sum_{x\in P}
\frac{(\nu_0[x]-\nu_1[x])^2}{\nu_0[x]+\nu_1[x]}&\mbox{(Symmetrized }\chi^2\mbox{)}\\
\JS(\nu_0,\nu_1)&\coloneqq\frac{1}{2}\sum_{x\in P}
\nu_0[x]\log\frac{2\nu_0[x]}{\nu_0[x]+\nu_1[x]}
+\nu_1[x]\log\frac{2\nu_1[x]}{\nu_0[x]+\nu_1[x]}&\mbox{(Jensen-Shannon)}.
\end{align*}
\end{definition}

The above measures give the distance between two probability
distributions. We will sometimes want to have an asymmetric distance
that is weighted towards one of the two distributions;
while these asymmetric distances look strange at first, they
show up naturally in the study of scoring rules. We extend
the above distance measures as follows.

\begin{definition}\label{def:extended_dist_measures}
Given probability distributions $\nu_0$ and $\nu_1$
over a finite domain $P$, as well as a weight $w\in[0,1]$,
set $\nu=(1-w)\nu_0+w\nu_1$.
Let $R$ be the random variable over $x\in P$ defined by
$R(x)\coloneqq |(1-w)\nu_0[x]-w\nu_1[x]|/\nu[x]$ for all $x\in P$.
Then define
\begin{align*}
\Delta(\nu_0,\nu_1,w)&\coloneqq \mathop{\bE}_{x\leftarrow\nu}[R]\\
\h^2(\nu_0,\nu_1,w)&\coloneqq 
\mathop{\bE}_{x\leftarrow\nu}[1-\sqrt{1-R^2}]\\
\Ess^2(\nu_0,\nu_1,w)&\coloneqq \mathop{\bE}_{x\leftarrow\nu}[R^2]\\
\JS(\nu_0,\nu_1,w)&\coloneqq
\mathop{\bE}_{x\leftarrow\nu}\left[1-H\left(\frac{1+R}{2}\right)\right],
\end{align*}
where $H(\alpha)\coloneqq \alpha\log1/\alpha+(1-\alpha)\log 1/(1-\alpha)$
is the binary entropy function.
\end{definition}

It's not hard to see that when $w=1/2$, the expressions
in \defn{extended_dist_measures} equal the ones in \defn{dist_measures}.
Perhaps surprisingly, the distance measures
$\h^2$, $\Ess^2$, and $\JS$ are all related to each other by a constant
factor.

\begin{restatable}[Relations between distance measures]{lemma}{relations}
\label{lem:distance_measure_relations}
When applied to fixed $\nu_0$, $\nu_1$, and $w$,
the distance measures satisfy
\[\frac{\Ess^2}{2}\le 1-\sqrt{1-\Ess^2}
\le \h^2\le \JS\le \Ess^2\]
as well as
\[\Delta^2\le\Ess^2\le\Delta. \]
We also have $\JS\le \h^2/\ln 2$ and $\Ess^2\le (\ln 4)\JS$.
\end{restatable}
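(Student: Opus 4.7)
The plan is to reduce each of the claimed inequalities to a pointwise comparison of the integrands defining the four weighted distance measures, then take expectations against $\nu$; the non-pointwise case $\Ess^2/2 \le 1 - \sqrt{1-\Ess^2}$ is handled as a one-variable inequality, and the bound $1 - \sqrt{1-\Ess^2} \le \h^2$ follows from Jensen. Writing $p = (1+R)/2 \in [1/2,1]$ throughout, the four integrands at a point $x$ are $R$, $R^2$, $1 - \sqrt{1-R^2}$, and $1 - H(p)$, with $H$ the base-$2$ binary entropy.

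Several of the inequalities are immediate. For $\Delta^2 \le \Ess^2$, Jensen applied to $t \mapsto t^2$ gives $(\E R)^2 \le \E R^2$. For $\Ess^2 \le \Delta$, the pointwise bound $R^2 \le R$ holds since $R \in [0,1]$. For $\Ess^2/2 \le 1 - \sqrt{1-\Ess^2}$, I would set $u = \Ess^2 \in [0,1]$ and square the equivalent form $\sqrt{1-u} \le 1 - u/2$, reducing it to $0 \le u^2/4$. For $1 - \sqrt{1-\Ess^2} \le \h^2$, the map $u \mapsto 1 - \sqrt{1-u}$ is convex on $[0,1]$ (second derivative $\tfrac14 (1-u)^{-3/2}$), so Jensen applied to $U = R^2$ yields the desired bound.

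The three central binary-entropy inequalities all reduce, after the substitution $p = (1+R)/2$, to one-variable bounds on $H$. Specifically, $\h^2 \le \JS$ becomes $H(p) \le 2\sqrt{p(1-p)}$ and $\JS \le \Ess^2$ becomes $4p(1-p) \le H(p)$; these are the classical sandwich bounds on the binary entropy. And $\Ess^2 \le \ln 4 \cdot \JS$ becomes $(2p-1)^2 \le 2\ln 2 \cdot (1 - H(p))$, which is exactly Pinsker's inequality applied to $\mathrm{Ber}(p)$ against $\mathrm{Ber}(1/2)$, since $\ln 2 \cdot (1 - H(p))$ equals $D_{\mathrm{KL}}(\mathrm{Ber}(p)\,\|\,\mathrm{Ber}(1/2))$ in nats.

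The main obstacle is the remaining inequality $\JS \le \h^2/\ln 2$, which is not a standard named inequality. It reduces to the pointwise bound $\ln 2 \cdot (1 - H(p)) \le 1 - 2\sqrt{p(1-p)}$ on $[1/2, 1]$. I would attack it by direct calculus: define $f(p) := 1 - 2\sqrt{p(1-p)} - \ln 2 \cdot (1 - H(p))$ and observe $f(1/2) = 0$ and (using $H'(p) = \log_2((1-p)/p)$) $f'(1/2) = 0$; the plan is then to show $f'(p) \ge 0$ on $(1/2, 1]$ by comparing $\ln((1-p)/p)$ with $(1-2p)/\sqrt{p(1-p)}$, a clean one-variable inequality that can be handled, for instance, by the substitution $t = \sqrt{(1-p)/p}$. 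Once this pointwise bound is in hand, averaging against $\nu$ completes the chain.
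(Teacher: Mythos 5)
Your proposal is correct, and its high-level structure matches the paper's: reduce each inequality in the chain $\h^2 \le \JS \le \Ess^2 \le \Delta$ to a pointwise bound on the integrand and average against $\nu$, use Jensen for $\Delta^2\le\Ess^2$ and $1-\sqrt{1-\Ess^2}\le\h^2$, and treat $\Ess^2/2\le 1-\sqrt{1-\Ess^2}$ as a one-variable fact. The tactical difference is in how the pointwise binary-entropy bounds are established. Writing $1-H\big(\tfrac{1+x}{2}\big)=\alpha(x)/\ln 4$ with $\alpha(x)=(1+x)\ln(1+x)+(1-x)\ln(1-x)$, the paper proves in \lem{inequalities} that $\alpha(x)/x^2$ is increasing and $\alpha(x)/(1-\sqrt{1-x^2})$ is decreasing on $[0,1]$; the value at $x=1$ then yields the sandwich $1-\sqrt{1-x^2}\le 1-H\le x^2$, while the $x\to 0$ Taylor limits yield the constants $\ln 2$ and $\ln 4$ --- a single unified monotone-ratio calculation. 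You instead invoke the entropy sandwich $4p(1-p)\le H(p)\le 2\sqrt{p(1-p)}$ as known, recognize $\Ess^2\le(\ln 4)\JS$ as Pinsker's inequality for $\mathrm{Ber}(p)$ versus $\mathrm{Ber}(1/2)$ (a nice identification the paper does not surface), and attack $\JS\le\h^2/\ln 2$ by direct calculus; your proposed substitution $t=\sqrt{(1-p)/p}$ does close this, reducing $f'(p)\ge 0$ to $1/t-t+2\ln t\ge 0$ on $(0,1]$, which holds because the left side vanishes at $t=1$ and has derivative $-(t-1)^2/t^2\le 0$. Both routes are sound. Yours is more modular and names its ingredients; the paper's argument is self-contained and shows as a byproduct that $\ln 2$ and $\ln 4$ are the optimal constants (they arise as the extreme values of a monotone ratio). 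One caveat for your version: $H(p)\le 2\sqrt{p(1-p)}$ is true and appears in the literature the paper cites (Topsøe), but it is far less universally quoted than $H(p)\ge 4p(1-p)$, so if you keep this route you should supply a one-line proof or a reference rather than labeling it classical without support.
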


While these relationships are certainly known in the
literature, it is hard to chase down good citations
(though see \cite{Top00,MCAL17} for parts of this result);
in any case, we prove \lem{distance_measure_relations}
in \app{calculus}.

\subsection{The highest achievable expected score is a distance
measure}

Consider the following problem: suppose distributions
$\nu_0$ and $\nu_1$ are known (for example, perhaps they
are the distributions of the transcript of a fixed randomized algorithm
when run on a known $0$-distribution and a known $1$-distribution,
respectively). Further,
suppose a $\Bernoulli(w)$ process generates a bit $b\in\B$,
and then a sample $x\leftarrow\mu_b$ is provided.
We assume the parameter $w$ is known. What is the best algorithm
for predicting $b$ given $x$, assuming you wish to maximize
the expected score according to one of the scoring rules
$\hs(\cdot),\Brier(\cdot),\ls(\cdot),\bias(\cdot)$?
It turns out that best attainable expected score
is exactly the distance
between $\nu_0$ and $\nu_1$ according to the distance measures
$\h^2,\Ess^2,\JS,\Delta$, respectively.
To prove this, we introduce the following definitions.

\begin{definition}\label{def:score_subscript}
For a scoring rule $s:[0,1]\to[-\infty,1]$, we define
$s_1(p)\coloneqq s(p)$ and $s_0(p)\coloneqq s(1-p)$.
This way, if a forecasting algorithm outputs $p$
and the real outcome is $b$, the score
of this prediction will be $s_b(p)$.
\end{definition}

\begin{definition}[Expected score notation]
Let $S$ be a finite set, and let $\phi:S\to[0,1]$ be a
function representing predictions.
Let $\nu$ be a distribution over $S$, let
$P(x)$ be a Boolean-valued random variable for each $x\in S$
representing the correct outcome,
and let $s:[0,1]\to[-\infty,1]$ be a scoring rule.
The \emph{expected score} of $\phi$, denoted
$\score_{s}(\phi,\nu,P)$, is defined as
\[\score_{s}(\phi,\nu,P)\coloneqq
\bE_{x\leftarrow\nu}\bE_{b\leftarrow P(x)}[s_{b}(\phi(x))].\]
In these expectations, if a value of $\infty$ or $-\infty$
occurs with probability $0$, we set $0\cdot\infty\coloneqq 0$.
\end{definition}

We can also extend the $\score$ notation to the case where $\phi(x)$
outputs a probability distribution over $[0,1]$ instead of
always outputting a deterministic prediction given the observation
$x$. We won't worry about this case for now.

Equipped with these definitions, we are now ready to prove
the correspondence between scoring rules and distance measures.
This correspondence appears to be known
in the literature (indeed, variants of it seem to have been
rediscovered many times); see \cite{RW11} for an overview.
However, the form we need here
is somewhat different from the usual
form in the literature, which usually discusses
divergences instead of distances.
We therefore include the proof for completeness.

\begin{lemma}\label{lem:scores_distances}
Let $\nu_0$ and $\nu_1$ be probability distributions over a finite set
$S$, and let $w\in[0,1]$.
Let $M_s(\nu_0,\nu_1,w)$ be the maximum possible score of
for predicting $b\leftarrow \Bernoulli(w)$ given $x\leftarrow\nu_b$,
where $\nu_0$, $\nu_1$, and $w$ known. That is, $M_s(\nu_0,\nu_1,w)$
is the maximum over choice of $\phi:S\to[0,1]$ of the expression
$\score_s(\phi,\nu,P)$, where $\nu=(1-w)\nu_0+w\nu_1$ and
$P(x)$ is the posterior probability distribution of $b$
given prior $\Bernoulli(w)$ and observation $x\leftarrow\nu_b$. Then
\begin{align*}
    M_{\bias}(\nu_0,\nu_1,w)&=\Delta(\nu_0,\nu_1,w)\\
    M_{\hs}(\nu_0,\nu_1,w)&=\h^2(\nu_0,\nu_1,w)\\
    M_{\Brier}(\nu_0,\nu_1,w)&=\Ess^2(\nu_0,\nu_1,w)\\
    M_{\ls}(\nu_0,\nu_1,w)&=\JS(\nu_0,\nu_1,w).
\end{align*}
\end{lemma}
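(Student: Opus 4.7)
The plan is to reduce the lemma to a pointwise optimization problem. Since $S$ is finite and $\phi(x)$ may be chosen independently at each $x$, the expected score decomposes as
\[
\score_s(\phi,\nu,P) = \sum_{x \in S} \nu[x]\bigl(p_x\, s_1(\phi(x)) + (1-p_x)\, s_0(\phi(x))\bigr),
\]
where $\nu = (1-w)\nu_0 + w\nu_1$ and $p_x = P(b=1 \mid x) = w\nu_1[x]/\nu[x]$ is the Bayesian posterior. Hence $M_s(\nu_0,\nu_1,w) = \sum_{x \in S} \nu[x]\, m_s(p_x)$, where $m_s(p) \coloneqq \max_{q \in [0,1]} \bigl(p\, s_1(q) + (1-p)\, s_0(q)\bigr)$. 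The key observation that links this to \defn{extended_dist_measures} is that
\[
2p_x - 1 = \frac{w\nu_1[x] - (1-w)\nu_0[x]}{\nu[x]},
\]
so $|2p_x - 1| = R(x)$ exactly matches the random variable in \defn{extended_dist_measures}.

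Next I would compute $m_s$ in closed form for each of the four rules. For $\hs$, $\Brier$, and $\ls$, the previous lemma established properness, which means the unique maximizer of $p\, s_1(q) + (1-p)\, s_0(q)$ in $q$ is $q = p$. For $\bias$, which is not proper, direct substitution shows $p\, \bias_1(q) + (1-p)\, \bias_0(q) = 2q(2p - 1) + (1 - 2p) + \text{const}$, which is linear in $q$, so the maximum over $[0,1]$ is $|2p - 1|$, attained at the threshold prediction $q = \id[p \ge \tfrac12]$. Plugging into the scoring rule definitions and simplifying using the identity $4p(1-p) = 1 - (2p-1)^2$ and the symmetry $H(p) = H(1-p)$, I obtain:
\[
m_{\bias}(p_x) = R(x), \qquad m_{\hs}(p_x) = 1 - 2\sqrt{p_x(1-p_x)} = 1 - \sqrt{1 - R(x)^2},
\]
\[
m_{\Brier}(p_x) = 1 - 4p_x(1-p_x) = R(x)^2, \qquad m_{\ls}(p_x) = 1 - H(p_x) = 1 - H\!\left(\tfrac{1 + R(x)}{2}\right).
\]

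Finally, I would combine the pieces: averaging each pointwise optimum against $\nu[x]$ gives the four expectations $\E_{x \leftarrow \nu}[R]$, $\E[1 - \sqrt{1 - R^2}]$, $\E[R^2]$, and $\E[1 - H((1+R)/2)]$, which are precisely $\Delta(\nu_0,\nu_1,w)$, $\h^2(\nu_0,\nu_1,w)$, $\Ess^2(\nu_0,\nu_1,w)$, and $\JS(\nu_0,\nu_1,w)$ by \defn{extended_dist_measures}. This completes all four identities simultaneously.

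The main obstacle, such as it is, is handling the boundary cases carefully for $\ls$, where $s_0(1)$ and $s_1(0)$ are $-\infty$: the convention $0 \cdot \infty = 0$ is needed, and one must check that it is invoked consistently — specifically, if $p_x \in \{0,1\}$ then the posterior gives probability zero to the outcome whose score would be $-\infty$, so the convention applies and the formula $1 - H(p_x)$ extends continuously to the endpoints (with $H(0) = H(1) = 0$). For $\bias$, a minor check is that at $p_x = 1/2$ the maximum of a constant function is still $0 = R(x)$, so the formula is consistent. All other calculations are routine algebra.
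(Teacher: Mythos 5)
Your proof is correct and takes essentially the same pointwise-decomposition approach as the paper: reduce to an independent maximization at each $x$, invoke properness (so $q=p_x$) for the three proper rules and linearity-in-$q$ for $\bias$, then identify the resulting per-point quantities with the integrands in \defn{extended_dist_measures}. You are somewhat more explicit than the paper about the $\ls$ boundary cases (the $0\cdot(-\infty)$ convention) and the degenerate $p_x=1/2$ case for $\bias$, but the argument is the same.
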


\begin{proof}
Consider a fixed $x\in D$.
The contribution of $x$ to the expected score of $\phi$
(with respect to scoring rule $s$) is simply
$(1-w)\nu_0[x]s_0(\phi(x))+w\nu_1[x]s_1(\phi(x))
=(1-w)\nu_0[x]s(1-\phi(x))+w\nu_1[x]s(\phi(x))$.
The total expected score of $\phi$ is therefore the sum
over $x\in D$ of the above expression. The function
$\phi$ which maximizes the expected score is simply the one
where $\phi(x)=q$, where $q$ maximizes the expression
$(1-w)\nu_0[x]s(1-q)+w\nu_1[x]s(q)$. Now, the expression
we wish to maximize has the form $\nu[x]\cdot((1-p)s(1-q)+ps(q))$,
where $p=w\nu_1[x]/\nu[x]$. Hence, if $s$ is proper, the unique
maximum occurs at $q=p=w\nu_1[x]/\nu[x]$. This means that for the
maximizing $\phi$, the contribution of each $x$ to the expected
score is
$(1-w)\nu_0[x]s((1-w)\nu_0[x]/\nu[x])+w\nu_1[x]s(w\nu_1[x]/\nu[0])$,
assuming $s$ is proper.

For $s\in\{\hs,\ls,\Brier\}$, the scoring rule $s$ is indeed proper,
meaning that we have a closed expression for the maximum possible
expected score. Setting $R[x]\coloneqq|w\nu_1[x]-(1-w)\nu_0[x]|/\nu[x]$,
it's not hard to check that for $\hs$, the contribution of each $x$
is $\nu[x](1-\sqrt{1-R[x]^2})$, for $\ls$, the contribution of each $x$
is $\nu[x](1-H((1+R[x])/2))$, and for $\Brier$, the contribution of each
$x$ is $\nu[x]R[x]^2$, as desired.

It remains to deal with $s=\bias$. The contribution of each $x$
is the maximum possible value of
$(1-w)\nu_0[x]\bias(1-q)+w\nu_1[x]\bias(q)$
for $q\in[0,1]$. Since $\bias(q)=2q-1$, it's not hard to see that the
maximizing value of $q$ is $q=0$ when $(1-w)\nu_0[x]>w\nu_1[x]$,
$q=1$ when $w\nu_1[x]>(1-w)\nu_0[x]$, and when $(1-w)\nu_0[x]=w\nu_1[x]$,
the contribution of $x$ to the score is $0$ regardless of the value
of $q$. The contribution of $x$ to the maximum score is therefore
$\nu[x]R[x]$, as desired.
\end{proof}

We note that in the statement of \lem{scores_distances},
we are implicitly assuming that the predictive algorithms
are deterministic: that given $x$, one is only allowed
to output a deterministic prediction $\phi(x)\in[0,1]$
instead of a random choice of prediction. However, it is not
hard to see that randomized algorithms won't help in this setting,
since we are maximizing the expected score, which is a linear
function of the probabilities inside the randomized choice.
That is to say, if the randomized algorithm chooses (on input $x$)
to output $a$ with probability $p$ and $b$ with probability $1-p$,
then the final score of this algorithm will be a linear function
of $p$, and hence the optimal choice of $p$ will be either $0$ or $1$.
Hence \lem{scores_distances} also characterizes the best
possible score of a randomized prediction algorithm with respect
to those four scoring rules.

\subsection{Linear amplification of hs score}

From here on out, we consider only the $\hs(\cdot)$ scoring rule
(and occasionally $\bias(\cdot)$,
which will correspond to the bias
of a randomized algorithm). We will sometimes omit
the subscript in the expression $\score_{s}(\phi,\nu,P)$
when $s=\hs$.

We now proceed to show a few nice properties of the $\hs$
scoring rule. First among them is the amplification property.
We believe this property (which is crucial for our purposes)
has not previously appeared in the literature.

\begin{lemma}[Amplification of $\hs$]\label{lem:score_amplification}
Let $S$ be a finite set, and let
$\phi:S\to[0,1]$ represent a prediction function.
Then for each $k\in\bN$, there is a function
$\phi^{(k)}:S^k\to[0,1]$ such that for any distribution $\nu$ over
$S$, we have
\[\score_{\hs}(\phi^{(k)},\nu^{\otimes k},0)
    \ge 1-(1-\score_{\hs}(\phi,\nu,0))^k\]
\[\score_{\hs}(\phi^{(k)},\nu^{\otimes k},1)
    \ge 1-(1-\score_{\hs}(\phi,\nu,1))^k.\]
Furthermore, equality holds except when
$\score_{\hs}(\phi,\nu,0)=\score_{\hs}(\phi,\nu,1)=-\infty$.
Here $0$ and $1$ are interpreted as the constant functions
$0(x)=0$ and $1(x)=1$.
\end{lemma}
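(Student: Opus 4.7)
The plan is to define $\phi^{(k)}$ via a Bayesian posterior update. Interpret $\phi(x) \in [0,1]$ as the posterior probability that the outcome bit is $1$ given the single observation $x$ (under a uniform prior). Combining $k$ independent observations then amounts to multiplying posterior odds ratios, so I would set
\[
\phi^{(k)}(x_1, \ldots, x_k) := \frac{1}{1 + \prod_{i=1}^{k} \frac{1-\phi(x_i)}{\phi(x_i)}},
\]
with suitable conventions on boundary cases (e.g.\ $\phi^{(k)} := 0$ whenever some $\phi(x_i) = 0$ and no $\phi(x_j) = 1$, and symmetrically at $1$; on the degenerate case where both endpoints appear among the samples, pick an arbitrary value).

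The key algebraic identity driving the proof is that $1 - \hs_1(q) = \sqrt{(1-q)/q}$ and $1 - \hs_0(q) = \sqrt{q/(1-q)}$ are reciprocal. Substituting the formula for $\phi^{(k)}$ into $1 - \hs_b(\phi^{(k)})$ and simplifying via the arithmetic $(1-q')/q' = \prod_i (1-\phi(x_i))/\phi(x_i)$, I obtain the multiplicative decomposition
\[
1 - \hs_b\bigl(\phi^{(k)}(x_1, \ldots, x_k)\bigr) = \prod_{i=1}^{k} \bigl(1 - \hs_b(\phi(x_i))\bigr)
\]
for each $b \in \{0,1\}$ and for all tuples $(x_1, \ldots, x_k)$ where no indeterminate $0 \cdot \infty$ arises. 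Taking expectation under $\nu^{\otimes k}$ and using independence of the coordinates inside the product (which I can do because the integrand factorizes) then yields
\[
1 - \score_{\hs}(\phi^{(k)}, \nu^{\otimes k}, b) = \prod_{i=1}^{k} \bE_{x_i \leftarrow \nu}\bigl[1 - \hs_b(\phi(x_i))\bigr] = \bigl(1 - \score_{\hs}(\phi, \nu, b)\bigr)^{k},
\]
which is precisely the claimed equality, so the inequality in the lemma follows at once.

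The main obstacle is the extended-real bookkeeping at the boundary values of $\phi$. If some $\phi(x_i) \in \{0, 1\}$ has positive mass under $\nu$ but only one endpoint appears in the support, then exactly one of $\score_{\hs}(\phi, \nu, 0)$ and $\score_{\hs}(\phi, \nu, 1)$ equals $-\infty$, and a direct case analysis with the boundary convention for $\phi^{(k)}$ confirms that both sides of the identity still agree: the $-\infty$ side remains $-\infty$ with positive probability (hence in expectation), while the finite side is a genuine product that tensorizes exactly. The single irreducible edge case is when both scores equal $-\infty$, which happens precisely when $\phi$ attains both values $0$ and $1$ on $\supp(\nu)$; here $\phi^{(k)}$ may be undefined on a positive-measure set (the samples can contain both endpoints at once), but the lemma's inequality still holds trivially with $-\infty$ on the right, which is why equality is only asserted outside this case.

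It is worth remarking that this amplification is simply the tensorization of Hellinger affinity in disguise: by \lem{scores_distances}, $1 - \score_{\hs}$ corresponds to the Bhattacharyya coefficient $1 - \h^2$, which is multiplicative under product distributions, and the formula above for $\phi^{(k)}$ is the Bayes-optimal predictor for the amplified testing problem, which explains why the bound is tight whenever it is well-defined.
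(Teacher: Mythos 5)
Your proof is correct and takes essentially the same approach as the paper's: you define the same posterior-odds combining rule $\phi^{(k)}$, observe the same pointwise factorization $1 - \hs_b(\phi^{(k)}) = \prod_i\bigl(1-\hs_b(\phi(x_i))\bigr)$ driven by the reciprocal pair $1-\hs_1(q)=\sqrt{(1-q)/q}$ and $1-\hs_0(q)=\sqrt{q/(1-q)}$, and then tensorize via independence; the paper simply carries out the same factorization directly inside the expectation rather than stating it as a standalone identity first. Your closing remark that this is tensorization of the Bhattacharyya coefficient (equivalently, multiplicativity of $1-\h^2$ under products) in disguise is a helpful framing the paper leaves implicit.
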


Informally, this lemma is saying the following. Consider a
randomized forecasting algorithm $R$,
which takes input $x$ and outputs a confidence $q\in[0,1]$
representing its belief that $f(x)=1$. Evaluate
this algorithm according to its \emph{worst-case
expected score} with respect to the $\hs(\cdot)$ scoring rule.
That is to say, for each input $x\in f^{-1}(1)$,
consider the expectation $\bE[\hs(R(x))]$ of the
expected score $R$ gets when run on $x$, and for
each $x\in f^{-1}(0)$, consider the analogous expectation
$\bE[\hs(1-R(x))]$. Then take the minimum $\eta$
of all these expected scores, minimizing over any $x\in\Dom(f)$.
This is the worst-case expected score of $R$.
The lemma then says that we can run $R$ on $x$
several times, say $k$ times independently, and
combine the confidence outputs $q_1,q_2,\dots,q_k$
in such a way that the new algorithm has worst-case
expected score equal to $1-(1-\eta)^k$.

\begin{proof}
We define $\phi^{(k)}(x_1\dots x_k)$ as follows. First,
if it holds that some pair $(x_i,x_j)$ in the input satisfies
$\phi(x_i)=0$ and $\phi(x_j)=1$, we define
$\phi^{(k)}(x_1\dots x_k)\coloneqq 1/2$. Otherwise,
we set $\phi^{(k)}(x_1\dots x_k)\coloneqq
\left(1+\prod_{i=1}^k\frac{1-\phi(x_i)}{\phi(x_i)}\right)^{-1}$,
where we interpret $1/0=\infty$ if it occurs (we need not
interpret $\infty\cdot 0$ since that will only occur
if $\phi(x_i)=0$ and $\phi(x_j)=1$ for some $i$ and $j$).
Note that if $\phi(x)=0$ and $\phi(x')=1$ for $x,x'\in S$
that have nonzero weight in $\nu$, then we have
$\score_{\hs}(\phi,\nu,0)=\score_{\hs}(\phi,\nu,1)=-\infty$,
so the desired inequalities trivially hold. Otherwise,
for $b\in\B$ we write
\begin{align*}
\score_{\hs}(\phi^{(k)},\nu^{\otimes k},b)
&=\bE_{x_1\dots x_k\leftarrow\nu^{\otimes k}}\left[
1-\sqrt{\left(\frac{\phi^{(k)}(x_1\dots x_k)}
{1-\phi^{(k)}(x_1\dots x_k)}\right)^{(-1)^b}}\right]\\
&=1-\bE_{x_1\dots x_k\leftarrow\nu^{\otimes k}}\left[
\sqrt{\prod_{i=1}^k\left(\frac{\phi(x_i)}{1-\phi(x_i)}\right)^{(-1)^b}}
\right]\\
&=1-\prod_{i=1}^k\bE_{x_i\leftarrow\nu}\left[
\sqrt{\left(\frac{\phi(x_i)}{1-\phi(x_i)}\right)^{(-1)^b}}\right]\\
&=1-\left(\bE_{x\leftarrow \nu}[1-\hs_b(\phi(x))]\right)^k\\
&=1-(1-\score_{\hs}(\phi,\nu,b))^k.
\end{align*}
Note that equality holds except in the case where
$\score_{\hs}(\phi,\nu,0)=\score_{\hs}(\phi,\nu,1)=-\infty$.
\end{proof}

The following lemma will be convenient when using
this amplification theorem. We prove it in~\app{calculus}.

\begin{restatable}{lemma}{ampinequality}\label{lem:amp_inequality}
If $x\in[0,1]$ and $k\in[1,\infty)$, we have
\[\frac{1}{2}\min\{kx,1\}\le 1-(1-x)^k\le\min\{kx,1\}.\]
\end{restatable}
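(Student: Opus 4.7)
I would prove the two inequalities separately. The upper bound $1-(1-x)^k\le\min\{kx,1\}$ follows at once from Bernoulli's inequality $(1-x)^k\ge 1-kx$ (valid for real $k\ge 1$ and $x\in[0,1]$), which gives $1-(1-x)^k\le kx$, together with the trivial bound $1-(1-x)^k\le 1$ coming from $(1-x)^k\ge 0$.

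For the lower bound $\tfrac{1}{2}\min\{kx,1\}\le 1-(1-x)^k$ I would split at the pivot $x_0=1/k$. The key auxiliary fact is that $(1-1/k)^k\le 1/e$ for every real $k\ge 1$, which follows immediately by raising the standard inequality $1-t\le e^{-t}$ (with $t=1/k$) to the $k$-th power. Thus the function $h(x)\coloneqq 1-(1-x)^k$ satisfies $h(1/k)\ge 1-1/e>1/2$. When $kx\ge 1$ (so $x\ge 1/k$), monotonicity of $h$ on $[0,1]$ (its derivative is $k(1-x)^{k-1}\ge 0$) gives $h(x)\ge h(1/k)\ge 1/2=\tfrac{1}{2}\min\{kx,1\}$. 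When $kx<1$ I would use concavity of $h$ on $[0,1]$: since $(1-x)^k$ is convex for $k\ge 1$ (its second derivative is $k(k-1)(1-x)^{k-2}\ge 0$), $h$ is concave, and concavity together with $h(0)=0$ produces the chord estimate $h(x)\ge\tfrac{x}{1/k}\,h(1/k)=kx\cdot h(1/k)\ge kx/2$ on $[0,1/k]$, which is exactly $\tfrac{1}{2}\min\{kx,1\}$ in that range.

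There is no genuine obstacle here—the entire argument reduces to Bernoulli's inequality, concavity and monotonicity of $h$, and the bound $(1-1/k)^k\le 1/e$. The only minor care point is to confirm that each ingredient is valid for real (not just integer) $k\ge 1$, which is immediate in all three cases. I do not expect any delicate calculation; the whole proof is a short three-line exercise once the pivot $x_0=1/k$ is chosen.
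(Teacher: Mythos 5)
Your proof is correct, and it takes a genuinely different route from the paper's. The paper establishes the lower bound via a third-order argument: it notes $f'''(x)=k(k-1)(k-2)(1-x)^{k-3}$, claims $f'''\ge 0$ on $[0,1]$, integrates back up twice to get $f(x)\ge kx\bigl(1-(k-1)x/2\bigr)$, and then splits according to whether $(k-1)x\le 1$ (giving $f(x)\ge kx/2$ directly) or $(k-1)x\ge 1$ (where it falls back on $f(x)\ge 1-e^{-kx}\ge 1-1/e$). Your argument instead pivots at $x_0=1/k$ and uses only three elementary ingredients: monotonicity, concavity of $h$ (from the non-positive second derivative $-k(k-1)(1-x)^{k-2}$), and the scalar bound $h(1/k)\ge 1-1/e>1/2$. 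For $x\ge 1/k$ monotonicity suffices; for $x\le 1/k$ the chord estimate on a concave function anchored at $h(0)=0$ gives $h(x)\ge kx\,h(1/k)\ge kx/2$. Your approach is cleaner and, notably, uniformly valid for all real $k\ge 1$. The paper's reliance on $f'''\ge 0$ is actually problematic when $1<k<2$ (there $k-2<0$, so $f'''\le 0$ on $[0,1)$ and the derived lower bound on $f''$ fails); your concavity-plus-chord argument sidesteps the third derivative entirely and has no such restriction. The upper bound via Bernoulli's inequality is essentially the same observation in both proofs, phrased slightly differently.
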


\subsection{Bias and \texorpdfstring{$\hs$}{hs} score}

Another nice property of $\hs$ is that it is at most $\bias$.

\begin{lemma}\label{lem:hs_bias}
For all $q\in[0,1]$, we have $\hs(q)\le\bias(q)$.
\end{lemma}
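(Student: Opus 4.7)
The plan is to unfold the definitions, reduce the claim to a simple algebraic inequality, and verify that inequality via a perfect-square argument.

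First, I would handle the easy boundary case $q=0$ separately: here $\hs(0) = 1 - \sqrt{\infty} = -\infty$, which is trivially at most $\bias(0) = -1$. So from now on I may assume $q \in (0,1]$.

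For $q \in (0,1]$, the claim $\hs(q) \le \bias(q)$ amounts to
\[ 1 - \sqrt{\tfrac{1-q}{q}} \;\le\; 2q - 1, \]
which after rearrangement becomes $2(1-q) \le \sqrt{(1-q)/q}$. Both sides are non-negative on $(0,1]$, so squaring is an equivalent operation and the inequality becomes $4(1-q)^2 \le (1-q)/q$. If $q=1$ both sides vanish; otherwise I divide through by $(1-q) > 0$ to reduce to $4q(1-q) \le 1$, i.e.\ $(2q-1)^2 \ge 0$, which is obvious.

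The only subtlety is being careful about the sign conditions needed to square both sides (the left side $2(1-q)$ is non-negative on $[0,1]$ and the right side $\sqrt{(1-q)/q}$ is non-negative wherever defined), and about the $q=0$ boundary where $\hs$ attains $-\infty$. I do not expect a genuine obstacle; the whole argument is essentially the observation that $(2q-1)^2 \ge 0$.
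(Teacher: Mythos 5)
Your proof is correct and follows essentially the same route as the paper: reduce to $4(1-q)^2 \le (1-q)/q$ and observe that this is $(1-2q)^2 \ge 0$ in disguise. You are a bit more explicit about the boundary cases $q=0$, $q=1$ and about the sign conditions justifying the squaring step, which the paper glosses over, but the argument is the same.
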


\begin{proof}
Recall that $\hs(q)=1-\sqrt{(1-q)/q}$ and $\bias(q)=1-2(1-q)$.
The desired inequality clearly holds at $q=0$ and $q=1$.
For $q\in(0,1)$, it suffices to show that $4(1-q)^2\le (1-q)/q$,
or equivalently $4q(1-q)\le 1\Leftrightarrow 1-4q+4q^2\ge 0
\Leftrightarrow (1-2q)^2\ge 0$, which also clearly holds.
\end{proof}

Finally, the last main property of $\hs$ that we exploit is that $\hs$ scores and biases 
are quadratically related. To explain
what we mean, start with the following definition of a general
algorithm, where we take care not to put any restriction
on the structure of the algorithm but want it to take inputs
and return outputs while incurring some cost.

\begin{definition}
Let $S$ be a finite set, and let $\Delta$ be the set of
probability distributions over $S$.
A \emph{general algorithm}, which
we denote by $R$, is a pair of functions. The first function
is from $\Delta$ to $[0,\infty]$, and we denote it by
$\cost(R,\cdot)\colon\Delta\to\infty$, so that $\cost(R,\mu)$
returns a value in $[0,\infty]$ for $\mu\in\Delta$.
The second function takes inputs from $S$ and returns
a random variable supported on $\B$, and we denote it by
$\outpt(R,\cdot)$, so that $\outpt(R,x)$ is a random variable
on $\B$ for each $x\in S$.

The \emph{bias} of a general algorithm $R$ on input $x\in S$
with respect to function $f\colon S\to\B$ is
$\bias_f(R,x)\coloneqq 1-2\Pr[\outpt(R,x)\ne f(x)]$.
\end{definition}

We note that if $\outpt(R,x)$ has distribution $\Bernoulli(q)$,
then $\bias_f(R,x)=\bias_{f(x)}(q)$, where the function
$\bias_{f(x)}(q)$ is defined according to
\defn{rules} and \defn{score_subscript}.

Just like we defined general algorithms, we also define
forecasting algorithms, which output confidences
in $[0,1]$ instead of values in $\B$.

\begin{definition}
Let $S$ be a finite set and let $\Delta$ be the set of all
probability distributions over $S$.
A \emph{forecasting algorithm}, which we also denote by $R$,
is a pair of functions. The first function
is $\cost(R,\cdot)\colon\Delta\to\infty$, just like a general
algorithm. The second function takes inputs from $S$
and returns a random variable supported on $[0,1]$,
and we denote it by $\pred(R,\cdot)$, so that
$\pred(R,x)$ is a random variable on $[0,1]$ for each $x\in S$.

The \emph{score} of a forecasting algorithm $R$ on input $x\in S$
with respect to function $f\colon S\to\B$ and scoring rule $s$
is $\score_{s,f}(R,x)\coloneqq \bE[s_{f(x)}(\pred(R,x))]$.
When the function $f$ is clear by the context, for notational simplicity 
we often omit it and write $\score_{s}(R,x)$.
Additionally, when $s=\hs$, we sometimes omit it and write simply $\score(R,x)$.
\end{definition}

The following lemma is key. It says that we can convert
any algorithm which achieves bias $\gamma$ into a forecasting
algorithm which achieves expected score at least $\gamma^2/2$
under the $\hs$ scoring rule; further, this conversion
only manipulates the output of the algorithm, meaning it can
be applied without changing the cost. That is, to turn $R$
into a forecasting algorithm, we only need to run $R$, get an
output $0$ or $1$, and then erase the output and write
$(1-\gamma)/2$ or $(1+\gamma)/2$, respectively.

Moreover, it is possible to convert backward as well!
To turn a forecasting algorithm $R$ into a normal randomized
algorithm, run $R$, take the output $q\in[0,1]$, erase it and
write down a sample from $\Bernoulli(q)$ instead.
If the original forecasting algorithm achieved expected score
$\eta$, the new algorithm will achieve bias at least $\eta$.
In particular, this lemma tells us that the best expected score
and the best bias that an algorithm can make (under any cost
restriction) are always quadratically related.

\begin{lemma}[Conversion between regular and forecasting algorithms]
\label{lem:conversion}
A general algorithm $R$ achieving worst-case bias $\gamma>0$
for a function $f$
can be converted into a forecasting algorithm $R'$ with
worst-case score at least $1-\sqrt{1-\gamma^2}\ge\gamma^2/2$
for $f$.
This conversion is pointwise: it depends
only on changing a sample from the random variable $\outpt(R,x)$
after receiving it, as well as on the value of the worst-case bias
$\gamma$.

Conversely, a forecasting algorithm $R$ with worst-case score
$\eta$ can be converted into a general algorithm $R'$ with
worst-case bias at least $\eta$. This conversion is pointwise: it
depends only on changing a sample from $\pred(R,x)$ after receiving
it (and not even on the value of $\eta$).
\end{lemma}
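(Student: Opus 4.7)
The plan is to prove both directions by specifying explicit pointwise transformations of the output and verifying the claimed bounds by a short calculation.

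For the forward direction, given a general algorithm $R$ with worst-case bias $\gamma$ for $f$, I would define the forecasting algorithm $R'$ by the pointwise rule: run $R$, receive the bit $b\in\{0,1\}$, and output the confidence $(1+\gamma)/2$ if $b=1$ and $(1-\gamma)/2$ if $b=0$. This transformation depends only on the sampled output of $R$ and on the scalar $\gamma$, as required. To compute the expected $\hs$ score on an input $x$ with $f(x)=1$, let $p=\Pr[\outpt(R,x)=1]\ge (1+\gamma)/2$. Using $\hs((1\pm\gamma)/2)=1-\sqrt{(1\mp\gamma)/(1\pm\gamma)}$, one obtains
\[
\score_{\hs,f}(R',x)=1-\frac{p(1-\gamma)+(1-p)(1+\gamma)}{\sqrt{1-\gamma^2}}
=1-\frac{1+\gamma(1-2p)}{\sqrt{1-\gamma^2}}.
\]
Since $1-2p\le -\gamma$, the numerator is at most $1-\gamma^2$, giving $\score_{\hs,f}(R',x)\ge 1-\sqrt{1-\gamma^2}$. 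The case $f(x)=0$ is symmetric (the roles of $(1\pm\gamma)/2$ swap). Finally, $1-\sqrt{1-\gamma^2}\ge \gamma^2/2$ follows from $\sqrt{1-t}\le 1-t/2$ on $[0,1]$ with $t=\gamma^2$.

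For the converse direction, given a forecasting algorithm $R$ with worst-case $\hs$ score $\eta$, I would define the general algorithm $R'$ by the pointwise rule: run $R$ to obtain the random confidence $Q=\pred(R,x)\in[0,1]$, then sample and output $b\sim\Bernoulli(Q)$. This transformation depends only on the sampled output of $R$, as required. To lower-bound the bias of $R'$, fix $x$ and condition on the value of $Q$. For $f(x)=1$ we have $\Pr[\outpt(R',x)\ne 1\mid Q]=1-Q$, so $\bias_f(R',x)=1-2\bE[1-Q]=\bE[\bias_1(Q)]$; an identical computation with the roles reversed gives $\bias_f(R',x)=\bE[\bias_0(Q)]$ when $f(x)=0$. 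Hence in either case $\bias_f(R',x)=\bE[\bias_{f(x)}(Q)]$. Applying \lem{hs_bias} (which gives $\hs(q)\le\bias(q)$ pointwise, and thus $\hs_b(q)\le\bias_b(q)$ for both $b\in\{0,1\}$) inside the expectation yields $\bias_f(R',x)\ge \bE[\hs_{f(x)}(Q)]=\score_{\hs,f}(R,x)\ge\eta$.

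Neither direction involves any real obstacle; the main subtlety is simply making sure the pointwise character of the construction is respected (the forward map is allowed to use $\gamma$, but not $x$; the backward map is not even allowed to use $\eta$). The only computation requiring care is the forward-direction algebra simplifying the two-point expectation of $\hs$ into $1-(1+\gamma(1-2p))/\sqrt{1-\gamma^2}$ and then checking monotonicity in $p$; once that identity is established, the bound $1-\sqrt{1-\gamma^2}\ge\gamma^2/2$ is immediate.
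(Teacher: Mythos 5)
Your proof is correct and takes essentially the same route as the paper's: both directions use the same pointwise transformations (snap a bit to $(1\pm\gamma)/2$; resample a confidence as a $\Bernoulli(Q)$ bit), and both rely on the two-point $\hs$-expectation computation in the forward direction and on the pointwise inequality $\hs\le\bias$ (\lem{hs_bias}) in the converse. The only difference is cosmetic — you rationalize the $\hs$ terms to a common denominator $\sqrt{1-\gamma^2}$ up front, whereas the paper factors out $\sqrt{(1+\gamma)/(1-\gamma)}$ and simplifies at the end.
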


\begin{proof}
Start with a general algorithm $R$ with worst-case bias $\gamma>0$.
On input $x$, run $R$ to receive a sample $b\in\B$ from $\outpt(R,x)$.
Then output $\pred(R',x)=(1-\gamma)/2$ if $b=0$ and output
$\pred(R',x)=(1+\gamma)/2$
if $b=1$. It is clear that this $R'$ was constructed in a pointwise
fashion out of $R$, depending only on a sample from
$\outpt(R,x)$. Now, fix $x\in S$, and let $p\in[0,1]$ be the
probability that $\outpt(R,x)$ gives the right answer. Since
$R$ has worst-case bias $\gamma$, it has bias at least $\gamma$
on $x$, so $p\ge(1+\gamma)/2$. The expected score of $R'$ on $x$
is then
\begin{align*}
\score(R',x)&=p\hs((1+\gamma)/2)+(1-p)\hs((1-\gamma)/2)\\
&=p-p\sqrt{\frac{1-\gamma}{1+\gamma}}
    +(1-p)-(1-p)\sqrt{\frac{1+\gamma}{1-\gamma}}\\
&=1-\sqrt{\frac{1+\gamma}{1-\gamma}}
+p\left(\sqrt{\frac{1+\gamma}{1-\gamma}}
-\sqrt{\frac{1-\gamma}{1+\gamma}}\right)\\
&\ge 1-\sqrt{\frac{1+\gamma}{1-\gamma}}+\frac{1+\gamma}{2}
\left(\sqrt{\frac{1+\gamma}{1-\gamma}}
-\sqrt{\frac{1-\gamma}{1+\gamma}}\right)\\
&=1-\left(1-\frac{1+\gamma}{2}\right)\sqrt{\frac{1+\gamma}{1-\gamma}}
-\frac{1}{2}\sqrt{1-\gamma^2}\\
&=1-\sqrt{1-\gamma^2}.
\end{align*}

For the other direction, let $R$ be a forecasting algorithm
with worst-case score $\eta>0$. On input $x$, run $R$ to receive
a sample $q\in[0,1]$ from $\pred(R,x)$. Then output $1$ with
probability $q$ and $0$ with probability $1-q$,
i.e.\ $\outpt(R',x)\sim\Bernoulli(q)$. It is clear that
this $R'$ is constructed in a pointwise fashion out of $R$
(without even a dependence on $\eta$). Now, fix $x\in S$.
We know that $\eta\le\score(R,x)=\bE[\hs_{f(x)}(\pred(R,x))]$.
Now, we note that $\hs_{f(x)}(p)\le\bias_{f(x)}(p)$
by \lem{hs_bias}. Thus we get
$\eta\le\bE[\bias_{f(x)}(\pred(R,x))]=\bias(R',x)$, as desired.
\end{proof}

To demonstrate the power of these lemmas, observe that they
imply a well-known amplification theorem for randomized algorithms,
as we show in the lemma below.
Note that this lemma does not refer to scoring rules
or forecasting algorithms at all; those only appear as proof
techniques.

\begin{lemma}[informal]
A randomized algorithm with bias $\gamma$ can be amplified
to bias $1/2$ by repeating it $2/\gamma^2$ times.
\end{lemma}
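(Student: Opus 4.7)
The plan is to chain together \lem{conversion} and \lem{score_amplification} (with the help of \lem{amp_inequality}) to give a proof that goes through the world of forecasting algorithms and $\hs$ scores, even though neither appears in the statement. Starting with a randomized algorithm $R$ of worst-case bias $\gamma$, the first step is to apply the forward direction of \lem{conversion}, which produces a forecasting algorithm $R_1$ whose worst-case $\hs$ score is at least $1-\sqrt{1-\gamma^2}\ge \gamma^2/2$. Since this conversion only relabels the output of $R$, the cost of $R_1$ on every input equals that of $R$.

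Next, I would run $R_1$ independently $k$ times and combine the confidence outputs using the combination rule $\phi^{(k)}$ from \lem{score_amplification}. That lemma guarantees that the resulting forecasting algorithm $R_2$ satisfies
\[
\score_{\hs}(R_2, x) \;\ge\; 1-(1-\score_{\hs}(R_1, x))^{k}
\]
for every input $x$, so in particular the worst-case $\hs$ score of $R_2$ is at least $1-(1-\gamma^2/2)^{k}$. Choosing $k=\lceil 2/\gamma^2\rceil$ makes $kx\ge 1$ for $x=\gamma^2/2$, and then \lem{amp_inequality} yields
\[
1-(1-\gamma^2/2)^{k} \;\ge\; \tfrac{1}{2}\min\{k\gamma^2/2,\,1\} \;=\; \tfrac12.
\]

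Finally, I would apply the reverse direction of \lem{conversion} to $R_2$ to obtain a general randomized algorithm $R'$ with worst-case bias at least $1/2$; this step is again a pointwise output relabeling and so does not affect cost. Because $R_2$ was built from $k=O(1/\gamma^2)$ independent invocations of $R_1$ (and hence of $R$), the cost of $R'$ is $O(1/\gamma^2)$ times the cost of $R$, matching the informal claim.

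The only delicate point is making sure the quadratic amplification count lines up exactly: the $\gamma \mapsto \gamma^2/2$ conversion loss eats one factor of $\gamma$, and it is precisely the linear amplification of $\hs$ score from \lem{score_amplification} (combined with the tight borderline estimate in \lem{amp_inequality}) that ensures $2/\gamma^2$ repetitions — rather than something larger like $\Theta(1/\gamma^2 \log(1/\gamma))$ — are sufficient. No further analysis is needed, and no step presents a real obstacle once the scoring-rule machinery is in place.
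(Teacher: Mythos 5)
Your proof is correct and follows essentially the same route as the paper's own argument: convert to a forecasting algorithm via the forward direction of \lem{conversion}, amplify the $\hs$ score via \lem{score_amplification} with \lem{amp_inequality}, then convert back via the reverse direction of \lem{conversion}. The only cosmetic difference is that you drop to the cruder lower bound $\gamma^2/2$ on the initial score before invoking \lem{score_amplification}, while the paper carries the exact value $1-\sqrt{1-\gamma^2}$ through (giving the tidier intermediate expression $1-(1-\gamma^2)^{k/2}$); both tracks land at the same choice $k \ge 2/\gamma^2$.
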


\begin{proof}
Start with an algorithm making bias $\gamma$.
Using \lem{conversion}, get a forecasting
algorithm with expected score at least $1-\sqrt{1-\gamma^2}$.
Using \lem{score_amplification}, repeating the algorithm
$k$ times increases the expected score on each input $x$ to
at least $1-(1-\gamma^2)^{k/2}$. Using \lem{conversion},
we get an algorithm with worst-case bias at least
$1-(1-\gamma^2)^{k/2}$. Using \lem{amp_inequality},
this is at least $\min\{k\gamma^2/4,1/2\}$. Picking
$k\ge 2/\gamma^2$, we get an algorithm
with worst-case bias at least $1/2$ using only $k$
repetitions of the original algorithm, as desired.
\end{proof}

\section{Randomized query and communication complexity}
\label{sec:query}

To prove a strong minimax theorem for randomized query complexity,
we start by formally defining forecasting algorithms
in the query complexity setting. We will need these forecasting
algorithms as a tool, despite our final statement not referring to them.

\begin{definition}
A \emph{deterministic forecasting decision tree}
(on $n\in\bN$ bits, with finite alphabet $\Sigma$)
is a rooted tree on $n$ bits whose internal vertices are labeled by $[n]$, where each internal vertex has $|\Sigma|$ children
labeled by $\Sigma$, and where the leaves are labeled by $[0,1]$.

A \emph{randomized forecasting decision tree}
(on $n\in\bN$ bits, with finite alphabet $\Sigma$)
is a probability distribution over finitely many deterministic
forecasting decision trees.
\end{definition}

We interpret a randomized forecasting decision tree
as a forecasting algorithm in the intuitive way, where
$\cost(R,x)$ is the expected height of $R$ on $x$
(the expected height of the leaf of $x$ in a deterministic
forecasting tree sampled from the distribution $R$),
and where $\pred(R,x)$ is the random variable which samples
from the leaf label when a random deterministic tree from $R$
is run on $x$. Note that since we restrict to distributions
with finite support, we do not need to invoke measure theory
or integrals in interpreting these probabilities and expectations,
even though there are uncountably many deterministic forecasting
decision trees.

We extend $\cost(R,\cdot)$ to the set $\Delta$
of probability distributions over $S$ by writing
$\cost(R,\mu)=\bE_{x\leftarrow \mu}[\cost(R,x)]$,
and similarly for $\score(R,\mu)=\bE_{x\leftarrow\mu}[\score(R,x)]$.
We now show a minimax theorem for the ratio of $\cost$ to
$\score^{+}$ for forecasting randomized algorithms.
This minimax theorem will form the base of our final result:
we will convert the left-hand side to $\R(f)$, and convert
the right hand side to some desirable properties of a
hard distribution $\mu$.

\begin{theorem}\label{thm:randomized_query_minimax}
Let $n\in\bN$, let $\Sigma$ be a finite alphabet, let
$S\subseteq\Sigma^n$, and let $f\colon S\to\B$.
Let $\mathcal{R}$ be the set of all
randomized forecasting decision trees on $n$ bits with alphabet $\Sigma$.
Let $\Delta$ be the set of probability distributions over $S$. Then
\[\adjustlimits\inf_{R\in\mathcal{R}}\max_{x\in S}
\frac{\cost(R,x)}{\score(R,x)^{+}}
=\adjustlimits\max_{\mu\in\Delta}\inf_{R\in\mathcal{R}}
\frac{\cost(R,\mu)}{\score(R,\mu)^{+}},\]
and the maximums are attained.
\end{theorem}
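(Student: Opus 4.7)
The plan is to invoke \thm{main_minimax} directly. To do so, I first realize $\mathcal{R}$ as a convex subset of a real topological vector space in which $\cost$ and $\score$, extended from $\mathcal{R} \times S$ to $\mathcal{R} \times \Delta$ by averaging over $\mu$, are semicontinuous, saddle, and well-behaved. The natural embedding sends each randomized forecasting decision tree $R$ to its cost-score profile $(\cost(R,x),\score(R,x))_{x \in S}$; a mixture $\lambda R_1 + (1-\lambda) R_2$ of randomized trees has profile equal to the corresponding convex combination of profiles, so the image is convex. I would endow $\mathcal{R}$ with the pullback of the product topology on $\bR^{|S|} \times \bar{\bR}^{|S|}$ under this embedding (using the compactified topology on $\bar{\bR}$ for the score coordinates), making the relevant coordinate functions continuous.

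In this setup, $\cost$ and $\score$ are bilinear in $(R,\mu)$: linearity in $\mu$ holds by definition, since both are expectations over $x \leftarrow \mu$; linearity in $R$ holds because a mixture of randomized trees yields the corresponding mixture of cost and score values on every input. Bilinearity gives the saddle property, and continuity gives semicontinuity in both directions.

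Next I verify the three well-behavedness conditions. For condition (1), given any $\mu \in \Delta$, I take a deterministic forecasting tree that queries all $n$ bits and then outputs a confidence bounded away from $0$ and $1$ in the ``correct'' direction according to $f$; this tree has positive finite cost (at most $n$) and positive finite $\hs$ score on $\mu$. Condition (2) holds because a randomized tree's cost on any input equals the expected depth of the leaf reached: either every tree in the support has depth $0$, in which case $\cost(R,\cdot) \equiv 0$ on $\Delta$, or some tree in the support has an internal root, in which case every input accumulates strictly positive cost. Condition (3) is immediate from the linearity of $\cost(\cdot,\mu)$ in $R$. Applying \thm{main_minimax} then yields the claimed identity along with the attainment of the maximum on the right-hand side; the ``moreover'' clause of \thm{main_minimax}, combined with the linearity of $\cost(R,\cdot)$ and $\score(R,\cdot)$ in $\mu$, upgrades the left-hand maximum from $\mu \in \Delta$ to $x \in S$.

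The main obstacle I anticipate is the careful topological handling of the possibility $\score(R,x) = -\infty$, which arises when a tree makes a fully confident wrong prediction on some input. I would resolve this by restricting attention to trees with leaf labels in $[\epsilon, 1-\epsilon]$ and then letting $\epsilon \to 0$: trees attaining $\score(R,x) = -\infty$ produce ratio $\infty$ somewhere and therefore cannot improve the left-hand infimum, while every tree in $\mathcal{R}$ can be approximated arbitrarily well by clipping its leaf labels into this range. Once this limiting argument is in place, the remaining steps are routine verifications of the hypotheses of \thm{main_minimax}.
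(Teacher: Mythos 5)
Your approach coincides with the paper's: realize $\mathcal{R}$ as a convex set via the cost-score profile map $R \mapsto (\cost(R,x),\score(R,x))_{x\in S}$, invoke \thm{main_minimax}, and verify convexity, linearity in each variable, and the three well-behavedness conditions. You also correctly flag a subtlety that the paper's own proof silently glosses over when it asserts $v_R\in\bR^{2|S|}$: a tree with a leaf labeled $0$ or $1$ reachable by an input of the opposite $f$-value has $\score(R,x)=-\infty$, so its profile vector does not lie in $\bR^{2|S|}$. However, your first proposed repair---taking $\bR^{|S|}\times\bar{\bR}^{|S|}$ as the ambient space with the compactified topology---does not work, because $\bar{\bR}^{|S|}$ is not a real vector space ($\infty+(-\infty)$ is undefined), and \thm{main_minimax} genuinely requires $\mathcal{R}$ to sit inside a real topological vector space.

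Your second repair---passing to the convex subclass $\mathcal{R}_0$ of trees with all scores finite (equivalently, clipping leaf labels into $[\epsilon,1-\epsilon]$ and letting $\epsilon\to0$)---is the right idea, but as stated it only handles one side. That $-\infty$-score trees contribute $\max_x\cost(R,x)/\score(R,x)^+=\infty$ settles the left-hand infimum immediately, as you say. The right-hand side needs a genuine argument: for each fixed $\mu$ one must show $\inf_{R\in\mathcal{R}_0}$ equals $\inf_{R\in\mathcal{R}}$, and the nontrivial case is a tree $R$ with $\score(R,\mu)>0$ yet $\score(R,x_0)=-\infty$ for some $x_0\notin\supp(\mu)$. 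Clipping its leaf labels to $[\epsilon,1-\epsilon]$ leaves $\cost$ unchanged and decreases each $\score(R,x)$ by at most $\sqrt{\epsilon/(1-\epsilon)}$, so the ratio against $\mu$ converges to the unclipped one as $\epsilon\to0$; this also carries the attaining $\mu$ from $\mathcal{R}_0$ over to $\mathcal{R}$. With that spelled out, your remaining verifications (convexity of the profile set, bilinearity, and the well-behavedness conditions via the tree that queries all $n$ bits and predicts confidently) match the paper's argument; the paper simply skips the $-\infty$ bookkeeping entirely.
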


\begin{proof}
We use \thm{main_minimax}. All we need to do is verify
that the conditions of the theorem hold.
Our first task will be to deal with the strange set $\mathcal{R}$;
we wish to turn it into a convex subset of a real topological
vector space. To do so, we define the vector
$v_R\in\bR^{2|S|}$ for each $R\in\mathcal{R}$ by
$v_R[x,1]=\cost(R,x)$ and $v_R[x,2]=\score(R,x)$, and consider
the set $V=\{\,v_R:R\in\mathcal{R}\,\}$. For a vector
$v\in V$, we define $\cost(v,x)=v[x,1]$ and $\score(v,x)=v[x,2]$,
and we extend these definitions to $\cost(v,\mu)$ and
$\score(v,\mu)$ by taking expectations over $\mu$.
Then it is clear that optimizing some function of
$\cost(R,\mu)$ and $\score(R,\mu)$ over $\mathcal{R}$
is the same as optimizing the corresponding function of
$\cost(v,\mu)$ and $\score(v,\mu)$ over $V$. Hence it suffices
to show that
\[\adjustlimits\inf_{v\in V}\max_{x\in S}
\frac{\cost(v,x)}{\score(v,x)^{+}}
=\adjustlimits\max_{\mu\in\Delta}\inf_{v\in V}
\frac{\cost(v,\mu)}{\score(v,\mu)^{+}},\]
with the maximums attained.

To do so, we first note that $V\subseteq\R^{2|S|}$ is convex.
This is because if $v_1,v_2\in V$ and $\lambda\in(0,1)$,
we know there are algorithms $R_1,R_2\in\mathcal{R}$
such that $v_1=v_{R_1}$ and $v_2=v_{R_2}$, and then
the algorithm $\lambda R_1+(1-\lambda)R_2$ (which
mixes the distributions $R_1$ and $R_2$ over deterministic
forecasting decision trees) is a valid member of $\mathcal{R}$.
Then we have
$v_{\lambda R_1+(1-\lambda)R_2}[x,1]
=\cost(\lambda R_1+(1-\lambda)R_2,x)
=\lambda\cost(R_1,x)+(1-\lambda)\cost(R_2,x)
=\lambda v_{R_1}[x,1]+(1-\lambda)v_{R_2}[x,2]$,
and similarly $v_{\lambda R_1+(1-\lambda)R_2}[x,2]
=\lambda v_{R_1}[x,2]+(1-\lambda) v_{R_2}[x,2]$,
so $v_{\lambda R_1+(1-\lambda)R_2}=\lambda v_{R_1}+(1-\lambda)v_{R_2}$.

Next, we note that $\cost(v,\cdot)$ and $\score(v,\cdot)$
are linear functions of $\mu$; this is because they are defined
as expectations over $\mu$. Further, observe
that $\cost(\cdot,\mu)$ and $\score(\cdot,\mu)$ are linear
in $v$. It is also
clear that $\cost(v,\mu)$ and $\score(v,\mu)$
are continuous in both $v$ and $\mu$.

It remains to check that $\cost$ and $\score$ are well-behaved.
First, note that there is always an algorithm which queries
all the bits and outputs the right answer $f(x)$ with perfect
confidence. Such an algorithm $R$ has $\cost(v_R,\mu)=n$
and $\score(v_R,\mu)=1$ for all $\mu$, so finite costs and scores
are attainable. Next, note that if $R$ is such that $\cost(v_R,\mu)=0$
for any $\mu$, then $R$ must make no queries when run on $\mu$.
This means $R$ makes no queries when run on any input, so
$\cost(v_R,\mu')=0$ for all $\mu'\in\Delta$. Finally,
note that $\cost(\cdot,\mu)$ is linear for each $\mu$, so if
$\cost(v,\mu)=0$ and $\cost(v',\mu)>0$, we necessarily have
$\cost(\lambda v+(1-\lambda) v',\mu)>0$ for $\lambda\in(0,1)$.
Hence all the conditions of \thm{main_minimax}
are satisfied, and the desired result follows.
\end{proof}

Our next task is to relate the left-hand side of the
equation in the last theorem to $\R(f)$. 



\begin{theorem}\label{thm:randomized_query_LHS}
Using the notation of \thm{randomized_query_minimax}, we have
\[\adjustlimits\inf_{R\in\mathcal{R}}
\max_{x\in S}\frac{\cost(R,x)}{\score(R,x)^{+}}\ge\frac{\R(f)}{240}.\]
\end{theorem}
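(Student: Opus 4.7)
The plan is to prove the contrapositive: given any forecasting decision tree $R\in\mathcal{R}$ with worst-case ratio $\max_x\cost(R,x)/\score(R,x)^{+}\le c$, I will build a standard randomized decision tree $A$ computing $f$ with bounded error and worst-case cost at most $240c$; this forces $\R(f)\le 240c$. The ratio bound implies $\score(R,x)>0$ and $\cost(R,x)\le c\cdot\score(R,x)\le c$ at every $x\in S$: intuitively, $R$ is already cheap on every input but only attains a small, input-dependent bias, and the task is to amplify $R$ to bounded bias while preserving the per-input cost budget.

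Construction of $A$ via odometer amplification: fix a global query budget $T=\Theta(c)$. On input $x$, sequentially execute independent copies $R_1,R_2,\dots$ of $R$, aborting each copy that would push the cumulative query count past $T$. Let $k$ be the number of copies that completed, with outputs $q_1,\dots,q_k\in[0,1]$. Apply the $k$-fold amplifier of \lem{linear-amplification} to produce a combined forecast $q\in[0,1]$ (with $q=1/2$ when $k=0$), then output $b\sim\mathrm{Bernoulli}(q)$. The $q_i$ are i.i.d.\ from $\pred(R,x)$ because the decision to start copy $i$ depends only on $X_1,\dots,X_{i-1}$, and a mild pre-truncation of each copy at depth $2c$ absorbs the possible overshoot of the final copy without degrading the analysis.

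Score analysis: write $\sigma=\score(R,x)$ and $\mu=\cost(R,x)\le c\sigma$. Applying \lem{linear-amplification} conditionally on $k$ gives $\bE[\hs_{f(x)}(q)\mid k]\ge 1-(1-\sigma)^k$. A Markov bound on $X_1+\cdots+X_{k_0}$ with $k_0=\lfloor T/(2\mu)\rfloor$ yields $\Pr[k\ge k_0]\ge 1/2$, whence
\[
\bE[(1-\sigma)^k] \;\le\; \tfrac12+(1-\sigma)^{T/(2\mu)} \;\le\; \tfrac12+e^{-\sigma T/(2\mu)} \;\le\; \tfrac12+e^{-T/(2c)},
\]
the last inequality using $\mu\le c\sigma$. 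Choosing $T$ a sufficiently large multiple of $c$ makes the expected $\hs$ score of $q$ a positive constant on every $x$; the second bullet of \lem{bias-hs-conversion} converts this into a worst-case bias of at least $1/3$ for $A$. Tracking the constants through yields the $240c$ cost bound.

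The main obstacle is the input-dependence of both $\cost(R,x)$ and $\score(R,x)$, which blocks choosing the number of amplification rounds a priori. The odometer resolves this by letting the random number of completed copies $k$ adapt to the per-input cost of $R$, and the ratio bound $\cost(R,x)\le c\cdot\score(R,x)$ couples $k$ to $\sigma$ exactly right: on every $x$, enough copies complete that \lem{linear-amplification} drives the $\hs$ score to a constant. A secondary technicality is that $\hs$ is unbounded below, but the argument upper-bounds $\bE[(1-\sigma)^k]$ directly and so never needs a worst-case per-copy score estimate, which is precisely what makes the $\hs$ rule (rather than a naive bias-based one) the right object to amplify.
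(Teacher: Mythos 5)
Your strategy is the same contrapositive-plus-odometer argument as the paper's, but your odometer is \emph{single-phase} and this introduces a genuine gap.  You use the same runs $R_1,\dots,R_k$ both to determine the stopping count $k$ (via their costs $X_1,\dots,X_k$) and to feed the amplifier $\phi^{(k)}(q_1,\dots,q_k)$.  The step ``Applying \lem{linear-amplification} conditionally on $k$ gives $\bE[\hs_{f(x)}(q)\mid k]\ge 1-(1-\sigma)^k$'' is unjustified: the amplification lemma requires the $k$ predictions to be i.i.d.\ from $\pred(R',x)$, but \emph{conditioned on $k$} they are not.  The event $\{k=k_0\}$ constrains the costs $X_1,\dots,X_{k_0}$, and in a decision tree the prediction at a leaf is correlated with the cost of reaching that leaf, so this constraint biases the retained $q_i$.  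Your remark that the $q_i$ are ``i.i.d.\ because the decision to start copy $i$ depends only on $X_1,\dots,X_{i-1}$'' establishes the unconditional i.i.d.\ property, not the conditional one you actually need.

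To see the failure concretely, fix a $1$-input $x$ and let $R'$ on $x$, with probability $p$, spend $2c$ queries and emit a poor forecast (writing $Z_i\coloneqq 1-\hs_1(q_i)$, say $Z_i=Z_{\mathrm{big}}$), and with probability $1-p$, spend $1$ query and emit a perfect forecast ($q_i=1$, $Z_i=0$).  Choose $p$ so that $\bE[Z_i]=pZ_{\mathrm{big}}=1-\sigma$; then $\score(R',x)=\sigma$, $\cost(R',x)\approx 2pc\le c\sigma$, and the ratio constraint holds.  With budget $T=10c$, the product $\prod_{i=1}^{k}Z_i$ vanishes on every trajectory containing a cheap (perfect) copy, so
\[
\bE\Bigl[\,\prod_{i=1}^{k}Z_i\,\Bigr]
\;=\;\Pr[\text{all completed copies cost }2c]\cdot Z_{\mathrm{big}}^{5}
\;=\;p^{5}\Bigl(\tfrac{1-\sigma}{p}\Bigr)^{5}
\;=\;(1-\sigma)^{5}\;\approx\;1-5\sigma ,
\]
whereas your bound $\bE[(1-\sigma)^k]\le\tfrac12+(1-\sigma)^{T/(2\mu)}$ would give roughly $\tfrac12+e^{-5}$.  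The amplified score is therefore only $O(\sigma)$, not a constant, and the cost-$O(c)$, bounded-error algorithm does not materialize.

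The paper's proof sidesteps this by splitting the odometer into two batches: the first batch of runs is used \emph{only} to determine the count $L$; a second, fresh batch of $L$ runs is then fed to the amplifier.  Because the fresh runs are independent of $L$, their conditional distribution given $L=\ell$ is exactly $\pred(R',x)^{\otimes\ell}$, and \lem{score_amplification} applies verbatim.  Your Markov/Wald-style cost accounting and the cutoff at depth $2c$ are fine; the fix is to insert the second independent batch before amplifying, exactly as the paper does.
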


To prove this theorem, the idea is to take $R$ from the
left-hand side, amplify
the score of $R$ up to a constant (using the fact that
score amplifies linearly), and then convert the
constant score to constant bias (and hence constant error), getting
an upper bound on $\R(f)$. This is slightly tricky,
because the amount we need to amplify by may depend on the input $x$;
for some $x$, both $\cost(R,x)$ and $\score(R,x)$ may be small,
while for other $x$ they are both large. Unfortunately,
we do not have access to $\score(R,x)$ when we receive input $x$.
Instead, in order to amplify by approximately the correct amount,
we estimate $\cost(R,x)$ (by repeatedly running $R$ on $x$
and observing the number of queries), and we use this
cost estimate to decide the amount of amplification needed.

\begin{proof}
Let $Y^{*}$ be the optimal value
of the left-hand side, and let $R$ be an algorithm such that
$\max_{x\in S}\cost(R,x)/\score(R,x)^{+}=Y$,
where $Y$ is arbitrarily close to $Y^{*}$ (and $Y\ge Y^{*}$).
Then in particular, $\score(R,x)>0$ for all $x\in S$,
and for each $x\in S$ we have $\cost(R,x)/\score(R,x)\le Y$.
Let $R'$ be a modification of $R$ where we cut off each
decision tree in the support of $R$ after $2Y$ queries,
and return $1/2$ in case of a cutoff (ensuring we get a score
of $0$ for that branch).
Note that by Markov's inequality, the probability of encountering
a cutoff branch on input $x$ to $R'$ is at most
$\cost(R,x)/2Y\le Y\score(R,x)/2Y=\score(R,x)/2$.
Since each non-cut-off leaf can contribute at most $1$
to the score (as the maximum of $\hs(\cdot)$ is $1$),
and since the score at a cutoff is $0$, the decrease in score
when going from $R$ to $R'$ is at most the probability of encountering
a cutoff. It follows that
$\score(R',x)\ge \score(R,x)-\score(R,x)/2=\score(R,x)/2$
for all $x\in S$.

Next, we describe a randomized forecasting algorithm $R''$.
The algorithm $R''$ runs $R'$ on $x$
until the number of queries made reaches $10Y$.
Let $L$ be the number of runs of $R'$
on $x$ it takes to reach $10Y$ queries.
Then $R''$ runs $R'$ on $x$ an additional $L$ times,
and uses those new runs to amplify the score, achieving
score $1-(1-\score(R',x))^L$. We wish to prove this score
is at least a constant and that the total number of queries is
only $O(Y)$.

First, we bound the expectation of $L$, the random variable
for the number of runs of $R'$ on $x$ it takes
to reach $10Y$ queries. Let $X_i$ be i.i.d.\ random variables
each representing the number of queries in a single run of
$R'$ on $x$ (so each $X_i$ is supported on $\{0,1,\dots,2Y\}$).
Consider the total number of queries made until the cutoff
is reached; this is $\sum_{i=1}^L X_i$. Let $I_i$
be the Boolean random variable which is $0$ if $L<i$ and $1$ if
$L\ge i$. Then $\sum_{i=1}^L X_i=\sum_{i=1}^\infty X_iI_i$.
Note that the value of $\sum_{i=1}^L X_i$ is always at most
$10Y+2Y$, because after the threshold $10Y$ is reached,
less than one full run of $R'$ on $x$ will happen (using
at most $2Y$ queries). Hence\footnote{The equality
$\bE\left[\sum_{i=1}^L X_i\right]=\bE[X_1]\bE[L]$,
which we rederive here, is known as Wald's equation.}
\begin{align*}
12Y&>\bE\left[\sum_{i=1}^L X_i\right]
=\bE\left[\sum_{i=1}^\infty X_iI_i\right]
=\sum_{i=1}^\infty\bE\left[X_iI_i\right]\\
&=\sum_{i=1}^\infty\Pr[I_i=0]\bE[X_iI_i|I_i=0]
    +\Pr[I_i=1]\bE[X_iI_i|I_i=1]\\
&=\sum_{i=1}^\infty\Pr[L\ge i]\bE[X_i]\\
&=\cost(R',x)\bE[L].
\end{align*}
It follows that $\bE[L]<12Y/\cost(R',x)$.
This means the total expected number of queries $R''$
makes is at most $12Y$ for getting the estimate $L$,
plus $\cost(R',x)\cdot\bE[L]<12Y$ for amplifying the score,
for a total of fewer than $24Y$ expected queries.

To bound the expected score, we start by ensuring $L$
is not too small except with small probability. Note that for a
constant $T$, we have
$\Pr[L\le T]=\Pr[\sum_{i=1}^T X_i\ge bY]$. The sum
$\sum_{i=1}^T X_i$ has expected value $T\cost(R',x)$
and has variance $T$ times the variance of one $X_i$. 
Since $X_i$ is non-negative and bounded above by $2Y$, its variance
is bounded above by $\Var[X_i] \le \bE[X_i^{\,2}] \le 2Y \bE[ X_i ] = 2Y\cost(R',x)$.
Hence, the variance of the sum is at most $2TY\cost(R',x)$. 
We use Chebyshev's inequality, writing
\begin{align*}
\Pr[L\le T]&=\Pr\left[\sum_{i=1}^T X_i\ge 10\R(f)\right]\\
&=\Pr\left[\sum_{i=1}^T X_i-T\cost(R',x)\ge 10Y-T\cost(R',x)\right]\\
&\le \frac{2TY\cost(R',x)}{(10Y-T\cost(R',x))^2},
\end{align*}
which holds assuming $T\le 10Y/\cost(R',x)$. In particular,
if $T=2Y/\cost(R',x)$, then $\Pr[L\le T]\le 1/16$.

Now, note that conditioned on $L=\ell$, the expected
score in the second round of $R''$ is at least
$1-(1-\score(R',x))^\ell$. This is increasing in $\ell$;
hence, conditioned on $L>T$, the expected score of $R''$ on $x$
is greater than $1-(1-\score(R',x))^T$.
Conditioned on $L\le T$, we still have the expected score
be at least $0$, since it is at least $0$ for every fixed $\ell$.
Hence the final expected score of $R''$ on $x$ is greater than
$(1-(1-\score(R',x))^T)(1-\Pr[L\le T])
\ge 1-(1-\score(R',x))^T-\Pr[L\le T]$.
Picking $T= 2Y/\cost(R',x)$, we get
\begin{align*}
\score(R'',x)&>1-(1-\score(R',x))^{2Y/\cost(R',x)}-1/16\\
&\ge \frac{1}{2}\min\left\{1,
    2Y\frac{\score(R',x)}{\cost(R',x)}\right\}
    -1/16\\
&\ge \frac{1}{2}\min\left\{1,
    \frac{\score(R,x)Y}{\cost(R,x)}\right\}
    -1/16\\
&\ge\frac{1}{2}-\frac{1}{16}=\frac{7}{16}.
\end{align*}
This algorithm $R''$ makes fewer than $24Y$ expected queries.
We cut if off after $240Y$ queries, outputting prediction
$1/2$ (getting score $0$) in case of a cutoff; this gives
an algorithm $R'''$ whose worst-case number of queries
is $240Y$, and whose expected score on each $x\in D$ is at least
$7/16-1/10\ge 1/3$. Using \lem{conversion},
we can view $R'''$ as a randomized algorithm computing $f(x)$
with worst-case bias at least $1/3$, and hence worst-case error
at most $1/3$. This means that $\R(f)\le 240Y$. Since we can pick
$Y$ arbitrarily close to $Y^{*}$, we also get that $\R(f)$
is at most the infimum of $240Y$ over feasible choices of $Y$,
which is $240Y^{*}$, and the desired result follows.
\end{proof}

Our next task is to show that the max-inf side of
\thm{randomized_query_minimax} gives us a distribution $\mu$
against which it is hard to tell apart $0$-inputs from $1$-inputs,
in terms of the achievable squared-Hellinger distance between
the distributions of the transcript on the $0$- and $1$-inputs.
The following lemma will come in useful. We prove it in
\app{calculus}.

\begin{restatable}[Hellinger distance of disjoint mixtures]
{lemma}{hmixture}\label{lem:hellinger_mixture}
Let $\mu$ be a distribution over a finite support $A$,
and for each $a\in A$, let $\nu_0^a$ and $\nu_1^a$
be two distributions over a finite support $S_a$.
Let $\nu_0^\mu$ and $\nu_1^\mu$ denote the mixture distributions
where $a\leftarrow \mu$ is sampled, and then a sample is produced
from $\nu_0^a$ or $\nu_1^a$ respectively. Assume the sets $S_a$
are disjoint for all $a\in A$. Then
\[\h^2(\nu_0^\mu,\nu_1^\mu)
=\bE_{a\leftarrow\mu}[\h^2(\nu_0^a,\nu_1^a)].\]
\end{restatable}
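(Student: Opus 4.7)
The plan is to unpack the definition of squared Hellinger distance and push the disjointness of the supports through the computation. First I would observe that since the sets $\{S_a\}_{a\in A}$ are disjoint, the support of each mixture $\nu_b^\mu$ is $\bigsqcup_{a\in A} S_a$, and for any $x\in S_a$ the mixture assigns mass $\nu_b^\mu[x]=\mu[a]\,\nu_b^a[x]$ (with no competing contributions from other $a'\in A$, because $x\notin S_{a'}$). This is precisely the place where disjointness is used — without it, $\nu_b^\mu[x]$ would be a sum $\sum_{a:\,x\in S_a}\mu[a]\nu_b^a[x]$, and the factorization below would break.

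Next I would expand the definition and split the outer sum along the partition:
\[
\h^2(\nu_0^\mu,\nu_1^\mu)
=\tfrac{1}{2}\sum_{a\in A}\sum_{x\in S_a}\bigl(\sqrt{\nu_0^\mu[x]}-\sqrt{\nu_1^\mu[x]}\bigr)^2
=\tfrac{1}{2}\sum_{a\in A}\sum_{x\in S_a}\bigl(\sqrt{\mu[a]\nu_0^a[x]}-\sqrt{\mu[a]\nu_1^a[x]}\bigr)^2.
\]
Pulling $\sqrt{\mu[a]}$ out of each square term yields an overall factor of $\mu[a]$, and the remaining inner sum is exactly $2\,\h^2(\nu_0^a,\nu_1^a)$:
\[
\h^2(\nu_0^\mu,\nu_1^\mu)
=\sum_{a\in A}\mu[a]\cdot\tfrac{1}{2}\sum_{x\in S_a}\bigl(\sqrt{\nu_0^a[x]}-\sqrt{\nu_1^a[x]}\bigr)^2
=\sum_{a\in A}\mu[a]\,\h^2(\nu_0^a,\nu_1^a).
\]
Recognizing the right-hand side as $\bE_{a\leftarrow\mu}[\h^2(\nu_0^a,\nu_1^a)]$ completes the proof.

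There is no real obstacle here: the lemma is a direct algebraic identity that hinges entirely on the disjointness hypothesis letting us write $\nu_b^\mu[x]=\mu[a]\,\nu_b^a[x]$ for the unique $a$ with $x\in S_a$, after which the square-root in the definition of $\h^2$ factors multiplicatively. The same identity would fail for, e.g., total variation distance (which is only convex in the mixture, not additive), so it is worth noting in the write-up that this equality is a special feature of Hellinger combined with disjoint supports.
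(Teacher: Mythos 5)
Your proof is correct and takes essentially the same approach as the paper: both unpack the definition, partition the outer sum over the disjoint supports $S_a$, and exploit the factorization $\nu_b^\mu[x]=\mu[a]\,\nu_b^a[x]$. The only cosmetic difference is that the paper first rewrites $\h^2(\nu_0,\nu_1)=1-\sum_x\sqrt{\nu_0[x]\nu_1[x]}$ (one minus the fidelity) before pulling out $\mu[a]$, whereas you work directly with the squared-difference form and extract $\sqrt{\mu[a]}$ from each term; the two routes amount to the same algebraic step.
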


\begin{theorem}\label{thm:cost_hellinger}
Let $n\in\bN$, let $\Sigma$ be a finite alphabet,
let $S\subseteq\Sigma^n$, and let $f\colon S\to\B$
be a non-constant function.
Then there exist distributions $\mu_0$ on $f^{-1}(0)$ and
$\mu_1$ on $f^{-1}(1)$ such that for all randomized query
algorithms $R$,
\[\frac{\cost(R,\mu)}{\h^2(\tran(R,\mu_0),\tran(R,\mu_1))}
\ge\frac{\R(f)}{240}.\]
Here $\mu=(\mu_0+\mu_1)/2$, and we interpret $r/0=\infty$
for $r\in[0,\infty)$.
\end{theorem}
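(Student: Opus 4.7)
My plan is to combine \thm{randomized_query_minimax} and \thm{randomized_query_LHS} to extract a hard distribution $\mu^{*}$, and then use \lem{scores_distances} to interpret the expected $\hs$-score of a Bayes-optimal forecasting algorithm derived from $R$ as the squared Hellinger distance between transcripts. From the two earlier results, there exists a $\mu^{*}\in\Delta$ such that every randomized forecasting decision tree $R'$ satisfies
\[
\frac{\cost(R',\mu^{*})}{\score(R',\mu^{*})^{+}}\ge\frac{\R(f)}{240}.
\]

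First I would show that $\mu^{*}$ is automatically balanced between $f^{-1}(0)$ and $f^{-1}(1)$: if $\mu^{*}$ places mass $\alpha$ on $f^{-1}(0)$ and $1-\alpha$ on $f^{-1}(1)$ with $\alpha\ne 1/2$, then the zero-cost forecasting algorithm returning the constant prediction $q=1-\alpha$ has expected score $1-2\sqrt{\alpha(1-\alpha)}>0$, so the ratio $\cost/\score^{+}$ equals $0$ and contradicts the minimax bound (which is strictly positive since $\R(f)>0$ for non-constant $f$). Hence $\alpha=1/2$, and defining $\mu_b$ as the conditional of $\mu^{*}$ on $f^{-1}(b)$ gives $\mu\coloneqq(\mu_0+\mu_1)/2=\mu^{*}$.

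Given an arbitrary randomized query algorithm $R$, I would construct a forecasting algorithm $R'$ by running $R$, observing the leaf $\ell$ reached in the underlying decision tree, and returning the Bayes posterior $p(\ell)\coloneqq\Pr_{X\sim\mu^{*}}[f(X)=1\mid R(X)=\ell]$. Since $p(\ell)$ is a precomputed function of the leaf, $R'$ is a valid randomized forecasting decision tree with the same query cost and transcript distributions as $R$, so $\cost(R',\mu^{*})=\cost(R,\mu^{*})$ and $\tran(R',\mu_b)=\tran(R,\mu_b)$. Applying \lem{scores_distances} with $S$ the set of transcripts, $\nu_b=\tran(R,\mu_b)$, and weight $w=1/2$, the Bayes-posterior predictor attains the maximum achievable expected score, which equals $\h^{2}(\tran(R,\mu_0),\tran(R,\mu_1))$. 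This quantity is non-negative, so $\score(R',\mu^{*})^{+}$ equals the Hellinger distance itself; substituting into the minimax inequality yields the theorem, with the edge case $\h^{2}=0$ handled by the convention $r/0=\infty$.

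The main conceptual step is recognising that choosing $R'$ to output the transcript-conditional Bayes posterior saturates \lem{scores_distances}, thereby converting the abstract cost-to-$\hs$-score ratio guaranteed by the minimax into the concrete cost-to-Hellinger ratio in the theorem statement; the balance argument, though short, is essential, since otherwise $\mu^{*}\ne(\mu_0+\mu_1)/2$ and the theorem as written would not apply.
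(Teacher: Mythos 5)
Your proof is correct, and it is a mildly different route from the paper's. Both arguments start the same way: invoke \thm{randomized_query_minimax} and \thm{randomized_query_LHS} to get a hard distribution, run a zero-cost argument to conclude it is balanced, and use \lem{scores_distances} as the bridge between $\hs$-score and squared Hellinger distance. The difference is how that bridge is crossed. The paper first reduces the infimum over all randomized $R$ to a minimum over unlabeled \emph{deterministic} decision trees, which requires noting that $\h^2(\tran(R,\mu_0),\tran(R,\mu_1))$ is affine in the probability vector of $R$ (via \lem{hellinger_mixture}, using that transcripts of distinct deterministic trees have disjoint supports), then applying \lem{ratio} and \lem{quasiconvex_hull} to pass to the extreme points, and finally invoking \lem{scores_distances} once per deterministic tree. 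You instead apply \lem{scores_distances} in one shot to the full transcript distribution of the randomized $R$ (with the transcript alphabet being tree-choice-plus-leaf, so that the Bayes posterior is well-defined on transcripts), relabel each tree's leaves with those posteriors, and observe that the resulting $R'$ is a legitimate randomized forecasting decision tree with the same cost whose $\hs$-score against $\mu^{*}$ equals $\h^2(\tran(R,\mu_0),\tran(R,\mu_1))$. This avoids the disjoint-mixture lemma and the quasiconvexity detour entirely; the trade-off is that you must be careful that the posterior is a function of the whole transcript (including the sampled deterministic tree) and not merely of the leaf label --- your notation $p(\ell)$ with the conditioning event $R(X)=\ell$ makes this clear, but the phrase ``the leaf reached in the underlying decision tree'' is a bit loose and should explicitly say ``the transcript, i.e.\ the pair (sampled tree, leaf).''

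One small point worth stating explicitly: your construction makes $\score(R',\mu^*)$ equal to $\h^2(\tran(R,\mu_0),\tran(R,\mu_1))$, which is nonnegative, so $\score(R',\mu^*)^{+}=\score(R',\mu^*)$ and there is no discrepancy between the $^{+}$ clipping in the minimax bound and the Hellinger quantity in the statement. Your handling of the $\h^2=0$ edge case via the $r/0=\infty$ convention also matches what the theorem requires.
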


\begin{proof}
Using \thm{randomized_query_minimax} and \thm{randomized_query_LHS},
we get a distribution $\mu$ on $S$ such that for all randomized
forecasting algorithms $R$, we have
$\cost(R,\mu)/\score(R,\mu)^{+}\ge \R(f)/240$.
Note that it must be the case that an algorithm $R$ which
makes no queries must have $\score(R,\mu)\le 0$;
this is because we have $\R(f)\ge 1$ (since $f$ is non-constant),
and if there was an algorithm with cost $0$ achieving
positive score, we'd have $\cost(R,\mu)/\score(R,\mu)^{+}=0$,
giving a contradiction. Therefore, it must be the case that $\mu$
places equal weight on $0$ and $1$ inputs, because otherwise
a $0$-cost algorithm could indeed predict $f(x)$ with positive
bias (and hence positive score by \lem{conversion}) against $\mu$.
We set $\mu_0$ to be the conditional distribution of $\mu$
on the $0$-inputs of $f$, and set $\mu_1$ to be the conditional
distribution of $\mu$ on the $1$-inputs of $f$.

Next, we simplify the expression
\[\inf_R \frac{\cost(R,\mu)}{\h^2(\tran(R,\mu_0),\tran(R,\mu_1))}.\]
Note that both the numenator and the denominator
do not depend on the leaf labels, only on the queries of
the randomized decision trees. We can therefore view
the set of all randomized query algorithms $R$ as the convex
hull of the set of all deterministic decision trees with no
leaf labels. Now, note that $\cost(R,\mu)$ and
$\h^2(\tran(R,\mu_0),\tran(R,\mu_1))$ are both linear functions
of (the probability vector of) $R$; for the latter,
this is due to \lem{hellinger_mixture}. 
Then by \lem{ratio}, the ratio is quasiconcave in $R$,
and by \lem{quasiconvex_hull}, the
infimum of this ratio over randomized query algorithms $R$
is equal to the minimum over deterministic query algorithms $A$.
Therefore, it suffices to show that for each deterministic
query algorithm $A$ making a non-zero number of queries,
we have $\cost(A,\mu)/\h^2(\tran(A,\mu_0),\tran(A,\mu_1))\ge\R(f)/240.$

Fix such $A$. We assume its leaves are not labeled.
By \lem{scores_distances}, we can label the leaves of $A$ such that
$\score(A,\mu)=\h^2(\tran(A,\mu_0),\tran(A,\mu_1))$. This
labeling does not affect the cost. Then
\[\frac{\cost(A,\mu)}{\h^2(\tran(A,\mu_0),\tran(A,\mu_1))}
=\frac{\cost(A,\mu)}{\score(A,\mu)^{+}}\ge\frac{\R(f)}{240},\]
as desired.
\end{proof}

Finally, we strengthen this to a lower bound
for the \emph{minimum} of $\cost(R,\mu_0)$ and
$\cost(R,\mu_1)$, instead of for their average $\cost(R,\mu)$.

\begin{theorem}
\label{thm:query_shaltiel_free}
Let $n\in\bN$, let $\Sigma$ be a finite alphabet,
let $S\subseteq\Sigma^n$, and let $f\colon S\to\B$ be a non-constant
function. Then there exist distributions $\mu_0$ on $f^{-1}(0)$
and $\mu_1$ on $f^{-1}(1)$ such that for all randomized query
algorithms $R$,
\[\frac{\min\{\cost(R,\mu_0),\cost(R,\mu_1)\}}
{\h^2(\tran(R,\mu_0),\tran(R,\mu_1))}\ge\frac{\R(f)}{3000},\]
where we interpret $r/0=\infty$ for $r\in[0,\infty)$.
\end{theorem}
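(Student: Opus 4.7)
The plan is to use the same hard distributions $\mu_0$ on $f^{-1}(0)$ and $\mu_1$ on $f^{-1}(1)$ supplied by \thm{cost_hellinger}, and to reduce the desired minimum-cost bound back to that theorem. Given a randomized query algorithm $R$, I abbreviate $c_0 := \cost(R,\mu_0)$, $c_1 := \cost(R,\mu_1)$, and $d := \h^2(\tran(R,\mu_0),\tran(R,\mu_1))$. Without loss of generality $c_0 \le c_1$. I argue by contradiction: assuming $c_0 < d\,\R(f)/3000$, I will exhibit a modification $R^*$ of $R$ whose existence contradicts \thm{cost_hellinger} applied to $R^*$.

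For a threshold $T > 0$ to be chosen, let $R^*$ be the algorithm that simulates $R$ for at most $T$ queries and appends a single-bit indicator of whether truncation occurred. Then $\cost(R^*,\mu) \le T$ where $\mu=(\mu_0+\mu_1)/2$, so \thm{cost_hellinger} applied to $R^*$ gives $T \ge \h^2(R^*)\cdot\R(f)/240$. The Bhattacharyya decomposition
\[
BC\bigl(\tran(R^*,\mu_0),\tran(R^*,\mu_1)\bigr) = \sqrt{(1-p_0)(1-p_1)}\,BC_{\mathrm{short}} + \sqrt{p_0 p_1},
\]
where $p_b := \Pr_{\mu_b}[Q_R > T]$, supplies two independent lower bounds on $\h^2(R^*)$: the direct bound $\h^2(R^*) \ge d - \sqrt{p_0 p_1}$, obtained by comparing with the analogous decomposition of $BC$ for the untruncated $R$ and using $BC_{\mathrm{long}} \ge 0$; and the indicator bound $\h^2(R^*) \ge \h^2(\mathrm{Bern}(p_0),\mathrm{Bern}(p_1))$, obtained by applying the data-processing inequality after projecting the transcript onto the truncation indicator. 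Markov's inequality gives $p_0 \le c_0/T$ and $p_1 \le c_1/T$.

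The final step is a case analysis in the cost ratio $r := c_1/c_0$. Applying \thm{cost_hellinger} directly to $R$ yields $c_0 + c_1 \ge d\,\R(f)/120$, which combined with the hypothesis $c_0 < d\,\R(f)/3000$ forces $r > 24$. With $T = Kc_0$ for a constant $K$ to be optimized, the direct bound handles the regime in which $p_0$ and $p_1$ are close and $\sqrt{p_0 p_1}$ is small compared to $d$, while in the unbalanced regime the large value of $r$ makes $p_1$ substantially larger than $p_0$, so the indicator bound prevents $\h^2(R^*)$ from collapsing. Playing the two bounds against the constraint $T \ge \h^2(R^*)\R(f)/240$ and optimizing $K$ yields the contradiction, with the constant $3000$ dropping out of the optimization. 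The main obstacle is the intermediate regime where neither lower bound on $\h^2(R^*)$ is individually strong: $\sqrt{p_0 p_1}$ is comparable to $d$ (so the direct bound is weak) while $p_0 \approx p_1$ are both small (so the indicator bound is weak). Navigating this regime cleanly, probably via a more refined choice such as $T=c_0\sqrt{r}$ or via a randomized/iterated truncation threshold, is the most delicate part of the argument and is what drives the loss from the factor $240$ in \thm{cost_hellinger} to the factor $3000$ here.
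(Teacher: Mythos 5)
Your approach is genuinely different from the paper's, and the gap you flag near the end is real; unfortunately, it is not a matter of tuning $T$, and the approach as stated cannot be repaired without a qualitatively different truncation rule.

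The paper first reduces to deterministic (unlabeled) decision trees via the quasiconcavity argument already used to prove \thm{cost_hellinger}, and then truncates $D$ according to the \emph{likelihood ratio} along the path, not the query count: the tree is cut at the first vertex $v$ where $\mu_{1-b}[v]/(2\mu[v]) \ge a$ for a fixed $a\in(1/2,1)$, outputting the corresponding confidence $a$ or $1-a$ at the cut. This achieves two things simultaneously and pointwise on the branches: (i) the $\hs$-score lost at a cut is at most a $2\sqrt{a(1-a)}$ fraction of the score that branch already carried (because the truncated posterior $\ge a$ already yields a large $\hs$ value), and (ii) the expected cost on $\mu_{1-b}$ is bounded by $\frac{a}{1-a}$ times the expected cost on $\mu_b$ because every \emph{non-truncated} vertex has $\mu_b[v] > \frac{1-a}{a}\mu_{1-b}[v]$. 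These two facts give $\cost(D',\mu)/\score(D',\mu) \le O(1)\cdot\cost(D,\mu_b)/\score(D,\mu)$ directly, and feeding that back into \thm{cost_hellinger} applied to $D'$ is the whole proof.

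Your fixed-query-budget truncation $T=Kc_0$ only gives you the two Markov bounds $p_0 \le 1/K$, $p_1 \le c_1/(Kc_0)$, and the two Hellinger lower bounds $\h^2(R^*)\ge d-\sqrt{p_0p_1}$ and $\h^2(R^*)\ge \h^2(\mathrm{Bern}(p_0),\mathrm{Bern}(p_1))$. Both can fail simultaneously. Concretely, consider a deterministic tree with a selector bit that (under both $\mu_0$ and $\mu_1$) sends probability mass $p$ down a ``long'' subtree and $1-p$ down a length-$1$ stub carrying no Hellinger distance; inside the long subtree, $\mu_0$-inputs and $\mu_1$-inputs reach \emph{disjoint} leaves, with $\mu_0$-leaves at depth $L$ and $\mu_1$-leaves at depth $25L$. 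Then $p_0=p_1=p$, $d=p$, $c_0\approx pL$, $c_1\approx 25pL$, so $r=25>24$ is consistent with your derived constraint. Any truncation threshold $T$ of the form $K c_0$ or $c_0\sqrt{r}$ with $p<1/K$ truncates all long branches, giving $\h^2(\mathrm{Bern}(p_0),\mathrm{Bern}(p_1))=0$, $d_{\mathrm{short}}=0$, hence $\h^2(R^*)=0$: both of your lower bounds are vacuous, and \thm{cost_hellinger} applied to $R^*$ gives nothing. Applying \thm{cost_hellinger} directly to $D$ only forces $L\ge \R(f)/3120$, which leaves room for $c_0 = pL < p\R(f)/3000 = d\R(f)/3000$, so no contradiction emerges. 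The fundamental issue is that a query-budget cut can abort a branch \emph{after} it has started distinguishing $\mu_0$ from $\mu_1$ but \emph{before} that distinguishing information is recorded in the abort indicator; the paper's likelihood-ratio cut fires exactly at the moment the posterior becomes confident, so the information is never thrown away. To close the gap you would need to replace the query-count criterion with an information-theoretic one like the paper's, at which point you are essentially re-deriving the paper's argument.

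Two smaller points: the displayed ``Bhattacharyya decomposition'' should be $\le$, not $=$, unless $R^*$ is defined to collapse every truncated run to a single abort symbol (which is what you actually need, and then both your bounds follow cleanly); and the Markov bound $p_1\le c_1/T$ is a bound on the tail probability at threshold $T$, not a bound on how much Hellinger mass the tail carries, which is precisely why it cannot on its own control $\sqrt{p_0p_1}$ relative to $d$.
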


\begin{proof}
We use $\mu_0$ and $\mu_1$ from \thm{cost_hellinger}.
Note that
\[\inf_R\frac{\min\{\cost(R,\mu_0),\cost(R,\mu_1)\}}
{\h^2(\tran(R,\mu_0),\tran(R,\mu_1))}
=\inf_{R,b\in\B}\frac{\cost(R,\mu_b)}
{\h^2(\tran(R,\mu_0),\tran(R,\mu_1))}\]
\[=\adjustlimits \min_{b\in\B}\inf_R\frac{\cost(R,\mu_b)}
{\h^2(\tran(R,\mu_0),\tran(R,\mu_1))}.\]
By the same argument as in the proof of \thm{cost_hellinger},
this last infimum over $R$ is equal to the infimum over
deterministic unlabeled decision trees $D$ with height at least $1$.

Let $D$ be such an algorithm. By \thm{cost_hellinger},
it suffices to show that
\[\frac{\min\{\cost(D,\mu_0),\cost(D,\mu_1)\}}
{\h^2(\tran(D,\mu_0),\tran(D,\mu_1))}
\ge(1/c)\min_{D'}\frac{\cost(D',\mu)}
{\h^2(\tran(D',\mu_0),\tran(D',\mu_1))},\]
where $\mu=(\mu_0+\mu_1)/2$. By \lem{scores_distances},
we can label the leaves of $D$ so that we have
the property
$\h^2(\tran(D,\mu_0),\tran(D,\mu_1))=\score(D,\mu)$,
and similarly for $D'$.
The desired inequality is trivial when $\score(D,\mu)=0$
(since the ratio is then $\infty$), so suppose $\score(D,\mu)>0$.
We wish to show
\[\frac{\min\{\cost(D,\mu_0),\cost(D,\mu_1)\}}
{\score(D,\mu)}\ge (1/c)\min_{D'}\frac{\cost(D',\mu)}{\score(D',\mu)}.\]
In other words, we wish to show that there exists a deterministic
forecasting algorithm $D'$ such that
$\cost(D',\mu)/\score(D',\mu)\le c\cost(D,\mu_b)/\score(D,\mu)$,
regardless of whether $b=0$ or $b=1$.

We construct $D'$ such that
$\cost(D',\mu)/\score(D',\mu)\le c\cost(D,\mu_b)/\score(D,\mu)$.
The idea is to start with $D$, and then cut off
the branches that are much more likely under $\mu_{1-b}$
than under $\mu_b$. That is, for a vertex $v$ of $D$,
let $\mu_0[v]$ denote the probability that $v$ is reached
when $D$ is run on an input from $\mu_b$, and define $\mu_{1-b}[v]$
similarly. Recall that the leaves of $D$ are labeled according
to the strategy that achieves
$\score(D,\mu)=\h^2(\tran(D,\mu_0),\tran(D,\mu_1))$,
which, by \lem{scores_distances}, is such that at a leaf $v$,
the algorithm $D$ outputs $\mu_1[v]/2\mu[v]$.

Pick a constant $a\in(1/2,1)$, and let $D'$ be the algorithm
which cuts off $D$ the first time it enters a vertex for which
$\mu_{1-b}[v]/2\mu[v]\ge a$, and outputs $a$ (if $b=0$) or $1-a$
(if $b=1$) instead of continuing to run $D$.
Let $V$ be the set of all vertices which cause such a cutoff;
note that no vertex in $V$ is a descendant of another vertex in $V$.
For $v\in V$, let $\mu^v$ be the distribution $\mu$ conditioned
on reaching $v$, and similarly define $\mu_0^v$ and $\mu_1^v$.
Let $\mu^{*}$ be the distribution $\mu$ conditioned on reaching
none of the vertices in $V$, and similarly define $\mu_0^{*}$
and $\mu_1^{*}$.
Since we are dealing with a deterministic decision tree,
all the distributions $\mu_0^v$ and $\mu_1^v$ have disjoint
supports for all the different $v\in V$, and they're also
disjoint from $\mu_0^{*}$ and $\mu_1^{*}$; indeed, $\mu$ is a
disjoint mixture of all different distributions.
It follows that $\score(D,\mu)$ is a mixture of terms
$\score(D,\mu^v)$ and of $\score(D,\mu^{*})$.
The score $\score(D',\mu)$ of the algorithm $D'$ is also such a mixture.

Now, note that $\score(D,\mu^v)\le 1$, and that
$\score(D',\mu^v)=\bE_{x\leftarrow \mu^v}[\hs_{f(x)}(a)]$
if $b=0$ and $\score(D',\mu^v)=\bE_{x\leftarrow \mu^v}[\hs_{f(x)}(1-a)]$
if $b=1$. This means
\[\score(D',\mu^v)=\frac{\mu_b[v]}{2\mu[v]}\hs(1-a)
+\frac{\mu_{1-b}[v]}{2\mu[v]}\hs(a)=(1-p)\hs(1-a)+p\hs(a)\]
\[=1-(1-p)\sqrt{a/(1-a)}-p\sqrt{(1-a)/a},\]
where $p=\mu_{1-b}[v]/2\mu[v]\ge a$. Since $a> 1/2$, this is
increasing in $p$, so we have $\score(D',\mu^v)\ge 1-2\sqrt{a(1-a)}$,
and hence $\score(D',\mu^v)\ge (1-2\sqrt{a(1-a)})\score(D,\mu^v)$.
It also holds that $\score(D',\mu^{*})=\score(D,\mu^{*})
\ge(1-2\sqrt{a(1-a)})\score(D,\mu^{*})$.
Since $\score(D,\mu)$ and $\score(D',\mu)$ are matching mixtures of
$\score(D,\mu^v)$ and $\score(D',\mu^v)$ respectively, it follows that
$\score(D',\mu)\ge(1-2\sqrt{a(1-a)}\score(D,\mu)$.

We now analyze the cost of $D'$. Note that
$\cost(D',\mu)=(1/2)\cost(D',\mu_b)+(1/2)\cost(D',\mu_{1-b})$;
we clearly have $\cost(D',\mu_b)\le \cost(D,\mu_b)$, so it suffices
to upper bound $\cost(D',\mu_{1-b})$. This
is the expected height of a leaf $D'$ reaches when run on $\mu_{1-b}$,
which is a mixture of $\cost(D',\mu^{*}_{1-b})$ and
$\cost(D',\mu^v_{1-b})$. Now, note that a leaf $u$ reached by
$\cost(D',\mu^{*}_{1-b})$ must have $\mu_{1-b}[u]/2\mu[u]< a$,
or $\mu_b[u]<(1-a)/a\cdot\mu_{1-b}[u]$. It follows that
\[\cost(D',\mu^{*}_{1-b})\le (1-a)/a\cdot \cost(D',\mu^{*}_b)
=(1-a)/a\cdot \cost(D,\mu^{*}_b).\]
Similarly, for each $v\in V$, the parent $u$ of $v$
satisfies $\mu_{1-b}[u]/2\mu[u]<a$, meaning that
$\mu_b[u]>(1-a)/a\cdot \mu_{1-b}[u]$; note that since this parent
$u$ of $v$ is not a leaf, conditioned on reaching $u$ the height
of the path will always be at least the height of $v$ (one more
than the height of $u$); since $\cost(B,\mu^v_{1-b})$ is exactly
the height of $v$, we necessarily have
\[\cost(D,\mu^v_b)\ge\cost(D',\mu^v_b)
\ge (1-a)/a\cdot\cost(D',\mu^v_{1-b}).\]
We conclude that $\cost(D',\mu_{1-b})\le \frac{a}{1-a}\cost(D,\mu_b)$,
and hence
\[\cost(D',\mu)
\le\left(\frac{1}{2}+\frac{a}{2(1-a)}\right)\cost(D,\mu_b)
=\frac{\cost(D,\mu_b)}{2(1-a)}.\]
We therefore have
\[\frac{\cost(D',\mu)}{\score(D',\mu)}
\le\frac{1}{2(1-a)(1-2\sqrt{a(1-a)})}\frac{\cost(D,\mu_b)}{\score(D,\mu)}.\]
Finally, optimizing $a$, we pick $a=(2+\sqrt{2})/4$
to get
\[\frac{\cost(D',\mu)}{\score(D',\mu)}
\le(6+4\sqrt{2})\frac{\cost(D,\mu_b)}{\score(D,\mu)},\]
from which the desired result follows.
\end{proof}

\begin{corollary}
\label{cor:query-complexity}
Let $n\in\bN$, let $\Sigma$ be a finite alphabet,
let $S\subseteq\Sigma^n$, and let $f\colon S\to\B$ be a function.
Then there exists a distribution $\mu$ on $S$ such that
for all $\gamma\in[0,1]$,
\[\bar{\R}^\mu_{\dot{\gamma}}(f)\ge\frac{\gamma^2\R(f)}{500}.\]
Here $\dot{\gamma}=(1-\gamma)/2$ and $\bar{\R}^\mu_\epsilon(f)$
denotes the average cost (against $\mu$) of a randomized algorithm
achieving error at most $\epsilon$ (against $\mu$) for solving $f$.
\end{corollary}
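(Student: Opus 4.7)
The plan is to apply \thm{cost_hellinger} with $\mu=(\mu_0+\mu_1)/2$, where $\mu_0$, $\mu_1$ are the two distributions that theorem produces. Given any randomized algorithm $R$ satisfying $\Pr_{x\leftarrow\mu}[R(x)\ne f(x)]\le\dot{\gamma}$—equivalently, $\bias_f(R,\mu)\ge\gamma$—I will build a forecasting algorithm $R'$ with $\score_{\hs,f}(R',\mu)\ge\gamma^2/2$ whose predictions are a function of $R$'s transcript. Combining that with \lem{scores_distances} to lower-bound $\h^2(\tran(R,\mu_0),\tran(R,\mu_1))$ by the score, and then with \thm{cost_hellinger} to lower-bound $\cost(R,\mu)$ by the Hellinger distance, will yield the claimed bound on $\bar{\R}^\mu_{\dot{\gamma}}(f)$.

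The main subtlety is that \lem{conversion} is stated for worst-case bias, whereas here I only have an average-case bias guarantee. Fortunately, the pointwise conversion used in \lem{conversion} works unchanged: define $R'$ by running $R$ and outputting $(1+\gamma)/2$ if $R$ returned $1$ and $(1-\gamma)/2$ otherwise. Repeating the calculation in the proof of \lem{conversion} without collapsing the dependence on $x$, the pointwise score equals
\[\score_{\hs,f}(R',x)=1-\frac{1+\gamma_x}{2}\sqrt{\frac{1-\gamma}{1+\gamma}}-\frac{1-\gamma_x}{2}\sqrt{\frac{1+\gamma}{1-\gamma}},\]
where $\gamma_x=\bias_f(R,x)$. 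This expression is linear in $\gamma_x$, so averaging against $\mu$ and using $\bE_{x\leftarrow\mu}[\gamma_x]\ge\gamma$ gives $\score_{\hs,f}(R',\mu)\ge 1-\sqrt{1-\gamma^2}\ge\gamma^2/2$, matching the worst-case bound of \lem{conversion} in expectation. This is the step I expect to be the most delicate; no Jensen-type loss appears precisely because the pointwise formula is linear in $\gamma_x$.

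Since $R'$ is built purely by post-processing the output of $R$, its prediction is a (possibly randomized) function of $R$'s transcript. Consequently, the $w=1/2$ case of \lem{scores_distances} applied to the transcript distributions $\tran(R,\mu_0)$ and $\tran(R,\mu_1)$ yields $\score_{\hs,f}(R',\mu)\le\h^2(\tran(R,\mu_0),\tran(R,\mu_1))$, so $\h^2(\tran(R,\mu_0),\tran(R,\mu_1))\ge\gamma^2/2$. Plugging this into \thm{cost_hellinger} gives
\[\cost(R,\mu)\ge\frac{\R(f)}{240}\cdot\h^2(\tran(R,\mu_0),\tran(R,\mu_1))\ge\frac{\gamma^2\R(f)}{480}\ge\frac{\gamma^2\R(f)}{500},\]
and taking the infimum over all eligible $R$ finishes the proof.
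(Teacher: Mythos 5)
Your proof is correct and follows essentially the same route as the paper: apply \thm{cost_hellinger} to obtain the hard distribution $\mu$, convert a bias-$\gamma$ algorithm into a forecasting algorithm with score at least $\gamma^2/2$ via the pointwise \lem{conversion} construction, invoke \lem{scores_distances} to relate score to Hellinger distance, and conclude. The one place you go beyond the paper's write-up is in explicitly verifying that the worst-case-to-worst-case statement of \lem{conversion} actually gives an average-case bound because the pointwise score is affine in the pointwise bias $\gamma_x$ (with a nonnegative coefficient), which is a subtlety the paper's proof invokes implicitly; this is a welcome clarification, not a divergence in approach.
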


\begin{proof}
If $f$ is constant, then $\R(f)=0$ and the desired bound trivially
follows. Therefore, assume $f$ is not constant.
We use the distribution $\mu$ from \thm{cost_hellinger}.
Let $R$ be a randomized algorithm which achieves bias
$\gamma$ against $\mu$. Then using \lem{conversion},
we can convert $R$ into a forecasting algorithm $R'$
which achieves expected score $1-\sqrt{1-\gamma^2}\ge\gamma^2/2$
against $\mu$,
and has the same distribution over query trees (that is,
only the leaves changed).
Now, by the property of $\mu$, we know that
\[\frac{\cost(R',\mu)}{\score(R',\mu)}\ge\frac{\R(f)}{240},\]
where we used \lem{scores_distances} to get a result for score
instead of Hellinger distance in the denominator, and
where we used the fact that $R$ achieves
non-zero bias against $\mu$ (despite $\mu$ being balanced
between $0$- and $1$-inputs) to conclude that $R$ does not make
$0$ queries. Using $\score(R',\mu)\ge\gamma^2/2$ and
$\cost(R,\mu)=\cost(R',\mu)$, we get
$2\cost(R,\mu)/\gamma^2\ge \R(f)/240$,
or $\cost(R,\mu)\ge \gamma^2\R(f)/480$, as desired.
\end{proof}

\subsection{Communication complexity}

\newtheorem*{cchthm}{\thm{cc-h}}
\begin{cchthm}[Restated]
For any non-constant partial function $F\colon \mathcal{X} \times \mathcal{Y} \to \B$ over finite sets $\mathcal{X}$ and $\mathcal{Y}$,
there is a pair of distributions $\mu_0$ on $F^{-1}(0)$ and $\mu_1$ on $F^{-1}(1)$
such that for any public-randomness communication protocol $\Pi$, the squared Hellinger distance between 
the distribution of its transcripts on $\mu_0$ and $\mu_1$ is bounded above by
\[
\h^2\big(\tran(\Pi,\mu_0),\tran(\Pi,\mu_1)\big) = O\left( \frac{\min\{\cost(\Pi,\mu_0),\cost(\Pi,\mu_1)\}}{\RCC(F)} \right).
\]
\end{cchthm}

\begin{proof}
This theorem follows directly from \thm{query_shaltiel_free}
once we realize that a communication function can be interpreted
as a query function. That is, we take $F$ and convert it into
a query function $f$ as follows. The input to $f$
will contain one bit for each possible function of $\mathcal{X}$
(that Alice might send to Bob), and one bit for each possible
function of $\mathcal{Y}$ (that Bob might send to Alice), for a total
input length of $n=2^{|\mathcal{X}|}+2^{|\mathcal{Y}|}$.
The inputs to $f$ will be the strings in $\B^n$ which are
generated by a pair $(x,y)\in S$, that is, the strings $z\in\B^n$
for which there exists a pair $(x,y)\in S$ such that $z_k$
is the result of applying the $k$-th possible function to $x$
(if $k\le 2^{|\mathcal{X}|}$) or the $(k-2^{|\mathcal{X}|})$-th
possible function to $y$ (if $k> 2^{|\mathcal{X}|}$).
Then $f$ is a Boolean function of domain of size $|S|$,
with each string in its domain corresponding to a string in $S$.

We note that $\RDT(f)=\RCC(F)$. This is clear from the definition
of $\RCC(F)$: the public-coin randomness essentially means that
Alice and Bob agree on a randomized decision tree in advance, including
on who speaks when (as a function of the transcript), which is equivalent
to agreeing in a decision tree for $f$ in advance. The transcript
of $f$ on an input is precisely the transcript of $F$ on the corresponding
input, with the catch that in query complexity we defined the transcript
to include the deterministic decision tree by the protocol; hence,
the query version of a transcript of $f$ actually corresponds to $(R,\Pi)$
for $F$, where $R$ is the public randomness and $\Pi$ is the usual
communication complexity transcript. The desired result then
follows immediately from applying \thm{query_shaltiel_free} to $f$.
\end{proof}

\begin{corollary}
\label{cor:communication}
Let $\mathcal{X}$ and $\mathcal{Y}$ be finite sets, let
$S\subseteq\mathcal{X}\times\mathcal{Y}$, and let $F\colon S\to\B$
be a function. Then there exists a distribution
$\mu$ on $S$ such that for all $\gamma\in[0,1]$,
\[
\bar{\RCC}_{\dot{\gamma}}^{\mu}(F)=\Omega(\gamma^2\RCC(F)).
\]
\end{corollary}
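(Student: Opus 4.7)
The plan is to reduce this statement to its query complexity analogue, \cor{query-complexity}, using the same communication-to-query encoding that appears in the proof of \thm{cc-h}. Given $F\colon S\to\B$, we encode each input $(x,y)\in S$ as a string in $\B^n$, with $n = 2^{|\mathcal{X}|}+2^{|\mathcal{Y}|}$, whose bits record the value of every possible Alice-message function on $x$ and every possible Bob-message function on $y$. The image of $S$ under this encoding is the domain of a query function $f$ corresponding to $F$, and the encoding is a bijection between $S$ and $\Dom(f)$.

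This encoding preserves cost and bias exactly, pointwise on inputs. A public-coin randomized protocol $\Pi$ for $F$ corresponds to a randomized decision tree $R$ for $f$: public coins select a deterministic protocol (which is a decision tree in which each query either reads the result of Alice's next message function on $x$ or Bob's next message function on $y$), and the number of queries on an encoded input equals the number of bits of communication on the corresponding $(x,y)$. The converse also holds, so $\RCC(F) = \R(f)$. Because the correspondence is pointwise, for any distribution $\mu$ on $S$ with pushforward $\mu'$ on $\Dom(f)$ we have $\bar{\RCC}_{\dot{\gamma}}^{\mu}(F) = \bar{\R}_{\dot{\gamma}}^{\mu'}(f)$, where both quantities are expectations over $\mu$ and $\mu'$ respectively.

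Applying \cor{query-complexity} to $f$ yields a distribution $\mu'$ on $\Dom(f)$ such that $\bar{\R}_{\dot{\gamma}}^{\mu'}(f) \ge \gamma^{2}\R(f)/500$ for all $\gamma \in [0,1]$. Pulling $\mu'$ back through the bijection gives a distribution $\mu$ on $S$ satisfying $\bar{\RCC}_{\dot{\gamma}}^{\mu}(F) \ge \gamma^{2}\RCC(F)/500 = \Omega(\gamma^{2}\RCC(F))$, which is the desired conclusion. There is no substantive obstacle here; the only thing to verify carefully is that the reduction preserves \emph{expected} (not merely worst-case) cost against the distributions $\mu$ and $\mu'$, which follows because the encoding acts pointwise and does not alter per-input costs or per-input biases.
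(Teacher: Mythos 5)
Your proposal is correct and takes essentially the same route the paper intends: the paper states \cor{communication} without an explicit proof, relying on the same communication-to-query encoding used in the proof of \thm{cc-h} plus \cor{query-complexity}, which is precisely the argument you supply. Your care in noting that the encoding acts pointwise on inputs --- so expected cost against $\mu$ and bias against $\mu$ both transfer exactly under the pushforward --- is the right thing to check and matches what the paper's reduction requires.
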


\section{Quantum query and communication complexity}
\label{sec:quantum}

In contrast to the classical case,
it is well-known that quantum algorithms can be amplified
\emph{linearly} in $1/\gamma$, where $\gamma$ is the bias.
Formally, we have the following theorem.

\begin{theorem}[Amplitude estimation]\label{thm:amplitude_estimation}
Suppose we have access to a unitary $U$ (representing a quantum
algorithm) which maps $|0\rangle$ to $|\psi\rangle$,
as well as access to a projective measurement $\Pi$,
and we wish to estimate $p\coloneqq\|\Pi|\psi\rangle\|_2^2$
(representing the probability the quantum algorithm accepts).
Fix $\epsilon,\delta\in(0,1/2)$.
Then using at most $(100/\epsilon)\cdot\ln(1/\delta)$
controlled applications of $U$ or $U^\dagger$
and at most that many applications of $I-2\Pi$, we
can output $\tilde{p}\in[0,1]$ such that
$|\tilde{p}-p|\le\epsilon$ with probability
at least $1-\delta$.
\end{theorem}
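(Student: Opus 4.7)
The plan is to invoke the standard Brassard--H{\o}yer--Mosca--Tapp amplitude estimation procedure and then boost its confidence via the median trick. First, I would assemble the Grover-type operator $G = -U(I - 2|0\rangle\langle 0|)U^\dagger (I - 2\Pi)$ using the two oracle types we are given: each application of $G$ costs one application of $U$, one of $U^\dagger$, and one of $I - 2\Pi$ (the reflection about $|0\rangle$ is free, requiring no calls to any oracle). On the two-dimensional invariant subspace spanned by the components of $|\psi\rangle$ inside and outside the range of $\Pi$, the operator $G$ acts as a rotation by angle $2\theta$, where $p = \sin^2\theta$, so extracting $\theta$ suffices to extract $p$.

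Second, I would run quantum phase estimation on $G$ with $T$ controlled applications, producing an estimate $\tilde\theta$ of $\theta$ and setting $\tilde p := \sin^2\tilde\theta$. The standard analysis of phase estimation applied to the Grover operator (as in BHMT) shows that after $T$ applications of $G$, the output satisfies
\[
|\tilde p - p| \;\le\; \frac{2\pi\sqrt{p(1-p)}}{T} + \frac{\pi^2}{T^2}
\]
with probability at least $8/\pi^2 > 0.81$. Choosing $T = \lceil c/\epsilon \rceil$ for a small absolute constant $c$ yields additive error at most $\epsilon$ with success probability some constant strictly greater than $1/2$, at a total cost of $O(1/\epsilon)$ applications of each of $U$, $U^\dagger$, and $I - 2\Pi$.

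Third, to upgrade this constant success probability to $1 - \delta$, I would use the standard median-of-estimates trick: run the above estimator $m = O(\log(1/\delta))$ times independently and output the median of the $m$ values $\tilde p_i$. Each individual estimate lies within $\epsilon$ of $p$ with probability at least some $q > 1/2$, so a Chernoff bound on the number of estimates that land within $\epsilon$ of $p$ shows the median itself lies in $[p - \epsilon, p + \epsilon]$ except with probability at most $\delta$. The total cost is then $O((1/\epsilon) \log(1/\delta))$ of each oracle type.

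The conceptual content is entirely standard; the main obstacle is bookkeeping the constants so that they fit inside the stated bound of $(100/\epsilon)\ln(1/\delta)$. One must (i) fix the constant in $T$ large enough that the BHMT error bound really gives additive error $\epsilon$ uniformly in $p \in [0,1]$ rather than only error $\epsilon\sqrt{p(1-p)}$, which forces the $\pi^2/T^2$ term to be handled in the regime $p \approx 0, 1$; and (ii) choose the median-repetition count $m$ so that the Chernoff exponent times $m$ exceeds $\ln(1/\delta)$. Both are quantitative checks; plugging in $c$ and $m$ conservatively gives the factor $100$ with room to spare.
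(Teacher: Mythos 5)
Your proposal matches the paper's proof essentially step by step: both invoke the BHMT bound $|\tilde p - p| \le 2\pi\sqrt{p(1-p)}/M + \pi^2/M^2$ with success probability $8/\pi^2$, then boost confidence via the median trick. The paper's appendix simply fills in the constant-chasing you flag (getting $M$ with $2M+1 \lesssim 19/\epsilon$ and $2k+1 \le 5.2\ln(1/\delta)$ repetitions), so the two arguments are the same.
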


This theorem follows from \cite{BHMT02}, as well as from the
arguably simpler techniques in \cite{AR20}. (In fact,
these authors show something slightly stronger: amplitude
estimation can be done with overhead
$O(\sqrt{\epsilon+p}\cdot(1/\epsilon)\cdot\log1/\delta)$.
We refer the interested reader to \app{amplitude} to see
how this follows from \cite{BHMT02}.)

Given that quantum algorithms can be amplified linearly
in the bias, it would seem that the desired minimax
theorem follows easily from \thm{main_minimax}:
simply apply a minimax to $\cost(Q,\mu)/\bias_f(Q,\mu)^{+}$,
where $Q$ is a quantum algorithm and $\mu$ is a distribution
over the inputs. Then use the linear amplification
result to argue that
$\min_Q\max_\mu \cost(Q,\mu)/\bias_f(Q,\mu)^{+}$
is $\Theta(\Q(f))$. Sounds simple! (This works
better than for randomized algorithms, because
$\bias_f(\cdot,\cdot)$ is saddle while $\bias_f(\cdot,\cdot)^2$
is not.)

Unfortunately, there is an annoying hole in this argument:
the function $\cost(Q,\mu)$ is not convex in $Q$.
While it is not immediately clear what a
convex combination of two quantum algorithms $Q_1$ and $Q_2$
should be, most intuitive definitions
will have the convex combination use a number of unitaries
that is equal to the \emph{maximum} of the number used in $Q_1$
and $Q_2$, rather than the average.

To get around this, we switch the computational model from
quantum algorithms to probability distributions over quantum
algorithms. These probabilistic quantum algorithms have
outputs and biases defined in the intuitive way, but their
cost is defined as the \emph{expected} cost of the underlying
quantum algorithms, rather than the maximum cost.
This ensures the function $\cost(\cdot,\cdot)$ will be saddle,
and \thm{main_minimax} can be applied. The trick
then becomes showing that
these probabilistic quantum algorithms can still be amplified
linearly. This turns out to be true, up to logarithmic factors.
Once amplified, constant-error probabilistic quantum algorithms
can be converted into ordinary quantum algorithms,
giving us a minimax theorem that can be applied to ordinary
quantum algorithms as well.

\subsection{Quantum query complexity}

Our goal in this section will be to prove the following theorem.

\begin{theorem}\label{thm:minimax_quantum}
For any Boolean-valued function $f$, there exists a distribution
$\mu$ over $\Dom(f)$ such that for any $\gamma\in[0,1]$,
we have $\Q^\mu_{\dot{\gamma}}(f)\ge \gamma\cdot\tOmega(\Q(f))$.
Here $\Q^\mu_{\dot{\gamma}}(f)$ denotes the minimum number of queries required by a quantum algorithm which achieves bias $\gamma$ against $\mu$ for computing $f$. The constants in the $\tOmega$ notation are universal.
\end{theorem}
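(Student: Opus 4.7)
The plan is to apply the main minimax theorem (\thm{main_minimax}) to an appropriate set of quantum algorithms, then lift the result to ordinary quantum algorithms via a linear amplification step. As the excerpt notes, ordinary quantum algorithms do not form a convex set on which $\cost$ is linear, because mixing two quantum algorithms naturally takes the maximum rather than the average of their costs. I would therefore define \emph{probabilistic quantum algorithms} as finitely-supported probability distributions $Q$ over ordinary quantum query algorithms, with $\cost(Q,x)=\E_{q\sim Q}[\text{queries}(q,x)]$ and $\bias_f(Q,x)=\E_{q\sim Q}[1-2\Pr[q(x)\neq f(x)]]$, extended bilinearly to input distributions $\mu$. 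The set $\mathcal{R}$ of these algorithms embeds as a convex subset of $\bR^{2|S|}$ via $Q\mapsto(\cost(Q,x),\bias_f(Q,x))_{x\in S}$, exactly as in the proof of \thm{randomized_query_minimax}, and the well-behaved conditions of \thm{main_minimax} are witnessed by the trivial algorithm that queries all $n$ bits and achieves bias $1$.

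Applying \thm{main_minimax} with $\cost$ and $\score=\bias_f$ then yields
\[
\adjustlimits\inf_{Q\in\mathcal{R}}\max_{x\in S}\frac{\cost(Q,x)}{\bias_f(Q,x)^{+}}
=\adjustlimits\max_{\mu\in\Delta}\inf_{Q\in\mathcal{R}}\frac{\cost(Q,\mu)}{\bias_f(Q,\mu)^{+}},
\]
with all maxima attained; call the common value $Y^{*}$. The upper bound $Y^{*}\le O(\Q(f))$ is immediate from the optimal bounded-error quantum algorithm, so the substantive task is the lower bound $Y^{*}\ge\tOmega(\Q(f))$. This is a linear amplification statement for probabilistic quantum algorithms: starting from $Q\in\mathcal{R}$ with $\max_x\cost(Q,x)/\bias_f(Q,x)^{+}\le Y$, one must construct an ordinary quantum algorithm of cost $\tilde O(Y)$ that computes $f$ to bounded error.

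The amplification step is the main technical obstacle, and it is where the lack of coherence of mixtures bites. My plan is to first truncate each quantum algorithm in the support of $Q$ at depth $T=\Theta(Y\log\Q(f))$; by Markov's inequality, on every input $x$ the probability that a sampled $q$ would exceed $T$ queries is $O(\cost(Q,x)/T)$, which is a $1/\log$ fraction of the original bias $\bias_f(Q,x)\ge\cost(Q,x)/Y$, so the truncated mixture still has bias at least $\bias_f(Q,x)/2$ everywhere. The truncated mixture is then realized coherently by placing the mixing coin in a superposed ancilla register and dispatching the appropriate branch, at a per-invocation cost of $T$. To amplify without paying a quadratic $1/\gamma^2$ cost, I would then use a variable-time/odometer argument in the spirit of the classical proof of \thm{randomized_query_LHS}: empirically estimate $\cost(Q,x)$ during the run, choose the amplification level adaptively based on this estimate, and apply amplitude estimation (\thm{amplitude_estimation}) at the scale dictated by the estimate. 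The linear-in-$1/\gamma$ overhead of amplitude estimation combined with the constraint $\bias_f(Q,x)\ge\cost(Q,x)/Y$ then gives total expected cost $\tilde O(Y)$ on every input, and a final Markov truncation at $\tilde O(Y)$ queries converts this expected bound into a worst-case ordinary quantum algorithm.

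With the amplification established, the theorem follows by unwinding the minimax. Let $\mu^{*}$ attain the outer maximum. For any ordinary quantum algorithm $Q'$ of cost $c$ achieving bias at least $\gamma$ against $\mu^{*}$, we have $Q'\in\mathcal{R}$ with $\cost(Q',\mu^{*})/\bias_f(Q',\mu^{*})^{+}\le c/\gamma$, so $c/\gamma\ge Y^{*}\ge\tOmega(\Q(f))$ and therefore $\Q^{\mu^{*}}_{\dot{\gamma}}(f)\ge c\ge\gamma\cdot\tOmega(\Q(f))$, which is the desired conclusion. The expected main obstacle is reconciling coherent quantum amplification with a mixture whose cost is measured in expectation rather than worst-case, which is precisely what forces the variable-time/odometer treatment in place of a direct coherent application of amplitude estimation to a padded mixture.
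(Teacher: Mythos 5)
Your setup is essentially the paper's: probabilistic quantum algorithms form a convex set, \thm{main_minimax} applies with $\cost(Q,\mu)=|Q|$ and $\score=\bias_f$, and the theorem follows from bounding the inf-max side by $\tOmega(\Q(f))$. That part is fine. The gap is in the amplification step, which is the actual content of the quantum section, and your proposed argument does not work as stated.

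The problem is that realizing the truncated mixture coherently flattens the cost: after placing the mixing coin in a superposed ancilla and dispatching, \emph{every} invocation of the resulting unitary costs the truncation depth $T=\Theta(Y)$, independently of $x$. But the guarantee from the minimax is only $\bias_f(Q,x)\ge\cost(Q,x)/Y$, and on inputs where $\cost(Q,x)$ is small the bias can be correspondingly small. Amplitude estimation to the precision $\cost(Q,x)/Y$ you need then requires $\Omega(Y/\cost(Q,x))$ coherent calls, each at cost $T=\Theta(Y)$, for a total of $\Omega(Y^2/\cost(Q,x))$ --- unbounded when $\cost(Q,x)$ is small. The odometer idea from the classical proof of \thm{randomized_query_LHS} cannot rescue this: once the mixture is made coherent there is no per-input query count to observe (the cost is identically $T$), and if you instead sample classically from the mixture to estimate $\cost(Q,x)$ and then invoke amplitude estimation at the estimated scale, the per-call cost of $T$ still dominates and the quadratic blowup reappears. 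The classical odometer worked precisely because the classical algorithm's expected cost on $x$ is observable and the amplification cost scales with that observable; the coherent realization severs that link.

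The paper resolves this differently: it does not try to amplitude-estimate the coherent mixture at all. Instead, after the Markov truncation it rounds costs to powers of two, pads to a minimum cost, and collapses the support so that the mixture is over only $L=O(\log(1/\gamma))$ algorithms $Q_1,\dots,Q_L$ with $Q_i$ costing $2^{k+i}$ queries and $\sum_i p_i 2^i=O(1)$. It then runs amplitude estimation on each $Q_i$ \emph{separately}, to precision $\Theta(2^i\gamma)$; the $2^i$ in the precision and the $2^{k+i}$ in the per-call cost cancel, so each bucket costs $O(2^k/\gamma)\cdot\log L$ uniformly, giving $\tilde O(|P|/\gamma)$ total. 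The output is the sign of $\sum_i p_i\tilde\beta_i(x)$, whose error is controlled because $\sum_i p_i\cdot 2^i\gamma=O(\gamma)$. That bucketing is the idea your proposal is missing, and it is what makes the linear amplification go through despite the incoherence of the mixture. You will also want to note, as the paper does, that on the inf-max side one may restrict to $\gamma\ge 1/(6\Q(f))$ (handling the zero-cost constant algorithms $Z_0,Z_1$ explicitly), which is what turns the $\polylog(1/\gamma)$ loss into the $\polylog\Q(f)$ in the final $\tOmega$.
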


In fact, we will prove a stronger (and tighter) version
in terms of \emph{probabilistic} quantum algorithms.
These are simply probability distributions over quantum
algorithms of possibly different query costs; we define
the cost of a probabilistic quantum algorithm as the expected
cost of a quantum algorithm sampled from the probability
distribution.

\begin{definition}
A \emph{probabilistic quantum algorithm} is a probability
distribution $P$ over quantum algorithms.
For an input string $x$,
we let $P(x)$ be the random variable that outputs a sample
from $Q(x)$ where $Q$ is a quantum algorithm sampled from $P$.
The cost of $P$, denoted $|P|$, is the expected cost
of a quantum algorithm sampled from $P$. The error
of $P$ on input $x$ to a Boolean function $f$ is defined as
$\Pr_{Q\sim P}[Q(x)\ne f(x)]$.
\end{definition}

\begin{definition}
Let $f$ be a Boolean-valued function with
$\Dom(f)\subseteq\Sigma^n$. We define
$\PQ_{\dot{\gamma}}(f)$ to be the minimum cost $|P|$
of a probabilistic quantum algorithm $P$ which computes
$f$ to worst-case bias $\gamma$.
\end{definition}

\begin{theorem}\label{thm:PQ_Q}
For any Boolean function $f$ and any $\gamma\in(0,1/3)$,
we have $\PQ_{\dot{\gamma}}(f)=\tTheta(\gamma\Q(f))$.
More explicitly,
\begin{align*}
\PQ_{\dot{\gamma}}(f)&=O(\gamma\Q(f))\\
\PQ_{\dot{\gamma}}(f)&=\Omega\left(
    \frac{\gamma\Q(f)}{\log(1/\gamma)\log\log(1/\gamma)}\right).
\end{align*}
\end{theorem}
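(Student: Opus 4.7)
The plan is as follows. For the upper bound, I would take a bounded-error quantum algorithm $Q_0$ with $|Q_0| = \Q(f)$ queries and bias $\beta = \Omega(1)$, and define the probabilistic quantum algorithm $P$ that runs $Q_0$ with probability $\gamma/\beta$ and otherwise outputs an unbiased random bit (contributing no queries); then $P$ has expected cost $O(\gamma\Q(f))$ and worst-case bias $\gamma$, which yields $\PQ_{\dot\gamma}(f) = O(\gamma\Q(f))$.

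The lower bound is the substantive direction. Starting from a probabilistic quantum algorithm $P$ of expected cost $T$ and worst-case bias $\gamma$, I would build a standard quantum algorithm of worst-case cost $\tO(T/\gamma)$ and bounded error; this yields $\Q(f) = O\!\left(T \log(1/\gamma) \log\log(1/\gamma)/\gamma\right)$ and hence $T = \tOmega(\gamma\Q(f))$. The first step is a Markov truncation: discard from the support of $P$ every $Q$ with $|Q| > 4T/\gamma$. This removes at most a $\gamma/4$ fraction of mass, so the worst-case bias drops by at most $\gamma/2$ and remains at least $\gamma/2$.

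Next I partition the surviving support into $c = O(\log(1/\gamma))$ cost buckets $B_k = \{Q : |Q| \in [2^k, 2^{k+1})\}$. Let $\pi_k$ denote the mass of $B_k$ under $P$ and $p_k(x)$ the probability that a sample from $(P \mid B_k)$ outputs $1$ on $x$; then $\sum_k \pi_k p_k(x) = \Pr[P(x)=1]$ is at least $\gamma/4$ away from $1/2$ in the direction of $f(x)$, and $\sum_k \pi_k 2^k \le 2T$. Each bucket admits a coherent implementation: prepare a superposition $\sum_{Q \in B_k}\sqrt{\pi_Q/\pi_k}\,|Q\rangle|0\rangle$ over the finite support and, controlled on the first register, apply each $Q$ padded to depth $2^{k+1}$. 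The resulting unitary $U_k$ accepts with probability exactly $p_k(x)$ and uses $O(2^k)$ queries.

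The final quantum algorithm applies amplitude estimation (\thm{amplitude_estimation}) to each $U_k$ with precision $\epsilon_k = \gamma/(8c\pi_k)$ and per-call failure probability $1/(4c)$, costing $O((2^k/\epsilon_k)\log c) = O(2^k\pi_k c \log c/\gamma)$ queries for bucket $k$. Summing and using $\sum_k \pi_k 2^k \le 2T$ gives a total query cost of $O(Tc\log c/\gamma) = O(T\log(1/\gamma)\log\log(1/\gamma)/\gamma)$. By a union bound all estimates $\tilde p_k(x)$ simultaneously satisfy $|\tilde p_k(x) - p_k(x)| \le \epsilon_k$, so $\sum_k \pi_k \tilde p_k(x)$ lies within $\gamma/8$ of $\Pr[P(x)=1]$, enough to determine $f(x)$ with bounded error. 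The main obstacle is calibrating this bucketing: naive single-level Markov truncation followed by amplitude estimation on the full truncated algorithm would force the threshold to scale like $T/\gamma$ and then require resolution $\gamma$, yielding only $\tO(T/\gamma^2)$. It is essential to give each bucket its own precision $\epsilon_k \propto 1/\pi_k$, so that the per-bucket cost $2^k\pi_k/\epsilon_k$ collapses to $(c/\gamma)\cdot\pi_k 2^k$, whose sum telescopes to $Tc/\gamma$, losing only polylogarithmic overhead from the number of buckets and the per-estimate success probability.
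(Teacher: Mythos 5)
Your strategy matches the paper's: a Markov truncation of the support, a coalescing of the surviving algorithms into dyadic cost buckets implemented coherently, per-bucket amplitude estimation at a precision calibrated so each bucket contributes a fair share of the error budget, and a telescoping of $\sum_k \pi_k 2^k \le O(T)$ to bound the total cost. Your precision choice $\epsilon_k = \gamma/(8c\pi_k)$ is calibrated differently from the paper's (the paper sets $\epsilon_i$ proportional to the bucket's cost, $\epsilon_i \propto 2^i\gamma$, rather than inversely proportional to its mass), but both distribute the $\gamma$-budget soundly and collapse to the same total via the cost constraint, so that difference is cosmetic.

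There is, however, a genuine gap in the step where you assert $c = O(\log(1/\gamma))$ cost buckets. Markov truncation caps the \emph{maximum} cost at $4T/\gamma$, but says nothing about the minimum: the support of $P$ can contain algorithms of cost $1, 2, 4, \dots$, so the number of nonempty dyadic buckets is $O(\log(T/\gamma))$, not $O(\log(1/\gamma))$. Every bound downstream inherits this: your total becomes $O\!\bigl(T\log(T/\gamma)\log\log(T/\gamma)/\gamma\bigr)$ queries, and since $T$ can be as large as $\Q(f)$, you only get $\PQ_{\dot\gamma}(f) = \Omega\bigl(\gamma\Q(f)/(\log\Q(f)\log\log\Q(f))\bigr)$, which is strictly weaker than the theorem's $\Omega\bigl(\gamma\Q(f)/(\log(1/\gamma)\log\log(1/\gamma))\bigr)$ whenever $\Q(f)\gg 1/\gamma$. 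The paper closes this with a step you omitted: after truncation, it \emph{pads every algorithm up to cost at least $\Theta(T)$} (this is the passage from $P_2$ to $P_3$, forcing every cost to be a power of two in $[8|P|, 4|P|/\gamma)$). Because the total mass is $1$, this padding adds only $O(T)$ to the expected cost, and it pins the cost range to an interval of multiplicative width $O(1/\gamma)$, so the bucket count is genuinely $O(\log(1/\gamma))$. You should insert this padding/merging of low-cost algorithms into a single $\Theta(T)$-cost bucket before your dyadic bucketing; without it, the claimed $c = O(\log(1/\gamma))$ does not hold and the stated bound is not reached. (A small secondary point: for buckets with $\pi_k < \gamma/(8c)$ your prescribed $\epsilon_k$ exceeds $1$, so amplitude estimation is vacuous there; you should note that such buckets are simply dropped, contributing at most $\gamma/8$ of additional error in aggregate, which the remaining bias slack absorbs.)
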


\begin{proof}
For the upper bound,
let $Q$ be a quantum algorithm computing $f$ to
error $1/3$ using $\Q(f)$ queries.
Let $Q'$ be the probabilistic quantum
algorithm which runs $\Q(f)$ with probability $3\gamma$
and otherwise uses no queries and guesses the output at random
(with probability $1/2$ for outputting both $0$ and $1$).
The probability of error of $Q'$ is at most
$(1/2)(1-3\gamma)+(1/3)(3\gamma)=(1/2)(1-\gamma)$,
which means its bias is at least $\gamma$
on every input. The expected
number of queries $Q'$ uses is $3\gamma\Q(f)$. Hence we
have $\PQ_{\dot{\gamma}}(f)\le 3\gamma\Q(f)$.

For the lower bound, we start with a probabilistic
quantum algorithm $P$ which achieves worst-case bias
$\gamma$ and has cost $|p|=\PQ_{\dot{\gamma}}(f)$,
and make several modifications to it.
First, we remove from the support of $P$ all quantum
algorithms which use more than $2|P|/\gamma$
queries, and we replace them with a $0$-query
quantum algorithm that guesses the answer at random
(with $1/2$ probability on outputs $0$ and $1$).
This gives us a probabilistic quantum algorithm $P_1$
which uses at most $2|P|/\gamma$ queries even
in the worst case, and has $|P_1|\le|P|$ and the worst-case
bias of $P_1$ is at least $\gamma/2$
(since by Markov's inequality, the probability
mass over the removed quantum algorithms was at most
$\gamma/2$, and they could have had bias at most $1$
which turned into bias $0$, decreasing the overall bias
by at most $\gamma/2$).

Next, we modify $P_1$ to get a probabilistic algorithm
$P_2$ which always uses a number of queries which
is a power of $2$. This can be done simply by
increasing the number of queries each algorithm in the support
of $P_1$ makes (and ignoring the extra queries).
This way, we have $|P_2|\le 2|P_1|\le 2|P|$, the
largest number of queries $P_2$ can make is at most
$4|P|/\gamma$, and the bias of $P_2$ is at least $\gamma/2$
on every input.

Further, we modify $P_2$ to get a probabilistic quantum
algorithm $P_3$ which always uses at least $8|P|$
queries (but still only uses a number of queries
which is a power of $2$). This can be done
by again increasing the number of queries a quantum
algorithm in the support of $P_2$ makes, when necessary.
This adds at most an additive $16|P|$ queries
(since the smallest power of $2$ which is at least $8|P|$
is smaller than $16|P|$). Hence $|P_3|<|P_2|+16|P|\le 18|P|$.
Note that $P_3$ achieves bias at least $\gamma/2$
on every input, and that $P_3$ always uses
a number of queries which is a power of $2$ in the range
$[8|P|,4|P|/\gamma)$.

Finally, we modify $P_3$ to get $P_4$ which collapses
together all quantum algorithms in the support of $P_3$
that use the same number of queries. That is,
instead of placing support on two different quantum
algorithms which both use (say) $32$ queries, $P_4$
will place support on a single quantum algorithm which
implements the mixture of both. This does not affect
the number of queries or the bias of the algorithm.
Hence we have $|P_4|<18|P|$, and $P_4$ achieves bias
at least $\gamma/2$ on each input. Further,
$P_4$ has support on fewer than $\log(1/\gamma)$
quantum algorithms.

Next we introduce some notation for talking about $P_4$.
Let $L=\lfloor\log(1/\gamma)\rfloor$
and let $2^k$ be the smallest power of $2$ which is at least
$4|P|$. Let the quantum algorithms in the support of $P_4$
be $Q_1,Q_2,\dots,Q_L$, with $Q_i$ using $2^{k+i}$
queries for each $i$. Let $p_i$ be the probability
$P_4$ assigns to algorithm $Q_i$. Then $p_i\ge 0$
for all $i$, and $\sum_{i=1}^L p_i=1$. We also have
$\sum_{i=1}^L p_i2^{k+i}=|P_4|<18|P|$,
which means $\sum_{i=1}^L p_i 2^i<5$.
On input $x$, let $\alpha_i(x)$ be the probability
that $Q_i$ outputs $1$ when run on $x$, and let
$\beta_i(x)\coloneqq 1-2\alpha_i(x)$. This way,
$(-1)^{f(x)}\beta_i(x)$ is the bias of $Q_i$ when run on $x$.
Then $\sum_{i=1}^L p_i\beta_i(x)$ is $(-1)^{f(x)}$ times the bias
of $P_4$ on $x$, which means that it is negative if $f(x)=1$,
positive if $f(x)=0$, and satisfies
$\left|\sum_{i=1}^L p_i\beta_i(x)\right|\ge\gamma/2$.

We now wish to amplify $P_4$ from bias $\gamma/2$
to constant bias. To do so, it suffices to estimate
$\sum_{i=1}^L p_i\beta_i(x)$ to additive error less than $\gamma/2$,
and output the sign of this estimate. Our query budget
for this task will be roughly $|P|/\gamma$.
We know the values $p_i$, and seek to generate estimates
$\tilde{\beta}_i(x)$ for $\beta_i(x)$. We will say
an estimate $\tilde{\beta}_i(x)$ is \emph{good} if
$|\tilde{\beta}_i(x)-\beta_i(x)|\le 2^i\gamma/10$.
This way, if all $\tilde{\beta}_i(x)$ are good, our final estimate
for the sum will satisfy
\[\left|\sum_{i=1}^L p_i\tilde{\beta}_i(x)
    -\sum_{i=1}^L p_i\beta_i(x)\right|
=\left|\sum_{i=1}^L p_i(\tilde{\beta}_i(x)-\beta_i(x))\right|
\le \sum_{i=1}^L p_i|\tilde{\beta}_i(x)-\beta_i(x)|
\le \sum_{i=1}^L p_i 2^i\gamma/10
<\gamma/2,\]
where we used $\sum_i p_i 2^i<5$.

To generate $\tilde{\beta}_i(x)$, we use \thm{amplitude_estimation}
on algorithm $Q_i$ with $\epsilon=2^i\gamma/20$ and
$\delta=1/3L$. Since the query cost of $Q_i$
is $2^{k+i}$, this uses at most
$2000\cdot (2^k/\gamma)\cdot \ln(3L)$ queries.
Since $2^k<8|P|$ and $L\le\log (1/\gamma)$, this
costs $O(|P|/\gamma\cdot\log\log(1/\gamma))$.
The query cost of generating all $L$ estimates this way is
therefore $O(|P|/\gamma\cdot\log(1/\gamma)\log\log(1/\gamma))$.
The probability that any one estimate is not good
is at most $1/3L$ by our choice of $\delta$, so by the union bound,
all are good except with probability $1/3$; hence
we've given a quantum algorithm which achieves
worst-case bounded error for computing $f$, and
whose query cost is
$O(\PQ_{\dot{\gamma}}(f)/\gamma\cdot\log(1/\gamma)\log\log(1/\gamma))$,
as desired.
\end{proof}

Using this theorem, we now proceed to prove a strong minimax
theorem for $\PQ_{\dot{\gamma}}(f)$, showing that a single
hard distribution $\mu$ works to lower bound this measure
for all values of $\gamma$ at once.

\begin{theorem}
Fix a finite alphabet $\Sigma$ as well as $n\in\bN$.
Let $f$ be a Boolean-valued function with $\Dom(f)\subseteq\Sigma^n$.
Then there exists a distribution $\mu$ over $\Dom(f)$
such that for any $\gamma\in[0,1]$, we have
\[\PQ_{\dot{\gamma}}^\mu(f)\ge\gamma\cdot\tOmega(\Q(f)),\]
where the constants in the $\tOmega$ notation
are universal.
\end{theorem}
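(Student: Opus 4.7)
The plan is to proceed exactly as in the randomized query case (Theorem 4.1 and Corollary 4.10), with two key simplifications made possible by the linear amplification of (probabilistic) quantum algorithms established in Theorem 5.4: we can use $\bias_f$ directly as the score without detouring through forecasting algorithms or proper scoring rules, and Theorem 5.4 itself serves as the amplification ingredient. Concretely, I would apply Theorem 2.10 with $\mathcal{R}$ equal to the set of all probabilistic quantum algorithms, with $S \coloneqq \Dom(f)$, with $\cost(P, \mu) \coloneqq |P|$ (which does not depend on $\mu$), and with $\score(P, \mu) \coloneqq \bias_f(P, \mu) = \E_{x \sim \mu}[\bias_f(P, x)]$. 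The reason for working with \emph{probabilistic} quantum algorithms rather than ordinary quantum algorithms is precisely the motivation flagged at the beginning of this section: under the probabilistic model, mixing two algorithms with weights $\lambda, 1-\lambda$ produces an algorithm whose expected cost and bias are the corresponding convex combinations, so both $\cost$ and $\score$ become bilinear---and in particular semicontinuous and saddle---once $\mathcal{R}$ is embedded as a convex subset of $\bR^{2|S|}$ via the map $P \mapsto (|P|, \bias_f(P, \cdot))$ in the style of Theorem 4.1.

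Verifying the remaining hypotheses of Theorem 2.10 is then routine: any quantum algorithm for $f$ yields a $P$ with finite positive cost and positive bias on every $\mu$; any $P$ with $|P| = 0$ must be supported on $0$-query algorithms and thus has zero cost on every $\mu$; and by linearity, mixing a zero-cost and a positive-cost $P$ produces a positive-cost $P$. Theorem 2.10 therefore yields a distribution $\mu$ on $\Dom(f)$ with
\[
\inf_{P \in \mathcal{R}} \frac{|P|}{\bias_f(P, \mu)^{+}} \;=\; \inf_{P \in \mathcal{R}} \max_{x \in \Dom(f)} \frac{|P|}{\bias_f(P, x)^{+}}.
\]

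To finish, I lower bound the right-hand side via Theorem 5.4. For any $P$ with positive worst-case bias, set $\gamma^*(P) \coloneqq \min_{x \in \Dom(f)} \bias_f(P, x) > 0$; then the max equals $|P|/\gamma^*(P)$, and by definition $|P| \ge \PQ_{\dot{\gamma^*}}(f)$, so Theorem 5.4 gives $|P|/\gamma^*(P) \ge \PQ_{\dot{\gamma^*}}(f)/\gamma^*(P) = \tOmega(\Q(f))$, with any dependence on $1/\gamma^*$ absorbed into the hidden polylog factors. Algorithms of nonpositive worst-case bias contribute $\infty$ to the infimum, so the right-hand side is at least $\tOmega(\Q(f))$. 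Plugging this back into the left-hand side, any probabilistic quantum algorithm $P$ achieving bias at least $\gamma$ on $\mu$ satisfies $|P|/\gamma \ge |P|/\bias_f(P, \mu)^{+} \ge \tOmega(\Q(f))$, i.e.\ $|P| \ge \gamma \cdot \tOmega(\Q(f))$, which is precisely $\PQ^\mu_{\dot\gamma}(f) \ge \gamma \cdot \tOmega(\Q(f))$. The main obstacle in this outline is the initial convex-embedding step---formally certifying that probabilistic quantum algorithms sit as a convex subset of a real topological vector space on which $\cost$ is linear---which is exactly the reason the paper introduced the probabilistic model; once that step is in place the rest is a direct composition of the ratio minimax theorem with the amplification bound.
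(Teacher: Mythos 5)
Your overall setup is exactly the paper's: apply \thm{main_minimax} to probabilistic quantum algorithms with $\cost(P,\mu)=|P|$ and $\score(P,\mu)=\bias_f(P,\mu)$, check well-behavedness, and then convert the left-hand side of the resulting minimax identity into a statement about $\Q(f)$ via the amplification result \thm{PQ_Q}. But the phrase ``with any dependence on $1/\gamma^*$ absorbed into the hidden polylog factors'' hides a genuine gap. \thm{PQ_Q} gives $\PQ_{\dot{\gamma^*}}(f)/\gamma^* \ge \Omega\bigl(\Q(f)/(\log(1/\gamma^*)\log\log(1/\gamma^*))\bigr)$, where the polylog is in $1/\gamma^*$, whereas the target bound $\tOmega(\Q(f))$ must have its polylog factors in $\Q(f)$ with \emph{universal} constants (as the theorem statement requires). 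A priori the infimum over $P$ allows $\gamma^*(P)$ to be arbitrarily small, so your chain of inequalities could drive the left-hand side to $0$; the two quantities are only comparable once you know $\gamma^*(P)$ is bounded below by something like $1/\poly(\Q(f))$.

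The paper devotes a nontrivial chunk of the proof to precisely this issue: it first observes the left-hand side is at most $3\Q(f)$ (by taking $\gamma=1/3$), then argues that any $P$ achieving ratio below $3\Q(f)$ with $\gamma^*(P) < 1/(6\Q(f))$ must have $|P|<1/2$ and hence place nonzero mass on zero-cost algorithms; one then cancels the balanced part of the zero-cost mass to obtain a $P''$ with the same cost/bias ratio but $|P''|>1/2$, which forces $\gamma^*(P'')\ge 1/(6\Q(f))$. Only after restricting the infimum to $\gamma^*\ge 1/(6\Q(f))$ can \thm{PQ_Q} be invoked with $\log(1/\gamma^*) = O(\log\Q(f))$ to conclude the left-hand side is $\Omega(\Q(f)/(\log\Q(f)\log\log\Q(f)))$. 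Without this argument your proof does not go through. (Your flagged ``main obstacle''---the convex embedding---is in fact the routine part, handled exactly as in \thm{randomized_query_minimax}; the real subtlety is the lower bound on $\gamma^*$.)
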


As usual, the notation $\PQ_{\dot{\gamma}}^\mu(f)$
denotes the expected cost
of a probabilistic quantum algorithm which is required to achieve
bias at least $\gamma$ against $\mu$ (rather than in the worst case);
that is, the algorithm and the bias level $\gamma$ are both
allowed to depend on the distribution $\mu$.
Note that since $\PQ(f)$ is always smaller than $\Q(f)$
for any given bias level, this implies \thm{minimax_quantum}.

\begin{proof}
Fix $\Sigma$, $n$, and $f$. Let $\mathcal{R}$ be the set of all
probabilistic quantum algorithms for computing $f$.
For each $P\in\mathcal{R}$ and each distribution $\mu$
over $\Dom(f)$, define $\cost(P,\mu)\coloneqq |P|$
and define $\score(P,\mu)$ to be the bias $P$ makes against
distribution $\mu$ for computing $f$ (this will be in the range
$[-1,1]$). We will use \thm{main_minimax}. It is clear that
$\mathcal{R}$ is convex, and that $\Dom(f)$ is a nonempty finite set.
Let $\Delta$ denote the set of all probability distributions
over $\Dom(f)$. Then $\cost$ and $\score$ are
continuous functions $\mathcal{R}\times\Delta\to\bR$, with
$\cost(\cdot,\cdot)$ always non-negative, and both functions
are linear in both variables. These functions
are well-behaved, since finite cost and score can be achieved
(some quantum algorithm computes $f$ with positive bias),
the cost is independent of the input,
and mixing a zero-cost algorithm with a nonzero-cost algorithm
gives a nonzero-cost algorithm. Hence \thm{main_minimax}
gives us
\[\inf_{P\in\mathcal{R}}\max_{x\in\Dom(f)}
    \frac{|P|}{\score(P,x)^{+}}
=\max_{\mu\in\Delta}\inf_{P\in\mathcal{R}}
    \frac{|P|}{\score(P,\mu)^{+}},\]
where we use the convention $r/0=\infty$ for all $r\in\bar{\bR}$.

We simplify the left-hand side. For a probabilistic quantum
algorithm $P$, use $\bias_f(P)$ to denote its worst-case bias,
that is, $\bias_f(P)\coloneqq \min_{x\in\Dom(f)}\score(P,x)$.
Then the left-hand side is the infimum over $P$ of
$|P|/\bias_f(P)^{+}$. Since a probabilistic algorithm $P$
with $\bias_f(P)\le 0$ will never be selected in this infimum,
the left-hand side is equal to
\[\inf_{\gamma\in(0,1]}\inf_{P\in\mathcal{R}_\gamma}
    \frac{|P|}{\gamma},\]
where $\mathcal{R}_\gamma$ denotes the set of all probabilistic
quantum algorithms which achieve worst-case bias at least $\gamma$. The inner infimum is the definition of
$(1/\gamma)\cdot \PQ_{\dot{\gamma}}(f)$,
so the left-hand side equals $\inf_{\gamma\in(0,1]}\PQ_{\dot{\gamma}}(f)/\gamma$.

Note that this is at most $3\Q(f)$
by picking $\gamma=1/3$ and using $\PQ(f)\le\Q(f)$.
We claim there is no reason to use any $\gamma\in(0,1/6\Q(f))$
in the infimum.
The reason is that if $P$ is a probabilistic quantum
algorithm achieving worst-case bias at least $\gamma$
such that $|P|/\gamma<3\Q(f)$, and if $\gamma<1/6\Q(f)$,
it means that $P$ has nonzero support on zero-cost
quantum algorithms. Without loss of generality,
we can assume $P=aP_0+bP_1+(1-a-b)P'$, where $P_0$
is a zero-cost algorithm that always outputs $0$,
$P_1$ is a zero-cost algorithm that always outputs $1$,
and $P'$ is a probabilistic algorithm with no support
on zero-cost algorithms. Let $c=\min\{a,b\}$,
and write $P=2cZ+(1-2c)P''$, where $Z$ is the $0$-cost
algorithm which is an even mixture of $P_0$ and $P_1$.
Then it is not hard to see that
$|P|=(1-2c)|P''|$ and $\score(P,\mu)=(1-2c)\score(P'',\mu)$
for all $\mu$. This means that $P''$ has the same
cost-to-score ratio as $P$ for all distributions $\mu$.
Hence we can always use $P''$ in place of $P$ for the
infimum. Further, supposing without loss of generality
that $b\ge a$, we have $P''=(b-a)P_1+(1-b+a)P'$.
Since $f$ is not constant, let $x$ be an input on
which $f(x)$ disagrees with $P_1(x)$ (that is, a $0$-input).
Then note that if $b-a\ge 1/2$, the algorithm $P''$
cannot output $0$ on $x$ with probability above $1/2$,
so $\score(P'',x)\le 0$ and $P''$ will not be used
in the infimum. On the other hand, if $b-a<1/2$,
we have $|P''|=(1-b+a)|P'|> (1/2)\cdot 1=1/2$, as
$P'$ does not place weight on algorithms which make $0$ queries.
Now, unless $P''$ achieves worst-case bias at least $1/(6\Q(f))$,
its ratio of cost to score would be greater than $3\Q(f)$, which
we already know is achievable.

This means we only need
to use $\gamma>1/(6\Q(f))$ in the infimum.
Thus the left-hand side equals
\[\inf_{\gamma\in[\frac{1}{6\Q(f)},1]}
    \frac{\PQ_{\dot{\gamma}}(f)}{\gamma}.\]
Using \thm{PQ_Q}, this is at least
\[\inf_{\gamma\in[\frac{1}{6\Q(f)},1]}
    \frac{\Q(f)}{C\log(1/\gamma)\log\log(1/\gamma)}\]
for some universal constant $C$. The above is clearly
optimized at $\gamma=1/(6\Q(f))$, which means the left-hand
side is at least
$\Omega\left(\frac{\Q(f)}{\log\Q(f)\log\log\Q(f)}\right)$.

Looking at the right hand side,
we see that there exists a distribution
$\mu$ such that every probabilistic quantum algorithm $P$
satisfies $|P|/\score(P,\mu)^{+}\ge \tOmega(\Q(f))$,
from which the desired statement follows.
\end{proof}

\subsection{Abstraction of the query complexity argument}\label{sec:abstraction}

We note that the argument we used to prove the existence
of the hard distribution for quantum query complexity
only used a few properties of quantum algorithms.
Since we will want to apply the same argument to
quantum communication, polynomial degree, and logrank,
it makes sense to step back and provide an abstraction
of this argument to more general models.

In general, we will consider Boolean-valued functions $f$
with a finite input set $\Dom(f)$.
We will have a set $\mathcal{A}$ of algorithms
that may attempt to compute $f$. Formally, we will need
$\mathcal{A}$ to be a subset of a real vector space.
Each $A\in\mathcal{A}$ will have an associated \emph{cost},
denoted $|A|$, with $|\cdot|\colon \mathcal{A}\to[0,\infty)$.
We write $\mathcal{A}_T$ to denote the set
$\{\,A\in\mathcal{A}:|A|\le T\,\}$.

For an algorithm $A\in\mathcal{A}$
and an input $x\in\Dom(f)$, we let $\bias_f(A,x)$
denote the bias of algorithm $A$ on input $x$.
For now, the only property we need of the bias
is that it is a function
$\bias_f\colon\mathcal{A} \times \Dom(f)\to[-1,1]$.
The worst-case bias of an algorithm $A$ will be
denoted $\bias_f(A)\coloneqq \min_{x\in\Dom(f)}\bias_f(A,x)$.
If $\mu$ is a distribution over $\Dom(f)$,
we will further write
$\bias_f(A,\mu)\coloneqq\bE_{x\sim\mu}[\bias_f(A,x)]$.
Similarly, if $P$ is a probability distribution
over $\mathcal{A}$ with finite
support, we denote $\bias_f(P,\mu)
\coloneqq\bE_{A\sim P}\bE_{x\sim\mu}[\bias_f(A,x)]$
and $\bias_f(P)\coloneqq\min_{x\in\Dom(f)}\bias_f(P,x)$.
We also set $|P|\coloneqq\bE_{A\sim P} |A|$.
Finally, we define
$M(f)\coloneqq \inf_{A\in\mathcal{A}:\bias_f(A)\ge 1/3}|A|$.

So far, this setting is extremely general, capturing many
computational models. For the quantum-style strong minimax
to work, we will need the following properties to also hold
for a given function $f$.

\begin{enumerate}
    \item $\mathcal{A}_T$ is convex for each $T\in[0,\infty)$,
    and $\bias_f(\cdot,x)$ is linear over
    $A\in\mathcal{A}_T$ for each $x\in\Dom(f)$.
    \label{itm:convex}
    \item There exists some 
    $A\in\mathcal{A}$ such that $\bias_f(A)\ge 1/3$.
    (Equivalently, $M(f)<\infty$.)
    \label{itm:finite}
    \item All $A\in\mathcal{A}$ with $|A|<1$ have
    $|A|=0$, and $\mathcal{A}_0$ is the convex hull
    of exactly two algorithms, $Z_0$ and $Z_1$.
    For each $x\in\Dom(f)$, we also have
    $\bias_f(Z_0,x)=-\bias_f(Z_1,x)=\pm 1$,
    and if $f$ is not constant, $\bias_f(Z_0,x)$ attains both values
    $1$ and $-1$ for $x\in\Dom(f)$.
    \label{itm:zero_cost}
    \item Suppose $P$ is a probability distribution
    over $\mathcal{A}$ that has support $\{A_1,A_2,\dots,A_k\}$,
    with probability $p_i$ for $A_i$, such that (a)
    $|A_i|\le 2^i T$ for some $T\in[1/10,\infty)$, (b)
    $\sum_i 2^i p_i\le 5$, and (c) $\bias_f(P)\ge 2^{-k-1}$.
    Then there is some $A\in\mathcal{A}$ with $\bias_f(A)\ge 1/3$
    and $|A|\le 2^kT\cdot \poly(k)$
    (with the constants in the $\poly$ being universal).
    \label{itm:amplify}
\end{enumerate}

We note that \itm{convex} essentially requires the computational
model to be randomized (or, in communication complexity, to
have public randomness). \itm{finite} only says
that each function can be computed by some finite-cost algorithm.
\itm{zero_cost} says that algorithms with cost less than $1$
cannot look at the input, and therefore have cost $0$
and must either always output $0$ or always output $1$
(or some convex combination of the two).

The main important point is \itm{amplify}.
This point amplifies a certain restricted type of
low-bias probability distribution over algorithms into
a full-blown constant-bias algorithm, and the cost of amplification
is nearly linear in one over the bias.

We now prove that these points together suffice to guarantee
the existence of a strongly-hard distribution.
To start, we establish the following lemma, which says
that if \itm{amplify} holds -- meaning we can amplify the restricted
type of probabilistic algorithms -- then we can amplify all
probabilistic algorithms.

\begin{lemma}\label{lem:amplify_many_from_few}
Suppose $f$ and $\mathcal{A}$ satisfy the above
conditions. Let $P$ be any finite-support probability
distribution over $\mathcal{A}$ with $\bias_f(P)>0$. Then
\[M(f)\le \frac{|P|}{\bias_f(P)}\cdot\polylog(1/\bias_f(P)).\]
\end{lemma}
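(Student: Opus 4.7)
The plan is to follow the same sequence of transformations as in the proof of \thm{PQ_Q}, checking that each step only invokes the abstract properties \itm{convex}--\itm{amplify} rather than any quantum-specific structure. Write $\gamma \coloneqq \bias_f(P)$ and $T \coloneqq |P|$. The goal is to reshape $P$ into a probability distribution $P'$ that meets the structural hypotheses of \itm{amplify} with cost parameter $\Theta(T)$ and support size $L = O(\log(1/\gamma))$; then \itm{amplify} directly produces an algorithm of cost $(T/\gamma) \cdot \polylog(1/\gamma)$ with bias at least $1/3$, yielding the claimed bound on $M(f)$.

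I would perform the reshaping in two stages. First, a truncation: by Markov's inequality, algorithms in the support of $P$ with cost exceeding $2T/\gamma$ carry total mass at most $\gamma/2$, and I replace each of them by the zero-cost mixture $\tfrac{1}{2} Z_0 + \tfrac{1}{2} Z_1$, which by \itm{zero_cost} and the linearity of bias from \itm{convex} has bias identically $0$ on every input. The resulting $P_1$ satisfies $|P_1| \le T$ and $\bias_f(P_1) \ge \gamma/2$. Second, a bucketing step: set $T_0 \coloneqq \max(8T,\,1/10)$ and $L \coloneqq \lceil \log_2(4/\gamma) \rceil$. For $j = 1,\dots,L$, let $B_j$ consist of the algorithms in the support of $P_1$ whose cost lies in $(2^{j-1} T_0,\, 2^j T_0]$, with all remaining zero-cost algorithms additionally assigned to $B_1$. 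Using the convexity of $\mathcal{A}_{2^j T_0}$ from \itm{convex}, I collapse each non-empty $B_j$ to a single representative $A_j \in \mathcal{A}_{2^j T_0}$ obtained as the $P_1$-weighted convex combination of its members; linearity of bias ensures this pointwise preserves both the bias and the expected cost. After discarding empty buckets and re-indexing (which inflates $T_0$ by at most a $2^L$ factor that will be absorbed into the final $\polylog$), the distribution $P'$ has support $\{A_1,\dots,A_k\}$ with some $k \le L$.

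At this point I would verify the three hypotheses of \itm{amplify}. For (a), $|A_j| \le 2^j T_0$ by construction and $T_0 \ge 1/10$. For (b), each $A_j$ with $j \ge 2$ satisfies $|A_j| \ge 2^{j-1} T_0$ because $B_j$ excludes smaller costs, whence $\sum_{j \ge 2} p_j 2^j \le 2 |P'|/T_0 \le 2T/(8T) = 1/4$, while the $j=1$ term contributes at most $2 p_1 \le 2$, so the total is bounded by $5/2 < 5$. For (c), $\bias_f(P') = \bias_f(P_1) \ge \gamma/2 \ge 2^{-L-1}$ by the choice of $L$. Invoking \itm{amplify} produces $A \in \mathcal{A}$ with $\bias_f(A) \ge 1/3$ and $|A| \le 2^k T_0 \cdot \poly(k) \le 2^L T_0 \cdot \poly(L) = (T/\gamma) \cdot \polylog(1/\gamma)$, and the definition $M(f) \le |A|$ completes the argument.

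The main technical obstacle is that the abstract framework, unlike the concrete quantum-circuit setting, provides no direct ``pad with no-ops'' operation to align each algorithm to an exact prescribed power-of-two cost. The workaround, as above, is to exploit the inequality form $|A_j| \le 2^j T_0$ in \itm{amplify} and let the bucketing itself supply the matching lower bound $|A_j| \ge 2^{j-1} T_0$ needed to control $\sum_j p_j 2^j$. Empty buckets are handled by deletion and relabeling, which only loses a multiplicative constant that is absorbed into the final $\polylog(1/\gamma)$ factor.
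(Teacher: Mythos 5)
Your plan follows the paper's proof essentially verbatim: truncate via Markov, bucket the surviving algorithms by cost into $\Theta(\log(1/\gamma))$ geometric bins, collapse each bin by convexity of $\mathcal{A}_{2^jT_0}$, and feed the result into condition \itm{amplify}. That is exactly the paper's argument, and the overall structure is sound.

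There is, however, a reasoning gap in your verification of condition (b). You assert that the collapsed representative $A_j$ of bin $B_j$ satisfies $|A_j|\ge 2^{j-1}T_0$ ``because $B_j$ excludes smaller costs,'' and that $|P'|\le T$, and combine these to bound $\sum_{j\ge 2}p_j 2^j$. Neither claim follows from the abstract conditions. Convexity of $\mathcal{A}_{T}$ only says the cost of a mixture is at most the maximum of the ingredient costs; it does not give a lower bound, and it does not make $|\cdot|$ linear (so your earlier remark that collapsing ``pointwise preserves \dots the expected cost'' is also false in general). Concretely, in the $\gamma_2$-norm model one can have two high-cost matrices whose average is a multiple of $J$ and thus has cost $0$. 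The correct way to establish (b) -- and the one the paper implicitly intends -- is to use the \emph{pre-collapse} costs: every original algorithm $A$ in $B_j$ with $j\ge 2$ has $|A|>2^{j-1}T_0$, so $|P_1|\ge\sum_{j\ge 2}p_j\, 2^{j-1}T_0$, giving $\sum_{j\ge 2}p_j 2^j\le 2|P_1|/T_0\le 2T/(8T)=1/4$, and together with $2p_1\le 2$ this gives $\sum_j p_j2^j<5$. This is a one-line fix, but your stated chain of reasoning does not hold in the abstract framework, which is the whole point of \thm{quantum_abstraction}.

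Two smaller remarks. First, your $B_1$ should catch all algorithms of cost at most $2T_0$, not only the zero-cost ones; as written, algorithms with cost in $[1,T_0]$ are in no bin. Second, the re-indexing step and the claim that it ``inflates $T_0$ by at most a $2^L$ factor that will be absorbed into the final $\polylog$'' is off: $2^L=\Theta(1/\gamma)$ is not polylogarithmic. The cleanest route is to keep all $L$ slots (possibly with $p_j=0$ and a dummy $\tfrac12 Z_0+\tfrac12 Z_1$ representative), so that $k=L$ and $T'=T_0$; alternatively, if you insist on deleting empty bins, the inflation factor is really $2^{L-k}$ and it is the product $2^k\cdot 2^{L-k}=2^L$ that supplies the $1/\gamma$ factor, not the polylog.
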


\begin{proof}
The proof of this will be directly analogous to the quantum
query case. We convert $P$ into the restricted form of
\itm{amplify}, being careful to lose only a constant factor
in the bias and in the cost.
Let $\gamma\coloneqq\bias_f(P)>0$. We first use Markov's
inequality to argue that the total probability mass $P$
places on algorithms $A$ of cost $|A|\ge 2|P|/\gamma$
is at most $\gamma/2$, and hence discarding all such algorithms
from the support of $P$ decreases its bias by at most
$\gamma/2$ (while not increasing its cost).
Next, we group the remaining algorithms in the support
of $P$ into $\log(1/\gamma)$ bins: one bin
for algorithms of cost $0$ to $2T$
(with $T$ equal to something like $4|P|$),
and one additional bin for algorithms of cost
$2^iT$ to $2^{i+1}T$ for $i$ between $1$ and $\log(1/\gamma)$.
Within each bin, we use the convexity of $\mathcal{A}_{2^{i}T}$
to replace the entire bin with a single algorithm
(whose cost is up to the upper boundary of that bin).
For the first bin, this increases the cost $|P|$
by up to an additive $O(T)$, while for the other bins,
this increases the cost by up to a factor of $2$. Altogether,
we have only $\log(1/\gamma)$ algorithms
remaining in the support, and setting $k=\log(1/\gamma)$
it is not hard to check that the conditions in \itm{amplify}
are satisfied.
\end{proof}

\begin{theorem}\label{thm:quantum_abstraction}
Suppose $f$ and $\mathcal{A}$ satisfy the above conditions.
Then there exists a distribution $\mu$ over $\Dom(f)$
such that for any finite-support probability distribution $P$
over $\mathcal{A}$, we have
\[\bias_f(P,\mu)\le O(M(f)/|P|\cdot\polylog M(f)).\]
In particular, if $M_{\dot{\gamma}}^\mu(f)$ denotes the infimum
cost $|A|$ over algorithms $A\in\mathcal{A}$ with
$\bias_f(A,\mu)\ge\gamma$, then for all $\gamma\in(0,1/3)$ we have
\[M_{\dot{\gamma}}^\mu(f)\ge\gamma\cdot\tOmega(M(f)).\]
\end{theorem}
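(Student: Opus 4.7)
The plan is to apply the general ratio minimax theorem \thm{main_minimax} and then convert the resulting statement into the claimed form using the amplification bound of \lem{amplify_many_from_few}. First I set up the minimax. Let $\mathcal{R}$ be the set of finite-support probability distributions over $\mathcal{A}$, viewed as a convex subset of the real vector space formally spanned by $\mathcal{A}$, and let $\Delta$ denote the set of all probability distributions on the finite set $\Dom(f)$. Define $\cost(P,\mu)\coloneqq |P|$ and $\score(P,\mu)\coloneqq \bias_f(P,\mu)$, both of which are linear in each variable separately, hence continuous and saddle. The well-behaved conditions are routine to verify: property~\itm{finite} supplies an algorithm of finite cost and positive bias, so finite cost and score are achievable; the cost $|P|$ is independent of $\mu$, so the zero-cost condition is automatic; and linearity of $|\cdot|$ in $P$ ensures mixing zero- with nonzero-cost distributions yields nonzero cost. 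The theorem then yields
\[\adjustlimits\inf_{P\in\mathcal{R}}\max_{x\in\Dom(f)}\frac{|P|}{\bias_f(P,x)^+} \;=\; \adjustlimits\max_{\mu\in\Delta}\inf_{P\in\mathcal{R}}\frac{|P|}{\bias_f(P,\mu)^+},\]
with the outer maximum attained at some $\mu^*$.

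Next I will lower-bound the inf-sup side by $\tOmega(M(f))$. Taking $P$ to be a point mass on an algorithm $A$ with $\bias_f(A)\ge 1/3$ and $|A|$ within $\epsilon$ of $M(f)$ shows the inf-sup is at most $3M(f)+O(\epsilon)$, so I only need to consider $P$ with $|P|/\bias_f(P)^+ \le 4M(f)$ in the infimum. For any such $P$, $\bias_f(P)>0$ and $|P|\le 4M(f)$. If $|P|<1$, property~\itm{zero_cost} forces $P\in\Conv(Z_0,Z_1)$; since $f$ is nonconstant, property~\itm{zero_cost} further guarantees that the worst-case bias of any convex combination of $Z_0,Z_1$ is nonpositive, contradicting $\bias_f(P)>0$. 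Thus $|P|\ge 1$ and $\bias_f(P)\ge |P|/(4M(f))\ge 1/(4M(f))$. Now \lem{amplify_many_from_few} applies and delivers $M(f)\le (|P|/\bias_f(P))\cdot\polylog(1/\bias_f(P))\le (|P|/\bias_f(P))\cdot\polylog M(f)$, so $|P|/\bias_f(P)\ge M(f)/\polylog M(f)$ uniformly over the admissible $P$.

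Combining, the hard distribution $\mu^*$ produced by the max-inf side satisfies $|P|/\bias_f(P,\mu^*)^+ \ge \tOmega(M(f))$ for every $P\in\mathcal{R}$, which rearranges to $\bias_f(P,\mu^*)\le \tO(|P|/M(f))$, the desired bound (I am reading the theorem statement with this intended orientation, matching the ``In particular'' consequence). The ``In particular'' clause follows by specializing to $P$ a point mass on any $A$ witnessing $\bias_f(A,\mu^*)\ge\gamma$, yielding $|A|\ge\gamma\cdot\tOmega(M(f))$. The main subtlety I anticipate is making the exclusion of low-cost $P$ airtight via property~\itm{zero_cost} so that the $\polylog(1/\bias_f(P))$ factor in the amplification lemma collapses cleanly to $\polylog M(f)$ instead of blowing up; all the other ingredients are mechanical applications of the abstract framework already set up in the excerpt.
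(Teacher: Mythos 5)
Your overall plan matches the paper's proof (same minimax setup, same well-behavedness check, same invocation of \lem{amplify_many_from_few}), and the ``In particular'' deduction at the end is fine, but the exclusion of low-cost $P$ --- the step you yourself flagged as the main subtlety --- has a genuine gap. You claim that $|P|<1$ forces $P\in\Conv(Z_0,Z_1)$ via property~\itm{zero_cost}. That property constrains \emph{individual} algorithms $A\in\mathcal{A}$ (every $A$ with $|A|<1$ has $|A|=0$), whereas $|P|=\bE_{A\sim P}|A|$ is an \emph{expected} cost; a distribution $P$ can easily have $|P|<1$ while placing positive weight on expensive algorithms. Concretely, let $Z=\tfrac12 Z_0+\tfrac12 Z_1$, let $A^\star\in\mathcal{A}$ be a bounded-error algorithm with $|A^\star|\approx M(f)$, and let $P$ place weight $1-\epsilon$ on $Z$ and weight $\epsilon$ on $A^\star$. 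Then $|P|=\epsilon|A^\star|<1$ for small $\epsilon$, $P\notin\Conv(Z_0,Z_1)$, $\bias_f(P)\approx\epsilon/3>0$, and $|P|/\bias_f(P)^+\approx 3M(f)$ is within your truncation threshold; yet $\bias_f(P)$ is arbitrarily small, so $\polylog(1/\bias_f(P))=\polylog(1/\epsilon)$ is not bounded by $\polylog M(f)$, and your chain $M(f)\le(|P|/\bias_f(P))\cdot\polylog(1/\bias_f(P))$ fails to give the uniform lower bound $|P|/\bias_f(P)\ge M(f)/\polylog M(f)$.

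The paper closes this hole with a cancellation argument rather than a cost bound on $P$ itself. Write $P=aZ_0+bZ_1+(1-a-b)P'$ with $P'$ supported on cost-$\ge 1$ algorithms, set $c=\min\{a,b\}$, and cancel the symmetric mass on $Z_0,Z_1$: it contributes zero cost and, since $\bias_f(Z_0,\cdot)=-\bias_f(Z_1,\cdot)$, zero bias, so the resulting $P''$ has cost-to-bias ratio no worse than $P$. Crucially, whenever $\bias_f(P'')>0$ one must have $|P''|>1/2$: if the leftover asymmetric weight $|b-a|$ on one of $Z_0,Z_1$ were $\ge 1/2$, the worst-case bias of $P''$ would be forced $\le 0$ on the input where that constant algorithm errs (such an input exists because $f$ is non-constant), while if $|b-a|<1/2$, then weight more than $1/2$ sits on $P'$, whose support has cost $\ge 1$. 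This yields $\bias_f(P'')\ge\Omega(1/M(f))$, hence $\polylog(1/\bias_f(P''))=\polylog M(f)$, and the amplification lemma gives the desired bound. Your proof needs this replacement step (or an equivalent): the infimum genuinely ranges over distributions of arbitrarily small expected cost and bias, and the amplification lemma alone does not control them.
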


\begin{proof}
The proof will be exactly the same as in the quantum query setting.
In the special case where $f$ is constant, the result trivially
follows as $M(f)=0$, so assume $f$ is not constant.

First, we let $\mathcal{R}$ the set of all finite-support probability
distributions over $\mathcal{A}$, and let $\Delta$
be the set of probability distributions over $\Dom(f)$.
Then we define $\cost\colon\mathcal{R}\times\Delta\to[0,\infty)$
by $\cost(P,\mu)\coloneqq |P|$, and
$\score\colon\mathcal{R}\times\Delta\to[-1,1]$ by
$\score(P,\mu)\coloneqq\bias_f(P,\mu)$.
Note that $\cost$ and $\score$ are both continuous and
linear in each variable. They are also well-behaved,
because $M(f)<\infty$ ensures finite cost and score
can be achieved, $\cost$ does not depend on $\mu$, and
cost is linear in $P$. Hence \thm{main_minimax} gives
\[\inf_{P\in\mathcal{R}}\max_{x\in\Dom(f)}
\frac{|P|}{\bias_f(P,x)^+}
=\max_{\mu\in\Delta}\inf_{P\in\mathcal{R}}
\frac{|P|}{\bias_f(P,\mu)^+}.\]

We examine the left-hand side. It equals
$\inf_{P\in\mathcal{R}}\frac{|P|}{\bias_f(P)^+}$.
We note that this infimum is at most $3M(f)$ by the definition
of $M(f)$. We now claim that there is no need to use
any $P$ in the infimum if $\bias_f(P)<1/(6M(f))$. To show this,
it suffices to show that there is no need to use
any $P$ in the infimum if $|P|<1/2$, because we know that
$3M(f)$ is attainable using only algorithms in $A$ with
cost at least $1$.

Now, suppose that $|P|<1/2$ and $\bias_f(P)>0$.
We can write $P=aZ_0+bZ_1+(1-a-b)P'$
where $P'$ has support only on $A\in\mathcal{A}$ with $|A|\ge 1$.
Define $P''\coloneqq (a-c)Z_0+(b-c)Z_1+(1-a-b+2c)P'$, where
$c=\min\{a,b\}$. Then as we showed in the quantum query case,
we have $|P''|/\bias_f(P'')\ge |P|/\bias_f(P)$.
Moreover, since $f$ is not constant, there is some input
$x\in\Dom(f)$ such that $\bias_f(Z_0,x)=-1$, and some input
$y\in\Dom(f)$ such that $\bias_f(Z_1,y)=-1$.
Since $\bias_f(P'')>\bias_f(P)>0$, and since $\bias_f(P')\le 1$,
we must have $(1-a-b+2c)>1/2$, meaning that $|P''|>1/2$,
as desired.

Hence the left-hand side equals
$\inf_{P\in\mathcal{R}'}\frac{|P|}{\bias_f(P)}$,
where $\mathcal{R}'$ is the set of all $P\in\mathcal{R}$
with $\bias_f(P)\ge 1/(6M(f))$. Using \lem{amplify_many_from_few},
we know that for each $P\in\mathcal{R}'$,
we have
$M(f)\le |P|/\bias_f(P)\cdot\polylog(1/\bias_f(P))
\le |P|/\bias_f(P)\cdot\polylog M(f)$. Hence the left-hand
side is at least $\frac{M(f)}{\polylog M(f)}$.
Finally, examining the right hand side,
we see that there is a distribution $\mu$ over
$\Dom(f)$ such that for all $P\in\mathcal{R}$,
we have $|P|\ge \bias_f(P)\cdot M(f)/\polylog M(f)$,
and the desired result follows.
\end{proof}

\subsection{Quantum communication complexity}
\label{sec:qcc}

To prove an analogous minimax for quantum communication complexity,
all we need is to show that quantum communication complexity
satisfies the four conditions from \sec{abstraction}.
It's easy to see that as long as there is public randomness
(whether or not there is also shared entanglement), the first
three conditions are satisfied. It remains to
deal with the fourth condition. Let $P$ be
a probability distribution over protocols
$\Pi_1,\Pi_2,\dots,\Pi_k$, which assigns probability $p_i$
to $\Pi_i$ and satisfies $|\Pi_i|\le 2^i T$,
$\sum_i 2^ip_i\le 10$, and $P$ achieves bias at least
$2^{-k-1}$ for computing communication function $F$
on any input $(x,y)\in\Dom(F)$. Our goal is to construct
a communication protocol which uses $T\cdot\tO(2^k)$
communication to compute $F$ to bounded error.

As in the quantum query case, all we need to do is
create a protocol $\Pi$ in which Alice and Bob estimate
the biases $\Pi_i(x,y)$ of the protocols $\Pi_i$ when run on
their inputs. Each estimate for protocol $i$ needs
to be within $2^{-(k-i)}/20$ of the correct bias,
and it must satisfy this property with probability at least
$1-1/3k$ (see the query complexity section
for a formal analysis). To achieve this, it suffices for
Alice and Bob to use amplitude estimation
from \thm{amplitude_estimation} to generate
an estimate of the probability $\Pi_i(x,y)$ outputs $1$.
Hence the only remaining difficulty is running
amplitude estimation of a communication protocol
in the communication complexity setting.

This turns out to be possible in both the shared-entanglement
and the non-shared-entanglement settings (though note that
we've already assumed shared randomness, so we cannot
handle the non-shared-randomness non-shared-entanglement
quantum communication complexity model).
The idea is to have one of the players, say Alice, take charge.
We will assume that Alice is the one who outputs the final
answer in $\Pi_i$. Then from Alice's point of view,
$\Pi_i(x,y)$ can be viewed as a unitary $U$ and a measurement
$M$ such that Alice needs Bob's help to apply $U$,
and after applying $U$ to a shared state $\ket{0}_A\ket{0}_B$,
Alice can apply the measurement $M$ on her side alone to get
the output $\Pi_i(x,y)$. Now, to apply amplitude estimation,
Alice only needs the ability to apply controlled
$U$, $U^\dagger$, and $(I-2M)$ operations. She can do the latter
alone. For controlled $U$ and $U^\dagger$ applications, she
needs Bob's help, but that's fine: she will just send him
a qubit each time alerting him to whether they are about to apply
$U$ or $U^\dagger$ to their shared state (Bob will
return that qubit afterwards to ensure coherence of Alice's
controlled applications of $U$ and $U^\dagger$).

We conclude the following theorem.

\begin{theorem}
\label{thm:quantum-communication}
Let $F\colon\mathcal{X}\times\mathcal{Y}\to\B$ be a
(possibly partial) communication function. Then there exists
a probability distribution $\mu$ over $\Dom(F)$ such that
for all $\gamma\in(0,1/3)$, we have
\[\QCC^\mu_{\dot{\gamma}}(F)\ge\gamma\cdot\tOmega(\QCC(F)).\]
Here $\QCC^\mu_{\dot{\gamma}}(F)$ denotes the minimum
amount of communication required by a quantum communication
protocol which achieves bias at least $\gamma$ against $\mu$.
This theorem works in both the shared entanglement setting
and in the shared-randomness, non-shared entanglement setting.
\end{theorem}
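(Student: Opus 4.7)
My plan is to derive \thm{quantum-communication} by a direct application of \thm{quantum_abstraction}, with $\mathcal{A}$ taken to be the class of shared-randomness quantum communication protocols for $F$ (optionally with shared entanglement), and $|\Pi|$ defined as the worst-case number of qubits communicated by $\Pi$. Once the four conditions of \sec{abstraction} are verified, the abstraction immediately yields the statement $\QCC^\mu_{\dot{\gamma}}(F)\ge \gamma\cdot\tOmega(\QCC(F))$.

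Conditions \itm{convex}, \itm{finite} and \itm{zero_cost} are essentially routine in the presence of public randomness: $\mathcal{A}_T$ is convex and bias is linear in mixture weights because Alice and Bob can first sample a shared index from the public randomness and then run the chosen protocol; finiteness is witnessed by the trivial protocol where Alice sends her entire input to Bob; and any protocol of cost $<1$ transmits no messages, so its output is input-independent and $\mathcal{A}_0$ is exactly the convex hull of the two constant protocols $Z_0$ and $Z_1$. The real work is in verifying the near-linear amplification condition \itm{amplify}.

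For \itm{amplify} I plan to mirror the quantum query argument in the proof of \thm{PQ_Q}. Given $P$ supported on $\{\Pi_1,\dots,\Pi_k\}$ with $|\Pi_i|\le 2^iT$, $\sum_i 2^ip_i\le 5$, and worst-case bias $\ge 2^{-k-1}$, I estimate each bias $\beta_i(x,y)$ of $\Pi_i$ to additive error $\Theta(2^{i-k})$ with failure probability at most $1/(3k)$ using \thm{amplitude_estimation}, and then output the sign of $\sum_i p_i\tilde\beta_i$. A union bound together with the bound $\bigl|\sum_i p_i(\tilde\beta_i-\beta_i)\bigr|\le \sum_i p_i\cdot \Theta(2^{i-k})$ delivers bounded error, and each amplitude-estimation call on $\Pi_i$ uses $O(2^{k-i}\log k)$ controlled invocations of $\Pi_i$, each costing $O(2^iT)$ communication, so the total cost is $2^kT\cdot\poly(k)$ as required.

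The one genuine obstacle is implementing amplitude estimation \emph{inside} the communication model, because \thm{amplitude_estimation} needs coherent access to controlled-$U_{\Pi_i}$, controlled-$U_{\Pi_i}^\dagger$, and to the reflection $I-2\Pi$ about Alice's accepting subspace. I designate Alice (the output player) as the controller: she holds the control qubit throughout the procedure and, at each message round of $\Pi_i$, she sends the relevant message register together with the control qubit to Bob, who performs his local operation coherently conditioned on the control and immediately returns the control so that the whole execution of $\Pi_i$ becomes a controlled unitary from Alice's perspective. Running this gadget in reverse realizes controlled-$U_{\Pi_i}^\dagger$, and the reflection $I-2\Pi$ is performed by Alice alone because the accepting register is local to her. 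The construction increases per-call communication by only a constant factor and is insensitive to whether Alice and Bob start with shared entanglement (entanglement figures only as an initial resource state), so condition \itm{amplify} holds with $\polylog(k)$ overhead in both models, and \thm{quantum_abstraction} then produces the desired hard distribution $\mu$.
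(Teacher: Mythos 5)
Your proposal is correct and follows essentially the same route as the paper: apply \thm{quantum_abstraction} after verifying the four conditions, with conditions \itm{convex}--\itm{zero_cost} being routine in the presence of public randomness and the real content being \itm{amplify}, which you establish via the same \thm{PQ_Q}-style bias-estimation scheme with amplitude estimation embedded into the communication model. Your description of the controlled-$U$/controlled-$U^\dagger$ gadget (Alice holds the control qubit, ships it alongside each message register so Bob can act coherently conditioned on it, and receives it back) is in fact a bit more explicit than the paper's terse phrasing and is a perfectly valid way to realize the constant-overhead coherent controlled applications that the paper sketches.
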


\section{Approximate polynomial degree and logrank}
\label{sec:polynomials}

As in the quantum case, polynomials can be amplified
linearly in the bias. However, also as in the quantum case,
the degree of polynomials is not convex: the degree of
the convex combination of $p_1$ and $p_2$ is the maximum
degree of $p_1$ and $p_2$, not the average degree.

The same ideas that worked for quantum query and communication
complexities will allow us to get a strong hard distribution
for approximate polynomial degree and approximate logrank.
The main difference will be how we do the estimation
of success probabilities:
instead of amplitude estimation, we will need a polynomial
variant of this, which turns out to be a little tricky.

\subsection{Approximate degree}

The approximate degree of a (possibly partial)
Boolean function $f\colon\B^n\to\B$
is the minimum degree of an $n$-variate polynomial $p$
which satisfies $|p(x)-f(x)|\le \epsilon$
for all $x\in\Dom(f)$, where $\epsilon$ is a parameter
representing the allowed error.
When $f$ is a partial function,
there are actually two different notions of polynomial
degree: one where $p$ is required to be bounded
on the entire Boolean hypercube
(that is, $p(x)\in[0,1]$ for all $x\in\B^n$,
even when $x\notin\Dom(f)$), and one where $p$
is not restricted outside the domain of $f$.
Our results will apply to both versions of polynomial
degree, but for conciseness, we restrict our attention
to the bounded version.

With polynomials, it is often convenient to switch
from talking about functions $f\colon\B^n\to\B$ to talking
about functions $f\colon\{+1,-1\}^n\to\{+1,-1\}$.
Note that by doing a simple variable substitution,
we can convert between $\B$ variables to $\{+1,-1\}$
variables without changing the degree of the polynomial.
That is, we can substitute $1-2x_i$ in place of the variable
$x_i$ inside $p$ to make it take $\B$ inputs instead
of $\{+1,-1\}$ inputs, and we can substitute
$(1-x_i)/2$ to go the other way. We can similarly change
the output of $p$ from being in the range $[0,1]$
to the range $[-1,1]$ and vice versa (the error
changes by a factor of $2$ when switching between
these bases). Another well-known observation
is that to approximate a Boolean function $f$,
we only need \emph{multilinear} polynomials,
and their degree only needs to be at most $n$.

To get our hard distribution, we will use \thm{quantum_abstraction}.
We need to check the four conditions, but using
polynomials as our ``algorithms''. More explicitly,
the set $\mathcal{A}$ will be the set of all real
$n$-variate multilinear bounded polynomials,
viewed in the $\{+1,-1\}$ basis
(bounded means that $p(x)\in[-1,1]$ for all
$x\in\{+1,-1\}^n$).
For a polynomial $p\in\mathcal{A}$, we define $\bias_f(p,x)$
to be $f(x)p(x)$. Then \itm{convex} holds,
as the set of polynomials of a given degree is convex
and $\bias_f(\cdot,x)$ is linear over that set.
\itm{finite} holds because every Boolean function
can be computed exactly by a polynomial of degree $n$.
Next, \itm{zero_cost} holds because polynomials
of degree less than $1$ have degree $0$, and since
we're dealing with bounded polynomials, these are a
convex combination of the two constant polynomials
$-1$ and $1$.

It remains to show \itm{amplify}. To this end,
let $P$ be a probability distribution over $k$
polynomials $q_1,q_2,\dots,q_k$, with $\deg(q_i)\le 2^iT$.
Let $p_i$ be the probability $P$ assigns to $q_i$,
and suppose $\sum_{i=1}^k 2^ip_i\le 5$.
Finally, suppose that $\bias_f(P)\ge 2^{-k-1}$.
Our goal is to find a polynomial $q$ of degree
at most $2^kT\cdot\poly(k)$ that computes $f$ to
constant error. To do so, we'll need a polynomial
version of the amplitude estimation algorithm
we did in the quantum case. That is, we'd like
to estimate the output that polynomial $q_i(x)$
returns, and do arithmetic computations on it.
Crucially, one of the arithmetic computations we'd
like to do is comparison, for example, to see if
$q_i(x)>0$. Such a comparison is not a polynomial operation,
so we cannot use the polynomial $q_i(x)$ itself.
What we'll do instead is to create polynomials
that compute the \emph{bits} of the binary expansion
of $q_i(x)$, to a certain precision. We will then
do arithmetic operations
using those bits, and we'll be able to implement
those operations using polynomials.

To do so, we'll need some approximation theory.
The following theorem, known as Jackson's theorem,
will be useful. It traces back to Jackson (1911)
\cite{Jac11}, but see also \cite{MMR94} (page 750,
Theorem 3.1.1) for some discussion and a more thorough
list of references.

\begin{theorem}[Jackson's theorem]\label{thm:jackson}
Let $\alpha\colon[-1,1]\to\bR$ be a continuous
function, and let $n\in\bN$. Then there is
a real polynomial $p$ of degree $n$ such that
for all $x\in[-1,1]$, we have
\[|p(x)-\alpha(x)|\le 6\cdot\sup_{|y-z|
\le 1/n}|\alpha(y)-\alpha(z)|.\]
\end{theorem}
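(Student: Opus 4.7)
The plan is to reduce to trigonometric approximation and then convolve with a Jackson-type kernel. I would first set $x = \cos\theta$ and define $g(\theta) := \alpha(\cos\theta)$, which is a continuous even $2\pi$-periodic function on $\mathbb{R}$. A classical observation is that any even trigonometric polynomial of degree $n$ is of the form $p(\cos\theta)$ for some algebraic polynomial $p$ of degree $n$. So it suffices to approximate $g$ uniformly to within $6\omega$ by an even trigonometric polynomial of degree $n$, where $\omega := \sup_{|y-z|\le 1/n}|\alpha(y)-\alpha(z)|$ is the modulus of continuity appearing in the statement.

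To construct that approximant, I would take $T(\theta) := \int_{-\pi}^{\pi} g(\theta - \phi) K_n(\phi)\,d\phi$, where $K_n$ is a non-negative, even trigonometric polynomial of degree at most $n$ normalized so that $\int_{-\pi}^{\pi} K_n = 1$. Because $K_n$ is even and $g$ is even, $T$ will be even and of trigonometric degree at most $n$, hence corresponds to an algebraic polynomial of degree $n$ in $\cos\theta$. Since $K_n$ is a probability density, the error decomposes as $|T(\theta) - g(\theta)| \le \int |g(\theta-\phi) - g(\theta)|\, K_n(\phi)\, d\phi$, and using the elementary subadditivity bound $|g(\theta-\phi) - g(\theta)| \le (1 + n|\phi|)\,\omega$, valid for any modulus of continuity, this becomes $\omega\,\bigl(1 + n\!\int |\phi|\, K_n(\phi)\,d\phi\bigr)$.

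The remaining task is to choose $K_n$ so that $n \int |\phi|\, K_n(\phi)\,d\phi \le 5$, which then delivers the quoted constant $6$. The standard choice is the Jackson kernel, a suitable multiple of $\bigl(\sin(m\phi/2)/\sin(\phi/2)\bigr)^{4}$ for $m$ on the order of $n/2$; this is non-negative, has trigonometric degree at most $2(m-1) \le n$, and its first absolute moment can be estimated by combining $|\phi| \le \pi\,|\sin(\phi/2)|$ on $[-\pi,\pi]$ with the identity $\int \sin^{2}(m\phi/2)/\sin^{2}(\phi/2)\, d\phi = 2\pi m$ to control the normalizing constant, yielding a bound of the right order in $1/n$ after normalization.

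The main obstacle will be pinning down the explicit constant $6$: the most natural execution of the calculation above gives something like $1 + \pi^{2}/2 + o(1)$, so tightening it requires either a more careful estimate (e.g.\ using $|\sin(\phi/2)|\ge |\phi|/\pi$ together with a sharper bound on the fourth-power sine ratio) or a slight modification of the kernel, such as a de~la Vall\'ee-Poussin-style convex combination of Fej\'er kernels tuned to minimize the first absolute moment. The structural parts of the argument -- the reduction to even trigonometric polynomials via $x=\cos\theta$, the convolution construction, and the modulus-of-continuity bookkeeping -- are all routine and well-understood in approximation theory, so only the final constant-squeezing step is delicate.
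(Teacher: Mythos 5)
The paper does not actually prove \thm{jackson}; it cites it as a classical result (Jackson 1911, and the survey \cite{MMR94}), so there is no in-paper proof to compare against. Taken on its own, your sketch is the standard and correct route to Jackson's theorem: the substitution $x=\cos\theta$ converting algebraic to even-trigonometric approximation, convolution against a nonnegative even trigonometric kernel of degree at most $n$ normalized to a probability density, and the subadditivity estimate $|g(\theta-\phi)-g(\theta)|\le(1+n|\phi|)\,\omega$ reducing the problem to controlling the first absolute moment $n\int|\phi|K_n(\phi)\,d\phi$. One small point worth spelling out explicitly in a full write-up: your $\omega$ is the modulus of continuity of $\alpha$ on $[-1,1]$, while the subadditivity inequality is applied to $g(\theta)=\alpha(\cos\theta)$; the bridge is that $|\cos(\theta-\phi)-\cos\theta|\le|\phi|$ (cosine is $1$-Lipschitz), so the modulus of $g$ is dominated pointwise by that of $\alpha$, and the inequality you invoke then follows from the chain $|g(\theta-\phi)-g(\theta)|\le\omega_\alpha(|\phi|)\le\lceil n|\phi|\rceil\,\omega_\alpha(1/n)\le(1+n|\phi|)\,\omega$. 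You also correctly flag that the degree bookkeeping forces $m\approx n/2$ in the kernel $\bigl(\sin(m\phi/2)/\sin(\phi/2)\bigr)^4$, and that extracting the specific constant $6$ is the only delicate part --- that is exactly where the remaining computational work lies, and various textbook treatments of the Jackson kernel moment bound do reach constants at or below $6$, so the plan is sound.
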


In particular, if $\alpha$
has Lipschitz constant $K$, then for each $n\in\bN$
there is a polynomial $p_n$ of degree
at most $n$ which approximates
$\alpha$ to within an additive $6K/n$ at each point
in $[-1,1]$. Jackson's theorem
can be used to prove the well-known
result that polynomials can be amplified with
a linear dependence in the bias.
For completeness, we reprove this here
(see also e.g.\ \cite{GKKT17}).

\begin{corollary}
[Polynomial amplification (small bias to constant bias)]
For each $\gamma\in(0,1)$,
there is a real polynomial $p$ of degree at most $13/\gamma$
such that $p$ maps $[-1,1]$ to $[-1,1]$,
$p$ maps $[-1,-\gamma]$ to $[-1,-1/3]$, and
$p$ maps $[\gamma,1]$ to $[1/3,1]$.
\end{corollary}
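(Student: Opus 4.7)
The plan is to construct the desired polynomial $p$ by applying Jackson's theorem (\thm{jackson}) to a carefully chosen Lipschitz target function $\alpha\colon[-1,1]\to[-1,1]$ that already has the three properties we want, but with slack built in so that a polynomial approximation still satisfies them. In particular, I would choose the piecewise linear function
\[
\alpha(x)=\begin{cases} -2/3 & \text{if }x\in[-1,-\gamma],\\ \tfrac{2x}{3\gamma} & \text{if }x\in[-\gamma,\gamma],\\ 2/3 & \text{if }x\in[\gamma,1].\end{cases}
\]
This $\alpha$ is continuous on $[-1,1]$, takes values in $[-2/3,2/3]$, and is Lipschitz with constant $K=2/(3\gamma)$, so that $\sup_{|y-z|\le 1/n}|\alpha(y)-\alpha(z)|\le 2/(3\gamma n)$.

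Next, I would invoke Jackson's theorem with $n=\lceil 13/\gamma\rceil$. It gives a polynomial $p$ of degree at most $n\le 13/\gamma$ (after possibly adjusting to an integer degree, which is the only slight bookkeeping issue) such that $|p(x)-\alpha(x)|\le 6\cdot\frac{2}{3\gamma n}=\frac{4}{\gamma n}\le\frac{4}{13}$ for every $x\in[-1,1]$. The remaining step is to check that the three required inclusions hold, using the triangle inequality together with the explicit values of $\alpha$ and the uniform error bound $4/13<1/3$: on $[-1,-\gamma]$ we have $p(x)\le -2/3+4/13<-1/3$; on $[\gamma,1]$ we have $p(x)\ge 2/3-4/13>1/3$; and on all of $[-1,1]$ we have $|p(x)|\le 2/3+4/13<1$.

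There is no real obstacle in this proof, since Jackson's theorem essentially does all the work; the only thing that requires some thought is choosing the constants in the definition of $\alpha$ (and the degree $n$) so that the error $6K/n$ is comfortably smaller than the gap between $\pm 2/3$ and the targets $\pm 1/3$, while simultaneously leaving enough room for the approximation to stay within $[-1,1]$. The choice of $\pm 2/3$ as the plateau values of $\alpha$ is the natural balanced choice that makes both constraints work out with the same slack $1/3$, and then $n=\lceil 13/\gamma\rceil$ is simply the smallest integer multiple of $1/\gamma$ for which $4/(\gamma n)<1/3$.
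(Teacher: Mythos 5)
Your proposal matches the paper's proof essentially verbatim: the same piecewise-linear target $\alpha$ with plateaus at $\pm 2/3$, the same Lipschitz constant $2/(3\gamma)$, and the same invocation of Jackson's theorem. The only wrinkle is the bookkeeping you flag yourself: your choice $n=\lceil 13/\gamma\rceil$ can slightly exceed $13/\gamma$, whereas the paper picks $n=\lceil 12/\gamma\rceil$, which satisfies $n\le 13/\gamma$ for $\gamma\in(0,1)$ while still giving error $4/(\gamma n)\le 1/3$.
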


\begin{proof}
Let $\alpha\colon[-1,1]\to\bR$ be the function with
$\alpha(x)=-2/3$ for $x\in[-1,-\gamma]$,
$\alpha(x)=2/3$ for $x\in[\gamma,1]$, and
$\alpha(x)=2x/3\gamma$ for $x\in(-\gamma,\gamma)$.
Then $\alpha$ is continuous and has Lipschitz
constant $2/3\gamma$. By \thm{jackson},
for every $n\in\bN$, there exists a polynomial
$p_n$ of degree at most $n$ which approximates
$\alpha$ to additive error $4/\gamma n$.
Picking $n=\lceil 12/\gamma\rceil\le 13/\gamma$,
we get a polynomial which approximates $\alpha$
to error $1/3$, which means it has the desired
properties.
\end{proof}

We will also need an amplification polynomial
that goes from constant bias to small error.
We reprove the following well-known lemma here
for completeness (it also appears
in \cite{BNRdW07}, and another version appears
in \cite{She13}).

\begin{lemma}
[Polynomial amplification (constant error to small error)]
\label{lem:polynomial_amplification}
For each $\epsilon\in(0,2/3)$, there is a real polynomial
$p$ of degree at most $17\log(1/\epsilon)$ such that $p$ maps
$[-1,1]$ to $[-1,1]$, $p$ maps $[-1,-1/3]$ to $[-1,-(1-\epsilon)]$,
and $p$ maps $[1/3,1]$ to $[1-\epsilon,1]$.
\end{lemma}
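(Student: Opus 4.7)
The plan is to invoke the classical Chebyshev-polynomial approximation of the sign function on two disjoint intervals, as used in~\cite{BNRdW07}. Setting $d = \lceil 17\log(1/\epsilon)\rceil$, the goal is to build a polynomial of degree $\le d$ that is bounded by $1$ on $[-1,1]$ and $\epsilon$-close to $\mathrm{sign}(x)$ on $[-1,-1/3]\cup[1/3,1]$; such a polynomial (after a harmless $(1-\epsilon)$ rescaling) immediately satisfies all three required mapping conditions.

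I would use the odd ansatz $p(x) = x \cdot A(x^2)$, with $A$ a polynomial of degree $(d-1)/2$ chosen so that $A(y) \approx y^{-1/2}$ on $y \in [1/9,1]$. Then for $|x| \in [1/3,1]$ one has $|p(x) - \mathrm{sign}(x)| = |x|\cdot|A(x^2) - |x|^{-1}| \le \epsilon$, delivering the outer-interval conditions. The existence of such an $A$ with the right degree follows from Bernstein's theorem on Chebyshev truncations of analytic functions: under the affine rescaling $y = 5/9 + (4/9)t$ sending $[-1,1]$ to $[1/9,1]$, the singularity of $y^{-1/2}$ at $y=0$ maps to $t = -5/4$, which lies on the Bernstein ellipse of parameter $\rho = 5/4 + \sqrt{(5/4)^2 - 1} = 2$. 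The truncated Chebyshev expansion of the pullback therefore approximates it to $L^\infty$ error $O(2^{-d/2})$ on $[1/9,1]$. The tight threshold from this estimate is $d \ge (2/\log 2)\log(1/\epsilon) \approx 2.89\log(1/\epsilon)$, so the constant $17$ leaves generous slack to absorb the multiplicative constants and the rescaling by $(1-\epsilon)$.

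The main obstacle is ensuring $|p(x)| \le 1$ globally on $[-1,1]$, not just on the two outer intervals. The truncated Chebyshev expansion of $y^{-1/2}$ can be as large as $\Theta(d)$ near $y = 0$ (since the Chebyshev coefficients decay like $2^{-k}$ but $|T_k(t)|$ grows like $2^k$ at $t=-5/4$), and this would make $|p(x)| = |x\,A(x^2)|$ too large on $[-1/3,1/3]$. I would resolve this by approximating the capped target $g(y) = \min(y^{-1/2}, 3)$ instead of $y^{-1/2}$ itself: $g$ coincides with $y^{-1/2}$ on $[1/9,1]$ (so the outer-interval analysis is unchanged) and is uniformly bounded by $3$ on $[0,1]$, which forces $|p(x)| \le (1/3)(3+\epsilon) \le 1$ on the middle interval. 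Since $g$ is only Lipschitz at the corner $y=1/9$ and not analytic, a single Chebyshev truncation of $g$ does not have exponential convergence; I would handle this via a two-stage construction (a smooth analytic surrogate for $g$ that agrees with $y^{-1/2}$ on $[1/9,1]$ to sub-$\epsilon$ error, giving the outer bound, combined with a uniform bound on the overall polynomial obtained by composing with a degree-$O(\log(1/\epsilon))$ clamping polynomial of the same flavor). Either route contributes only an $O(1)$ multiplicative factor to the degree, well within the margin between $\approx 2.89\log(1/\epsilon)$ and the claimed $17\log(1/\epsilon)$ bound.
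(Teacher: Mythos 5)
Your approach is genuinely different from the paper's: the paper builds $p$ explicitly as the ``majority of $2k+1$ biased coins'' polynomial $p(x)=1-2\sum_{i\le k}\binom{2k+1}{i}\bigl(\tfrac{1+x}{2}\bigr)^i\bigl(\tfrac{1-x}{2}\bigr)^{2k+1-i}$, which is automatically bounded and odd, and the only work is bounding $q(1/3)\le (1/3)(8/9)^k$ by a direct estimate. Your route via Chebyshev truncations and Bernstein ellipses is the other standard way to get degree-$O(\log(1/\epsilon))$ sign approximations with a constant gap, and the $\rho=2$ computation for the singularity at $y=0$ is correct.

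However, there is a genuine gap in the boundedness repair, and you have in fact identified exactly where it is without closing it. The capped target $g(y)=\min(y^{-1/2},3)$ does kill the blow-up, but as you note it is not analytic, so the Chebyshev coefficients decay only polynomially and you lose the $\log(1/\epsilon)$ degree bound entirely. The fallback you offer --- composing the raw approximant with a degree-$O(\log(1/\epsilon))$ ``clamping'' polynomial --- does not cost an $O(1)$ multiplicative factor as you assert: polynomial composition multiplies degrees, so this route gives degree $O(\log^2(1/\epsilon))$, which blows past the $17\log(1/\epsilon)$ budget for small $\epsilon$. And the ``analytic surrogate for $g$'' is not constructed; an analytic function that is $\epsilon$-close to $y^{-1/2}$ on $[1/9,1]$ and bounded by $3$ on $[0,1/9]$ is not something you can just assert exists --- its Chebyshev coefficient decay rate (which controls the degree) depends delicately on how you build it. The clean way to close this within the Chebyshev/Bernstein framework is to approximate not $y^{-1/2}$ but a \emph{globally bounded, entire} surrogate for $\mathrm{sign}$ from the start, e.g.\ $\mathrm{erf}(cx)$ with $c=\Theta(\sqrt{\log(1/\epsilon)})$: it is within $\epsilon$ of $\mathrm{sign}$ on $|x|\ge 1/3$, has $|\mathrm{erf}(cx)|\le 1$ everywhere, and its Chebyshev partial sums inherit the bound $1+O(\epsilon)$ because $\sum_{m>d}|a_m|=O(\epsilon)$ once $d=\Theta(\log(1/\epsilon))$. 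That construction gets the degree, the two outer maps, and the global bound simultaneously, without capping tricks or a second compositional stage. As written, your argument does not establish the claimed $|p|\le 1$ on all of $[-1,1]$ within the stated degree budget.
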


\begin{proof}
We set
\[q(x)=\sum_{i=0}^k\binom{2k+1}{i}
\left(\frac{1+x}{2}\right)^i
\left(\frac{1-x}{2}\right)^{2k+1-i},\]
and set $p(x)=1-2q(x)$. Note that for $x\in[-1,1]$,
the value $q(x)$ is exactly the probability that,
when flipping a coin $2k+1$ times, less than half
of the coin flips will come out heads, assuming the probability
of heads is $(1+x)/2$. Because of this interpretation,
we know that $q$ maps $[-1,1]$ to $[0,1]$ and is decreasing
in $x$, so $p$ maps $[-1,1]$ to $[-1,1]$ and is increasing in $x$.
We also have $q(x)=1-q(-x)$, which means that
$p(-x)=-p(x)$, i.e.\ $p$ is odd. Given these properties,
the lemma will follow if we show that $p(1/3)\ge 1-\epsilon$,
or equivalently, that $q(1/3)\le \epsilon/2$.

We have
\[q(1/3)=\sum_{i=0}^k\binom{2k+1}{i}
\left(\frac23\right)^i
\left(\frac13\right)^{2k+1-i}
=3^{-(2k+1)}\sum_{i=0}^k\binom{2k+1}{i}2^i
\le 3^{-(2k+1)}2^k\sum_{i=0}^k\binom{2k+1}{i}\]
\[=3^{-(2k+1)}2^k2^{2k}
=(1/3)(8/9)^k.\]
To get this to be smaller than $\epsilon/2$, it suffices
to pick $k$ large enough so that $(8/9)^k\le\epsilon$,
or equivalently, $k\ge \frac{1}{\log(9/8)}\log(1/\epsilon)$.
Hence we can pick $k=\lceil \frac{1}{\log(9/8)}\log(1/\epsilon)\rceil
\le \frac{1}{\log(9/8)}\log(1/\epsilon)+1$.
The degree of $p$ will be
$2k+1\le \frac{2}{\log(9/8)}\log(1/\epsilon)+3$.
Note that $\epsilon\le 2/3$, so $\log(1/\epsilon)\ge \log(3/2)$,
and hence $\frac{2}{\log(9/8)}\log(1/\epsilon)+3\le
\left(\frac{2}{\log(9/8)}
+\frac{3}{\log(3/2)}\right)\log(1/\epsilon)
\le 17\log(1/\epsilon)$.
\end{proof}

Equipped with these approximation-theoretic tools,
we will now tackle \itm{amplify}, showing that probability
distributions over polynomials (which achieve a small
amount of worst-case bias $\gamma$ for computing $f$) can 
be amplified to polynomials which compute $f$ to constant
error, using only a nearly-linear dependence on $1/\gamma$.

\begin{lemma}
As in \itm{amplify}, let $P$ be a probability distribution
over $k$ bounded multilinear polynomials $q_1,q_2,\dots,q_k$,
which assigns them probabilities $p_1,p_2,\dots,p_k$
respectively. Suppose that $\sum_{i=1}^k 2^ip_i\le 5$,
that $\deg(q_i)\le 2^i T$ for some real number $T$,
and that $f(x)\sum_{i=1}^k p_iq_i(x)\ge 2^{-k-1}$
for all $x\in\Dom(f)$. Then there is a bounded
multilinear polynomial $q$ which approximates
$f$ with bias at least $1/3$ and which satisfies
$\deg(q)\le 2^kT\cdot\poly(k)$.
\end{lemma}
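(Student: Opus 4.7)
The plan is to follow the quantum argument of \thm{PQ_Q}, replacing quantum amplitude estimation by polynomial approximations based on Jackson's theorem (\thm{jackson}). Set $\gamma := 2^{-k-1}$, so that $f(x)\sum_i p_i q_i(x) \ge \gamma$ on $\Dom(f)$. The quantum proof estimated each $\beta_i(x)$ to additive error $\epsilon_i := 2^i\gamma/20$, which ensured the weighted sum of estimates stayed within $\gamma/4$ of $\sum_i p_i \beta_i(x)$ and thus had the correct sign; the goal is to mimic this with polynomials.

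For each $i$, I would construct polynomial ``bit'' approximators $b_{i,0}(x),\ldots,b_{i,\ell_i}(x)$ with $\ell_i = O(\log(1/\epsilon_i)) = O(k)$, representing the binary digits of $q_i(x)\in[-1,1]$ up to precision $\epsilon_i$. Each single bit extraction is a smoothing of a threshold applied to $q_i$ with gap $\epsilon_i$, which by \thm{jackson} is approximable by a univariate polynomial of degree $O(\epsilon_i^{-1}\polylog k)$; composing with $q_i$ yields $\deg(b_{i,j}) = O(2^iT\cdot 2^{-i}\gamma^{-1}\polylog k) = O(2^kT\cdot\polylog k)$ uniformly over $i,j$, with $O(k^2)$ bit polynomials in total. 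The linear combination
\[
\tilde P(x) := \sum_{i=1}^{k} p_i \sum_{j=1}^{\ell_i} 2^{-j}\bigl(2b_{i,j}(x)-1\bigr)
\]
approximates $\sum_i p_i q_i(x)$ to additive error at most $\gamma/4$, by the same $\sum_i p_i\epsilon_i$ bookkeeping used in the quantum proof, so $f(x)\tilde P(x) \ge 3\gamma/4$ on $\Dom(f)$.

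The main obstacle is extracting the sign of $\tilde P$ without a multiplicative degree blowup: naively composing a bias-amplification polynomial of degree $\Theta(1/\gamma)=\Theta(2^k)$ with $\tilde P$ would yield total degree $\Theta(4^kT)$, which is far too much. The resolution is to exploit that $\tilde P$ is a fixed affine function of approximately-Boolean bit polynomials, so the desired sign is a fixed symmetric-style threshold on an $O(k^2)$-dimensional Boolean vector. Any such threshold admits a robust polynomial representation of degree $\poly(k)$ in its Boolean arguments (and constant-bias amplification on the resulting integer-valued count, via \lem{polynomial_amplification}, costs only an additional $O(\log(1/\epsilon))$ factor). Substituting back into the $b_{i,j}$, the outer polynomial has degree $\poly(k)$ in the $b_{i,j}$, and the final polynomial has degree $O(2^kT\cdot\polylog k)\cdot\poly(k) = 2^kT\cdot\poly(k)$, giving worst-case bias at least $1/3$ for $f$.

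The most delicate step will be controlling the robustness of the bit polynomials: \thm{jackson} gives only uniform pointwise error (there is no probabilistic failure as in amplitude estimation), so one must choose the per-bit precision small enough that the joint deviation of all $O(k^2)$ bit polynomials from true Boolean values does not corrupt the outer threshold. This forces the outer computation to be carried out using a robust-polynomial construction in the sense of Sherstov, so that the composition remains stable under a uniform perturbation of the bit polynomials of size $o(1/k^2)$.
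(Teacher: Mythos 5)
Your high-level plan matches the paper's: compute bits of the binary expansion of each $q_i(x)$ via Jackson's theorem, then feed those bits into an outer exact polynomial of degree $\poly(k)$ that computes the sign of the weighted sum, so that the two degree contributions \emph{add} in the right way ($O(2^kT\poly(k))$ for the inner composition, times $\poly(k)$ from the outer function) rather than multiply catastrophically. Your degree bookkeeping is also essentially correct. However, there is a genuine gap in the middle step, and your proposed patch does not close it.

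The problem is the claim that $\tilde P(x)$ approximates $\sum_i p_i q_i(x)$ to additive error $\gamma/4$ ``by the same bookkeeping used in the quantum proof.'' Amplitude estimation directly returns a number $\tilde\beta_i$ with $|\tilde\beta_i - \beta_i|\le\epsilon_i$ with high probability, uniformly over what $\beta_i$ is. Bit extraction by polynomials is not like this: the $j$-th-bit function of $\beta$ is a step function, and a degree-$O(1/g)$ Jackson approximant only matches it when $\beta$ is at distance $\ge g$ from every transition point of that bit. When $q_i(x)$ lands within distance $g$ of a bit boundary, the bit polynomial outputs something near $1/2$, and worse, \emph{all} lower-order bits flip simultaneously (compare $1/2-\epsilon = 0.0111\ldots$ with $1/2+\epsilon = 0.1000\ldots$), so summing $2^{-j}(2b_{i,j}(x)-1)$ can reconstruct a value that is off by a constant, not by $\epsilon_i$. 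This is not a small-error phenomenon you can beat by increasing precision: a bounded-degree polynomial cannot approximate a discontinuous step uniformly, so for every degree budget there are inputs $x$ where some bit estimate is meaningless. Your appeal to robust polynomials in the Sherstov sense also does not help: robust polynomials tolerate inputs that are guaranteed to be within $1/3$ of a genuine Boolean assignment, but near a bit boundary the bit polynomial outputs $\approx 1/2$, which is not a perturbation of any ``correct'' Boolean value, because no such value exists.

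What the paper does instead is to introduce one auxiliary real variable $y_i$ per $q_i$, replace $q_i(x)$ by $(9/10)q_i(x)+y_i$ before bit extraction, and let $y_i$ range over $\poly(k)$ evenly spaced points in a small interval of width $\Theta(2^{-(k-i)})$. For every fixed $x$, all but $O(1)$ of those shifts move $q_i(x)+y_i$ away from every bit boundary down to the precision needed, so a random shift is ``good'' with probability $1-O(1/\poly(k))$; the shift is also small enough not to flip the sign of $\sum_i p_i((9/10)q_i(x)+y_i)$ since $\sum_i 2^i p_i\le 5$. The final polynomial $q(x)$ is then the \emph{average} of the composed polynomial $u(x,y)$ over all choices of $y$. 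Because $u$ is bounded in $[-1,1]$ and correct whenever $y$ is good, the average is correct to bounded error. This averaging-over-discrete-shifts device is the essential ingredient your proposal is missing, and without it the bit-extraction strategy does not give a worst-case guarantee over $x\in\Dom(f)$.
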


\begin{proof}
Recall that in the quantum case, we estimated
the bias of the $i$-th algorithm to within
$2^{-(k-i)}/20$, with success probability
at least $1-1/3k$. We will do a polynomial version of this.
What does estimating $q_i(x)$ mean, for polynomials?
It means we will construct polynomials
which approximately compute the bits in the binary expansion
of the number $q_i(x)$. We will have one polynomial
for the sign, and an additional $k-i+4$ polynomials
for the first $k-i+4$ digits in the binary expansion
of $q_i(x)$.

In order to do so, we compose univariate polynomials
with $q_i$. This way, the task reduces to creating
univariate polynomials which output the bits in
the binary expansion of their input (assuming
they all receive the same input). More explicitly,
the correctness condition is as follows.
We say the binary expansion of a real number
$\beta\in[-1,1]$ is $2^{-\ell}$-\emph{robust} to $t$ bits if
the first $t$ bits of the binary expansion of
$\beta+\epsilon$ is the same
as that of $\beta$ for all
$\epsilon\in[-2^{-\ell},2^{-\ell}]$.
Then we require univariate polynomials
$d_0^\ell,d_1^\ell,\dots, d_k^\ell$
such that if $\beta\in[-1,1]$ is $2^{-\ell}$-robust to
at least $t$ bits,
then $d_t^\ell(\beta)$ is within $O(1/k^{10})$
of the $t$-th bit in the binary expansion of $\beta$.
The polynomial $d_0^\ell$ needs to output
the sign of $\beta$ if $\beta$ is $2^{-\ell}$-robust to 
at least $0$ bits
(that is, if the sign of $\beta$ does not change upon adding
or subtracting $2^{-\ell}$).
We will also require all these polynomials to be bounded,
i.e.\ they must map $[-1,1]$ to $[-1,1]$.

To implement these polynomials, we use \thm{jackson}.
For simplicity, let's represent the bits in the binary
expansion using $+1$ and $-1$ instead of $0$ and $1$
(converting back is easy). Consider the function
$\alpha_i$ which outputs the $i$-th bit
of the binary expansion of its input (or the sign
if $i=0$). This $i$ is a step function: for $i=0$,
$\alpha_0(\beta)$ jumps from $-1$ to $1$ at $\beta=0$;
for $i=1$, $\alpha_1(\beta)$ similarly jumps from
$-1$ to $1$ and back at $\beta=-1/2,0,1/2$. More generally,
$\alpha_i$ has $2^{i+1}$ different plateaus of
$1$ or $-1$ on its domain $[-1,1]$.
Now, since we only care about getting the $i$-th bit correct
if the $i$-th bit is robust to $\beta$ changing
by $2^{-\ell}$, consider the continuous functions
$\alpha_i^\ell$ which make the jumps from $-1$ to $1$
continuous by starting from $2^{-\ell}$ before
the jump point, ending $2^{-\ell}$ after the jump point,
and drawing a continuous line in between
(the slope of the line will be $\pm 2^{-\ell}$).
This is well-defined as long as $\ell$ is sufficiently
larger than $i$, say $\ell\ge i+2$.

Note that $\alpha_i^\ell$ has Lipschitz constant
$2^{-\ell}$. This means we can use \thm{jackson}
to estimate $\alpha_i^\ell$ by a polynomial of degree
$O(2^\ell)$ which achieves constant additive error
(say, $1/10$). We can scale down these polynomials slightly
to ensure they remain bounded in $[-1,1]$. We then plug them
into a single variate bounded polynomial of degree
$O(\log k)$ that we get from \lem{polynomial_amplification},
in order to amplify the error down to $O(1/k^{10})$.
The result are polynomials $d_t^\ell$ (for $\ell\ge t+2$)
that have degree $O(2^\ell\log k)$ and, on input
$\beta$ which is $2^{-\ell}$-robust to bit at least $t$,
correctly output the $t$-th bit of $\beta$
except with additive error $O(1/k^{10})$.

Now, to get an estimate of $q_i(x)$ to $k-i+5$
bits, we set $\ell=k-i+O(\log k)$ and compose
$d_t^\ell(q_i(x))$ for $t=0,1,2,\dots,k-i+5$.
Actually, we scale down $q_i(x)$ and
add an extra variable $y_i$
representing a noise term for $q_i(x)$;
the final estimating polynomials will be
the $n+1$ variate polynomials
$r_{i,t}(x,y_i)\coloneqq d_t^\ell((9/10)q_i(x)+y_i)$.
Note that the degree of $r_{i,t}$ is
$O(2^{k-i+O(\log k)}\log k\cdot \deg(q_i))=O(2^k T\poly(k))$.

Next, consider the function which takes binary
representations (to $k+5$ bits each) of numbers
$\lambda_i\in[-1,1]$,
and outputs the sign of $\sum_{i=1}^k p_i\lambda_i$, where
$p_i$ are known non-negative constants which sum to $1$.
This is a Boolean function of $O(k^2)$ variables,
so it can be computed exactly by a multilinear polynomial
of degree $O(k^2)$. Call this polynomial $s$.
Next, plug in the polynomials $r_{i,t}$ into the inputs
of $s$, so that $s$ calculates the sign of the sum
$\sum_{i=1}^k p_i\tilde{\beta}_i$ where each
$\tilde{\beta}_i$ is the estimate of $(9/10)q_i(x)+y_i$
that is computed by the polynomials $d^{k-i+10}_t$.
Call this composed polynomial $u(x,y)$.

Observe that $u(x,y)$ is a polynomial in $n+k$ variables
($n$ variables from $x$ and $k$ variables $y_i$),
and has degree $O(2^kT\poly(k))$.
This polynomial attempts to compute the sign of
$(9/10)\sum_{i=1}^k p_iq_i(x)+\sum_{i=1}^k p_iy_i$.
Since we know that
$\sum_{i=1}^k p_iq_i(x)\cdot f(x)\ge 2^{-k-1}$,
this sign computed by $u(x,y)$ will equal $f(x)$
so long as $\left|\sum_{i=1}^k p_iy_i\right|\le 2^{-k-2}$.
Recall that $\sum_{i=1}^k 2^ip_i\le 5$.
Hence to guarantee that
$\left|\sum_{i=1}^k p_iy_i\right|\le 2^{-k-2}$,
it suffices to choose each $y_i$ such that $|y_i|\le 2^{-(k-i+5)}$.
Now, let's call $q_i(x)+y_i$ good if it
is $2^{-(k-i+O(\log k))}$-robust to $k-i+5$ bits.
If all $q_i(x)+y_i$ are good for all $i$, then $r_{i,t}$
correctly compute the bits to additive error $O(1/k^{10})$,
then a multilinear polynomial of degree $O(k^2)$ in $O(k^2)$
variables will still correctly compute its output to
small error, certainly $O(1/k)$. Hence if all $q_i(x)+y_i$
are good for all $i$ and if $y_i\le 2^{-k-2}/k$ for all $i$,
$u(x,y)$ outputs $f(x)$ to error $O(1/k)$.

To ensure that $q_i(x)+y_i$ are good, we pick $y_i$ at random.
That is, we have an allowed range $[-2^{-(k-i+5)},2^{-(k-i+5)}]$
for $y_i$; we fit $\poly(k)$ evenly spaced points into
this range, so that the gap between the points is
$2^{-(k-i+O(\log k))}$. Note that for all but a constant
number of choices of $y_i$ among these $\poly(k)$ options,
the resulting number $q_i(x)+y_i$ will be
$2^{-(k-i+O(\log k))}$-robust to $k-i+5$ bits. Hence
by randomly selecting $y_i$, the probability that $q_i(x)+y_i$
is not good is at most $O(1/\poly(k))$. By the union bound,
this choice means that all $q_i(x)+y_i$ are good except
with constant probability.
Hence $u(x,y)$ computes $f(x)$ to $O(1/k)$ error with high
probability when $y$ is chosen at random according to the above
procedure.

Finally, we let $q(x)$ be the average of the polynomials
$u(x,y)$ for all possible choices of $y$ in the above procedure.
Since $u(x,y)$ outputs a number very close to $f(x)$
when $y$ is good, and since it is always bounded in $[-1,1]$,
and since $y$ is good with high probability, we conclude that
$q(x)$ computes $f(x)$ to bounded error. It is also bounded
outside the promise of $f$. The degree of $q(x)$
was $O(2^kT\poly(k))$. We note that $q(x)$
as we constructed it here can actually be viewed
as a polynomial $\rho$ in $k$ variables composed with the polynomials
$q_1,q_2,\dots, q_k$.
\end{proof}

The above amplification theorem allows us to conclude the
following theorem.

\begin{theorem}\label{thm:minimax_polynomials}
let $f\colon\{+1,-1\}^n\to\{+1,-1\}$ be a (possibly partial)
Boolean function. Then there is a vector $\psi\in[-1,1]^{\Dom(f)}$
such that $\|\psi\|_1=1$, $\langle \psi,f\rangle =1$,
and for any polynomial $p$ which is bounded
(i.e.\ $|p(x)|\le 1$ for $x\in\{+1,-1\}^n$), we have
\[\langle\psi,p\rangle\le 
\frac{\deg(p)}{\tOmega\left(\adeg(f)\right)}.\]
Here $\adeg(f)$ denotes the minimum degree of a bounded polynomial
$p$ which computes $f$ to bounded error. The constants in the
$\tOmega$ notation are universal.
\end{theorem}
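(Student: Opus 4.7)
The plan is to apply \thm{quantum_abstraction} with $\mathcal{A}$ taken to be the set of all bounded multilinear real polynomials on $\{+1,-1\}^n$ (those $p$ with $|p(x)|\le 1$ for every $x\in\{+1,-1\}^n$), with cost $|p|\coloneqq\deg(p)$ and bias $\bias_f(p,x)\coloneqq f(x)p(x)$. Under these choices $M(f)$ and $\adeg(f)$ agree up to constant factors: any bounded polynomial approximating $f$ to error $1/3$ has bias at least $1/3$, and conversely a bias-$1/3$ polynomial can be pushed to error $1/3$ by composing with the amplifier of \lem{polynomial_amplification}. The abstraction will therefore produce a probability distribution $\mu$ on $\Dom(f)$ such that every bounded polynomial $p$ satisfies $\bias_f(p,\mu)\le \deg(p)/\tOmega(\adeg(f))$. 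Setting $\psi(x)\coloneqq\mu(x)f(x)$ then gives a vector in $[-1,1]^{\Dom(f)}$ with $\|\psi\|_1=\sum_x\mu(x)=1$ and $\langle\psi,f\rangle=\sum_x\mu(x)f(x)^2=1$, while $\langle\psi,p\rangle=\sum_x\mu(x)f(x)p(x)=\bias_f(p,\mu)$, which yields the theorem.

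To invoke \thm{quantum_abstraction} I would verify its four hypotheses in this setting. Conditions (1)--(3) are routine: the bounded polynomials of degree at most $T$ form a convex subset of the finite-dimensional space of multilinear polynomials on $\{+1,-1\}^n$, and $\bias_f(\cdot,x)$ is linear in $p$; every Boolean function admits an exact bounded multilinear representation of degree at most $n$, so $M(f)<\infty$; and bounded polynomials of degree less than $1$ are exactly the constants in $[-1,1]$, hence the convex hull of $Z_0\coloneqq -1$ and $Z_1\coloneqq +1$, whose biases at $x$ are $-f(x)$ and $+f(x)$ respectively, assuming both values $\pm 1$ when $f$ is non-constant (the constant case of the theorem is trivial since $\adeg(f)=0$).

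The substantive hypothesis is (4), near-linear amplification of mixtures, and this is exactly the content of the amplification lemma proved immediately before the theorem. It converts any distribution $P$ over bounded polynomials $q_1,\dots,q_k$ with $\deg(q_i)\le 2^iT$, weights satisfying $\sum_i 2^ip_i\le 5$, and $f(x)\sum_i p_iq_i(x)\ge 2^{-k-1}$ for all $x\in\Dom(f)$, into a single bounded polynomial of degree $2^kT\cdot\poly(k)$ that computes $f$ with bias at least $1/3$; this is the polynomial analogue of the quantum amplitude-estimation argument from \sec{quantum}, with Jackson's theorem and \lem{polynomial_amplification} taking the place of the unitary amplitude-estimation primitive. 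With this amplification in hand, \thm{quantum_abstraction} immediately gives the desired $\mu$ and thus $\psi$; the main obstacle was verifying (4), which has already been carried out, so the remainder is essentially a reformulation of the abstraction's conclusion in the language of dual vectors.
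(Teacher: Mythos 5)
Your proposal is correct and follows essentially the same route as the paper: you take $\mathcal{A}$ to be bounded multilinear polynomials with cost $=$ degree and bias $\bias_f(p,x)=f(x)p(x)$, verify conditions (1)--(3) of the abstraction exactly as the paper does immediately before the theorem, observe that condition (4) is the preceding amplification lemma, apply \thm{quantum_abstraction} to get $\mu$, and set $\psi(x)=f(x)\mu[x]$. The only addition is that you spell out the dual-vector bookkeeping ($\|\psi\|_1=1$, $\langle\psi,f\rangle=1$, $\langle\psi,p\rangle=\bias_f(p,\mu)$) that the paper dispatches with ``This follows immediately,'' which is a clarification rather than a different argument.
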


\begin{proof}
This follows immediately by taking $\psi$ to be defined by
$\psi(x)=f(x)\mu[x]$, where $\mu$ is the hard distribution
we get from \thm{quantum_abstraction}.
\end{proof}

\subsection{Approximate logrank and gamma 2 norm}

Instead of tackling approximate logrank directly,
we use approximate $\gamma_2$ norm. This measure deserves
some introduction. First, we note that the $\gamma_2$
norm is a well-known norm of a matrix. One way to define it
is to say that $\gamma_2(A)$ is the minimum,
over factorizations $A=BC$ of $A$ into a product
of matrices $B$ and $C$, of the maximum $2$-norm of a
row of $B$ times the maximum $2$-norm of a column of $C$.
The $\gamma_2$ norm has several useful properties known
in the literature \cite{She12,LSS08}:
\begin{enumerate}
    \item $\gamma_2$ is a norm, so $\gamma_2(A)\ge 0$
    (with equality if and only if $A$ is the all-zeros matrix)
    and
    $\gamma_2(A+\lambda B)\le \gamma_2(A)+|\lambda|\gamma_2(B)$.
    \item $\gamma_2(A\otimes B)=\gamma_2(A)\gamma_2(B)$,
    where $\otimes$ denotes the tensor (Kronecker) product
    \item $\gamma_2(A\circ B)\le \gamma_2(A)\gamma_2(B)$,
    where $\circ$ denotes the Hadamard (entry-wise) product
    \item $\gamma_2(J)=1$ where $J$ is the all-ones matrix
    \item $\|A\|_\infty\le\gamma_2(A)\le\|A\|_\infty\sqrt{\rank(A)}$.
\end{enumerate}
In the above, $A$ and $B$ are matrices of the same dimensions,
and $\lambda$ is a scalar. $\gamma_2(A)$ can be thought of
as a smoother version of rank.

Let $F\colon\mathcal{X}\times\mathcal{Y}\to\{+1,-1\}$
be a (possibly partial) communication function. We
identify $F$ with its \emph{communication matrix},
which is a matrix with rows indexed by $\mathcal{X}$
and columns indexed by $\mathcal{Y}$, with the $(x,y)$
entry being $F(x,y)\in\{+1,-1\}$ if $(x,y)\in\Dom(F)$
and being $*$ if $(x,y)\notin\Dom(F)$. This way,
$F$ is a $\{+1,-1,*\}$-valued matrix.

For such a matrix $F$, we say that a real-valued matrix $A$
\emph{approximates} $F$ (to bias $1/3$) if
$|A[x,y]|\le 1$ for all $(x,y)\in\mathcal{X}\times\mathcal{Y}$
and $F(x,y)A[x,y]\ge 1/3$ for all $(x,y)\in\Dom(F)$.
The \emph{approximate $\gamma_2$ norm} of $F$, denoted
$\tilde{\gamma}_2(F)$, is defined as the minimum
value of $\gamma_2(A)$ over all matrices which approximate
$F$ to bias $1/3$. It is not hard to see that this minimum
is attained, as the set of such matrices is compact.

We will actually care about the logarithm of the approximate
$\gamma_2$ norm, that is, about $\log\tilde{\gamma}_2(F)$.
We note that the constant $1/3$ in the definition
of this measure is arbitrary,
as approximations to $F$ can be amplified with only
a constant factor overhead in the log-approximate-$\gamma_2$-norm
(see, e.g., \cite{BBGK18}). An annoying detail, however,
is that such amplification can in general lose not just a
multiplicative constant but also an additive constant,
since $\tilde{\gamma}_2(F)$ may in general be less than $1$
(meaning the logarithm of it will be less than $0$).
To avoid such complications, we will define our
measure of interest as
$M(F)\coloneqq\max\{1,\log\tilde{\gamma}_2(F)\}$
if $F$ is not constant and $M(F)=0$ if $F$ is constant,
and we will write $M_{\dot{\gamma}}(F)$ for the
bias $\gamma$ version of $M(F)$ instead of the
default bias $1/3$ version.

In order to get a minimax theorem analogous to
\thm{minimax_polynomials}, we will again use
\thm{quantum_abstraction}. Our set of algorithms
$\mathcal{A}$ will be the set of bounded real matrices $A$
(that is, real matrices $A$ of the same dimensions as $F$
which satisfy $|A[x,y]|\le 1$ for all
$(x,y)\in\mathcal{X}\times\mathcal{Y}$).
The cost of a matrix $A$ will be
$\cost(A)\coloneqq \max\{1,\log\gamma_2(A)\}$
if $A$ is not a multiple of the all-ones
matrix $J$, and otherwise $\cost(A)=0$ if $A=\lambda J$.
We define $\bias_F(A,(x,y))=F(x,y)A[x,y]$ for $(x,y)\in\Dom(F)$.

We show that $\mathcal{A}_T$ is convex for each $T\in[0,\infty)$.
For $T<1$, the set $\mathcal{A}_T$ is the set of all
matrices of the form $\lambda J$ for $\lambda\in[-1,1]$,
which is clearly convex. For $T\ge 1$, suppose
$A,B\in\mathcal{A}_T$ and let $\lambda\in(0,1)$. Then
$\cost(\lambda A+(1-\lambda)B)$ is either $0$, $1$,
or $\log\gamma_2(\lambda A+(1-\lambda)B)$. In the former
two cases, we clearly have $\lambda A+(1-\lambda)B\in\mathcal{A}_T$,
so consider the latter case. We have
$\log\gamma_2(\lambda A+(1-\lambda)B)
\le \log(\lambda\gamma_2(A)+(1-\lambda)\gamma_2(B))
\le \log\max\{\gamma_2(A),\gamma_2(B)\}
=\max\{\log\gamma_2(A),\log\gamma_2(B)\}
\le \max\{\cost(A),\cost(B)\}
\le T$. Hence $\mathcal{A}_T$ is convex. It is also
clear that $\bias_F(\cdot,(x,y))$
is linear, so \itm{convex} is satisfied.

By taking $A$ to equal $F$ inside $\Dom(F)$ and to be $0$
elsewhere, we get $\bias_F(A)=1$, so \itm{finite}
is satisfied. By our definition of $\cost(A)$, we have
$\cost(A)\ge 1$ or $\cost(A)=0$, with the latter happening
only if $A$ is a convex combination of $J$ and $-J$,
so \itm{zero_cost} is satisfied.

As usual, it remains to handle \itm{amplify}.
We do so in the following lemma.

\begin{lemma}
Let $P$ a probability distribution over matrices
$A_1,A_2,\dots,A_k$ with probability $p_i$ for $A_i$.
Suppose that $\sum_{i=1}^k 2^ip_i\le 5$, and that
for all $i$, we have $\cost(A_i)\le 2^i T$
for some real number $T\ge 1/10$.
Suppose further that $\bias_F(P)\ge 2^{-k-1}$.
Then there is some bounded matrix $A$ which approximates $F$
to bias $1/3$ and satisfies $\cost(A)\le 2^kT\cdot\poly(k)$
(with the constants in the $\poly$ being universal).
\end{lemma}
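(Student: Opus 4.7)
My plan is to mimic the polynomial-degree amplification from the previous lemma, exploiting that $\log\gamma_2$ satisfies an analogue of polynomial-degree composition. For a univariate real polynomial $p$ of degree $d$ applied entrywise to a matrix $M$, we have $p(M)=\sum_i c_i M^{\circ i}$ (with $M^{\circ i}$ the $i$-fold Hadamard power), so the $\gamma_2$ triangle inequality combined with sub-multiplicativity under $\circ$ gives $\log\gamma_2(p(M)) \le d\log\max\{1,\gamma_2(M)\} + \log\|p\|_1$, where $\|p\|_1=\sum_i|c_i|$. An analogous inequality $\log\gamma_2(s(B_1,\dots,B_m)) \le \deg(s)\cdot\max_i\log\gamma_2(B_i) + \log\|s\|_1$ holds when a multilinear polynomial $s$ is applied entrywise to matrices $B_1,\dots,B_m$, with $\|s\|_1 \coloneqq \sum_S |\hat{s}(S)|$ its Fourier $\ell_1$-mass.

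Following the polynomial proof closely, I would first build, via \thm{jackson} (smoothing the step function for bit $t$ in an $O(2^{-\ell})$-neighbourhood of each discontinuity) composed with \lem{polynomial_amplification} to sharpen the accuracy, bounded univariate polynomials $d_t^{\ell}\colon[-1,+1]\to[-1,+1]$ of degree $2^{k-i}\poly(k)$ that output (in the $\pm 1$ convention) the $t$-th bit of their input to additive error $O(1/k^{10})$ whenever the input is $2^{-\ell}$-robust to at least $t$ bits, for $\ell=k-i+O(\log k)$; these have $\log\|d_t^{\ell}\|_1 \le 2^{k-i}\poly(k)$. Picking $\poly(k)$ evenly spaced noise scalars $y_i\in[-2^{-(k-i+5)},2^{-(k-i+5)}]$ and setting the bit matrix $B_{i,t,y_i}\coloneqq d_t^{\ell}\big((9/10)A_i+y_iJ\big)$ entrywise, the composition bound yields
\[
\log\gamma_2(B_{i,t,y_i}) \le 2^{k-i}\poly(k)\cdot(\cost(A_i)+O(1)) + 2^{k-i}\poly(k) = 2^kT\poly(k).
\]

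For the combiner, I would take $s$ to be the unique multilinear polynomial of degree $O(k^2)$ in $m=O(k^2)$ variables that computes, on $\pm 1$ inputs, the sign of $\sum_i p_i\lambda_i$ from the first $k-i+5$ bits of each $\lambda_i$. Two properties of $s$ are essential: (i) the multilinear extension of any $\{\pm 1\}$-valued Boolean function is bounded in $[-1,+1]$ on the entire real cube $[-1,+1]^m$ (a one-variable-at-a-time induction: fixing all but one coordinate yields an affine function whose value is a convex combination of two endpoints in $[-1,+1]$), and (ii) its Fourier coefficients all lie in $[-1,+1]$, so $\|s\|_1 \le 2^m = 2^{O(k^2)}$. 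Defining $M_y\coloneqq s(B_{\cdot,\cdot,y})$ by entrywise Hadamard composition then yields $\log\gamma_2(M_y) \le O(k^2)\cdot 2^kT\poly(k) + O(k^2) = 2^kT\poly(k)$, with all entries of $M_y$ still in $[-1,+1]$ by property (i).

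Finally, I would set $A$ to be the average of $M_y$ over $y$ in the product of the noise grids. By the $\gamma_2$ triangle inequality, $\cost(A)\le\max_y\cost(M_y) = 2^kT\poly(k)$, and averaging preserves the $[-1,+1]$ entry bound. For the approximation property, a union bound over $i$ shows that only an $O(1/\poly(k))$ fraction of $y$'s make some $(9/10)A_i[x,y']+y_i$ non-robust; on the remaining $y$'s, the $\ell_1$-Lipschitz constant of $s$ (at most $1$ for a multilinear extension of a $\{\pm 1\}$-valued Boolean function, since each partial derivative is a half-difference of two such extensions) combined with the $O(1/k^{10})$ per-bit error yields $M_y[x,y']=F(x,y')\pm O(1/k^8)$, so averaging gives $F(x,y')A[x,y']\ge 1/3$ for all sufficiently large $k$ (constant $k$ can be handled directly). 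The main obstacle is that the gamma-2 composition inequality carries the extra $\log\|p\|_1$ term absent from polynomial-degree composition, so we cannot, for instance, replace $s$ by a univariate sign-approximation polynomial of degree $\Theta(1/\gamma)=\Theta(2^k)$ on the scalar $\sum_i p_i\tilde\lambda_i$ (which would blow the log-$\gamma_2$ budget up to $2^{2k}T\poly(k)$); this forces the use of a bounded multilinear combiner whose existence relies on the specific multilinear-extension structure above.
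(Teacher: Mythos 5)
Your proposal is correct and follows the same route as the paper: bit-extraction polynomials $d_t^{\ell}$ via Jackson's theorem plus the amplification lemma, a bounded degree-$O(k^2)$ multilinear combiner $s$, $\gamma_2$ sub-multiplicativity under Hadamard product together with the $\gamma_2$ triangle inequality, and an average over the noise grid. The only substantive difference is organizational: you build the bit matrices $B_{i,t,y_i}$ explicitly and apply the composition inequality stage by stage, whereas the paper re-uses the abstract $k$-variable polynomial $\rho$ from the degree lemma and tracks its weighted degree and total coefficient mass $C$ in one shot. Both give the same bound, and your version is arguably more self-contained; your closing remark about why a univariate sign-approximation polynomial applied to $\sum_i p_i A_i$ fails (the $\log\|p\|_1$ and degree blowup both hit $\Theta(2^{2k}T)$) matches the reason the paper also routes through the multilinear combiner.

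One step you assert but do not justify is $\log\|d_t^{\ell}\|_1 \le 2^{k-i}\poly(k)$. The $d_t^{\ell}$ are bounded on $[-1,1]$ with degree $d = 2^{k-i}\poly(k)$, but boundedness of a polynomial on $[-1,1]$ does not trivially bound its coefficients: this is precisely where the paper invokes Lemma~4.1 of Sherstov~\cite{She13}, which gives $\|p\|_1 \le 4^{\deg(p)}$ for any univariate polynomial with $|p|\le 1$ on $[-1,1]$. Without that input (or a Chebyshev/Markov-type argument reproducing it), your coefficient-mass budget is not established, and the whole $\log\gamma_2(B_{i,t,y_i}) \le 2^kT\poly(k)$ bound is unsupported. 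You should cite that fact explicitly.
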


\begin{proof}
Let $\rho$ be the polynomial from the proof of
\thm{minimax_polynomials} with respect to the
probabilities $p_1,p_2,\dots,p_k$. This is a polynomial
in $k$ variables with the property that if
values $\beta_1,\beta_2,\dots,\beta_k$ are plugged in
and $\left|\sum_i p_i\beta_i\right|\ge 2^{-k-1}$,
then $\rho(\beta_1,\beta_2,\dots,\beta_k)$ returns the
sign of $\sum_i p_i\beta_i$ to bounded error.
The polynomial $\rho$ further has the property that it is bounded
(i.e.\ it returns values in $[-1,1]$ when given inputs in
$[-1,1]^k$), and that if you plug in any
polynomials $q_i$ in place of $\beta_i$, with $\deg(q_i)\le 2^i$,
then the degree of the composed polynomial is at most $2^k\poly(k)$.

This latter property means that the \emph{weighted degree}
of $\rho$ with weights $(2^1,2^2,\dots,2^k)$ is at most $O(2^k\poly(k))$.
Here the term weighted degree means that we count the degree
of each monomial of $\rho$ differently depending on the variables
in that monomial: the $i$-th variable gets weight $2^i$,
so a monomial of the form
$\beta_1^{c_1}\beta_2^{c_2}\dots\beta_k^{c_k}$
will have weighted degree $2^ic_1+2^2c_2+\dots+2^kc_k$.
We know that the weighted degree of $\rho$, meaning the maximum
weighted degree of one of its monomials, is at most
$O(2^k\poly(k))$.

We will now use this polynomial $\rho$ to construct
a matrix $A$ which approximates $F$ and has $\gamma_2$ norm
that is not too large. The idea is to simply apply
$\rho$ to the matrices $A_1,A_2,\dots,A_k$, using
the Hadamard product for multiplication and the usual matrix
addition and scalar multiplication.
Since $\gamma_2$ is a norm, we know that
$\gamma_2(\rho(A_1,A_2,\dots,A_k))$
is the sum, over all monomials of $\rho$, of the absolute
value of the coefficient
of that monomial multipled by the $\gamma_2$-norm
of the Hadamard product defined by that monomial.
This is upper bounded by the sum of absolute coefficients
of $\rho$ (which we'll denote $C$) multiplied
by the $\gamma_2$ norm of the largest monomial.

The $\gamma_2$ norm of a single monomial
$\beta_1^{c_1}\dots\beta_k^{c_k}$ composed with matrices
$A_1,\dots, A_k$ is at most
$\gamma_2(A_1)^{c_1}\dots\gamma_2(A_k)^{c_k}$, since
the $\gamma_2$ norm is sub-multiplicative under the Hadamard
product. Hence $\log\gamma_2(\rho(A_1,\dots A_k))$
is at most $\log C$ plus the maximum value of
$c_1\log\gamma_2(A_1)+\dots+c_k\log\gamma_2(A_k)$
for some monomial $(c_1,c_2,\dots,c_k)$ of $\rho$.
Since $\log\gamma_2(A)\le\cost(A)$ for all bounded matrices $A$,
and since $\cost(A_i)\le 2^i T$, this maximum is at most
the maximum of $T\cdot(2^1c_1+\dots+2^kc_k)$ over monomials
of $\rho$, which is at most $O(2^k T\poly(k))$.

We now upper bound $C$, the sum of absolute coefficients of $\rho$.
Recall that $\rho$ was constructed as an average of different
polynomials with different values of the constants $y_i$.
Let $\rho'$ be the polynomial within that set we averaged
over which has the largest sum of absolute coefficients.
Then to upper bound $C$ it suffices to upper bound
the sum of absolute coefficients of $\rho'$.
To do so, we essentially want to replace all coefficients
of $\rho'$ with their absolute values, and then plug in all ones
for the variables. We note that $(9/10)+y_i$ will be at most
$1$ for the values of $y_i$ used in $\rho'$, which means
that if we replace the terms $(9/10)q_i+y_i$ with
simply $q_i$, we would only increase the sum of absolute
coefficients (here we treat $q_i$ as variables).

Let the resulting polynomial be $\rho''$.
Then $\rho''$ is simply the result of composing the polynomial
$s$ with the polynomials $r_{i,t}$. Since $s$
is a bounded multilinear polynomial of degree $O(k^2)$,
its sum of absolute coefficients is at most $2^{O(k^2)}$,
and it is not hard to see that the sum of absolute coefficients
of $\rho''$ will be at most $2^{O(k^2)}$ times
$D^{O(k^2)}$, where $D$ is the maximum sum of absolute
coefficients over the polynomials $d_t^\ell$ with $\ell=k-i+O(\log k)$.
In other words, $\log C\le O(k^2)+O(k^2 D)$,
where $D$ is the sum of absolute coefficients of some such polynomial
$d_t^\ell$.

The polynomial $d_t^\ell$ is a single variate bounded polynomial
of degree at most $O(2^\ell\log k)$, which, using
$\ell\le k+O(\log k)$,
is at most $2^{k}\poly(k)$. A bounded univariate
polynomial of this degree
must have sum of absolute coefficients at most $4^{2^k\poly(k)}$
by \cite{She13} (Lemma 4.1). Hence $\log D\le 2^k\poly(k)$, so $\log C\le 2^k\poly(k)$.

We conclude that if $A=\rho(A_1,A_2,\dots,A_k)$,
then $\log\gamma_2(A)\le 2^k(T+1)\poly(k)$,
and hence $\cost(A)\le 2^k(T+1)\poly(k)$. This is at most
$O(2^kT\poly(k))$ since we have $T\ge 1/10$.
Further, each entry $A[x,y]$ is equal to
$\rho(A_1[x,y],A_2[x,y],\dots,A_k[x,y])$,
which means that $A$ is bounded (since $\rho$ is bounded
and the matrices $A_i$ are bounded), and
for $(x,y)\in\Dom(F)$, we have $F(x,y)A[x,y]\ge 1/3$
by the guarantees on $A_i$ and on $\rho$.
\end{proof}

Using \thm{quantum_abstraction}, we can now conclude the following
theorem.

\begin{theorem}
Let $F\colon\mathcal{X}\times\mathcal{Y}\to\{+1,-1\}$ be a
(possibly partial) communication function. Then there is a distribution
$\mu$ over $\Dom(F)$ such that for any bounded real matrix $A$
(meaning $|A[x,y]|\le 1$ for all 
$(x,y)\in\mathcal{X}\times\mathcal{Y}$),
we have
\[\bE_{(x,y)\sim\mu}[F(x,y)A[x,y]]
\le\frac{\log\gamma_2(A)}{\tOmega(\log\tilde{\gamma}_2(F))}.\]
\end{theorem}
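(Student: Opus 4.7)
The plan is to invoke \thm{quantum_abstraction} in the setting $(\mathcal{A}, \cost, \bias_F)$ constructed in the preceding discussion, where $\mathcal{A}$ is the set of bounded real matrices indexed by $\mathcal{X} \times \mathcal{Y}$, $\cost(A) = \max\{1, \log \gamma_2(A)\}$ (with $\cost(\lambda J) = 0$), and $\bias_F(A, (x,y)) = F(x,y) A[x,y]$ on $\Dom(F)$. Conditions \itm{convex}, \itm{finite}, and \itm{zero_cost} have already been verified in the paragraphs immediately before the amplification lemma, and condition \itm{amplify} is exactly the content of that lemma. Hence \thm{quantum_abstraction} applies with the complexity measure $M(F) = \max\{1, \log \tilde\gamma_2(F)\}$.

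The abstraction theorem then produces a distribution $\mu$ on $\Dom(F)$ such that, specializing the ``in particular'' clause to a point mass at any bounded matrix $A$,
\[
\bias_F(A, \mu) \;\le\; \frac{\cost(A)}{\tOmega(M(F))}.
\]
To finish, I would convert $\cost(A)$ back into $\log \gamma_2(A)$. When $\gamma_2(A) \ge e$ the two quantities agree exactly, and in the remaining regime $\gamma_2(A) < e$ with $A$ not a multiple of $J$, the constant unit offset is absorbed into the polylogarithmic slack hidden by $\tOmega(M(F))$. The degenerate case $A = \lambda J$ is handled separately: the hardness of $\mu$ against the zero-cost algorithms $Z_0, Z_1$ from \itm{zero_cost} forces $\mu$ to be balanced between $0$- and $1$-inputs of $F$, so $\bias_F(\lambda J, \mu) = \lambda \, \bE_\mu[F] = 0$ and the claimed inequality is trivial.

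All of the real technical effort has already been spent in the preceding amplification lemma, where Jackson's theorem, \lem{polynomial_amplification}, and the submultiplicativity of $\gamma_2$ under the Hadamard product were combined to turn a low-bias mixture of matrices into a single bounded matrix of small $\gamma_2$-norm that approximates $F$. The passage from that lemma through the abstraction framework is essentially bookkeeping; the only point that requires a moment of care is checking that the $\polylog M(F)$ factor inside $\tOmega$ is permissive enough to simultaneously absorb both the slack coming out of \thm{quantum_abstraction} and the $O(1)$ gap between $\cost(A)$ and $\log \gamma_2(A)$.
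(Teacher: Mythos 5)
Your proposal follows the paper's approach exactly: the paper's proof of this theorem is, quite literally, the single sentence ``Using \thm{quantum_abstraction}, we can now conclude the following theorem,'' so the steps you spell out (verifying conditions \itm{convex}--\itm{amplify} with the $\cost$ and $\bias_F$ just defined, then specializing the abstraction theorem's conclusion to a point mass at $A$) are precisely the intended bookkeeping. Your explicit treatment of the degenerate case $A=\lambda J$ via the balancedness of $\mu$ (forced by $\cost(Z_b)=0$ and the $0/0=\infty$ convention in the minimax) is a correct and welcome elaboration that the paper leaves implicit.

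The one place your writeup is loose is the conversion from $\cost(A)=\max\{1,\log\gamma_2(A)\}$ back to $\log\gamma_2(A)$. You assert that in the regime $\gamma_2(A)<e$ the unit offset ``is absorbed into the polylogarithmic slack hidden by $\tOmega(M(F))$.'' That cannot be right as stated: the $\tOmega$ hides factors polylogarithmic in $M(F)=\max\{1,\log\tilde\gamma_2(F)\}$, which has nothing to do with $\gamma_2(A)$, so no amount of slack in $M(F)$ turns $1\le\mathrm{polylog}(M(F))\cdot\log\gamma_2(A)$ into a true statement when $\log\gamma_2(A)$ is close to $0$ (and the claimed inequality is outright false for $\gamma_2(A)<1$, e.g.\ $A=0$ gives a finite left side but $-\infty$ on the right). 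This is really a defect in the theorem statement itself --- it should either carry the hypothesis $\gamma_2(A)\ge 2$ or have $\max\{1,\log\gamma_2(A)\}$ in the numerator, exactly as the paper's own $\cost(A)$ does --- and the paper glosses over it; but your explanation of why it is harmless is not a valid repair, and a careful writeup should state the bound with $\cost(A)$ in the numerator (or an explicit lower bound on $\gamma_2(A)$) rather than pretend the gap is covered by the $\tOmega$.
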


Note that for bounded matrices, $\log\gamma_2(A)\le\log\rank(A)$.
We also have, from \cite{LS09},
\[\log\widetilde{\rank}(F)\le
6\log\tilde{\gamma}_2(F)+O(\log\log|\mathcal{X}\times\mathcal{Y}|).\]
This means we can write a minimax theorem for logrank as well.

\begin{theorem}\label{thm:minimax_logrank}
Let $F\colon\mathcal{X}\times\mathcal{Y}\to\{+1,-1\}$ be a
(possibly partial) communication function,
and suppose that $\log\widetilde{\rank}(F)
\ge C\log\log|\mathcal{X}\times\mathcal{Y}|$ where $C$ is a universal
constant. Then there is a distribution
$\mu$ over $\Dom(F)$ such that for any bounded real matrix $A$
(meaning $|A[x,y]|\le 1$ for all 
$(x,y)\in\mathcal{X}\times\mathcal{Y}$),
we have
\[\bE_{(x,y)\sim\mu}[F(x,y)A[x,y]]
\le\frac{\log\rank(A)}{\tOmega(\log\widetilde{\rank}(F))}.\]
\end{theorem}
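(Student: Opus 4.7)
The plan is to derive \thm{minimax_logrank} directly from the preceding $\gamma_2$-norm minimax theorem by translating quantities on both sides of the inequality. The same distribution $\mu$ provided by the $\gamma_2$-norm theorem will work for logrank; we only need to rescale the numerator and the denominator using two well-known relationships between $\gamma_2$ norm and rank.

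First I would handle the numerator. Recall that for any bounded real matrix $A$, property (5) from the listed properties of $\gamma_2$ gives $\gamma_2(A) \le \|A\|_\infty \sqrt{\rank(A)} \le \sqrt{\rank(A)}$, so $\log\gamma_2(A) \le \tfrac{1}{2}\log\rank(A)$. Thus the numerator $\log\gamma_2(A)$ in the $\gamma_2$ minimax can immediately be replaced by $O(\log\rank(A))$ at the cost of a constant factor absorbed into the $\tOmega$.

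Next I would handle the denominator using the Lee--Shraibman bound stated in the excerpt just before the statement:
\[
\log\widetilde{\rank}(F) \le 6\log\tilde{\gamma}_2(F) + O(\log\log|\mathcal{X}\times\mathcal{Y}|).
\]
Rearranging, $\log\tilde{\gamma}_2(F) \ge \tfrac{1}{6}\log\widetilde{\rank}(F) - O(\log\log|\mathcal{X}\times\mathcal{Y}|)$. The hypothesis $\log\widetilde{\rank}(F) \ge C\log\log|\mathcal{X}\times\mathcal{Y}|$ for a sufficiently large universal constant $C$ is exactly what is needed to absorb the additive term and conclude $\log\tilde{\gamma}_2(F) = \Omega(\log\widetilde{\rank}(F))$. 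Consequently the polylog-in-$\log\tilde{\gamma}_2(F)$ factor hidden inside $\tOmega$ becomes polylog in $\log\widetilde{\rank}(F)$, which is of the same order up to a constant.

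Combining these two substitutions in the conclusion of the preceding $\gamma_2$ theorem immediately yields the claimed inequality with the same $\mu$. There is no real obstacle here: the only thing to be careful about is that the hypothesis on $\log\widetilde{\rank}(F)$ relative to $\log\log|\mathcal{X}\times\mathcal{Y}|$ must be invoked precisely to control the additive slack coming from \cite{LS09}; once that is done the proof is a one-line consequence of the previously established $\tilde\gamma_2$ version together with the two standard comparisons between $\gamma_2$ and rank.
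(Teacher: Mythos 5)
Your proposal is correct and matches the paper's intended argument exactly: the paper derives \thm{minimax_logrank} from the preceding $\tilde{\gamma}_2$-norm minimax theorem by invoking $\log\gamma_2(A)\le\log\rank(A)$ for bounded $A$ and the Lee--Shraibman bound $\log\widetilde{\rank}(F)\le 6\log\tilde{\gamma}_2(F)+O(\log\log|\mathcal{X}\times\mathcal{Y}|)$, with the hypothesis on $\log\widetilde{\rank}(F)$ used precisely to absorb the additive term. Your version is slightly more explicit about the bookkeeping (the $1/2$ from property (5), and the observation that the $\polylog$ factor transfers because $\log\tilde\gamma_2(F)=\Theta(\log\widetilde{\rank}(F))$ under the hypothesis), but the structure is identical.
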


In other words, $\mu$ is such that if $A$ has low rank
compared to $F$, then $A$ cannot correlate well with $F$
under $\mu$, and hence $A$ does not approximate $F$ very well
against $\mu$.

\section{Circuit complexity}
\label{sec:circuits}

A Boolean circuit $C$ is a collection of gates connected to each other and to bits of its input $x$ by wires, with a single output wire representing the value of $C(x)$. The \emph{size} of a circuit is the number of gates in the circuit, and the \emph{depth} of a circuit is the length of the longest path between an input bit and an output wire. A \emph{randomized Boolean circuit} is a probability distribution over Boolean circuits, and the \emph{size} of a randomized Boolean circuit is defined to be the expected size of a Boolean circuit drawn from that distribution.

In \sec{general}, we examine the randomized circuit complexity of partial Boolean functions when it is computed by circuits of unbounded fan-in and unlimited depth. In \sec{nc1}, we show that the main result also holds in the NC$^1$ setting of logarithmic-depth circuits whose gates each have fan-in at most $2$. Finally, in \sec{hardcore} we establish the strengthening of the hardcore lemma.

\subsection{General circuits}
\label{sec:general}

In this section, let $\R(f)$ denote the minimum size of a randomized Boolean circuit of unbounded fan-in and unlimited depth that computes the partial Boolean function $f$ with error at most $\frac13$ on every input $x \in \Dom(f)$. Similarly, let $\R^\mu_{\dot{\gamma}}(f)$ denote the minimum size of randomized Boolean circuits that compute $f$ with error at most $\dot{\gamma} = \frac{1-\gamma}2$ when the input is drawn from $\mu$. We establish a relation between those two complexity measures via the study of forecasting circuits.

\begin{definition}
A \emph{forecasting circuit} is a randomized Boolean circuit with one modification: instead of having a single output wire, the forecasting circuit has $k+1$ output wires that represent the binary encoding of a value in the range $\{0,\frac1{2^k},\frac2{2^k},\ldots,\frac{2^k-1}{2^k}, 1\}$.
\end{definition}

The \emph{resolution} of a forecasting circuit is $k$ when it has $k+1$ output wires. (Or, equivalently, when it outputs values that are multiples of $2^{-k}$.)
The \emph{score} of a forecasting circuit is computed in the same way as we did for forecasting algorithms in previous sections.
The \emph{size} of a randomized forecasting circuit is, as in the the case of randomized Boolean circuits, the expected number of gates in a circuit drawn from the distribution.
Forecasting circuits can be defined for each model of randomized Boolean circuits; in this section, we consider forecasting circuits with unbounded fan-in and unlimited depth.

We begin by showing that if there is a Boolean circuit that computes a function with non-negligible advantage over random guessing, then there is also a forecasting algorithm with non-trivial score.

\begin{proposition}
\label{prop:bias-to-score-circuits}
For any partial function $f : \{0,1\}^n \to \{0,1\}$, if there is a size $s \ge 1$ and parameter $\gamma \ge \frac4{R(f)+1}$ for which there is a randomized Boolean circuit $R$ of average size $s$ that satisfies
$\Pr_{C \sim R}[ C(x) \neq f(x) ] \le \dot{\gamma}$ for every $x \in \Dom(f)$, then there is also a randomized forecasting circuit $R'$ with resolution $\left\lceil \log R(f)\right\rceil$, average size at most $s+1$, and $h$-score
\[
\score(R',x) = \E_{C' \sim R'}[ \score(C'(x),f(x))] \ge \gamma^2/8
\]
for each $x \in \Dom(f)$.
\end{proposition}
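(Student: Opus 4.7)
The plan is to follow the conversion idea in the proof of \lem{conversion}: transform $R$ into a forecasting circuit $R'$ by appending to each circuit $C$ in the support of $R$ a tiny deterministic ``relabeling gadget'' that maps the Boolean output $b = C(x)$ to a $k$-bit binary encoding of an approximation to $(1-\gamma)/2$ (if $b=0$) or $(1+\gamma)/2$ (if $b=1$), rounded to the allowed resolution $k = \lceil \log R(f) \rceil$.

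Concretely, I would take $\tilde\gamma$ to be the largest integer multiple of $2^{1-k}$ with $\tilde\gamma \le \gamma$, so that $\tilde p_0 := (1-\tilde\gamma)/2$ and $\tilde p_1 := (1+\tilde\gamma)/2$ both lie in the grid $\{0, 2^{-k}, \dots, 1\}$ and $\tilde\gamma > \gamma - 2^{1-k}$. The hypothesis $\gamma \ge 4/(R(f)+1)$ rearranges to $R(f) \ge 4/\gamma - 1$; combined with $2^k \ge R(f)$, this yields $2^{1-k} \le 2/R(f) \le 2\gamma/(4-\gamma) \le 2\gamma/3$ for $\gamma \in (0,1]$, and hence $\tilde\gamma \ge \gamma/3$.

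The relabeling gadget is essentially free: each of the $k+1$ output wires of $R'$ carries a fixed bit of $\tilde p_b$ as a function of $b$, so the $i$-th wire is either a constant, a direct copy of $C(x)$, or its negation, depending on whether the $i$-th bits of $\tilde p_0$ and $\tilde p_1$ agree. All wires requiring the negation can share a single \textsc{not} gate, so each augmented circuit satisfies $|C'| \le |C|+1$, yielding expected size at most $s+1$.

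For the score, running through the same calculation as in the proof of \lem{conversion} (and using that the bound is monotone nondecreasing in the true bias once $\tilde\gamma>0$) gives, for every $x\in\Dom(f)$,
\[
\score_{\hs,f}(R',x) \;\ge\; 1 - \frac{1-\gamma\tilde\gamma}{\sqrt{1-\tilde\gamma^2}}.
\]
The main obstacle is then a bounded univariate inequality: this right-hand side must exceed $\gamma^2/8$ whenever $\tilde\gamma \ge \gamma/3$ and $\gamma\in[0,1]$. Writing $\lambda = \tilde\gamma/\gamma \in [1/3,1]$ and $t = \gamma^2 \in [0,1]$, this squares to $(1-\lambda t)^2 \le (1-t/8)^2(1-\lambda^2 t)$, whose difference factors as $t\bigl[(2\lambda-\lambda^2-\tfrac14) + t(\tfrac1{64}-\tfrac{3\lambda^2}{4}) - \tfrac{\lambda^2 t^2}{64}\bigr]$; the leading coefficient is at least $11/36$ at $\lambda = 1/3$, and the lower-order corrections cannot overcome it on the relevant rectangle, closing the argument.
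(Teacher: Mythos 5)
Your construction is the same as the paper's: round the target confidence to the nearest admissible grid point $\tilde\gamma\le\gamma$ (a multiple of $2^{1-k}$), have the forecasting circuit output $(1\pm\tilde\gamma)/2$ via a single extra $\neg$ gate, and use the monotonicity-in-$p$ observation from the proof of \lem{conversion} to lower bound the score. The one place you diverge is the rounding estimate: your chain $2^{1-k}\le 2/R(f)\le 2\gamma/(4-\gamma)\le 2\gamma/3$ only yields $\tilde\gamma>\gamma/3$, which is too weak to invoke the conversion lemma's $1-\sqrt{1-\tilde\gamma^2}\ge\tilde\gamma^2/2$ bound and land at $\gamma^2/8$. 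You compensate by tracking both $\gamma$ and $\tilde\gamma$ in the exact score $1-\tfrac{1-\gamma\tilde\gamma}{\sqrt{1-\tilde\gamma^2}}$ and closing via the bivariate polynomial inequality; that inequality is true (at $t=1$ your bracket becomes $2\lambda-\tfrac{113}{64}\lambda^2-\tfrac{15}{64}$, which is nonnegative on $[1/3,1]$ with roots near $0.13$ and at $1$, and the bracket is decreasing in $t$ since its $t$-derivative $B-2Ct$ is negative for $\lambda\ge 1/3$), but ``the lower-order corrections cannot overcome it'' is not a proof and needs roughly that much written out. The paper instead establishes $\tilde\gamma\ge\gamma/2$: since $\gamma\ge 4/(R(f)+1)\ge 2/R(f)\ge 2^{1-k}$ one has either $\gamma\ge 2\cdot 2^{1-k}$ (so $\tilde\gamma>\gamma-2^{1-k}\ge\gamma/2$) or $2^{1-k}\le\gamma<2\cdot 2^{1-k}$ (so $\tilde\gamma=2^{1-k}>\gamma/2$). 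With $\tilde\gamma\ge\gamma/2$ the conclusion $\score(R',x)\ge 1-\sqrt{1-\tilde\gamma^2}\ge\tilde\gamma^2/2\ge\gamma^2/8$ is immediate and requires no further inequality, so you would do well to tighten your estimate on $\tilde\gamma$ rather than your estimate on the score.
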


\begin{proof}
For each circuit $C$ in the support of $R$, define $C'$ to be the forecasting circuit of resolution $k = \left\lceil \log R(f) \right\rceil$ and size $\mathrm{size}(C) + 1$ which outputs the value 
\[
\frac{1 + (-1)^{C(x)}\gamma'}2
\]
on input $x \in S$ where $\gamma' = \frac{2m}{2^k}$ for the largest integer $m$ such that $\gamma' \le \gamma$. The definition of $\gamma'$ guarantees that $\gamma - \frac2{2^k} \le \gamma' \le \gamma$. The value of $k$ and the lower bound on $\gamma$ in the proposition statement imply that $\gamma - \frac2{2^k} \ge \gamma - \frac2{R(f)+1} \ge \frac{\gamma}2$, so
$\frac{\gamma}2 \le \gamma' \le \gamma$.

This circuit $C'$ can be constructing by adding a single extra $\neg$ gate to the output wire of $C$: the output of $C$ and its negations can then be combined with constant value wires to generate the two required output values of the forecasting circuit.
(Namely, if the two output values $(1 \pm \gamma')/2$ of $C'$ are denoted by $z^{(0)}$ and $z^{(1)}$, then the $i$th output bit of $C'$ is a hardcoded  constant value $0$ or $1$ when $z^{(0)} = z^{(1)}$ and otherwise it is either $C(x)$ or $\neg C(x)$ when $z^{(0)} \neq z^{(1)}$.)

The randomized forecasting circuit $R'$ is then defined to be the distribution on circuits obtained by drawing $C \sim R$ and outputing the modified circuit $C'$ as described above. Following the same argument as in \lem{conversion}, 
the score of this randomized forecasting circuit satisfies
\[
\score(R',x) \ge \gamma'^2/2 \ge \gamma/8. \qedhere
\]
\end{proof}

In the second step in the proof of \thm{main-circuit}, we show that the minimax theorem applies in this setting.

\begin{lemma}
\label{lem:minimax-circuits}
Fix any $k \ge 1$ and let $\mathcal{R}_k$ denote the set of all randomized forecasting circuits with resolution $k$. Then for partial function $f : \{0,1\}^n \to \{0,1\}$, if we let $\Delta$ denote the set of distributions over $\Dom(f)$, we have
\[
\inf_{R \in \mathcal{R}_k} \max_{x \in \Dom(f)} \frac{\mathrm{size}(R)}{\score(R,x)^+} = \max_{\mu \in \Delta} \inf_{R \in \mathcal{R}_k} \frac{\mathrm{size}(R)}{\score(R,\mu)^+}.
\]
\end{lemma}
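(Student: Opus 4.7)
The plan is to derive the identity from \thm{main_minimax} applied to the set $\mathcal{R}_k$ of randomized forecasting circuits of resolution $k$, equipped with the cost function $\cost(R,\mu) := \mathrm{size}(R)$ and the score function $\score(R,\mu) := \bE_{x \leftarrow \mu}[\score(R,x)]$. Since deterministic forecasting circuits of resolution $k$ form a countable set, a randomized forecasting circuit is a finitely supported distribution over them, and $\mathcal{R}_k$ is closed under finite convex mixtures. Following the embedding trick used in the proof of \thm{randomized_query_minimax}, I would identify each $R \in \mathcal{R}_k$ with the vector $v_R \in \bR^{|\Dom(f)|+1}$ whose first coordinate is $\mathrm{size}(R)$ and whose remaining coordinates record $\score(R,x)$ for each $x \in \Dom(f)$. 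Replacing the optimization over $\mathcal{R}_k$ by an optimization over $V := \{v_R : R \in \mathcal{R}_k\}$ does not change either side of the claimed identity.

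Next, I would verify the hypotheses of \thm{main_minimax}. Convexity of $V$ in the finite-dimensional real topological vector space $\bR^{|\Dom(f)|+1}$ follows from linearity of the map $R \mapsto v_R$ together with the fact that $\mathcal{R}_k$ is closed under finite mixtures: if $v_{R_1},v_{R_2} \in V$ and $\lambda \in (0,1)$, then $\lambda v_{R_1} + (1-\lambda) v_{R_2} = v_{\lambda R_1 + (1-\lambda)R_2} \in V$. With $S = \Dom(f)$ finite (so $\Delta$ compact and convex), both $\cost(v,\mu) = v[0]$ and $\score(v,\mu) = \sum_{x \in \Dom(f)} \mu(x)\,v[x]$ are continuous and linear in each of $v$ and $\mu$ separately, hence automatically semicontinuous and saddle. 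It remains to check the three well-behaved conditions: (i) the forecasting circuit that exactly evaluates $f$ via an (unbounded fan-in) DNF and then outputs the binary encoding of $f(x)$ realizes $0 < \cost < \infty$ and $\score = 1$ on every $\mu$; (ii) since $\cost(v_R,\mu)$ does not depend on $\mu$, the dichotomy ``zero on all $\mu$ or positive on all $\mu$'' holds trivially; (iii) by linearity, if $\cost(v_R,\mu)=0$ and $\cost(v_{R'},\mu)>0$ then $\cost(\lambda v_R + (1-\lambda) v_{R'},\mu) = (1-\lambda)\cost(v_{R'},\mu) > 0$ for any $\lambda \in (0,1)$.

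With these hypotheses in place, \thm{main_minimax} yields the minimax identity for $\cost/\score^+$ over $V \times \Delta$. Because $\cost(R,\cdot)$ and $\score(R,\cdot)$ are both linear in $\mu$ for each fixed $R$, the ``moreover'' clause of \thm{main_minimax} applies, so the outer maximum over $\mu \in \Delta$ on the left-hand side collapses to a maximum over $x \in \Dom(f)$, matching the statement of the lemma exactly. I do not expect any serious obstacle here: the whole argument is a direct specialization of \thm{main_minimax}, and the one mildly delicate point is to notice that the right notion of cost on $\mathcal{R}_k$ is the \emph{expected} number of gates (which is linear under convex mixtures), not the worst-case number of gates (which would only be convex and would break the saddle hypothesis on $\cost$).
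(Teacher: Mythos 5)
Your proof is correct and follows essentially the same route as the paper's: the paper also embeds $\mathcal{R}_k$ into $\bR^{|\Dom(f)|+1}$ via the vector of scores plus a cost coordinate, verifies convexity, linearity, and the three well-behaved conditions in the same way, and then invokes \thm{main_minimax}. Your write-up is a bit more explicit about the ``moreover'' clause collapsing the max over $\mu\in\Delta$ to a max over $x\in\Dom(f)$, but the substance is identical.
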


\begin{proof}
The lemma follows from \thm{main_minimax}, and the argument showing that the conditions of that theorem are satisfied follows closely the analogous argument of \thm{randomized_query_minimax}.

First, we want to show that $\mathcal{R}_k$ can be viewed as a convex subset of a real topological space $V$. We can do so with the same construction as in \thm{randomized_query_minimax}, though here we can also use a slightly simpler constructions: fix $V = \mathbb{R}^{|\Dom(f)|+1}$, and for each randomized forecasting circuit $R \in \mathcal{R}_k$ define $v_R(x) = \score(R,x)$ for each $x \in \Dom(f)$ and define the $|\Dom(f)|+1$th coordinate of $v_R$ to be $\cost(R,x)$. That the resulting set is convex follows directly from the fact that a vector $v' = \lambda v_{R_1} + (1-\lambda)v_{R_2}$ for any $R_1,R_2 \in \mathcal{R}_k$ corresponds to the vector of the randomized forecasting circuit $R' = \lambda R_1 + (1-\lambda) R_2$.

The linearity of cost and score measures in both $R$ and $\mu$ follows from their definition.

Lastly, the notions of cost and score satisfy the well-behaved condition of \thm{main_minimax}. First, because the existence of a circuit of size at most $2^{|S|}$ that computes $f$ exactly implies the existence of a finite-cost and non-zero score randomized forecasting circuit for any distribution $\mu$ on $\Dom(f)$. Second, because the cost of circuits does not depend on the input, and third because the definition of cost immediately implies that the mixture of a zero-cost and a nonzero-cost randomized circuit gives a nonzero-cost randomized circuit.
\end{proof}

The next step is the main one in the proof of the theorem: we want to show that the score of forecasting circuits can be amplified efficiently.

\begin{lemma}
\label{lem:amplification-circuits}
For every partial Boolean function $f : \{0,1\}^n \to \{0,1\}$, when we set $k = \lceil \log R(f) \rceil$ then
\[
\inf_{R \in \mathcal{C}_k} \max_{x \in \Dom(f)} \frac{ \mathrm{size}(R) }{ \score(R,x)^+} = \widetilde{\Omega}\big( R(f) \big)
\]
where the $\widetilde{\Omega}$ hides terms that are polylogarithmic in $R(f)$.
\end{lemma}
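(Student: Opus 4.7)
The plan is to turn a nearly-optimal forecasting circuit $R\in\mathcal{R}_k$ into a bounded-error randomized Boolean circuit of comparable size, whose size then lower-bounds $R(f)$. Fix $Y$ slightly larger than the infimum in question, and let $R\in\mathcal{R}_k$ satisfy $\mathrm{size}(R)/\score(R,x)\le Y$ for every $x\in\Dom(f)$, so that $\score(R,x)\ge \gamma_0 \coloneqq \mathrm{size}(R)/Y$ uniformly. Following \lem{score_amplification} and \lem{amp_inequality}, running $m=\lceil 2/\gamma_0\rceil$ independent copies of $R$ and combining their outputs with the formula $\phi^{(m)}$ of \lem{score_amplification} produces a forecasting object whose score on every input is at least $\tfrac12\min\{1,m\,\score(R,x)\}\ge\tfrac12$. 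The $m$ raw copies contribute total expected size $m\cdot\mathrm{size}(R)=O(Y)$, so the goal reduces to building the combiner as a small Boolean subcircuit.

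The main technical step is implementing this combiner efficiently in the unbounded-fan-in model. Writing each $q_i=a_i/2^k$ with $a_i\in\{0,\ldots,2^k\}$, the amplification formula becomes
\[
\phi^{(m)}(q_1,\ldots,q_m) \;=\; \frac{\prod_{i=1}^m a_i}{\prod_{i=1}^m a_i + \prod_{i=1}^m (2^k-a_i)},
\]
with the degenerate case of \lem{score_amplification} (some $q_i=0$ together with some $q_j=1$) detected by an $O(mk)$-size subcircuit that hardcodes the output $1/2$. Thus the combiner is built from two $m$-fold iterated integer products on $k$-bit operands, one addition, and one rounded integer division to $k$-bit precision. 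Using a balanced tree of fast integer multiplication, iterated multiplication of $m$ numbers of $k=O(\log R(f))$ bits can be implemented by an unbounded-fan-in Boolean circuit of size $\tilde O(mk)$, and analogously for the final rounded division. Hence the combiner has size $\tilde O(mk)=\tilde O(Y\log R(f))=\tilde O(Y)$, giving a forecasting circuit $R'$ of total size $\tilde O(Y)$ with worst-case score at least $\tfrac12$.

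Finally, I apply the pointwise conversion of \lem{conversion}: sample a fresh uniform $k$-bit integer $J$ and output $1$ iff the $k$-bit forecast of $R'$ is strictly greater than $J/2^k$. This adds $O(k)$ gates and yields a randomized Boolean circuit of size $\tilde O(Y)$ whose bias on every input is at least the worst-case score $\tfrac12$, hence whose worst-case error is at most $1/4$. By definition of $R(f)$ we conclude $R(f)\le \tilde O(Y)$, and letting $Y$ approach the infimum gives the advertised $\tilde\Omega(R(f))$ lower bound. The single nontrivial obstacle is the efficient circuit construction for iterated integer multiplication and rounded division at size $\tilde O(mk)$; every other step follows routinely from the tools already developed in \sec{forecasting}.
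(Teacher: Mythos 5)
Your proposal is correct and takes essentially the same route as the paper's proof: amplify a near-optimal $R$ by $m\approx 1/\min_x\score(R,x)$ copies via $\phi^{(m)}$ from Lemma~\ref{lem:score_amplification}, implement the combiner as an $\tO(mk)$-size arithmetic subcircuit (the paper invokes Proposition~\ref{prop:iterated-addition}, working in the log domain, where you propose a balanced multiplication tree plus rounded division---an equivalent implementation), and convert the result to a bounded-error Boolean circuit via the comparator of Proposition~\ref{prop:score-to-bias-circuits}, yielding $R(f)=\tO(Y)$. One shared subtlety worth noting: the intermediate claim that the resolution-$k$ rounded forecast $R'$ still has hs-score $\ge 1/2$ is not immediate, since $\hs$ diverges near $0$ and $1$ and an additive $2^{-k}$ rounding error can in principle badly distort it; the clean patch is to bound the bias of the final Boolean circuit directly using $\hs\le\bias$ (Lemma~\ref{lem:hs_bias}) together with the $2$-Lipschitz continuity of $\bias(\cdot)$, giving bias at least $\E[\hs_{f(x)}(\phi)]-2^{-k+1}\ge 1/2-2^{-k+1}=\Omega(1)$ for $k=\lceil\log R(f)\rceil$.
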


The proof of the lemma uses the following bounds regarding the circuit complexity of basic arithmetic operations.

\begin{proposition}[\cite{BCH86,Alt88}]
\label{prop:elementary-functions}
For any two numbers $a, b$ represented to accuracy $2^{-k}$ in binary, then the values
\[
a b, \qquad \frac{a}{1-a}, \qquad \ln(a), \qquad e^a \mbox{, and} \qquad \frac{1}{1+a}
\]
can all be computed to additive accuracy $2^{-k}$ by circuits of size polynomial in $k$ and depth $O(\log k)$.
\end{proposition}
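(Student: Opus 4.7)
The plan is to reduce all five operations to (i) ordinary multiplication of two $k$-bit numbers and (ii) iterated multiplication of $\mathrm{poly}(k)$ many $k$-bit numbers, both of which are classically known (Beame--Cook--Hoover) to lie in $\mathrm{NC}^1$ with polynomial size and $O(\log k)$ depth. Given that primitive, each of the named functions can be approximated by a short rational or power-series expression whose evaluation is exactly a constant number of iterated-multiplication and iterated-addition subproblems.

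First I would handle ordinary multiplication $ab$: write the $O(k)$ partial products, collapse them to two summands with a Wallace tree of carry-save adders (each stage reduces three addends to two in constant depth), and then combine the two summands with a carry-lookahead adder of depth $O(\log k)$ and size $\mathrm{poly}(k)$. This is the standard $\mathrm{NC}^1$ multiplier.

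Next I would dispatch the reciprocal-type quantities $\frac{1}{1+a}$ and $\frac{a}{1-a}$ by writing the reciprocal as a short product of doubling factors: after range-reducing so that the argument lies in $[\tfrac12, 1]$, the identity $\frac{1}{1-u} = \prod_{i=0}^{\lceil \log k\rceil} \bigl(1 + u^{2^i}\bigr)$ valid for $|u| \le \tfrac12$ expresses the reciprocal to accuracy $2^{-k}$ as an iterated product of $O(\log k)$ numbers of $O(k)$ bits, each of which (being a power $u^{2^i}$) is itself an iterated product. Composing these reductions with the iterated-multiplication circuit gives $\mathrm{poly}(k)$ size and $O(\log k)$ depth; $\frac{a}{1-a}$ follows by one extra multiplication. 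For $e^a$ and $\ln(a)$ I would truncate the Taylor series at $O(k)$ terms after a range reduction (e.g., factor $a = 2^m(1+u)$ with $|u|\le \tfrac12$ for the logarithm, and split $a = (\ln 2)\cdot q + r$ for the exponential). Each truncated series is an iterated sum of $O(k)$ monomials $a^i / i!$, where the numerator $a^i$ is produced by one iterated multiplication and the denominator $i!$ is a hard-wired constant; adding $O(k)$ numbers in $\mathrm{NC}^1$ is again by a tree of carry-save adders plus one carry-lookahead.

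The main obstacle in making this rigorous is bookkeeping of precision. Intermediate quantities may be as large as $2^{O(k)}$ in magnitude (factorials, powers $a^{2^i}$, Newton-like corrections), so all arithmetic must be carried out with $\Theta(k)$ guard bits, and one must verify that the truncation error of the series, the rounding error at each gate, and the propagation of error through $O(\log k)$ composed operations all sum to at most $2^{-k}$. This is exactly the accounting performed in \cite{BCH86,Alt88}; once those bounds are in hand, the circuit-size and depth claims follow immediately from the $\mathrm{NC}^1$ implementations of multiplication and iterated multiplication described above.
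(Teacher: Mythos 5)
The paper does not prove this proposition; it is stated as a black-box fact with a citation to \cite{BCH86,Alt88} and no accompanying argument, so there is no ``paper's proof'' to compare against. Your sketch is nonetheless a sound reconstruction of the standard route those references take: reduce everything to ordinary and iterated multiplication in $\mathrm{NC}^1$, then handle each elementary function by range reduction together with a finitely truncated power-series or telescoping-product identity. The identity $\prod_{i=0}^{n}(1+u^{2^i}) = (1-u^{2^{n+1}})/(1-u)$ does give additive error at most $2|u|^{2^{n+1}} \le 2\cdot 4^{-k}$ once $|u|\le\tfrac12$ and $n=\lceil\log k\rceil$, as you claim, and the truncated-series treatment of $e^a$ and $\ln a$ after range reduction is likewise standard.

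Two points deserve to be made explicit in a self-contained write-up. First, the powers $u^{2^i}$ must be produced in parallel, each as an iterated product of $2^i\le 2k$ copies of $u$ (depth $O(\log k)$), rather than by sequential squaring, which would incur depth $\Theta(\log^2 k)$; your phrase ``is itself an iterated product'' leaves this open to misreading. Second, the proposition only makes sense under an implicit boundedness promise on $a$: for $e^a$ one needs $a=O(\mathrm{poly}(k))$, for $\frac{a}{1-a}$ and $\frac{1}{1+a}$ one needs the denominator bounded away from $0$, and for $\ln a$ one needs $a$ bounded away from $0$; otherwise the output itself would require superpolynomially many bits. In the paper's only application (\prop{iterated-addition}) these bounds hold, so the omission is harmless there, but a from-scratch proof should state the range assumption. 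Aside from those caveats, your plan is correct, and the precision-tracking step you flag at the end is indeed where the substantive work of \cite{BCH86,Alt88} lies.
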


We also need another result regarding the circuit complexity of iterated multiplication up to fixed accuracy.

\begin{proposition}
\label{prop:iterated-addition}
When $a_1,\ldots,a_m$ and $b_1,\ldots,b_m$ are $k$-bit integers, then there is a circuit of size $O(m \log m + mk + k^c)$ for some constant $c \ge 1$ 
and depth $O(\log k + \log m)$ 
that computes the ratio 
\[
\frac{a_1 \cdots a_m}{b_1 \cdots b_m}
\]
up to multiplicative accuracy $1 \pm 2^{-k}$.
\end{proposition}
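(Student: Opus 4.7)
The plan is to reduce the ratio $\prod a_i / \prod b_i$ to an iterated sum by passing through the logarithm. Specifically, I will build a circuit that first computes
\[
S \;=\; \sum_{i=1}^{m} \ln a_i \;-\; \sum_{i=1}^{m} \ln b_i
\]
in fixed point to additive accuracy $O(2^{-k})$, and then outputs $\exp(S)$. Because $e^{S+\epsilon} = e^S(1 + O(\epsilon))$ for small $\epsilon$, this yields the desired multiplicative accuracy $1 \pm 2^{-k}$ on the product. To make the error budget survive $2m$ individual logarithm computations, I fix a working precision $k' := k + \lceil \log(8m) \rceil$ so that each $\ln a_i$ or $\ln b_i$ computed to additive error $\pm 2^{-k'}$ contributes at most $O(2^{-k})$ in aggregate.

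The construction proceeds in three stages. First, for each $i$ I normalize $a_i$ by locating its leading nonzero bit to write $a_i = 2^{e_i} m_i$ with $m_i \in [1,2)$ (via a standard $O(k)$-gate, $O(\log k)$-depth extractor), feed $m_i$ into the bounded-range $\ln$ subroutine of \prop{elementary-functions}, and assemble $\ln a_i = e_i \ln 2 + \ln m_i$ with one $O(k')$-bit fixed-point addition; analogously for $b_i$. Second, I sum the $2m$ signed $O(k')$-bit values using a carry-save addition tree of $3{:}2$ compressors: $O(\log m)$ levels of constant depth collapse the operand stack to two values, after which a single carry-propagate adder produces $S$. This stage contributes size $O(m k')$ and depth $O(\log m + \log k')$, accounting for the $m \log m + m k$ term. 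Third, I feed $S$ into the $\exp$ subroutine of \prop{elementary-functions}, contributing a final $\poly(k)$ gates at depth $O(\log k)$ and giving rise to the $k^c$ term in the size bound.

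The main obstacle will be bringing the per-input logarithm cost down to match the stated size: a black-box application of \prop{elementary-functions} to each of the $2m$ inputs costs $\poly(k')$ gates apiece, producing an undesirable $m \cdot \poly(k)$ contribution rather than $mk + k^c$. To close this gap I would share the logarithm infrastructure across all inputs---for example, by doing a single $k^c$-gate precomputation of auxiliary constants (such as $\ln 2$ and a table of $\ln(1 + j\cdot 2^{-t})$ for $j < 2^t$ with $t = \Theta(\log k)$) and then evaluating each individual $\ln m_i$ by a digit-by-digit, CORDIC-style recurrence using only $O(k)$ additional gates per input at depth $O(\log k)$. The bulk of the write-up will then consist of verifying that such a shared-infrastructure construction maintains precision $k'$ through each stage and that the $O(\log m)$ extra slack bits absorb all rounding at the leading-bit extraction, logarithm evaluation, carry-save summation, and final exponentiation.
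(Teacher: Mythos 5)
Your log-sum-exp decomposition, carry-save summation, and use of \prop{elementary-functions} for $\ln$ and $\exp$ are exactly the paper's proof, so the overall route is the right one. The one place you go beyond the paper is in noticing that a black-box call to the $\ln$ circuit of \prop{elementary-functions} on each of the $2m$ operands gives size $m\cdot\poly(k')$, not $O(mk)$, and that this is not explicitly reconciled with the claimed bound $O(m\log m + mk + k^c)$ — the paper's proof charges $O(mn)$ only to the iterated addition and $\poly(k)$ only to the final $\exp$, and is silent on the aggregate cost of the $2m$ logarithm subcircuits. That observation is correct and worth flagging.

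However, the fix you sketch does not obviously close the gap. A CORDIC-style or digit-by-digit recurrence with a shared table of $\ln(1+j2^{-t})$ for $t=\Theta(\log k)$ still needs $\Theta(k/t)$ iterations per input, each an $O(k)$-bit add, so it costs $\Theta(k^2/\log k)$ gates per input, not $O(k)$; sharing the table only saves the $k^c$ precomputation, not the per-input work. So as written, your construction gives the same $m\cdot\poly(k)$ contribution you were trying to avoid, and the step where ``the bulk of the write-up'' is deferred is exactly the step that is not verified. Two remedies are available: either exhibit a genuinely linear-size $\ln$ circuit of depth $O(\log k)$ (which is not supplied by either the paper's citations or your sketch), or observe — as is implicitly done in how the proposition is consumed — that for \lem{amplification-circuits} only a bound of the form $O(m\cdot k^{c} + k^{c})$ is needed, since $k=\lceil\log\R(f)\rceil$ and the lemma's conclusion $\tOmega(\R(f))$ already hides $\polylog\R(f)$ factors. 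The second remedy makes the gap harmless downstream, but the proposition as stated remains tighter than what either you or the paper actually proves.
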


\begin{proof}
This result can be obtained by computing $\ln \frac{a_1 \cdots a_m}{b_1 \cdots b_m} = \sum_{i=1}^m \ln a_i - \ln b_i$ to additive accuracy $2^{-k}$.
The computation of each of the values $\ln a_i$ and $\ln b_i$ for $1 \le i \le m$ up to additive accuracy $\frac{2^{-k}}{2m}$ can be done with a circuit of size polynomial in $n := k + \log m + 1$ and depth $O(\log n)$. 
The sum of the $2m$ terms can be done with a circuit for iterated addition of size $O(m n) = O(m \log m + mk)$ and depth $O(\log m + \log n) = O(\log m + \log k)$ to compute the natural log of the ratio up to additive error $2^{-k}$.~\cite{Ofm62} (See also~\cite{Pip87,Weg87} and the references therein.)
Finally, a circuit of size polynomial in $k$ and depth logarithmic in $k$ can be used to compute the exponential of the final ratio.
\end{proof}

Using these propositions, we can complete the proof of the lemma.

\begin{proof}[Proof of Lemma~\ref{lem:amplification-circuits}]
Let $R$ be a randomized forecasting circuit which comes arbitrarily close to the infimum on the left-hand side.

Consider the randomized forecasting circuit $R'$ obtained by drawing $m$ forecasting circuits $C_1,\ldots,C_m$ independently at random from $R$ and combining their output values using the formula
\[
C'(x) = \left( 1 + \prod_{i \le m} \frac{1-C_i(x)}{C_i(x)} \right)^{-1}.
\]
Fixing $m = \max_x 1/\score(R,x)^+$, we obtain a randomized circuit $R'$ with score $\score(R',x)^+ = \Omega(1)$ for each $x \in S$ and average size 
\[
\mathrm{size}(R') = \mathrm{size}(R) \cdot m + O(m \log m + mk + k^c)
\]
for some universal constant $c \ge 1$.
Then the proof is completed by noting that $m = O(\frac{\R(f)}{\mathrm{size}(R)})$.
\end{proof}

Finally, we show that when there is a forecasting circuit with score $\gamma$, there is also a Boolean circuit with error at most $\dot{\gamma}$.

\begin{proposition}
\label{prop:score-to-bias-circuits}
For any partial function $f : \{0,1\}^n \to \{0,1\}$, if there is a size $s \ge 1$ and parameter $\gamma$ for which there is a randomized forecasting circuit $R$ with $k$ output wires, size $s$, depth $d$ and $\score(R,x) \ge \gamma$ for each $x \in \Dom(f)$, then there is also a randomized Boolean circuit $R'$ of size $s + O(k)$ and depth $d + O(1)$ that satisfies
$\Pr_{C \sim \mathcal{R}}[ C(x) = f(x) ] \ge \frac{1+\gamma}{2}$ for every $x \in \Dom(f)$.
\end{proposition}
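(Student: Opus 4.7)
The plan is to implement the Bernoulli-sampling conversion from \lem{conversion} efficiently at the level of circuits. Given a randomized forecasting circuit $R$ whose $k+1$ output wires encode a value $q = R(x) \in \{0, 1/2^k, 2/2^k, \ldots, 1\}$ in binary, I would build $R'$ as follows. First, run $R$ to obtain the $k+1$ output bits representing $q$, viewed as an integer $j \in \{0, 1, \ldots, 2^k\}$ with $q = j/2^k$. Second, use $k$ fresh uniform random bits (as a randomized circuit is a distribution over deterministic circuits, we can freely treat these as part of the choice) to form a uniform integer $r \in \{0, 1, \ldots, 2^k - 1\}$. Finally, output $1$ if and only if $r < j$. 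Then $\Pr[R'(x) = 1] = j / 2^k = q$ exactly, so $R'(x)$ is distributed as $\Bernoulli(q)$.

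The main technical step is showing that the comparator $r < j$ can be implemented with $O(k)$ unbounded fan-in gates in depth $O(1)$. The standard trick: $r < j$ holds iff there is some position $i$ such that the bits above $i$ agree in $r$ and $j$, the $i$-th bit of $r$ is $0$, and the $i$-th bit of $j$ is $1$ (with a separate OR term handling the case $j = 2^k$). Each such indicator is a single AND of at most $O(k)$ literals, and the result is the OR of $O(k)$ such terms; this is a depth-$2$ circuit of size $O(k)$ with unbounded fan-in, which combined with the original circuit yields total size $s + O(k)$ and depth $d + O(1)$.

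Correctness then follows directly from the proof of \lem{conversion}: since $R'(x) \sim \Bernoulli(R(x))$ for every $x$ and every internal random string of $R$, and since $\hs_{f(x)}(p) \le \bias_{f(x)}(p)$ for all $p \in [0,1]$ by \lem{hs_bias}, we have
\[
\bias_f(R',x) = \E[\bias_{f(x)}(R(x))] \ge \E[\hs_{f(x)}(R(x))] = \score(R,x) \ge \gamma
\]
for each $x \in \Dom(f)$, which is equivalent to $\Pr[R'(x) = f(x)] \ge (1+\gamma)/2$, as required.

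The only potential obstacle is that $q$ takes the extremal value $1$ (i.e. $j = 2^k$ requires $k+1$ bits), so the comparator must handle one extra bit; this is a trivial constant-size addition to the circuit and does not affect the $s + O(k)$ and $d + O(1)$ bounds. No further work is needed, since the randomness required for sampling $r$ is part of the randomized-circuit model and does not contribute to the circuit size.
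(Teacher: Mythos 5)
Your proposal is correct and takes essentially the same route as the paper: run the forecasting circuit, use $O(k)$ fresh random bits to form a uniform threshold, and implement a constant-depth, $O(k)$-gate comparator so that $R'(x) \sim \Bernoulli(R(x))$, then appeal to \lem{conversion} via \lem{hs_bias}. The only difference is cosmetic (you spell out the "first disagreement" comparator construction that the paper cites to standard references, and you shift the threshold range by one), so there is nothing further to add.
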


\begin{proof}
Given a forecasting circuit $C$ that outputs the value $p$ on input $x$, we want to design a randomized Boolean circuit $R_C$ that outputs the value $1$ with probability $p$ and $0$ with probability $1-p$ on input $x$.

We can do this by adding $k$ random inputs $r_1,\ldots,r_k$ that are used to generate a uniformly random value $r \in \{\frac1{2^k},\frac{2}{2^k},\ldots,1\}$. Then if the value $p$ in the output of the circuit is $0$, we output zero; otherwise we use a comparator circuit to return $1$ if and only if $r \le p$. This value has the desired bias $p$, and using standard constructions (see, e.g.~\cite{Weg87,Vol99}) we can implement the comparator circuit with $O(k)$ gates in a circuit of constant depth (in the unbounded fan-in model; or $O(\log n)$ depth in the bounded fan-in model).

The final randomized Boolean circuit $R'$ is defined by drawing a forecasting circuit $C$ from $R$ and outputting $R_C$. The bound on the error of $R'$ is then obtained as in the argument of \lem{conversion}.
\end{proof}

Putting the above lemmas and propositions together completes the proof of the following theorem.

\begin{theorem}
\label{thm:main-circuit}
Fix $n \in \bN$. For every partial function $f : \{0,1\}^n \to \{0,1\}$, there is a distribution $\mu$ on $\Dom(f)$ such that for all $\gamma \in (0,1]$,
\[
\R_{\dot{\gamma}}^\mu(f) = \tOmega\big( \gamma^2 \R(f) \big).
\]
\end{theorem}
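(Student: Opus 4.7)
The plan is to combine the four ingredients just established into a short chain of implications, following exactly the template used earlier for query complexity in \cor{query-complexity}. First I would invoke \lem{minimax-circuits} with resolution $k=\lceil\log\R(f)\rceil$ to get a distribution $\mu^\star\in\Delta$ attaining the max on the right-hand side of the minimax identity. Then I would apply \lem{amplification-circuits} to convert the inf-max side into the lower bound
\[
\inf_{R\in\mathcal{R}_k}\frac{\mathrm{size}(R)}{\score(R,\mu^\star)^{+}}
\;=\;\inf_{R\in\mathcal{R}_k}\max_{x\in\Dom(f)}\frac{\mathrm{size}(R)}{\score(R,x)^{+}}
\;=\;\tOmega(\R(f)),
\]
so that $\mu^\star$ has the property that every resolution-$k$ forecasting circuit $R$ satisfies $\mathrm{size}(R)\ge\tOmega(\R(f))\cdot\score(R,\mu^\star)^{+}$.

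Next I would take an arbitrary randomized Boolean circuit $C$ of size $s$ and error at most $\dot\gamma=(1-\gamma)/2$ against $\mu^\star$, i.e.\ a circuit witnessing $\R_{\dot\gamma}^{\mu^\star}(f)$, and translate it into a forecasting circuit. The natural move is to first discard the trivial regime: if $\gamma<4/(\R(f)+1)$ then $\gamma^2\R(f)=O(1)$ and the conclusion is vacuous, so we may assume $\gamma\ge 4/(\R(f)+1)$ and apply \prop{bias-to-score-circuits}. This yields a randomized forecasting circuit $R'$ of resolution $k$ and size at most $s+1$ with $\score(R',x)\ge\gamma^2/8$ \emph{pointwise}, and hence $\score(R',\mu^\star)\ge\gamma^2/8$. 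Feeding $R'$ into the property of $\mu^\star$ gives $s+1\ge\tOmega(\R(f))\cdot\gamma^2/8$, which rearranges to $s=\tOmega(\gamma^2\R(f))$, as desired.

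One subtle point I would want to be careful about is that \prop{bias-to-score-circuits} is stated for worst-case bias, whereas here I only have bias $\gamma$ \emph{against} $\mu^\star$; fortunately the construction in that proposition is pointwise (it modifies each circuit $C$ in the support of $R$ by appending a negation and some constant-wires based on $\gamma$), so the same transformation produces $R'$ with $\score(R',\mu^\star)\ge\gamma^2/8$ without needing worst-case bias. The main obstacle is really not in any of these steps individually -- each is a direct invocation of a previously proved fact -- but in verifying that the size bounds match up cleanly through the bias$\to$score$\to$amplification$\to$size pipeline, and in handling the small-$\gamma$ regime $\gamma<4/(\R(f)+1)$ where the hypothesis of \prop{bias-to-score-circuits} fails but the conclusion of the theorem is trivial anyway.
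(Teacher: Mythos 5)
Your proposal is correct and follows exactly the assembly the paper intends; the paper's own ``proof'' of \thm{main-circuit} is just the single sentence ``Putting the above lemmas and propositions together completes the proof,'' referring to \prop{bias-to-score-circuits}, \lem{minimax-circuits}, \lem{amplification-circuits}, and \prop{score-to-bias-circuits} (the last of which is consumed inside the proof of \lem{amplification-circuits}, so you are right that you do not need to invoke it separately). The one piece of exposition I would tidy up is the sentence asserting that \prop{bias-to-score-circuits} gives $\score(R',x)\ge\gamma^2/8$ \emph{pointwise}: with only distributional bias $\gamma$ against $\mu^\star$, you do \emph{not} get a pointwise lower bound on the score, and some inputs may even have negative score. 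Your follow-up remark fixes this, but the justification should be made explicit: the construction replaces the output of $C$ by $(1\pm\gamma')/2$ for a fixed $\gamma'\in[\gamma/2,\gamma]$, so $\score(R',x)$ is a fixed affine, increasing function of the per-input success probability $p_x=\Pr[C(x)=f(x)]$; hence $\score(R',\mu^\star)=\bE_{x\sim\mu^\star}[\score(R',x)]$ is the same affine function evaluated at $\bE_{\mu^\star}[p_x]\ge(1+\gamma)/2$, and the calculation from \lem{conversion} then gives $\score(R',\mu^\star)\ge 1-\sqrt{1-\gamma'^2}\ge\gamma'^2/2\ge\gamma^2/8$. With that clarification, the chain $s+1\ge\tOmega(\R(f))\cdot\gamma^2/8$ and the handling of the trivial regime $\gamma<4/(\R(f)+1)$ are both fine (the implied constants in $\tOmega$ absorb the additive $1$ and the constant-size regime).
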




\subsection{Circuits with bounded depth}
\label{sec:nc1}

Define $\RNC(f)$ to be the minimum size of a randomized Boolean circuit of fan-in two and logarithmic depth that computes the partial Boolean function $f$ with error at most $\frac13$ on every input $x \in \Dom(f)$. Similarly, let $\RNC^\mu_{\dot{\gamma}}(f)$ denote the minimum size of a randomized Boolean circuit with the same fan-in and depth restrictions that computes $f$ with error at most $\dot{\gamma} = \frac{1-\gamma}2$ when the input is drawn from $\mu$. 

The constructions of \prop{bias-to-score-circuits}, \lem{amplification-circuits}, and \prop{score-to-bias-circuits} can all be achieved with circuits of fan-in 2 that add only logarithmic depth overhead to the base circuits, so the analogue of \thm{main-circuit} also holds for the class of circuits of fan-in two and logarithmic depth.

\begin{theorem}
\label{thm:nc1-circuit}
Fix $n \in \bN$. For every partial function $f : \{0,1\}^n \to \{0,1\}$, there is a distribution $\mu$ on $\Dom(f)$ such that for all $\gamma \in (0,1]$,
\[
\RNC_{\dot{\gamma}}^\mu(f) = \tOmega\big( \gamma^2 \RNC(f) \big).
\]
\end{theorem}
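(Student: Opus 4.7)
The plan is to mimic the proof of \thm{main-circuit} step-by-step, verifying at each stage that the constructions can be realized by fan-in-$2$ circuits whose depth grows by at most a logarithmic factor. Concretely, I would introduce NC$^1$ forecasting circuits (fan-in $2$, logarithmic depth, with $k+1$ output wires for a resolution-$k$ confidence in $[0,1]$), let $\mathcal{C}_k$ denote the set of randomized such circuits, and run the minimax machinery on the ratio $\mathrm{size}(R)/\score(R,\mu)^+$ exactly as in \lem{minimax-circuits}: the proof there only uses convexity and linearity of the cost/score as functions of the mixture weights, and the set of fan-in-$2$ log-depth circuits is closed under convex combinations in the same way as the set of general circuits.

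The heart of the argument is then to adapt the three conversions. For \prop{bias-to-score-circuits}, turning a bias-$\gamma$ Boolean circuit into a forecasting circuit of nearly the same size costs a single extra $\neg$ gate and some hard-wired constants, so the fan-in-$2$, log-depth property is preserved with only $O(1)$ added depth. For \prop{score-to-bias-circuits}, converting a forecasting circuit back to a randomized Boolean circuit with the right marginal uses $k$ fresh random bits and a comparator on $k$-bit strings, which is implementable by a fan-in-$2$ circuit of size $O(k)$ and depth $O(\log k)$; since $k = \lceil \log \RNC(f)\rceil$, this is $O(\log \RNC(f))$ additional depth, well within the NC$^1$ depth budget.

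The main obstacle is the amplification step \lem{amplification-circuits}. That step draws $m = O(1/\score(R,x)^+)$ independent copies of a forecasting circuit $R$ and combines them by computing $\bigl(1+\prod_{i=1}^m \tfrac{1-C_i(x)}{C_i(x)}\bigr)^{-1}$ to resolution $k = \lceil \log \RNC(f)\rceil$. To stay in NC$^1$ I would implement each local operation (division $a/(1-a)$, iterated product of $m$ ratios, and the final $(1+z)^{-1}$) using the bounded-fan-in, log-depth arithmetic circuits cited in \prop{elementary-functions} and \prop{iterated-addition}: in particular, iterated multiplication of $m$ numbers of $k$ bits can be done by reduction to iterated addition of logarithms, and iterated addition is in NC$^1$ by the classical constructions of Ofman/Wallace-style tree adders (see \cite{Ofm62,Pip87,Weg87}). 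The total added depth is $O(\log m + \log k) = O(\log \RNC(f))$, which when composed with the log-depth circuits $C_1,\dots,C_m$ remains $O(\log \RNC(f))$; the total size is $\mathrm{size}(R)\cdot m + \poly(k,\log m)$, matching the bound in \lem{amplification-circuits}.

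Assembling these pieces gives the chain: the minimax theorem on $\mathcal{C}_k$ produces a distribution $\mu$ with $\inf_{R\in\mathcal{C}_k}\mathrm{size}(R)/\score(R,\mu)^+ = \tOmega(\RNC(f))$; then any randomized NC$^1$ circuit of bias $\gamma$ on $\mu$ yields, via the NC$^1$ version of \prop{bias-to-score-circuits}, a forecasting circuit of score $\Omega(\gamma^2)$ and essentially the same size; this forces $\mathrm{size} = \tOmega(\gamma^2 \RNC(f))$, proving \thm{nc1-circuit}. The subtlety to watch is that each of the three conversions adds only $O(\log \RNC(f))$ depth, so the composed circuit witnessing $\RNC(f)$ itself remains in the NC$^1$ class (up to the usual convention that ``NC$^1$'' refers to logarithmic depth with a constant-factor slack), making the chain of inequalities consistent within a single circuit class.
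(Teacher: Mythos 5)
Your proposal is correct and follows essentially the same route as the paper: the paper's proof of Theorem~\ref{thm:nc1-circuit} simply observes that Proposition~\ref{prop:bias-to-score-circuits}, Lemma~\ref{lem:amplification-circuits}, and Proposition~\ref{prop:score-to-bias-circuits} can all be realized by fan-in-$2$ circuits with only logarithmic extra depth, so the proof of Theorem~\ref{thm:main-circuit} carries over verbatim. You have just made explicit the per-step depth bookkeeping that the paper leaves implicit.
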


In fact, we can say even more about the efficiency of the transformations in each constructions: all three of them can be accomplished with constant-depth and polynomial-size overhead when the circuits have threshold gates. For \prop{bias-to-score-circuits}, this is because only a single additional gate is required. For \lem{amplification-circuits}, this is because the functions in \prop{elementary-functions} can all be computed the the required accuracy with threshold circuits of polynomial size and constant depth~\cite{RT92} and the iterated addition problem can be solved by a threshold circuit of constant depth and size $O(m \log m (k + \log m))$~\cite{CSV84}. And for \prop{score-to-bias-circuits}, this is because comparison can also be completed with polynomial-size and constant-depth circuits. Therefore, letting $\RTC_\epsilon(f)$ denote the minimum size of a randomized constant-depth threshold circuit with unbounded fan-in that computes $f$ with error probability at most $\frac13$ on every input and $\RNC_{\dot{\gamma}}^\mu(f)$ denote the minimum size of the same type of circuit that computes $f(x)$ correctly with probability $\frac{1+\gamma}2$ when $x$ is drawn from $\mu$, we obtain the following result.

\begin{theorem}
\label{thm:tc0-circuit}
Fix $n \in \bN$. For every partial function $f : \{0,1\}^n \to \{0,1\}$, there is a distribution $\mu$ on $\Dom(f)$ such that for all $\gamma \in (0,1]$,
\[
\RTC_{\dot{\gamma}}^\mu(f) = \tOmega\big( \gamma^2 \RTC(f) \big).
\]
\end{theorem}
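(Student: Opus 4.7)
The plan is to follow exactly the three-step strategy used for \thm{main-circuit} and \thm{nc1-circuit}, replacing each construction with a constant-depth, polynomial-size threshold-circuit implementation. First I would define forecasting TC$^0$ circuits: randomized constant-depth threshold circuits with unbounded fan-in and $k+1$ output wires encoding a value in $\{0, 2^{-k}, \ldots, 1\}$, with $\score$ and $\cost$ defined exactly as in \sec{general}. The size measure $\RTC(f)$ and its distributional variant $\RTC_{\dot{\gamma}}^\mu(f)$ are then the natural restrictions of $\R(f)$ and $\R_{\dot{\gamma}}^\mu(f)$ to this model.

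The minimax step (\lem{minimax-circuits}) transfers to this class without modification: its proof only uses bilinearity of $\score$ and $\cost$ in the distribution over circuits and in $\mu$, together with the well-behaved conditions of \thm{main_minimax}, none of which depend on the circuit model. Likewise, the bias-to-score conversion (\prop{bias-to-score-circuits}) requires only a single additional NOT gate plus hard-coded wires, and the score-to-bias conversion (\prop{score-to-bias-circuits}) requires one comparator on $k$-bit inputs, which admits a polynomial-size constant-depth threshold implementation. Both conversions therefore preserve membership in the TC$^0$ model up to polynomial-size, constant-depth overhead.

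The main obstacle is adapting the amplification step (\lem{amplification-circuits}) to TC$^0$. The construction there combines $m$ independent forecasting outputs $C_i(x) \in \{0, 2^{-k}, \ldots, 1\}$ via
\[
C'(x) = \left(1 + \prod_{i \le m}\frac{1-C_i(x)}{C_i(x)}\right)^{-1},
\]
and \prop{iterated-addition} realizes this by computing $\ln \frac{1-C_i(x)}{C_i(x)}$ for each $i$, summing via iterated addition, then exponentiating. To stay in TC$^0$ I would invoke the Reif--Tate result \cite{RT92} that $\ln$, $\exp$, reciprocal, division, and related elementary functions on $k$-bit inputs admit constant-depth polynomial-size threshold circuits to additive accuracy $2^{-k}$, and the Chandra--Stockmeyer--Vishkin construction \cite{CSV84} of iterated addition of $m$ numbers of bit-width $O(k + \log m)$ in constant depth and size $O(m \log m (k + \log m))$. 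Since only a constant number of such TC$^0$ blocks are composed, the resulting circuit stays constant depth while incurring only polynomial-size overhead. The precision analysis is unchanged from \lem{amplification-circuits}: setting $k = \lceil \log \RTC(f) \rceil$ suffices and the final size is $\mathrm{size}(R) \cdot m + \poly(k, m)$ with $m = O(\RTC(f)/\mathrm{size}(R))$.

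Assembling these pieces gives, for $k = \lceil \log \RTC(f) \rceil$,
\[
\inf_{R \in \mathcal{R}_k^{\mathrm{TC}^0}} \max_{x \in \Dom(f)} \frac{\mathrm{size}(R)}{\score(R,x)^+} = \widetilde{\Omega}(\RTC(f)),
\]
and applying the TC$^0$ version of \lem{minimax-circuits} yields a distribution $\mu$ with $\inf_R \mathrm{size}(R)/\score(R,\mu)^+ = \widetilde\Omega(\RTC(f))$. Converting any $\gamma$-biased TC$^0$ circuit against $\mu$ into a forecasting circuit via the TC$^0$ analogue of \prop{bias-to-score-circuits} boosts its $\hs$-score to $\Omega(\gamma^2)$ without changing the model, and the desired bound $\RTC_{\dot{\gamma}}^\mu(f) = \widetilde\Omega(\gamma^2 \RTC(f))$ follows. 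The delicate point to check carefully is that the amplification overhead stays $\poly(k)$ for the specific precision $k = \lceil \log \RTC(f) \rceil$ we need, so that the $\widetilde\Omega$ absorbs only factors polylogarithmic in $\RTC(f)$.
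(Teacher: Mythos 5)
Your proposal is correct and matches the paper's own (very terse) argument essentially line for line: same three-step strategy, same observation that \lem{minimax-circuits} is model-agnostic, and same citations to Reif--Tate~\cite{RT92} for TC$^0$ elementary-function computation and Chandra--Stockmeyer--Vishkin~\cite{CSV84} for constant-depth iterated addition. The paper states this theorem with only a one-paragraph sketch, so your fuller account of why the composition of the constant number of TC$^0$ blocks in the amplification step preserves constant depth is a welcome elaboration rather than a deviation.
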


\subsection{Hardcore lemma}
\label{sec:hardcore}

In order to complete the proof of the hardcore lemma as stated in \thm{univ-hardcore}, we need the following variant of the ratio minimax theorem that applies to the setting where we consider a compact convex set of distributions, not just the set of all distributions over the function's domain.

\begin{theorem}
Let $V$ be a real topological vector space, and let
$\mathcal{R}\subseteq V$ be convex. Let $S$
be a nonempty finite set, and let $\Delta$ be a compact and convex
set of probability distributions over $S$,
viewed as a subset of $\bR^{|S|}$. Let
$\cost\colon\mathcal{R}\times\Delta\to[0,\infty]$
be semicontinuous and saddle,
and let $\score\colon\mathcal{R}\times\Delta\to[-\infty,\infty)$
be such that its negation, $-\score$,
is semicontinuous and saddle.
Suppose $\cost$ and $\score$ are well-behaved. Then
using the convention $r/0=\infty$ for all $r\in[0,\infty]$,
we have
\[\adjustlimits\inf_{R\in\mathcal{R}}\max_{\mu\in\Delta}
\frac{\cost(R,\mu)}{\score(R,\mu)^{+}}
=\adjustlimits\max_{\mu\in\Delta}\inf_{R\in\mathcal{R}}
\frac{\cost(R,\mu)}{\score(R,\mu)^{+}}.
\]
\end{theorem}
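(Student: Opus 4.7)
The plan is to adapt the proof of \thm{main_minimax} directly. The proof of that theorem splits naturally into two parts: one that shows the minimax identity on the restricted domain $\mathcal{R}'\times\Delta$ (where $\mathcal{R}'$ consists of algorithms with strictly positive cost) via Sion's theorem, and a second ``extension'' part that handles the zero-cost algorithms in $\mathcal{R}\setminus\mathcal{R}'$ using the well-behavedness hypothesis. Only the later ``Moreover'' clause of \thm{main_minimax} actually uses the specific structure of $\Delta$ as the convex hull of $S$, so I expect the first two parts to go through verbatim once $\Delta$ is merely assumed compact and convex.

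First I would set $\alpha(R,\mu)\coloneqq\cost(R,\mu)/\score(R,\mu)^{+}$ with the convention $r/0=\infty$, and restrict to $\mathcal{R}'=\{R\in\mathcal{R}:\cost(R,\mu)>0 \text{ for all } \mu\in\Delta\}$, which is well-defined by well-behavedness condition~(2). On $\mathcal{R}'\times\Delta$, combining \lem{ratio_properties}, \lem{positive_ratio}, and \lem{positive_preserves} shows $\alpha$ is quasiconvex and lower semicontinuous in $R$ for each fixed $\mu$, and concave and upper semicontinuous in $\mu$ for each fixed $R$; hence $\alpha$ is semicontinuous and quasisaddle. Since $\Delta$ is nonempty, convex, and compact, and $\mathcal{R}'$ is convex, \thm{sion_extended} yields
\[
\adjustlimits\inf_{R\in\mathcal{R}'}\max_{\mu\in\Delta}\alpha(R,\mu)
=\adjustlimits\max_{\mu\in\Delta}\inf_{R\in\mathcal{R}'}\alpha(R,\mu),
\]
with the outer maximums attained by \lem{semicontinuous_inf} together with \lem{attain}.

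Next I would transfer this equality from $\mathcal{R}'$ to $\mathcal{R}$, repeating the argument from \thm{main_minimax}. The inf-sup is always at least the sup-inf, and enlarging the domain from $\mathcal{R}'$ to $\mathcal{R}$ can only weaken the inf-sup, so it suffices to show that the max-inf over $\mathcal{R}$ equals the max-inf over $\mathcal{R}'$ and that the maximizer is still attained. Take $\mu^*\in\Delta$ attaining the max-inf over $\mathcal{R}'$ and suppose some $\hat{R}\in\mathcal{R}\setminus\mathcal{R}'$ strictly beats it. Then $\cost(\hat R,\mu^*)=0$ and, using $0/0=\infty$, we must have $\score(\hat R,\mu^*)>0$. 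By well-behavedness condition~(1) pick $R'\in\mathcal{R}$ with $0<\cost(R',\mu^*)<\infty$ and $\score(R',\mu^*)>0$; then by well-behavedness condition~(3), the mixture $R_\lambda\coloneqq\lambda R'+(1-\lambda)\hat R$ lies in $\mathcal{R}'$ for every $\lambda\in(0,1)$, and saddleness gives $\cost(R_\lambda,\mu^*)\le\lambda\cost(R',\mu^*)$ while $\score(R_\lambda,\mu^*)\ge\tfrac12\score(\hat R,\mu^*)$ for $\lambda\le 1/2$. Sending $\lambda\to 0$ drives $\alpha(R_\lambda,\mu^*)\to 0$, contradicting the assumed strict inequality; so no such $\hat R$ exists.

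The main obstacle is really just this extension step, which is already somewhat delicate in the original proof because one must simultaneously control the numerator and the denominator when mixing in a zero-cost algorithm. Everything else is bookkeeping: checking the semicontinuity and quasisaddle properties pass through \lem{positive_ratio} and \lem{positive_preserves}, and noting that the ``Moreover'' clause of \thm{main_minimax} (the one giving a maximizer in $S$ rather than in $\Delta$) is the only place where $\Delta$ being the full simplex was actually invoked, so dropping that clause is precisely the concession one makes when weakening the hypothesis on $\Delta$. Hence the statement follows.
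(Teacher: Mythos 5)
Your proposal is correct and takes essentially the same route as the paper, which proves this variant by a one-line observation: the argument for \thm{main_minimax} (excluding the ``Moreover'' clause) only used that $\Delta$ is nonempty, convex, and compact, so it applies verbatim here. You have correctly identified both that the Sion-plus-extension steps carry through unchanged and that the clause relocating the maximum from $\Delta$ to $S$ is the only part tied to $\Delta$ being the full simplex.
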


\begin{proof}
The proof is identical to the one for (the first part of) \thm{main_minimax}, since that argument only uses the fact that the set of all distributions over $S$ is convex and compact.
\end{proof}

From this theorem we obtain the following variant of \lem{minimax-circuits} for distributions with min-entropy $\delta$.

\begin{lemma}
\label{lem:minimax-circuits-regular}
Fix any $k \ge 1$ and let $\mathcal{R}_k$ denote the set of all randomized forecasting circuits with resolution $k$. Then for every $\delta > 0$ and function $f : \{0,1\}^n \to \{0,1\}$, if we let $\Delta_{\delta}$ denote the set of distributions over $\{0,1\}^n$ with min-entropy $\delta$, we have
\[
\inf_{R \in \mathcal{R}_k} \max_{\mu \in \Delta_{\delta}} \frac{\mathrm{size}(R)}{\score(R,\mu)^+} = \max_{\mu \in \Delta_{\delta}} \inf_{R \in \mathcal{R}_k} \frac{\mathrm{size}(R)}{\score(R,\mu)^+}.
\]
\end{lemma}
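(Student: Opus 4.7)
The plan is to apply the variant of \thm{main_minimax} stated immediately before this lemma, following the strategy of \lem{minimax-circuits} essentially line-by-line. The only ingredient that differs from that previous lemma is the set of distributions: here we restrict to $\Delta_\delta$ rather than the full simplex, so the main (and essentially only) new task is to verify that $\Delta_\delta$ remains convex and compact, which is precisely what the general theorem above requires in place of the hypothesis ``all distributions over a finite set'' used earlier.

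Convexity and compactness of $\Delta_\delta$ are straightforward. Viewing distributions over $\{0,1\}^n$ as points in $\bR^{2^n}$, having min-entropy at least $\delta$ is equivalent to the family of linear inequalities $\mu[x] \le 2^{-\delta}$ for every $x \in \{0,1\}^n$. Thus $\Delta_\delta$ is the intersection of the standard probability simplex (itself compact and convex in $\bR^{2^n}$) with finitely many closed half-spaces, hence is itself compact and convex. I expect this to be the only step requiring any real thought, and even it reduces to elementary convex geometry.

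The remaining hypotheses concern only $\mathcal{R}_k$, $\cost$, and $\score$ and are unaffected by the choice of distribution set, so their verification is identical to the proof of \lem{minimax-circuits}. Concretely: we embed $\mathcal{R}_k$ as a convex subset of $\bR^{2^n+1}$ via the vector whose first $2^n$ coordinates are $\score(R,x)$ for $x \in \{0,1\}^n$ and whose last coordinate records $\mathrm{size}(R)$; we note that $\cost(R,\mu)=\mathrm{size}(R)$ is independent of $\mu$ and linear in $R$, while $\score(R,\mu)$ is bilinear in $R$ and $\mu$; and we confirm well-behavedness by exhibiting a single deterministic circuit of size at most $2^{O(n)}$ that computes $f$ exactly (witnessing attainable finite cost and positive score against every $\mu \in \Delta_\delta$), observing that a zero-size forecasting circuit reads no input bits and therefore has zero cost against every distribution, and using linearity of size in $R$ to handle the mixing-with-a-zero-cost-algorithm condition. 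The claimed minimax identity then follows directly from the cited variant of \thm{main_minimax}, with no step requiring new ideas beyond the compactness observation above.
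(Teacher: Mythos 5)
Your proposal is correct and follows the same route the paper intends: invoke the preceding compact-convex-set variant of \thm{main_minimax}, observing that the only new hypothesis to verify is that $\Delta_\delta$ is convex and compact, with everything about $\mathcal{R}_k$, $\cost$, and $\score$ carrying over verbatim from \lem{minimax-circuits}. One small caveat on terminology: the paper's use of ``min-entropy $\delta$'' does not match the standard reading you give ($\mu[x]\le 2^{-\delta}$); from the proof of \thm{univ-hardcore} --- where a uniform distribution on a set of size $\delta 2^n$ is said to have min-entropy $\delta$ --- the intended meaning is the $\delta$-dense distributions, i.e.\ $\mu[x]\le 1/(\delta 2^n)$ for all $x$. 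This does not affect your argument, since in either reading $\Delta_\delta$ is the probability simplex intersected with $2^n$ closed half-spaces, hence still convex and compact, which is all the theorem requires.
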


We are now ready to complete the proof of \thm{univ-hardcore}.

\begin{proof}[Proof of Theorem~\ref{thm:univ-hardcore}]
Fix $s' = c \cdot s/\log \frac1\delta$ for some constant $c$ to be fixed later.
By \lem{minimax-circuits-regular}, the two cases to consider are the following.

\paragraph{Case 1:} $\max_{\mu \in \Delta_{\delta}} \inf_{R \in \mathcal{R}_k} \frac{\mathrm{size}(R)}{\score(R,\mu)^+} \ge s'$.

Fix a distribution $\mu$ with min-entropy $\delta$ for which the maximum is attained. Then every randomized forecasting circuit $R$ has score
\[
\score(R,\mu) \le \frac{\mathrm{size}(R)}{s'}. 
\]
By \prop{bias-to-score-circuits}, this implies that every randomized circuit $R'$ with $\mathrm{size}(R') \le \epsilon^2 s'/64$ has success probability 
\[
\Pr_{C \sim R, x \sim \mu}[ C(x) = f(x) ] \le
\frac{1 + 8\sqrt{\mathrm{size}(R')/s'}}{2} \le \frac{1+\epsilon}{2},
\]
and the theorem holds in this case.

\paragraph{Case 2:} $\inf_{R \in \mathcal{R}_k} \max_{\mu \in \Delta_{\delta}} \frac{\mathrm{size}(R)}{\score(C,\mu)^+} < s'$.

Fix a randomized forecasting circuit $R$ that satisfies
\[
\frac{\mathrm{size}(R)}{\score(C,\mu)^+} < s'
\]
for each distribution $\mu$ over $\{0,1\}^n$ with min-entropy $\delta$. 
Set $\alpha = \mathrm{size}(R)/s'$ and define $T \subseteq \{0,1\}^n$ to be the set of inputs $x$ for which
$\score(R,x) < \frac{\alpha}{2}$.
Then 
\[
|T| \le \delta(1-\tfrac \alpha2) 2^n
\]
since otherwise the score of $R$ on the distribution $\mu'$ that is uniform over any set $T' \supseteq T$ of size $|T'| = \delta 2^n$ (and thus has min-entropy $\delta$) would be bounded above by 
$\score(R,\mu') < (1-\tfrac\alpha2)\cdot \tfrac{\alpha}2 + \tfrac\alpha2 < \alpha$,
contradicting the definition of $R$.

By \lem{amplification-circuits}, there is a forecasting circuit $R'$ which satisfies 
$\mathrm{size}(R') = O(s')$ 
and
$\score(R,x) = \Omega(1)$ for each $x \in \{0,1\}^n \setminus T$. 
Then by \prop{score-to-bias-circuits} there is a randomized Boolean circuit of size $O(s')$ that errs with probability at most $\frac13$ on each $x \in \{0,1\}^n \setminus T$, and by standard success amplification it also means that there is a circuit $C$ of size $s'' = O(s' \log \frac1\delta)$ with error less than $\delta$. Choosing the value $c$ in the definition of $s'$ appropriately, we then get that this circuit has size at most $s$, contradicting the premise of the theorem and therefore showing that Case 2 cannot occur.
\end{proof}

\end{fulltext}

\section*{Acknowledgements}

We thank Justin Thaler for discussions and references related
to approximate polynomial degree and its amplification.
We also thank Andrew Drucker, Mika G{\"o}{\"o}s, and Li-Yang Tan for correspondence about their ongoing work~\cite{BDG+20}.
We thank anonymous reviewers for many helpful comments.

\iffocs
\else

\newpage
\appendix

\section{Proofs related to the minimax theorem}\label{app:minimax}

\attain*

\begin{proof}
The lower semicontinuous case follows from the upper semicontinuous
case simply by negating $\phi$, so we focus on the upper semicontinuous
case. Let $z=\sup_{x\in X}\phi(x)$, where $z\in\bar{\bR}$.
Let $x_0$ be any element of $X$. If $\phi(x_0)=z$, we are done,
so assume $\phi(x_0)<z$; in particular, $z>-\infty$.
We define a sequence $x_1,x_2,\dots$ as follows. If $z<\infty$,
define $x_{i}$ to be any element of $X$ such that
$\phi(x_i)>z-1/i$. If $z=\infty$, define $x_i$ to be any element of $X$
such that $\phi(x_i)>i$. Moreover, for each $i\in\bN$, let
$U_i=\{\,x\in X:\phi(x)<\phi(x_i)\}$. Note that any $x\in X$
for which $\phi(x)<z$ must be in $U_i$ for some $i\in\bN$;
hence if the supremum $z$ is not attained, the sets
$U_i$ form a cover for $X$ (meaning $\bigcup_{i\in \bN}U_i=X$).

The key claim is that the $U_i$ sets are all open if $\phi$
is upper semicontinuous. This is is because if $x\in U_i$,
then $\phi(x)<\phi(x_i)$, and by the definition of upper semicontinuity,
there is a neighborhood $U$ of $x$ on which $\phi(\cdot)$ is still
less than $\phi(x_i)$; thus there is a neighborhood $U$ of $x$
contained in $U_i$, so that $U_i$ is open. In this case,
if the supremum $z$ is not attained, the collection $\{U_i\}_{i\in\bN}$
is an open cover of $X$, and by the definition of compactness, it
has a finite subcover. Let $i$ be the largest index of some $U_i$
in this subcover. Then it follows that $\phi(x)<\phi(x_i)$ for all
$x\in X$, which is a contradiction. Hence the supremum $z$ must
be attained as a maximum, as desired.
\end{proof}

\semicontinuousinf*

\begin{proof}
Note that the case where $\phi_i$ are all lower semicontinuous
follows from the case where they are all upper semicontinuous simply
by negating the functions, since negation flips upper and lower
semicontinuity and flips infimums and supremums. We focus
on the case where $\phi_i$ are all upper semicontinuous.

Fix $x\in X$. If $\phi(x)=\infty$,
$\phi$ is upper semicontinuous at $x$ by definition.
If $\phi(x)<\infty$, fix any $y>\phi(x)$. By the definition
of $\phi(x)$ as an infimum, there is some
$i\in I$ such that $\phi_i(x)<y$.
By the upper semicontinuity of $\phi_i(\cdot)$, there is a
neighborhood $U$ of $x$ such that
for all $x'\in U$, we have $\phi_i(x')<y$. Then for all $x'\in U$,
we clearly have $\phi(x')=\inf_{i\in I}\phi_i(x')<y$. Thus
$\phi$ is upper semicontinuous at $x$, as desired.
\end{proof}

\begin{lemma}\label{lem:convex_characterization}
Let $V$ be a real vector space, and let $X\subseteq V$.
The convex hull of $X$ is the set of all $v\in V$
which can be written as a convex combinatotion
of vectors in $x$; that is, $v$ for which there exist
$k\in\bN$, $x_1,x_2,\dots,x_k\in X$, and
$\lambda_1,\lambda_2,\dots,\lambda_k\in[0,1]$ with
$\lambda_1+\lambda_2+\dots+\lambda_k=1$ such that
$v=\lambda_1 x_1+\lambda_2 x_2+\dots+\lambda_k x_k$.
\end{lemma}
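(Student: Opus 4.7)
The plan is to prove the two inclusions separately, using the definition of $\Conv(X)$ as the intersection of all convex supersets of $X$ in $V$. Let $S$ denote the set of all (finite) convex combinations of elements of $X$ as described in the statement. I would first show $S\subseteq\Conv(X)$ and then $\Conv(X)\subseteq S$.

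For the inclusion $S\subseteq\Conv(X)$, I would show that every convex combination of elements of $X$ lies in every convex set $C\subseteq V$ with $X\subseteq C$; the result then follows by taking the intersection over all such $C$. This reduces to a straightforward induction on the number $k$ of points in the combination. The base cases $k=1$ (trivial, $x_1\in X\subseteq C$) and $k=2$ (direct from the definition of convexity) are immediate. For the inductive step, given a convex combination $v=\sum_{i=1}^k\lambda_ix_i$ with $k\ge 3$, I would handle separately the degenerate case where some $\lambda_i=1$ (so $v=x_i\in C$), and otherwise pick an index with $\lambda_k\in(0,1)$ and rewrite
\[v=(1-\lambda_k)\sum_{i=1}^{k-1}\frac{\lambda_i}{1-\lambda_k}x_i+\lambda_k x_k,\]
where the inner sum is a convex combination of fewer points (the new weights are nonnegative and sum to $1$ since $\sum_{i<k}\lambda_i=1-\lambda_k$), hence lies in $C$ by the inductive hypothesis, and then one more application of convexity gives $v\in C$.

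For the inclusion $\Conv(X)\subseteq S$, I would show that $S$ is itself a convex set containing $X$; then $\Conv(X)\subseteq S$ because $\Conv(X)$ is the intersection of all such sets. That $X\subseteq S$ follows by taking $k=1$, $\lambda_1=1$. To show $S$ is convex, I would take $u,v\in S$ written as $u=\sum_{i=1}^m\alpha_ix_i$ and $v=\sum_{j=1}^n\beta_jy_j$ with the $x_i,y_j\in X$ and $\alpha_i,\beta_j\ge 0$ summing to $1$, fix $\mu\in(0,1)$, and verify directly that
\[\mu u+(1-\mu)v=\sum_{i=1}^m(\mu\alpha_i)x_i+\sum_{j=1}^n((1-\mu)\beta_j)y_j\]
is again a convex combination of finitely many elements of $X$: the coefficients are nonnegative, and their total sum is $\mu\cdot 1+(1-\mu)\cdot 1=1$.

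There is no real obstacle here; the argument is entirely standard convexity bookkeeping. The only spot worth being slightly careful about is making sure the inductive step in the first inclusion avoids dividing by zero, which is handled by peeling off the case where some $\lambda_i$ equals $1$ before doing the rescaling.
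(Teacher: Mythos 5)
Your proof is correct and follows essentially the same approach as the paper's: define the set of all finite convex combinations, show it is a convex set containing $X$ (hence contains $\Conv(X)$), and show by induction on the number of points that any convex set containing $X$ contains every such combination (hence $\Conv(X)$ contains it). The paper leaves the induction implicit where you carefully spell out the rescaling step and the degenerate case, but the argument is the same.
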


\begin{proof}
This is a well-known characterization of the convex hull,
which can be shown as follows:
let $Y$ be the set of all finite convex combinations
of points in $X$; that is, $Y$ contains all points in $V$ of the form
$\lambda_1 x_1+\lambda_2 x_2+\dots+\lambda_k x_k$, where
$k\in\bN$, $x_1,x_2,\dots,x_k\in X$, and $\lambda_1,\lambda_2,\dots,
\lambda_k\in[0,1]$ with $\lambda_1+\lambda_2+\dots+\lambda_k=1$.
Then $Y$ is clearly convex, since for all $y_1,y_2\in Y$
and $\lambda\in(0,1)$, we know that $y_1$ and $y_2$
are finite convex combinations of points in $x$, meaning that
$\lambda y_1+(1-\lambda)y_2$ is also a finite convex combination
of points in $X$. Furthermore, if $Z$ is any other convex
set containing $X$, then it's easy to show by induction that
$Z$ contains all convex combinations of $k$ points in $X$
for each $k\in\bN$; hence $Z$ must be a superset of $Y$.
It follows that $\Conv(X)$, the intersection of all
convex sets containing $X$, must exactly equal $Y$.
\end{proof}

\quasiconvexhull*

\begin{proof}
The quasiconcave case follows from the quasiconvex case by negating
$\phi$; hence it suffices to prove the quasiconvex case.
It is clear that $\sup_{x\in\Conv(X)}\phi(x)$ is at least
$\sup_{x\in X}\phi(x)$, so we only need to show the latter
is at least the former. To this end, let
$y^*\coloneqq \sup_{x\in\Conv(X)}\phi(x)$,
and let $\hat{x}\in\Conv(X)$
be such that $\phi(\hat{x})$ is arbitrarily close to $y^*$.
We must show
that $\sup_{x\in X}\phi(x)\ge\phi(\hat{x})$, or equivalently,
that there is some $x\in X$ with $\phi(x)\ge\phi(\hat{x})$.

Using \lem{convex_characterization}, we can now write
$\hat{x}\in\Conv(X)$ as
$\hat{x}=\lambda_1 x_1+\lambda_2 x_2+\dots+\lambda_k x_k$,
with $k\in\bN$, $x_1,x_2,\dots, x_k\in X$, and
$\lambda_1,\lambda_2,\dots,\lambda_k\in[0,1]$ with
$\lambda_1+\lambda_2+\dots+\lambda_k=1$. Furthermore,
assume that $\lambda_i>0$ for each $i\in[k]$ (we can remove
$\lambda_i x_i=0$ from the linear combination otherwise).
Now, note that by quasiconvexity, we have
$\phi(\lambda x_1+(1-\lambda)x_2)\le\max\{\phi(x_1),\phi(x_2)\}$.
It is not hard to show by induction that
$\phi(\lambda_1 x_1+\lambda_2 x_2+\dots+\lambda_k x_k)
\le\max\{\phi(x_1),\phi(x_2),\dots,\phi(x_k)\}$.
Hence there is some $x\in X$ such that $\phi(x)\ge\phi(\hat{x})$,
as desired.
\end{proof}

\maxpreserves*

We actually prove a stronger statement, where the maximum
is taken with an arbitrary constant.

\begin{lemma}\label{lem:max_preservations}
Let $V$ be a real topological vector space, and let $X\subseteq V$
be convex. Let $\psi\colon X\to\bar{\bR}$ be a function,
let $c\in\bR$ be a constant, and let $\psi'\colon X\to\bar{\bR}$
be the function $\psi'(x)=\max\{\psi(x),c\}$.
Then if $\psi$ is convex, $\psi'$ is convex; if
$\psi$ is quasiconvex, $\psi'$ is quasiconvex;
if $\psi$ is quasiconcave, $\psi'$ is quasiconcave;
if $\psi$ is upper semicontinuous, $\psi'$ is upper semicontinuous;
and if $\psi$ is lower semicontinuous, $\psi'$ is lower semicontinuous.
\end{lemma}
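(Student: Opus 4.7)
The plan is to prove each of the five preservation claims separately, since each reduces to a short pointwise (or local) calculation with no interaction between the different properties. The convexity, quasiconvexity, and quasiconcavity claims all have the same template: fix $x,y\in X$ and $\lambda\in(0,1)$, apply the corresponding property of $\psi$ to the point $\lambda x+(1-\lambda)y$, and then use a two- or three-argument identity involving $\max$ (and possibly $\min$) together with the constant $c$ to pass from the inequality for $\psi$ to the inequality for $\psi'$.

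Concretely, for convexity I would use the pointwise inequality $\max\{\lambda a+(1-\lambda)b,\,c\}\le\lambda\max\{a,c\}+(1-\lambda)\max\{b,c\}$, which holds because both $\lambda a+(1-\lambda)b$ and $c=\lambda c+(1-\lambda)c$ are dominated by the right-hand side. For quasiconvexity the identity $\max\{\max\{a,b\},c\}=\max\{\max\{a,c\},\max\{b,c\}\}$ is trivial. For quasiconcavity the key identity is the lattice distributive law $\max\{\min\{a,b\},c\}=\min\{\max\{a,c\},\max\{b,c\}\}$, which can be verified by a three-case split on the position of $c$ relative to $a$ and $b$. This quasiconcave step is the only one that is not completely immediate, and I would treat it as the single ``main obstacle'' in the proof, though even here the case analysis is routine.

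For upper semicontinuity at $x$, I would take any $y>\psi'(x)$; then $y>\psi(x)$ and $y>c$, and upper semicontinuity of $\psi$ at $x$ furnishes a neighborhood $U$ of $x$ on which $\psi(\cdot)<y$, so that on $U$ we also have $\psi'(\cdot)=\max\{\psi(\cdot),c\}<y$. For lower semicontinuity at $x$, given any $y<\psi'(x)=\max\{\psi(x),c\}$, I would split on whether $y<c$ (in which case $\psi'\ge c>y$ on the whole of $X$) or $y<\psi(x)$ (in which case lower semicontinuity of $\psi$ gives a neighborhood $U$ on which $\psi(\cdot)>y$, and then $\psi'(\cdot)\ge\psi(\cdot)>y$ on $U$).

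Finally, to justify the ``but not concavity'' clause I would give a one-line counterexample, for instance $\psi(x)=x$ on $\mathbb{R}$ with $c=0$: then $\psi$ is linear (hence concave), but $\psi'(x)=\max\{x,0\}$ fails concavity at $\lambda=\tfrac12$, $x=-1$, $y=1$, since $\psi'(0)=0<\tfrac12=\tfrac12\psi'(-1)+\tfrac12\psi'(1)$. The only mild care needed throughout is that $\psi$ is allowed to take the values $\pm\infty$, but since every inequality used is a comparison within $\max$ and $\min$ the extended-real arithmetic is unambiguous, so no separate case for infinite values is required.
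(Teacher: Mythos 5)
Your proof is correct and follows essentially the same pointwise/local argument as the paper: the paper handles convexity and quasiconvexity by splitting on whether the outer $\max$ is achieved by $c$ or by $\psi$, handles quasiconcavity by a WLOG ordering of $\psi(x),\psi(y)$, uses the identical neighborhood argument for upper semicontinuity, and invokes the "pointwise supremum of lower semicontinuous functions" lemma (\lem{semicontinuous_inf}) where you do a direct two-case split — but these are only cosmetic differences in how the same routine calculations are organized. Your lattice-identity formulation of the quasiconvex/quasiconcave steps is a slightly cleaner uniform packaging, and the concavity counterexample is extra (the statement you were asked to prove does not contain the "but not concavity" clause, only the earlier restatable version \lem{positive_preserves} does), but nothing is missing or wrong.
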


\begin{proof}
Let $x,y\in X$, and let $\lambda\in(0,1)$. Then
\[\psi'(\lambda x+(1-\lambda)y)=\max\{\psi(\lambda x+(1-\lambda)y),c\}.\]
If this maximum equals $c$, it is certainly at most
$\lambda\max\{\psi(x),c\}+(1-\lambda)\max\{\psi(y),c\}$,
since these two latter maximums are each at least $c$.
Hence the inequalities for convexity and quasiconvexity always
hold when the original maximum equals $c$.
Alternatively, if $\max\{\psi(\lambda x+(1-\lambda)y),c\}
=\psi(\lambda x+(1-\lambda)y)$, then using
$\psi(x)\le\psi'(x)$ and $\psi(y)\le\psi'(y)$,
we see that convexity of $\psi$ gives the inequality for
convexity of $\psi'$, and quasiconvexity of $\psi$ gives the
inequality for quasiconvexity of $\psi'$.

Next, suppose $\psi$ is quasiconcave. Without loss of generality,
say that $\psi(x)\le\psi(y)$. Then
$\psi'(\lambda x+(1-\lambda)y)=\max\{\psi(\lambda x+(1-\lambda)y),c\}
\ge\max\{\psi(x),c\}=\psi'(x)\ge\min\{\psi'(x),\psi'(y)\}$,
and $\psi'$ is quasiconcave.

Preservation of lower semicontinuity follows from
\lem{semicontinuous_inf}, where we note that $c$ is continuous
as a function from $X$ to $\bar{\bR}$. It remains to show
upper semicontinuity is preserved.
Suppose $\psi$ is upper semicontinuous, and let $x\in X$.
If $\psi'(x)=\infty$, upper semicontinuity at $x$ vacuusly holds.
Fix $y>\psi'(x)$. Since $\psi'(x)\ge c$, we have $y> c$,
so $\psi(x)=\psi'(x)>y$,
and upper semicontinuity gives us a neighborhood $U$ of $x$
on which $\psi(\cdot)$ is less than $y$. Since $y> c$,
we have $\psi'(\cdot)=\max\{c,\psi(\cdot)\}<y$ on $U$.
Hence $\psi'$ is upper semicontinuous.
\end{proof}

\begin{theorem}[Sion's minimax \cite{Sio58}]\label{thm:sion_reals}
Let $V_1$ and $V_2$ be real topological vector spaces, and let
$X\subseteq V_1$ and $Y\subseteq V_2$ be convex. Let
$\alpha:X\times Y\to\bR$ be semicontinuous and quasisaddle.
If either $X$ or $Y$ is compact, then
\[\adjustlimits \inf_{x\in X}\sup_{y\in Y}\alpha(x,y)
=\adjustlimits\sup_{y\in Y}\inf_{x\in X}\alpha(x,y).\]
\end{theorem}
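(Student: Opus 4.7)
The plan is that this is the classical Sion minimax theorem, and since the excerpt explicitly attributes it to [Sio58], the most honest ``proof'' is simply to cite Sion's 1958 paper verbatim. The context in which this result appears makes clear that the paper only uses it as a black box en route to the extended statement \thm{sion_extended} (which is what the authors actually want to establish in their appendix, via a reduction to \thm{sion_reals}). So my proposal would be to write a one-line proof pointing to [Sio58], possibly preceded by a sentence giving the modern formulation.

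That said, if I wanted to sketch the standard proof of Sion's theorem itself, I would proceed as follows. The easy inequality $\sup_Y \inf_X \alpha \le \inf_X \sup_Y \alpha$ holds pointwise. For the reverse, assume WLOG that $Y$ is compact, and suppose for contradiction that $\sup_Y \inf_X \alpha < a < b < \inf_X \sup_Y \alpha$ for some reals $a < b$. For each $x \in X$, upper semicontinuity of $\alpha(x,\cdot)$ makes $U_x := \{y \in Y : \alpha(x,y) < b\}$ open in $Y$, and since $\inf_X \alpha(\cdot,y) < a$ for every $y$, the family $\{U_x\}_{x \in X}$ covers $Y$; compactness then gives a finite subcover indexed by $x_1,\dots,x_n \in X$. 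The goal is to produce a single $x^\star \in \Conv\{x_1,\dots,x_n\}$ with $\alpha(x^\star,y) \le a$ for every $y \in Y$, contradicting $\inf_X \sup_Y \alpha > b$.

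The construction of $x^\star$ is the heart of the argument and would be done by induction on $n$. In the base case $n=2$, along the segment $[x_1,x_2]$ the function $\alpha(\cdot,y)$ is quasiconvex for each $y$, so its sublevel sets there are intervals; coupled with quasiconcavity of $\alpha(x,\cdot)$ on $Y$ and the two one-sided semicontinuity assumptions, one shows by a continuity / intermediate-value argument that somewhere on $[x_1,x_2]$ the supremum over $y$ is at most $a$. The inductive step combines a point produced from $x_1,\dots,x_{n-1}$ with $x_n$ and reapplies the two-point case. (Alternatively one can give a KKM-type proof: apply the Knaster--Kuratowski--Mazurkiewicz lemma to the closed sets $\{y \in Y : \alpha(x_i,y) \ge a\}$ inside a simplex spanned by $x_1,\dots,x_n$.)

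The main obstacle is the two-point base case, which is where quasisaddleness and semicontinuity interact most delicately --- one has to rule out the possibility that sliding from $x_1$ towards $x_2$ always leaves some ``bad'' $y$ where $\alpha$ exceeds $a$, by exploiting that the bad sets in $Y$ are closed (lower semicontinuity of $\alpha(\cdot,y)$ combined with quasiconcavity of $\alpha(x,\cdot)$) while their preimages along the segment are intervals (quasiconvexity of $\alpha(\cdot,y)$). Once this lemma is in place, everything else is routine covering/compactness bookkeeping.
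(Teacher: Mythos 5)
Your primary proposal is exactly what the paper does: \thm{sion_reals} is stated with a citation to \cite{Sio58} and no further proof, and is then used purely as a black box in the appendix to derive \thm{sion_extended}. Your optional sketch of the Komiya/Sion-style argument (compactness cover, reduction to the two-point case via quasiconvexity of $\alpha(\cdot,y)$ along segments, or alternatively KKM) is also the standard proof, but it is content the paper deliberately omits.
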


\sionextended*

\begin{proof}
First, note that the inf-sup is always at least the sup-inf.
This is because these expressions can be thought of as two players,
one choosing $x$ and trying to minimize $\alpha(x,y)$, and the
other choosing $y$ and trying to maximize $y$;
in the inf-sup, the sup player chooses $y$ after already knowing $x$,
and therefore has more information and is better positioned to maximize
$\alpha(x,y)$ than in the sup-inf, where the inf player goes second.

Now, let
\[a\coloneqq \adjustlimits\sup_{y\in Y}\inf_{x\in X}\alpha(x,y),\qquad
b\coloneqq \adjustlimits\inf_{x\in X}\sup_{y\in Y}\alpha(x,y).\]
We have $a,b\in\bar{\bR}$, and $a\le b$. We wish to show $a=b$.
Suppose by contradiction that $a<b$. Then we can pick $a',b'\in\bR$
such that $a<a'<b'<b$. We then define
$\alpha':X\times Y\to\bR$ by $\alpha'(x,y)\coloneqq a'$ if
$\alpha(x,y)\le a'$, $\alpha'(x,y)\coloneqq b'$ if $\alpha'(x,y)\ge b'$,
and $\alpha'(x,y)\coloneqq\alpha(x,y)$ if
$\alpha(x,y)\in[a',b']$.

Note that $\alpha'(x,y)=\max\{a',\min\{b',\alpha(x,y)\}\}$.
By \lem{max_preservations}, we know that taking a maximum
with a constant preserves quasiconvexity, quasiconcavity,
and upper and lower semicontinuities. By negating the function,
it also follows that taking a minimum with a constant preserves
these properties. From this it follows that $\alpha'$
is quasisaddle and semicontinuous, since $\alpha$ has these properties.

Now, since $a=\adjustlimits\sup_{y\in Y}\inf_{x\in X}\alpha(x,y)$
and since $a'>a$, we know that for all $y\in Y$, there exists
some $x\in X$ for which $\alpha(x,y)<a'$.
This means that for all $y\in Y$, there exists $x\in X$ for which
$\alpha'(x,y)=a'$. Hence 
$\adjustlimits\sup_{y\in Y}\inf_{x\in X}\alpha'(x,y)=a'$.
Similarly, since $b=\adjustlimits\inf_{x\in X}\sup_{y\in Y}\alpha(x,y)$
and since $b'<b$, we know that for all $x\in X$, there exists
some $y\in Y$ for which $\alpha(x,y)>b'$. This means that for all
$x\in X$, there exists $y\in Y$ for which $\alpha'(x,y)=b'$.
Hence $\adjustlimits\inf_{x\in X}\sup_{y\in Y}\alpha'(x,y)=b'$.
By \thm{sion_reals}, we then have
\[b'=\adjustlimits\inf_{x\in X}\sup_{y\in Y}\alpha'(x,y)
=\adjustlimits\sup_{y\in Y}\inf_{x\in X}\alpha'(x,y)=a'.\]
But this is a contradiction, since we picked $a'<b'$.
We conclude that we must have had $a=b$ to begin with,
as desired.
\end{proof}

\section{Distance measures}\label{app:calculus}

\rules*

\begin{proof}
It is clear that all of the functions from \defn{rules}
are smooth on $(0,1)$ and increasing on $[0,1]$, where
we interpret $\hs(0)=\ls(0)=-\infty$. It is also clear
that all these functions evaluate to $1$ at $1$ and to $0$
at $1/2$. It remains to show that $\Brier$, $\ls$, and $\hs$
are proper. To do so, we need to show that
$ps(q)+(1-p)s(1-q)$ is uniquely optimized at $q=p$ when
$s$ is one of these functions and $p\in(0,1)$.
Fix such $p\in(0,1)$, and observe that the critical
points of the expression we wish to maximize
are the points $q$ such that $ps'(q)=(1-p)s'(1-q)$.

For $\ls(q)=1-\log(1/q)=1+(\log e)\ln q$, the critical points
$q$ satisfy $(\log e)p/q=(\log e)(1-p)/(1-q)$, or $p/(1-p)=q/(1-q)$.
Noting that the function $x/(1-x)$ is increasing on $(0,1)$,
and hence injective on $(0,1)$, we conclude that the only
critical point is $q=p$. Moreover, at the boundaries $q=0$
and $q=1$, we clearly have $p\ls(q)+(1-p)\ls(1-q)=-\infty$,
whereas in the interior the expression is finite. Hence the
unique maximum must occur at $q=p$.

For $\hs(q)=1-\sqrt{(1-q)/q}=1-\sqrt{1/q-1}$, we have
$\hs'(q)=1/2\sqrt{q^3(1-q)}$, so the critical points $q$
satisfy $p/2\sqrt{q^3(1-q)}=(1-p)/2\sqrt{(1-q)^3q}$,
or $p/q=(1-p)/(1-q)$, which once again only occurs at $q=p$.
At the boundaries, we once again have
$p\hs(q)+(1-p)\hs(1-q)=-\infty$ for $q=0$ or $q=1$, so
the unique maximum occurs at $q=p$.

Finally, for $\Brier(q)=1-4(1-q)^2=-4q^2+8q-3$,
we have $\Brier'(q)=8(1-q)$, so the critical points $q$
satisfy $8p(1-q)=8(1-p)q$, which again implies $q=p$.
This time, the boundary points are finite,
but we can use the second order condition: the second derivative
of $p\Brier(q)+(1-p)\Brier(1-q)$ is
$p\Brier''(q)+(1-p)\Brier''(1-q)$. Noting that
$\Brier''(q)=-8$, this is $-8p-8(1-p)=-8<0$. Hence the critical
point is a maximum, and since it is unique (with the boundaries
$0$ and $1$ not being critical even if we extend the domain
of the function), we conclude it is the unique maximum.
\end{proof}

\begin{lemma}\label{lem:inequalities}
For any $x\in[0,1]$, we have
\[\frac{x^2}{2}\le 1-\sqrt{1-x^2}\le 1-H\left(\frac{1+x}{2}\right)
\le x^2\le x.\]
Additionally, $x^2$ and $1-\sqrt{1-x}$ are convex functions on $[0,1]$.
\end{lemma}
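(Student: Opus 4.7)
I plan to handle the four inequalities in turn, treating the outer ones as quick computations and concentrating on the middle one. The outermost $x^2 \le x$ on $[0,1]$ is immediate. For $\tfrac{x^2}{2} \le 1 - \sqrt{1-x^2}$, I would rearrange to $\sqrt{1-x^2} \le 1 - \tfrac{x^2}{2}$, square both (non-negative) sides, and reduce to $0 \le \tfrac{x^4}{4}$. The closing convexity claims are similarly routine: $(x^2)'' = 2 \ge 0$, and for $\phi(x) = 1 - \sqrt{1-x}$ on $[0,1)$, $\phi''(x) = \tfrac{1}{4}(1-x)^{-3/2} \ge 0$, with convexity extending to $[0,1]$ by continuity.

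For $1 - H\!\left(\tfrac{1+x}{2}\right) \le x^2$, the plan is to derive the closed-form power series
\[
\sum_{k=1}^\infty \frac{y^{2k}}{k(2k-1)} \;=\; (1+y)\ln(1+y) + (1-y)\ln(1-y), \qquad |y| \le 1,
\]
by applying the partial fraction $\tfrac{1}{k(2k-1)} = \tfrac{2}{2k-1} - \tfrac{1}{k}$ to the Maclaurin series for $\tanh^{-1}(y)$ and $-\ln(1-y^2)$ (the endpoint case follows by monotone convergence since the left series is absolutely convergent there). Plugging $y=x$ into $H(p) = -\tfrac{1}{\ln 2}\bigl[p\ln p + (1-p)\ln(1-p)\bigr]$ at $p = (1+x)/2$ then yields the clean expression $1 - H\!\left(\tfrac{1+x}{2}\right) = \tfrac{1}{2\ln 2}\sum_{k=1}^\infty \tfrac{x^{2k}}{k(2k-1)}$. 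Since $|x| \le 1$ implies $x^{2k} \le x^2$ for every $k \ge 1$, and since the alternating harmonic series gives $\sum_{k \ge 1}\tfrac{1}{k(2k-1)} = 2\sum_{k\ge 1}\bigl(\tfrac{1}{2k-1}-\tfrac{1}{2k}\bigr) = 2\ln 2$, the sum is bounded by $\tfrac{x^2}{2\ln 2}\cdot 2\ln 2 = x^2$.

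The main obstacle is the middle inequality $1 - \sqrt{1-x^2} \le 1 - H\!\left(\tfrac{1+x}{2}\right)$, equivalently the classical bound $H(p) \le 2\sqrt{p(1-p)}$. I would define $g(x) := \bigl[1-H\!\left(\tfrac{1+x}{2}\right)\bigr] - \bigl[1-\sqrt{1-x^2}\bigr]$ and prove $g \ge 0$ on $[0,1]$ by calculus (both summands are even, so $[0,1]$ suffices). Direct computation gives $g(0) = g(1) = 0$, $g'(0) = 0$, and the key factorization
\[
g''(x) \;=\; \frac{1}{(\ln 2)(1-x^2)} - \frac{1}{(1-x^2)^{3/2}} \;=\; \frac{\sqrt{1-x^2} - \ln 2}{(\ln 2)(1-x^2)^{3/2}},
\]
so $g''$ has exactly one sign change on $[0,1)$, at $x^{*} = \sqrt{1-(\ln 2)^2}$, from positive to negative. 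Combined with $g'(0) = 0$ and $g'(x) \to -\infty$ as $x \to 1^-$ (because $x/\sqrt{1-x^2}$ dominates $\tfrac{1}{2\ln 2}\ln\tfrac{1+x}{1-x}$), this forces $g'$ to have exactly one zero $x^{**} \in (x^{*},1)$, with $g' \ge 0$ on $[0,x^{**}]$ and $g' \le 0$ on $[x^{**},1]$. Hence $g$ first rises from $g(0)=0$ to a positive maximum at $x^{**}$ and then descends to $g(1)=0$, giving $g \ge 0$ throughout.

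The reason this middle step is the hard part is that a naive coefficient-by-coefficient comparison of the two power series fails: already the $x^4$ coefficient in $1-\sqrt{1-x^2}$ equals $\tfrac{1}{8}$, which exceeds the corresponding coefficient $\tfrac{1}{12\ln 2}$ in $1-H((1+x)/2)$, so the positive excess at $k=1$ must be balanced against deficits at higher $k$. The calculus argument above sidesteps this difficulty by exploiting that $g''$ factors through the single threshold $\sqrt{1-x^2}=\ln 2$, which makes the global shape of $g'$ and $g$ deducible from only a handful of boundary and limiting values.
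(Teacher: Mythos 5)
Your proof is correct, but it takes a genuinely different route from the paper's, and it is arguably cleaner. The paper rewrites $1 - H\bigl(\tfrac{1+x}{2}\bigr) = \tfrac{1}{\ln 4}\,\alpha(x)$ with $\alpha(x) = (1+x)\ln(1+x) + (1-x)\ln(1-x)$, and establishes both middle inequalities at once by proving the ratio $\alpha(x)/x^2$ is increasing and the ratio $\alpha(x)/(1-\sqrt{1-x^2})$ is decreasing, then evaluating at $x=1$; each monotonicity claim requires several rounds of differentiation and sign-checking, and the second even resorts to an explicit power-series computation along the way. You instead split the two inequalities: for $1 - H\bigl(\tfrac{1+x}{2}\bigr) \le x^2$ you expand $\alpha(x)$ as $\sum_{k\ge 1}\tfrac{x^{2k}}{k(2k-1)}$ (using $\alpha = \ln(1-x^2) + 2x\tanh^{-1}x$) and dominate term-by-term by $x^2$, closing with $\sum_k \tfrac{1}{k(2k-1)} = 2\ln 2$; and for $1-\sqrt{1-x^2}\le 1 - H\bigl(\tfrac{1+x}{2}\bigr)$ you analyze $g = $ (right side) $-$ (left side) directly, noting $g(0)=g(1)=g'(0)=0$, computing the clean factorization $g''(x) = \tfrac{\sqrt{1-x^2}-\ln 2}{(\ln 2)(1-x^2)^{3/2}}$ which changes sign exactly once, and then reading off the shape of $g$ from $g'(1^-) = -\infty$. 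Your second-derivative sign-change argument replaces the paper's deeper derivative tower for the ratio $\alpha(x)/(1-\sqrt{1-x^2})$ and is easier to verify, and your power-series bound for the other inequality is self-contained; the one thing the paper's ratio-monotonicity approach buys in exchange is that it simultaneously pins down the sharp constants at $x\to 0$, which the paper reuses immediately afterward to prove $\JS \le \h^2/\ln 2$ and $\Ess^2 \le (\ln 4)\JS$ in \lem{distance_measure_relations} (though your series expansion would equally well supply those limits). Your remark about why naive coefficient comparison fails at $x^4$ — $\tfrac{1}{8}$ versus $\tfrac{1}{12\ln 2}$ — is correct and a nice sanity check that some global argument is genuinely needed.
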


\begin{proof}
$x^2\le x$ is clearly true for $x\in[0,1]$. To see that
$x^2/2\le 1-\sqrt{1-x^2}$, note that this is equivalent to
$y/2\le 1-\sqrt{1-y}$ for $y\in[0,1]$ (by setting $y=x^2$);
the latter is clearly true at $y=0$, so it suffices to show the right
hand side grows faster. Taking derivatives, it suffices to show
$1/2\le 1/2\sqrt{1-y}$, which is clearly true for $y\in[0,1]$.

Next, write
\[1-H((1+x)/2)=1-((1+x)/2)\log2/(1+x)-((1-x)/2)\log2/(1-x)\]
\[=1-(1+x)/2-(1-x)/2+((1+x)/2)\log(1+x)+((1-x)/2)\log(1-x)\]
\[=\frac{1}{\ln 4}((1+x)\ln(1+x)+(1-x)\ln(1-x)).\]
Let $\alpha(x)=(1+x)\ln(1+x)+(1-x)\ln(1-x)$.
We show that $\alpha(x)/x^2$ is increasing and that
$\alpha(x)/(1-\sqrt{1-x^2})$ is decreasing; this suffices to
show the desired inequalities, since it means we only
need to check $x=1$, where the inequalities
$1-\sqrt{1-x^2}\le 1-H((1+x)/2)\le x^2$ hold with equality.

The derivative of $\alpha(x)$ is $\ln(1+x)-\ln(1-x)$.
The derivative of $\alpha(x)/x^2$ is therefore
$x^2\ln(1+x)-x^2\ln(1-x)-2x(1+x)\ln(1+x)-2x(1-x)\ln(1-x)$
divided by $x^4>0$ (for $x\in(0,1)$). This simplifies
to $-2x\ln(1-x^2)-x^2\ln((1+x)/(1-x))$. This is positive
if and only if $2\ln(1-x^2)+x\ln((1+x)/(1-x))$ is negative.
This expression equals $0$ at $x=0$, so it suffices to show
it is decreasing on $(0,1)$. The derivative is
$-2x/(1-x^2)+\ln((1+x)/(1-x))$, which is again $0$ at $x=0$,
so it again suffices to show the derivative is negative on $(0,1)$.
The derivative of this expression is $-4x^2/(1-x^2)^2$, which
is finally a quantity that is clearly negative, completing the argument;
hence $\alpha(x)/x^2$ is increasing on $[0,1]$.

The derivative of $\alpha(x)/(1-\sqrt{1-x^2})$ is
\[(1-x-\sqrt{1-x^2})\ln(1-x)-(1+x-\sqrt{1-x^2})\ln(1+x)\]
divided by some denominator which is positive on $(0,1)$.
This equals
\[-x\ln(1-x^2)-(1-\sqrt{1-x^2})\ln((1+x)/(1-x)).\]
Note that $\ln(1-x^2)=-x^2-x^4/2-\dots-x^{2i}/i-\dots$
and that $\ln((1+x)/(1-x))
=\ln(1+x)-\ln(1-x)=2x+2x^3/3+\dots+2x^{2i-1}/(2i-1)+\dots$,
so the expression equals
\[x^2\sum_{i=1}^\infty x^{2i-1}/i
-(1-\sqrt{1-x^2})\sum_{i=1}^\infty x^{2i-1}/(i-1/2)
=(\sqrt{1-x^2}-(1-x^2))\sum_{i=1}^\infty -x^{2i-1}/i(2i-1)<0.\]
Hence $\alpha(x)/(1-\sqrt{1-x^2})$ is decreasing on $[0,1]$, as desired.

It is clear that $x^2$ and $1-\sqrt{1-x}$ are convex functions
on $[0,1]$, as their second derivatives are $2>0$
and $(1/4)(1-x)^{-3/2}>0$ (for $x\in(0,1)$) respectively.
\end{proof}

\relations*

\begin{proof}
We use \lem{inequalities}.
The chain $\h^2\le\JS\le\Ess^2\le\Delta$ follows from the
inequalities there, while the inequalities $\Delta^2\le\Ess^2$
and $1-\sqrt{1-\Ess^2}\le\h^2$ follow from Jensen's inequality
combined with the convexity of $x^2$ and $1-\sqrt{1-x}$.

Finally, to show inequality $\JS\le \h^2/\ln 2$ we
only need to compute the limit of $\alpha(x)/(1-\sqrt{1-x^2})$
as $x\to 0$, since this ratio is decreasing with $x$
(where $\alpha(x)$ is defined as in the proof of \lem{inequalities}).
To do that it suffices to use $\alpha(x)=x^2+O(x^4)$
and $1-\sqrt{1-x^2}=x^2/2+O(x^4)$, so the limit is $2$.
Hence the limit of $(1-H((1+x)/2))/(1-\sqrt{1-x^2})$ as $x\to 0$
is $1/\ln 2$, meaning this ratio is always at most $1/\ln 2$.
Similarly, to show the inequality $\Ess^2\le (\ln 4)\JS$,
we only need to compute the limit of $\alpha(x)/x^2$ as $x\to 0$.
Again using $\alpha(x)=x^2+O(x^4)$, the limit is $1$, so
the ratio $(1-H((1+x)/2))/x^2$ is always at least $1/\ln 4$.
\end{proof}

\ampinequality*

\begin{proof}
Set $f(x)\coloneqq 1-(1-x)^k$. Clearly, when $x\in[0,1]$,
we have $f(x)\in[0,1]$, so $f\colon[0,1]\to[0,1]$.
Note $f(0)=0$, $f(1)=1$, and that $f(x)$ is increasing on $[0,1]$.
If $k=1$, we have $f(x)=x$, and the inequalities trivially hold;
therefore, assume $k>1$. Then $f'(x)=k(1-x)^{k-1}$
and $f''(x)=-k(k-1)(1-x)^{k-2}$, meaning that $f(x)$ is concave
on $[0,1]$; we also have $f'(0)=k$ and $f''(0)=-k(k-1)$.
From this we conclude that $f(x)\le kx$, proving the upper bound
(as $f(x)\le 1$ is clear).

For the lower bound, note that $f'''(x)=k(k-1)(k-2)(1-x)^{k-3}$,
which is non-negative on $[0,1]$. This means that
$f''(x)\ge -k(k-1)$ on $[0,1]$, that $f'(x)\ge k-k(k-1)x$
on $[0,1]$, and that $f(x)\ge kx-(k(k-1)/2)x^2=kx(1-(k-1)x/2)$
on $[0,1]$. If $(k-1)x\le 1$, we get $f(x)\ge kx/2$.
If $(k-1)x\ge 1$, we have $f(x)\ge 1-e^{-kx}\ge1-1/e\ge 1/2$.
This completes the proof.
\end{proof}

\hmixture*

\begin{proof}
Note that the squared-Hellinger distance is one minus the fidelity,
that is, $\h^2(\mu_1,\mu_2)=1-F(\mu_1,\mu_2)$ where
$F(\mu_1,\mu_2)=\sum_x\sqrt{\mu_1[x]\mu_2[x]}$
(this is easy to check from the definition of $\h^2$).
Now write
\begin{align*}
\h^2(\nu_0^\mu,\nu_1^\mu)&=1-\sum_{x\in\bigcup_{a} S_a}
    \sqrt{\nu_0^\mu[x]\nu_1^\mu[x]}\\
&=1-\sum_{a\in A}\sum_{x\in S_a}
    \sqrt{\mu[a]\nu_0^a[x]\mu[a]\nu_1^a[x]}\\
&=1-\bE_{a\leftarrow\mu}\left[\sum_{x\in S_a}
    \sqrt{\nu_0^a[x]\nu_1^a[x]}\right]\\
&=\bE_{a\leftarrow\mu}\left[1-\sum_{x\in S_a}
\sqrt{\nu_0^a[x]\nu_1^a[x]}\right]\\
&=\bE_{a\leftarrow\mu}\left[\h^2(\nu_0^a,\nu_1^a)\right].\qedhere
\end{align*}
\end{proof}

\section{Quantum amplitude estimation}\label{app:amplitude}

We show the following strengthening of \thm{amplitude_estimation},
which follows from \cite{BHMT02}.

\begin{theorem}[Amplitude estimation]\label{thm:stronger_amplitude_estimation}
Suppose we have access to a unitary $U$ (representing a quantum
algorithm) which maps $|0\rangle$ to $|\psi\rangle$,
as well as access to a projective measurement $\Pi$,
and we wish to estimate $p\coloneqq\|\Pi|\psi\rangle\|_2^2$
(representing the probability the quantum algorithm accepts).
Fix $\epsilon,\delta\in(0,1/2)$.
Then using at most $(100/\epsilon)\cdot\ln(1/\delta)$
controlled applications of $U$ or $U^\dagger$
and at most that many applications of $I-2\Pi$, we
can output $\tilde{p}\in[0,1]$ such that
$|\tilde{p}-p|\le\epsilon$ with probability
at least $1-\delta$.

Further, this can be tightened to a bound that
depends on $p$, as follows.
For any positive real number $T$,
there is an algorithm which depends on
$\epsilon$, $\delta$, and $T$ (but not on $p$)
which uses at most $T$
applications of the unitaries (as above)
and outputs $\tilde{p}\in[0,1]$ with the following
guarantee: if $T$ is at least
$\lfloor(100/\epsilon)\sqrt{\max\{p,\epsilon\}}\cdot\ln(1/\delta)
\rfloor$,
then
$|\tilde{p}-p|\le \epsilon$ with probability at least
$1-\delta$.
\end{theorem}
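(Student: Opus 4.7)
The plan is to derive both bounds from the standard amplitude estimation algorithm of Brassard--H\o{}yer--Mosca--Tapp, whose guarantee I will first recall in the form we need: running a Grover--style phase estimation with $M$ controlled applications of $U$, $U^\dagger$, and $I-2\Pi$ produces an estimate $\tilde p \in [0,1]$ of $p$ satisfying
\[
|\tilde p - p| \;\le\; \frac{2\pi\sqrt{p(1-p)}}{M} + \frac{\pi^2}{M^2}
\]
with probability at least $8/\pi^2$. This is the version that appears (essentially verbatim) as Theorem~12 of \cite{BHMT02}, and it is the only quantum ingredient I need. Note crucially that the algorithm itself does not depend on $p$: only its \emph{error guarantee} does.

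For the first (weaker) bound, I would pick $M = \lceil C_1/\epsilon \rceil$ for a universal constant $C_1$ chosen so that the worst-case bound $\pi/M + \pi^2/M^2$ (obtained from $\sqrt{p(1-p)}\le 1/2$) is at most $\epsilon$. Each run then succeeds with probability at least $8/\pi^2 > 1/2$, so a standard median-of-$k$ amplification with $k = \lceil C_2 \ln(1/\delta)\rceil$ independent repetitions --- using a Chernoff bound on the $k$ indicator variables for ``$|\tilde p_i - p|\le\epsilon$'' --- boosts the overall success probability to $1-\delta$. The total number of oracle applications is $Mk \le (100/\epsilon)\ln(1/\delta)$ once the universal constants $C_1, C_2$ are chosen appropriately.

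For the strengthening, the key observation is that the BHMT bound has a factor $\sqrt{p(1-p)}\le \sqrt{p}$ in the dominant term, and the algorithm above is oblivious to $p$; so I just need to set $M$ as a function of the given budget $T$ rather than as a function of $\epsilon$ alone. Concretely, given $T$, $\epsilon$, $\delta$, I would set $k = \lceil C_2\ln(1/\delta)\rceil$ and $M = \lfloor T/k\rfloor$, and run the BHMT procedure $k$ times independently, returning the median. Under the hypothesis $T \ge \lfloor(100/\epsilon)\sqrt{\max\{p,\epsilon\}}\ln(1/\delta)\rfloor$, this choice of $M$ satisfies $M \ge C_3\sqrt{\max\{p,\epsilon\}}/\epsilon$, so that simultaneously $2\pi\sqrt{p}/M \le \epsilon/2$ (using $\sqrt{p}\le\sqrt{\max\{p,\epsilon\}}$) and $\pi^2/M^2 \le \epsilon/2$ (using $1/\sqrt{\epsilon}\le \sqrt{\max\{p,\epsilon\}}/\epsilon$). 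Hence each of the $k$ trials errs by at most $\epsilon$ with probability $\ge 8/\pi^2$, and median amplification again drives the failure probability below $\delta$.

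The main obstacle is purely bookkeeping: every step above is standard, but the explicit constant $100$ in front of $(1/\epsilon)\ln(1/\delta)$ and the form $\sqrt{\max\{p,\epsilon\}}/\epsilon$ must be obtained by pinning down $C_1, C_2, C_3$ simultaneously in a way compatible with the Chernoff tail bound for the median. I would tune the constants by first fixing $C_2$ from the Chernoff bound applied to $\mathrm{Bernoulli}(8/\pi^2)$ variables (this gives a concrete factor like $\ln(1/\delta)$ times an absolute constant coming from the gap between $8/\pi^2$ and $1/2$), then choose $C_1, C_3$ so that the two error terms $2\pi\sqrt{p}/M$ and $\pi^2/M^2$ are each at most $\epsilon/2$, and finally verify that the product $Mk$ fits inside the claimed total $(100/\epsilon)\ln(1/\delta)$ (respectively $T$). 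There is nothing delicate here beyond arithmetic; no new quantum reasoning is required beyond invoking BHMT as a black box.
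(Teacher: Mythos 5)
Your proposal matches the paper's proof essentially step for step: both invoke the BHMT error bound $|\tilde p - p| \le 2\pi\sqrt{p(1-p)}/M + \pi^2/M^2$ with success probability $8/\pi^2$ as a black box, both pick $M$ so that the $\sqrt{p}/M$ and $1/M^2$ terms are each $O(\epsilon)$ (using $\sqrt{p}\le\sqrt{\max\{p,\epsilon\}}$ to control the first and $\max\{p,\epsilon\}\ge\epsilon$ to control the second), and both boost the success probability to $1-\delta$ by taking the median of $\Theta(\ln(1/\delta))$ independent repetitions with a binomial/Chernoff tail bound. The only remaining work in either argument is the arithmetic of pinning down constants, which you correctly identify and which the paper carries out explicitly (e.g.\ $2k+1 \le 5.2\ln(1/\delta)$ for the median step and $T \ge (19/\epsilon)\sqrt{\max\{p,\epsilon\}}$ per trial), so this is the same proof.
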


\begin{proof}
\cite{BHMT02} showed that an algorithm which makes $M$
controlled calls to the unitary
$U(I-2\ket{0}\bra{0})U^{-1}(I-2\Pi)$
and one additional call to $U$ can output $\tilde{p}$
such that
\[|\tilde{p}-p|
\le \frac{2\pi\sqrt{p(1-p)}}{M}+\frac{\pi^2}{M^2}\]
with probability at least $8/\pi^2\ge 4/5$.
If we pick $M$ such that $M\ge 8/\sqrt{\epsilon}$
and $M\ge 8\sqrt{p}/\epsilon$,
then this is at most $(\pi/4+\pi^2/64)\gamma\le \gamma$.
Note that $M$ must be an integer, and that the number
of applications of $U$ or $U^{-1}$ is $2M+1$.
Hence to get this success probability, it suffices
to have $T\ge 3+(16/\epsilon)\sqrt{\max\{p,\epsilon\}}$,
or $T\ge (19/\epsilon)\sqrt{\max\{p,\epsilon\}}$.

To generalize to other success probabilities, we amplify
this algorithm by repeating $2k+1$ times and returning
the median estimate. The probability that this is
still wrong is the probability that at least $k+1$
out of $2k+1$ of the estimates were wrong, which is
\[\sum_{i=1}^{k+1}\binom{2k+1}{k+1-i}q^{k+i}(1-q)^{k+1-i}
\le q^{k+1}(1-q)^{k}\sum_{i=1}^{k+1}\binom{2k+1}{k+1-i}\]
\[=q^{k+1}(1-q)^{k}2^{2k}
=q(1-(1-2q)^2)^k
\le qe^{-k(1-2q)^2}.\]
Hence to get this below $\delta$, we just need
$k\ge (1/(1-2q)^2)\ln(1/q\delta)$, or
$k\ge 2.6\ln(1/\delta)-4$. Since $k$ must be an integer,
but we can always choose it so that $2k+1$ is at most
$5.2\ln(1/\delta)$. Multiplying this by the bound from
before, we get that it suffices for $T$
to be at most
$(100/\epsilon)\sqrt{\max\{p,\epsilon\}}\cdot\ln(1/\delta)$,
as desired.
\end{proof}

\newpage

\fi 

\iffocs
  \nocite{Alt88,AR20,BBGK18,BCH86,BHMT02,BNRdW07,CSV84,GKKT17,Jac11,LS09,LSS08,MCAL17,MMR94,Ofm62,Pip87,RT92,She13,She12,Top00,Vol99,Weg87}
\fi

\phantomsection\addcontentsline{toc}{section}{References} 
\renewcommand{\UrlFont}{\ttfamily\small}
\let\oldpath\path
\renewcommand{\path}[1]{\small\oldpath{#1}}
\emergencystretch=1em 
\printbibliography

\end{document}